\newtheorem{theorem}{Theorem}[chapter]
\newcommand{\plane}[2]{$#1#2$\nobreakdash-plane}
\theoremstyle{definition}
\newtheorem{definition}{Definition}[chapter]
\def\openone{\leavevmode\hbox{\small1\kern-3.8pt\normalsize1}}
\newcommand{\be}{\begin{equation}}
\newcommand{\ee}{\end{equation}}
\newcommand{\ba}{\begin{eqnarray}}
\newcommand{\ea}{\end{eqnarray}}
\newcommand{\ketbra}[2]{|#1\rangle \langle #2|}
\newcommand{\tr}{\operatorname{Tr}}
\newcommand{\one}{\bf{1}}
\newtheorem{thm}{Theorem}
\newcommand{\half}{\frac{1}{2}}
\newcommand{\qua}{\frac{1}{4}}
\newtheorem{lem}{Lemma}
\newtheorem{observation}{Observation}
\newtheorem{proposition}{Proposition}
\newtheorem{cor}{Corrolory}
\newcommand{\etal}{{\it{et al. }}}
\newcommand{\emptypage}{\newpage\null\thispagestyle{plain}\newpage}
\newtheoremstyle{mystyle}
  {\baselineskip}
  {\topsep}
  {\itshape}
  {0pt}
  {\bfseries}
  {.}
  {5pt plus 1pt minus 1pt}
  {}
  \theoremstyle{mystyle}
\newcommand{\tfont}{\raggedright\usefont{T1}{qhv}{b}{n}\selectfont} 
\newcommand{\tofont}{\usefont{T1}{qhv}{m}{n}\selectfont} 
\newcommand{\tspace}{\hspace{0.6em}}
\titleformat{\chapter}[display]{\fontsize{22pt}{1em}\tfont}{{\chaptertitlename} \thechapter\vspace{0.4cm}}{0pt}{\fontsize{30pt}{1em}\tfont}
\titleformat{\section}[block]{\Large\tfont}{\thesection\tspace}{0pt}{\Large\tfont}
\titleformat{\subsection}[block]{\large\tfont}{\thesubsection\tspace}{0pt}{\large\tfont}
\titleformat{\subsubsection}[block]{\tfont}{\thesubsubsection\tspace}{0pt}{\tfont}
\definecolor{citec}{RGB}{190,40,5} 
\definecolor{linkc}{RGB}{5,160,40} 
\apptocmd{\sloppy}{\hbadness 10000\relax}{}{} 
\patchcmd{\BR@backref}{\newblock}{\newblock[}{}{}
\patchcmd{\BR@backref}{\par}{]\par}{}{}
\begin{document}

\pagenumbering{Roman}
\thispagestyle{empty}
\baselineskip=18pt
\begin{center}
{\Large \bf Characterizing quantum correlations in the nonsignaling framework} \\
\vspace*{1cm}
{\large{\bf Thesis}} \\
\vspace*{0.5cm}
{For the award of the degree of}\\
\vspace{0.5cm}
{\large{\bf DOCTOR OF  PHILOSOPHY}} \\
\vspace{0.25cm}
\end{center}
\vspace*{3cm}
\begin{tabular}{lp{6cm}l}
{{\it Supervised by:}}  &&
{{\it Submitted by:}} \\
\\
{\bf Dr. Sudeshna Sinha} &&
{\bf C. Jebarathinam} \\
{\bf } &&\\

\end{tabular}
\begin{center}
\vspace*{1.5cm}
\hspace*{0cm}
\end{center}
\vspace*{-1cm}
\begin{center}
\includegraphics{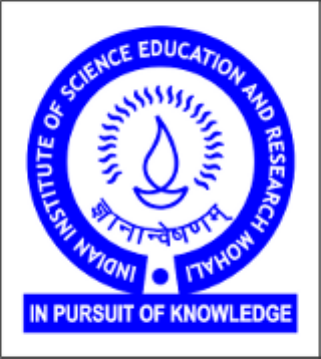}\\
{\bf Indian Institute of Science Education \& Research  Mohali\\
Mohali 140 306\\ India\\ October 2015
 }
\end{center}

\pagenumbering{roman}

\emptypage

\pagenumbering{Roman}
\section*{\centerline{Declaration}}
The work presented in this thesis has been carried
out by me  at the Indian Institute of
Science Education and Research Mohali.

This work has not been submitted in part or in
full for a degree, diploma or a fellowship to
any other University or Institute. Whenever
contributions of others are involved, every effort
has been made to indicate this clearly, with due
acknowledgement of collaborative research
and discussions. This thesis is a bonafide record
of original work done by me and all sources
listed within have been detailed in the
bibliography.
\vspace*{2cm}

\hspace*{-0.25in}
\parbox{2.5in}{
\noindent {\bf C Jebarathinam}
}

\vspace*{0.25in}

\hspace*{-0.25in}
\noindent
\parbox{2.5in}{
\noindent Place~:  \\
\noindent Date~:
}

\vspace*{0.5in}

\noindent
In my capacity as the supervisor of the
candidate's PhD thesis work, I certify that the
above statements by the candidate are true to the
best of my knowledge.

\vspace*{2cm}

\hspace*{-0.25in}
\parbox{2.5in}{
\noindent {\bf } \\
\noindent Professor Sudeshna Sinha\\
\noindent Department of Physical Sciences \\
\noindent IISER Mohali
}
\vspace*{0.25in}

\hspace*{-0.25in}
\noindent
\parbox{2.5in}{
\noindent Place~:  \\
\noindent Date~:
} 

\pagenumbering{roman}

\emptypage

\pagenumbering{Roman}
\thispagestyle{empty}
\centerline{\LARGE \bf List of Publications}
\vspace{1cm}
\vspace*{12pt}
\centerline{\bf \underline{Out of this Thesis}}
\begin{enumerate}
\addtolength{\itemsep}{12pt}
\item {\bf C Jebarathinam} Canonical decomposition of quantum correlations in the framework of generalized nonsignaling theories.
\href{https://arxiv.org/abs/1407.3170v4}{arXiv:1407.3170v4
[quant-ph]}
   \item {\bf C Jebarathinam} Isolating genuine nonclassicality in tripartite quantum correlations
\href{https://arxiv.org/abs/1407.5588v5}{arXiv:1407.5588v5
[quant-ph]} 
  \item {\bf C Jebarathinam} On total correlations in a bipartite quantum joint probability distribution.
   \href{http://arxiv.org/abs/1410.1472}{arXiv:1410.1472
[quant-ph]}

\end{enumerate}

\vspace{1cm}
\centerline{\bf \underline{Other Publications}}
\begin{enumerate}
\item {\bf C Jebaratnam}. Detecting genuine multipartite entanglement in steering scenarios.
\href{http://dx.doi.org/10.1103/PhysRevA.93.052311}{\emph{Phys.
Rev. A}, 93, 052311}.
\end{enumerate}
\thispagestyle{empty}

\pagenumbering{roman}

\emptypage


\thispagestyle{plain}
\section*{\centerline{Acknowledgements}}
I would like to thank Prof. Arvind for the opportunity to do quantum foundations and quantum information theory.
I am thankful to Professor N. Sathyamurthy, Director, IISER Mohali for his encouragement and support.
I am grateful to Professor C. G. Mahajan for his encouragement and support.
I thank IISER Mohali for financial support.

I am thankful to Dr. P. Rungta for suggesting me the problems in quantum foundations and for many inspiring discussions
while carrying out this work.
I thank Dr. R. Srikanth, PPISR Bangalore for discussions and useful suggestions.
I would like to thank Dr. K. P. Yogendran, IISER Mohali for discussions and useful comments.
I would like to thank my doctoral committee members: Prof. Sudeshna Sinha, Prof. Arvind and Dr. Mandip Singh.
I would like to thank Bhupesh Kumar for helping me to make the figures in this thesis.



\emptypage


\thispagestyle{plain}
\section*{\centerline{Abstract}}

Bell nonlocality of quantum theory refers to the nonclassical correlations
obtained by local measurements on spatially separated entangled
subsystems. Bell nonlocality is a resource for device-independent quantum
information processing. Quantum discord was introduced as a measure of
quantum correlations which captures nonclassical correlations in separable
states as well. Recently, it has been shown that non-null quantum discord
is a resource for quantum information processing.

Quantum correlations forms a subset of the set of nonsignaling boxes. This
allows us to characterize quantum correlations as a convex combination of
the extremal boxes of the nonsignaling polytope which are Popescu-Rohrlich
boxes (maximally nonlocal boxes) and local deterministic boxes. There
exists multiple decomposition of quantum correlations in the context of
the nonsignaling polytope. I find that the existence of Popescu-Rohrlich
box decomposition for local boxes associates two notions of discord which
capture nonclassicality of quantum correlations originating from Bell
nonlocality and EPR-steering.

I introduce, Bell and Mermin discord, and show that any bipartite
nonsignaling box admits a three-way decomposition. This decomposition
allows us to isolate the origin of nonclassicality into three disjoint
sources: a Popescu-Rohrlich box, a maximally EPR-steerable box, and a
classical correlation. Interestingly, I show that all non-null quantum
discord states which are neither classical-quantum states nor
quantum-classical states can give rise to nonclassical correlations which
have non-null Bell and/or Mermin discord for suitable incompatible
measurements. I introduce two notions of genuine discord, which are the
generalizations of Bell and Mermin discord to the multipartite scenario,
to characterize the presence of genuine nonclassicality in quantum
correlations.

\emptypage
\tableofcontents
\emptypage
\phantomsection
\addcontentsline{toc}{chapter}{Figures}
\listoffigures
\emptypage

\newpage

\pagenumbering{arabic}

\pagestyle{plain}


\chapter{Introduction}
Quantum theory successfully describes the nature within the domain of
the microscopic world in which classical physics fails to explain.
Quantum theory has many distinguishing features such as uncertainty
due to incompatibility of observables, no-cloning, intrinsic randomness
to name but a few. Unlike the special theory of relativity, the axioms
of the quantum theory are mathematical. There have been attempts to
give physical postulates for quantum theory \cite{JB,NLNS,CAP}.
Quantum theory is consistent with nonsignaling principle; however, it
predicts correlations that are nonlocal in the sense that it violates
a Bell inequality \cite{bell64,BNL}. In an attempt to conjecture that
nonsignaling and nonlocality as axioms for quantum theory, Popescu and
Rohrlich found that there are nonsignaling correlations that are more
nonlocal than quantum theory \cite{PR}. Thus, nonlocality which seems
distinguishing feature of quantum theory is a generic feature of
nonsignaling theories \cite{MAG06}.

In   generalized   nonsignaling   theory  (GNST),   correlations   are
constrained  only by  the nonsignaling  (NS) principle  and thus  GNST
allows  nonlocal  correlations  stronger than  that  allowed  by
quantum theory \cite{Barrett,MAG06}.   It is known that the  set of NS
correlations   forms  a   convex   polytope  known   as  NS   polytope
\cite{Barrett}.
Since  quantum correlations  are  contained in  the  NS polytope,  any
quantum  correlation can  be written  as a  convex combination  of the
extremal boxes of the polytope.  One  of the goals of studying GNST is
to find  out what singles  out quantum theory from  other nonsignaling
theories  \cite{Geb}.   GNST has  also  been  used to  study  nonlocal
correlations, for instance, measures of nonlocality and quantifier for
intrinsic  randomness have  been  proposed in  the  framework of  GNST
\cite{Forsteretal,EPR2B,Dhara}.  Bell-nonlocality,  i.e., the violation
of a Bell  inequality   is  a   resource  for   device-independent
quantum information processing \cite{DIQIP1,DIQIP2}.  Security  of device-independent
quantum key  distribution was  studied in the  context of  NS polytope
\cite{DQKD}.

All pure entangled states give rise to nonlocality \cite{GT}. In the case
mixed states, entanglement and nonlocality are inequivalent \cite{Werner}.
It is natural to consider that entangled states which do not violate a Bell inequality do not have nonclassicality.
However, it was shown that there are mixed entangled states which are useful for teleportation, but do not violate a Bell inequality \cite{TNL}.
Recently, it has been shown that there are mixed separable states
that give rise to advantage for certain quantum information tasks \cite{Dakicetal}; the key resource behind this advantage is believed to be quantum discord \cite{OZ}.
It would be interesting to study nonclassical correlations in nonzero quantum discord states, which include all entangled and separable states,
in the context of the NS polytope.

\section{Nonclassical correlations}
In 1935, Einstein, Podolsky, and Rosen (EPR) argued incompleteness of quantum theory using entangled states and
suggested that quantum theory could be complete if it is supplemented
with additional hidden variables that assume locality and reality \cite{EPR}.
Since then investigations into hidden variable theories were started to account for the predictions of quantum theory \cite{BellRev,MerminRev}.
In Ref. \cite{Bohm}, Bohm presented the EPR argument using two spin-1/2 particles (qubits) in a singlet state given as follows,
\be
\ket{\psi^-}=\frac{1}{\sqrt{2}}(\ket{01}-\ket{10}).
\ee
Consider an experiment in which two spatially separated parties, Alice and Bob, share the singlet state and measure the spin of their qubit along two perpendicular directions.
If Alice measures $\sigma_x$ ($\sigma_y$), she can predict the measurement result of $\sigma_x$ ($\sigma_y$) on Bob's side with certainty.
Thus, element physical reality exists for the measurements of $\sigma_x$ and $\sigma_y$ simultaneously according to the criterion of EPR.
Since the quantum theory does not simultaneously predict the results of any two incompatible measurements with certainty, EPR argued that quantum theory is incomplete.
In 1964, John Bell invalidated the assumptions of EPR (locality and realism) by showing that quantum theory is incompatible with local hidden variable (LHV)
theories \cite{bell64}.
The refutation of hidden variables by quantum theory was first demonstrated by Kochen and Specker \cite{Specker, KS},
they showed that measurement results of spin-1 systems predicted by quantum theory
is incompatible with noncontextual hidden variable (NCHV) theories.
\subsection{Nonlocality}
Bell experiments involve in testing whether the correlation between outcomes of space-like separated measurements exhibits nonlocality or not.
If the violation of a Bell inequality is observed,
then nonlocality of the correlation is demonstrated. In the bipartite Bell scenario, two spatially separated observers, Alice and Bob, receive subsystems
of a correlated composite system and they perform measurements $A$ and $B$ on their respective subsystems which produce outcomes $a$ and $b$. The correlation
between the outcomes is described by the conditional joint probability of getting the outcomes, $P(a,b|A,B)$. Since the measurements are happening at the space-like
separated regions, the correlation satisfies nonsignaling principle, i.e., Alice cannot signal to Bob by her choice of measurement and vice versa.

Bell inequalities are the bounds on the correlations under the constraint of LHV theories. In an LHV theory, there exist some hidden variables $\lambda$ which occur with
probability $p_\lambda$ such that the correlation satisfies the following locality condition,
\be
P(a,b|A,B)=\sum_\lambda p_\lambda P_\lambda(a|A)P_\lambda(b|B). \label{LHVm}
\ee
Suppose $\lambda$ corresponds to different run of the experiment, locality implies that
for each run of the experiment the joint probability for the outcome pair factorizes as the product of
marginals corresponding to Alice and Bob, i.e., $P_\lambda(a,b|A,B)=P_\lambda(a|A)P_\lambda(b|B)$.
Since the correlation that exhibits nonlocality cannot be written in the form given in Eq. (\ref{LHVm}), it violates a Bell inequality.

Suppose the parties generate the correlation by making measurements on a composite quantum system.
Quantum theory associates a quantum state described by the density operator $\rho$
in the Hilbert space $\mathcal{H}_A\otimes\mathcal{H}_B$ and local measurement operators $M^{A}_{a}$ and $M^{B}_{b}$
such that the correlation is predicted by Born's rule as follows,
\be
P(a,b|A,B)=\tr \left(\rho M^{A}_{a} \otimes M^{B}_{b}\right).
\ee
Quantum states come in two distinct types: entangled and separable. Since the separable states can be written as a convex combination of the product states,
\be
\rho=\sum_\lambda p_\lambda \rho^A_\lambda \otimes \rho^B_\lambda,
\ee
the correlations arising from these states satisfy the locality condition in Eq. (\ref{LHVm}). Thus, only entangled states can lead to the
violation of a Bell inequality.
\subsubsection{Bell-CHSH inequality}
The simplest physical situation that exhibits nonlocality is the scenario considered by Clauser etal \cite{chsh}. In Bell-CHSH scenario, Alice and Bob perform
two dichotomic measurements
$A_i$ and $B_j$ on their subsystems and generate outcomes $a_m$ and $b_n$, where $i,j,m,n\in\{0,1\}$.
Quantum correlations corresponding to this scenario can be generated by making spin projective measurements $A_i=\hat{a}_i\cdot \vec{\sigma}$
and $B_j=\hat{b}_j\cdot \vec{\sigma}$ on
an ensemble of two spin-$1/2$ particles (qubits) along the directions $\hat{a}_i$ and $\hat{b}_j$ which generate outcomes $a_m,b_n\in\{-1,+1\}$.

Clauser etal derived the following inequality,
\be
|\braket{A_0B_0}-\braket{A_0B_1}|\le2-|\braket{A_1B_0}+\braket{A_1B_1}|,
\ee
under the constraint that the correlations satisfy the locality condition in Eq. (\ref{LHVm}). This inequality is equivalent to,
\be
\mathcal{B}:=|\braket{A_0B_1}+\braket{A_1B_0}+\braket{A_0B_0}-\braket{A_1B_1}|\le2, \label{CHSH}
\ee
which is the famous CHSH inequality. Suppose Alice and Bob receive two spin-$1/2$ particles in the singlet state, they can generate correlation
which violates the Bell-CHSH inequality in Eq. (\ref{CHSH}). For the singlet state, quantum theory predicts
$\braket{A_iB_j}=\braket{\phi^-|\hat{a}_i\cdot\vec{\sigma}\otimes\hat{b}_j\cdot\vec{\sigma}|\phi^-}=-\hat{a}_i\cdot\hat{b}_j$.
For the following choice of measurement directions: $\hat{a}_0=\hat{x}$, $\hat{a}_1=\hat{y}$, $\hat{b}_0=-\frac{1}{\sqrt{2}}(\hat{x}+\hat{y})$
and $\hat{b}_1=\frac{1}{\sqrt{2}}(-\hat{x}+\hat{y})$, the singlet state gives rise to $\mathcal{B}=2\sqrt{2}>2$.
\subsubsection{Hardy's paradox}
Hardy's test doesn't involve inequalities and is based on logical contradiction with local realism \cite{Hardy1, Hardy2}.
Consider the correlations associated with the Bell-CHSH scenario that satisfy the following three constraints,
\ba
P(+1,+1|A_0,B_0)&=&0 \label{H1}\\
P(+1,-1|A_1,B_0)&=&0 \label{H2}\\
P(-1,+1|A_0,B_1)&=&0 \label{H3}.
\ea
If these correlations can be simulated by the LHV theory, they will satisfy the condition,
\be
P(+1,+1|A_1,B_1)=0. \label{H4}
\ee
We show that the violation of this condition with the constraints on the correlation given in Eqs. (\ref{H1})-(\ref{H3}) implies nonlocality.
Suppose Alice and Bob observe the outcome pair $+1$ and $+1$ for the measurement $A_1B_1$.
Under the assumption of locality, Eq. (\ref{H2}) and Eq. (\ref{H3}) imply that the outcome of Bob for the measurement of $B_0$ is $+1$ and the outcome
of Alice for the measurement of $A_0$ is $+1$. Since in a local realistic theory the measurement of one party should not
depend on the measurement choice of the other party, Alice and Bob must observe the outcome pair $+1$ and $+1$ for the measurement $A_0B_0$, however,
this contradicts Eq. (\ref{H1}).

Hardy showed that the correlations arising from the pure states except the extremal states (product and maximally entangled state)
satisfy the constraints in Eqs. (\ref{H1})-(\ref{H3}) while violating the constraint in Eq. (\ref{H4}) for suitable state dependent measurements.
Suppose Alice and Bob share the pure state $\ket{\psi}=b\ket{01}+c\ket{10}+d\ket{11}$ and make measurements $A_0=\sigma_z$, $A_1=\ketbra{a_+}{a_+}-\ketbra{a_-}{a_-}$,
$B_0=\sigma_z$ and $B_1=\ketbra{b_+}{b_+}-\ketbra{b_-}{b_-}$,
where $\ket{a_+}=\frac{d^*\ket{0}-b^*\ket{1}}{\sqrt{|b|^2+|d|^2}}$, $\ket{a_-}=\frac{b\ket{0}+d\ket{1}}{\sqrt{|b|^2+|d|^2}}$,
$\ket{b_+}=\frac{d^*\ket{0}-c^*\ket{1}}{\sqrt{|c|^2+|d|^2}}$ and $\ket{b_-}=\frac{c\ket{0}+d\ket{1}}{\sqrt{|c|^2+|d|^2}}$ \cite{Goldstein, HTvsCT}.
Then, the correlation satisfies the constraints in Eqs. (\ref{H1})-(\ref{H3}) and violates the condition in Eq. (\ref{H4}) as follows,
\be
P(+1,+1|A_1,B_1)=\frac{|bcd|^2}{(|b|^2+|d|^2)(|c|^2+|d|^2)},
\ee
which implies that the correlation is nonlocal if the state is neither a product state nor a maximally entangled state.
\subsection{Contextuality}
LHV theory is a special case of NCHV theory in that every LHV theory is an NCHV theory; however, the converse is not true. In NCHV theories, locality is replaced
by noncontextuality.
Noncontextuality can be illustrated by the following situation. Suppose
an observable $A$ is compatible with two observables $B$ and $C$, i.e., $[A,B]=[A,C]=0$ which implies that
the joint probabilities $p(ab|AB)$ and $p(ac|AC)$ can be defined.
Noncontextuality implies that outcome of the
measurement $A$ does not depend on whether it is measured with $B$ or $C$.
These observables exhibit contextuality if the joint probability $p(abc|ABC)$
cannot be defined.
The simplest physical system that exhibits contextuality is a qutrit system. Recently, KCBS derived a simplest noncontextual inequality
which is violated by a qutrit system with only five measurements \cite{KCBS}. It has been shown that in a qutrit-qubit system,
the violation of the KCBS inequality forbids the violation of the
CHSH inequality and vice versa which demonstrates monogamy between contextuality and nonlocality \cite{KCK}.
Similarly, we observe that if a maximally entangled state gives rise to KS paradox that demonstrates contextuality, the correlation does not exhibit nonlocality.
\subsubsection{Peres' version of Kochen-Specker (KS) paradox}
Peres \cite{Peres} showed that two-qubits in the singlet state exhibits KS paradox for the Pauli measurements $\sigma_{1x}$ and $\sigma_{1y}$
on the first qubit, and, $\sigma_{2x}$ and $\sigma_{2y}$ on the second qubit.
The outcomes exhibit anti-correlations for the measurements $\sigma_{1x}\sigma_{2x}$ and $\sigma_{1y}\sigma_{2y}$,
since the singlet state is a simultaneous eigenstate of these two measurement
operators as follows,
\ba
\sigma_{1x}\sigma_{2x}\ket{\phi^-}&=&-\ket{\phi^-} \label{es1} \\
\sigma_{1y}\sigma_{2y}\ket{\phi^-}&=&-\ket{\phi^-}. \label{es2}
\ea
This implies that the outcome pairs satisfy the following relation,
\be
v(\sigma_{1x}\sigma_{2x})=v(\sigma_{1y}\sigma_{2y})=-1. \label{v1}
\ee
For the other two choices of joint measurements $\sigma_{1x}\sigma_{2y}$
and $\sigma_{1y}\sigma_{2x}$, the outcomes are uncorrelated. However, the singlet state is eigenstate of the product of these two measurement operators as follows,
\be
(\sigma_{1x}\sigma_{2y})(\sigma_{1y}\sigma_{2x})\ket{\phi^-}=-\ket{\phi^-}, \label{es3}
\ee
which implies that the two outcome pairs satisfy the following relation,
\be
v(\sigma_{1x}\sigma_{2y})v(\sigma_{1y}\sigma_{2x})=-1. \label{v2}
\ee
If the outcomes can be predetermined noncontextually, Eqs. (\ref{v1}) and (\ref{v2}) imply that the following relation should be satisfied,
\ba
v(\sigma_{1x})v(\sigma_{2x})&=&-1\nonumber \\
v(\sigma_{1y})v(\sigma_{2y})&=&-1 \label{NCo}\\
v(\sigma_{1x})v(\sigma_{2y})v(\sigma_{1y})v(\sigma_{2x})&=&-1\nonumber
\ea
This relation is impossible to satisfy since the product of the left-hand side implies $+1$ which is not equal to the product of the right-hand side which is $-1$.

For the measurements that give rise to the Peres' paradox given in Eq. (\ref{NCo}), the correlation arising from the singlet state
violates the following EPR-steering inequality maximally \cite{CJWR},
\be
|\braket{\sigma_x\sigma_x}+\braket{\sigma_y\sigma_y}|\le\sqrt{2}.
\ee
Notice that the measurements that give rise to the maximal violation of the above EPR-steering inequality do not give rise to the violation of the Bell-CHSH inequality.
This suggests monogamy relation between the EPR-steering inequality and the Bell-CHSH inequality.
\subsubsection{Mermin's argument of GHZ paradox}
Greenberger, Horne and Zeilinger (GHZ) presented a paradox that illustrates nonlocality of quantum theory in the multipartite scenario
without using inequalities \cite{GHZ}.
Let us discuss the Mermin's version of the GHZ paradox \cite{GHZM} which is the tripartite generalization of the Peres' version of KS paradox \cite{UNLH}.
Consider the correlation arising from three-qubits in the following GHZ-state,
\be
\ket{\psi_{GHZ}}=\frac{1}{\sqrt{2}}\left[\ket{000}-\ket{111}\right].
\ee
for the two Pauli measurements $\sigma_{ix}$ and $\sigma_{iy}$ ($i=1,2,3$) performed on each qubit. Since the GHZ-state is the simultaneous
eigenstate of the three observables $\sigma_{1y}\sigma_{2y}\sigma_{3x}$, $\sigma_{1y}\sigma_{2x}\sigma_{3y}$,
and $\sigma_{1x}\sigma_{1y}\sigma_{1y}$ as follows,
\ba
\sigma_{1y}\sigma_{2y}\sigma_{3x}\ket{\psi_{GHZ}}&=&\ket{\psi_{GHZ}}\label{GHZ1}\\
\sigma_{1y}\sigma_{2x}\sigma_{3y}\ket{\psi_{GHZ}}&=&\ket{\psi_{GHZ}}\label{GHZ2}\\
\sigma_{1x}\sigma_{2y}\sigma_{3y}\ket{\psi_{GHZ}}&=&\ket{\psi_{GHZ}},\label{GHZ3}
\ea
the GHZ state gives rise to perfect correlations for these three measurements that is the product of the outcomes of
the three local Pauli measurements satisfy the following relation,
\be
v(\sigma_{1y}\sigma_{2y}\sigma_{3x})=v(\sigma_{1y}\sigma_{2x}\sigma_{3y})=v(\sigma_{1x}\sigma_{2y}\sigma_{3y})=1.
\ee
Since the three observables in Eqs. (\ref{GHZ1})-(\ref{GHZ3}) are mutually commuting, the GHZ-state is also an eigenstate of the product of these observables,
\be
(\sigma_{1y}\sigma_{2y}\sigma_{3x})(\sigma_{1y}\sigma_{2x}\sigma_{3y})v(\sigma_{1x}\sigma_{2y}\sigma_{3y})\ket{\psi_{GHZ}}
=(\sigma_{1x}\sigma_{2x}\sigma_{3x})\ket{\psi_{GHZ}}=-\ket{\psi_{GHZ}},
\ee
but this time with $-$ sign. The product of the local outcomes for the measurement of $\sigma_{1x}\sigma_{2x}\sigma_{3x}$ on the GHZ state implies,
\be
v(\sigma_{1x}\sigma_{2x}\sigma_{3x})=-1. \label{LCtd}
\ee
If local realistic value assignment is possible for the individual observables in Eqs. (\ref{GHZ1})-(\ref{GHZ3}), there exists hidden variables $\lambda$ such that
the following relation,
\be
v(\sigma_{1y})v(\sigma_{2y})v(\sigma_{3x})=v(\sigma_{1y})v(\sigma_{2x})v(\sigma_{3y})=v(\sigma_{1x})v(\sigma_{2y})v(\sigma_{3y})=1
\ee
should hold. The product of the left-hand side of this equation implies,
\be
v(\sigma_{1x})v(\sigma_{2x})v(\sigma_{3x})=1,
\ee
which, however, contradicts the condition in Eq. (\ref{LCtd}).

The GHZ paradox can be tested by the violation of the Mermin inequality \cite{mermin},
\be
|\braket{\sigma_{1x}\sigma_{2x}\sigma_{3x}}-\braket{\sigma_{1x}\sigma_{2x}\sigma_{3x}}
-\braket{\sigma_{1x}\sigma_{2x}\sigma_{3x}}-\braket{\sigma_{1x}\sigma_{2x}\sigma_{3x}}|\le2,
\ee
which is equivalent to a noncontextual inequality \cite{Canasetal}. Notice that the measurements that give rise to the violation of this inequality
does not violate a Svetlichny inequality \cite{SI}.
\subsection{Quantum discord}
In the seminal paper \cite{OZ}, quantum discord was defined as the difference between two inequivalent expressions for mutual information.
Nonzero quantum discord was proposed as a measure of quantum correlation which goes beyond entanglement.
Quantum discord of a bipartite state, $\rho$, equals to zero iff there exists a von-Neumann measurement $\{\Pi_k=\ket{\psi_k}\bra{\psi_k}\}$ such that \cite{Datta}
\be
\left(\Pi^{}_k \otimes \one\right) \rho \left(\Pi^{}_k \otimes \one\right)=\rho.
\ee
This implies that the zero-discord states can be written in the classical-quantum form \cite{CQ} $\rho=\sum_k \ket{\psi_k}\bra{\psi_k} \otimes \rho_k$
where $\ket{\psi_k}\bra{\psi_k}$ are the orthonormal states on Alice's side and $\rho_k$ are quantum states on Bob's side.
The set of classical-quantum states forms a nonconvex subset of the set of separable states \cite{Caves}.
A separable state which cannot be written in the classical-quantum form has nonclassical correlation.
It has been shown that almost all quantum states
have nonclassical correlation \cite{QCall}.
In Ref. \cite{QDRS}, it has been shown that the fidelity of remote quantum state preparation is related to geometric measure of quantum discord \cite{Dakicetal}.
The geometric measure of left discord is defined as,
\be
\mathcal{D}^\rightarrow(\rho)=2\min_{\chi\in{\Omega}_0}||\rho-\chi||^2,
\ee
where $\Omega_0$ denotes the set of classical-quantum states and $||X -Y||^2 = \tr[(X-Y)^2]$.
\section{Nonsignaling polytope}
Bipartite Bell scenario can be abstractly described in terms of input-output devices shared by two parties as follows.
Alice and Bob have access to a black box; when Alice and Bob input $A_i$ and $B_j$ into the box, the box yields
outputs $a_m$ and $b_n$. In the physical scenario, the inputs correspond to measurement choices and the outputs correspond to the outcomes of the measurements.
Let us denote the number of possible inputs on Alice's side and Bob's side by $d_i$ and $d_j$ and the number of possible outputs for a given choice
of input on Alice's side and Bob's side
by $d_m$ and $d_n$. A Bell scenario is characterized by the set of $N=d_i \times d_j \times d_m \times d_n$ joint probability distributions (JPD), $P(a_m,b_n|A_i.B_j)$,
which satisfy positivity,
\be
P(a_m,b_n|A_i,B_j)\ge 0, \label{postitvity}
\ee
normalization constraints,
\be
\sum_{m,n}P(a_m,b_n|A_i,B_j)=1 \quad \forall i,j, \label{normalizations}
\ee
and nonsignaling constraints,
\ba
\sum_{n}P(a_m,b_n|A_i,B_j)&=&P(a_m|A_i,B_j)=P(a_m|A_i) \quad \forall i,j, m, \label{nsA}\\
\sum_{m}P(a_m,b_n|A_i,B_j)&=&P(b_n|A_i,B_j)=P(b_n|B_j) \quad \forall i,j, n. \label{nsB}
\ea
We refer to the set $P(a_m,b_n|A_i.B_j)$ which satisfy the constraints in Eqs. (\ref{normalizations}-(\ref{nsB}) as correlation or box.

A box can be regarded
as the vector in an $N$-dimensional space whose coordinates are the joint probabilities.
Not all joint probabilities are independent in the set due to the normalization and the nonsignaling constraints.
For each input pair, one joint probability can be eliminated by using the normalization constraints in Eq. (\ref{normalizations});
the eliminated one is denoted as $P(a_{m'},b_{n'}|A_i,B_j)$. For a given input pair, the joint probabilities which have
the output that is contained in the eliminated joint probability can be written as,
\ba
P(a_m,b_{n'}|A_i,B_j)=P(a_m|A_i)-\sum_{n\ne n'}P(a_m,b_n|A_i,B_j)\\
P(a_{m'},b_n|A_i,B_j)=P(b_n|B_j)-\sum_{m\ne m'}P(a_m,b_n|A_i,B_j)
\ea
which follow from the nonsignaling constraints in Eqs. (\ref{nsA}) and (\ref{nsB}).
Notice that the marginal and the joint distributions which do not contain $a_{m'}$ or $b_{n'}$ are linearly independent.
Therefore, the set of linearly independent marginal and joint distributions form a basis of dimension,
\be
D(\mathcal{N})=d_i \times (d_m-1)+d_j \times (d_n-1)+d_i \times d_j \times (d_m-1) \times (d_n-1),
\ee
for the vector space that uniquely describes the set of nonsignaling correlations \cite{ppolytopes}. A basis set is not unique, i.e.,
there are the finite number of basis sets for the nonsignaling
space. The basis sets are related to each other by local reversible operations (LRO). LRO simply relabel the inputs and outputs:
Alice changing her input $i\rightarrow i\oplus 1$, and changing her output conditioned on the input: $m\rightarrow m\oplus\alpha i\oplus\beta$.
Bob can perform similar operations.
Local reversible operations (LRO) are analogous to local unitary operations in quantum theory.
It is known that Alice and Bob cannot decrease entanglement and cannot create entanglement from separability by local unitary operations
on the quantum states \cite{Hetal}, similarly, nonlocality and locality are invariant under LRO.
The set of nonsignaling correlations forms a polytope in $D(\mathcal{N})$-dimensional space since it is an intersection of the finite number of
hyperplanes given by Eqs. (\ref{normalizations})-(\ref{nsB}). This polytope is convex since the set of nonsignaling correlations is convex
i.e., convex combination of any two nonsignaling correlation is another nonsignaling
correlation. The nonsignaling polytope is given by the set of $D(\mathcal{N})$ linearly independent joint and marginal distributions which satisfy,
\ba
\sum_{n\ne n'}P(a_m,b_n|A_i,B_j) &\le& P(a_m|A_i)\nonumber \\
\sum_{m\ne m'}P(a_m,b_n|A_i,B_j) &\le& P(b_n|B_j) \quad \forall \quad i,j. \label{Hns}
\ea
These inequalities give $\mathcal{H}$-representation for the nonsignaling polytope.

Since a polytope can also be represented in the $\mathcal{V}$-representation in which it is a convex hull of the vertices of the polytope with positive weights.
The vertices of the nonsignaling polytope are the unique solutions of the constraints in Eqs. (\ref{postitvity})-(\ref{nsB}) with sufficient
number of times the inequalities in Eq. (\ref{postitvity}) are replaced by equalities.
The vertices of the nonsignaling polytope can be divided into two classes: deterministic and nondeterministic. A deterministic correlation can be written
as the product of the marginals
corresponding to Alice and Bob, $P_D(a_m,b_n|A_i,B_j)=P_D(a_m|A_i)P_D(b_n|B_j)$, here $P_D(a_m|A_i)$ and $P_D(a_m|A_i)$ can take either zero or one for all $m,n,i,j$.
A nondeterministic vertex is known as Popescu-Rohrlich box or maximally nonlocal box \cite{Barrett}.
\subsection*{Local polytope}
Any stochastic hidden variable model can be transformed into a deterministic hidden variable model \cite{Fine},
\be
P(a_m,b_n|A_i,B_j)=\sum_\lambda p_\lambda P_\lambda(a_m|A_i)P_\lambda(b_n|B_j), \label{LDM}
\ee
where $P_\lambda(a_m|A_i)$ and $P_\lambda(b_n|B_j)$ are deterministic. Therefore, the set of local correlations forms a convex polytope known as
Bell polytope or local polytope
whose vertices are the deterministic boxes. All the tight Bell inequalities \cite{WernerWolf},
\be
\sum_{m,n,i,j}C^{ij}_{mn}P(a_m,b_n|A_i, B_j)\le L,
\ee
which are the bounds on the certain linear combinations of the joint probabilities under the constraint in Eq. (\ref{LDM}), form the facets of the local polytope.
These facet inequalities
together with the inequalities in Eq. (\ref{Hns}) give $\mathcal{H}$-representation for the local polytope.
\subsection*{Quantum correlations}
Quantum correlations obtained by local measurements on bipartite quantum systems are given by,
\be
P(a_m,b_n|A_i,B_j)=\tr\left(\rho M^{A_i}_{a_m}\otimes M^{B_j}_{b_n}\right), \label{QTQC}
\ee
where $\rho$ is a bipartite quantum state in a Hilbert space $\mathcal{H}_A\otimes \mathcal{H}_B$, and, $M^{A_i}_{a_m}$ and $M^{B_j}_{b_n}$ are positive operator
valued measures
satisfying positivity, $M^{A_i}_{a_m}\ge0$ and $M^{B_j}_{b_n}\ge0$, and the normalizations, $\sum_{m}M^{A_i}_{a_m}=\openone$ and $\sum_n M^{B_j}_{b_n}=\openone$.
The correlation
predicted by quantum theory as given in Eq. (\ref{QTQC}) implies
that the marginal distributions of Alice and Bob satisfy the nonsignaling principle since
$\sum_nP(a_m,b_n|A_i,B_j)=\sum_n\tr\left(\rho M^{A_i}_{a_m}\otimes M^{B_j}_{b_n}\right)=\tr\left(\rho M^{A_i}_{a_m}\otimes \openone\right)$
and $\sum_mP(a_m,b_n|A_i,B_j)=\sum_m\tr\left(\rho M^{A_i}_{a_m}\otimes M^{B_j}_{b_n}\right)=\tr\left(\rho \openone\otimes M^{B_j}_{b_n}\right)$. Thus,
the set of
quantum correlations is contained in the nonsignaling polytope. Quantum correlations form a convex set; however, it is not a polytope \cite{Pitowski}
since it has infinitely many extremals.
Since there are quantum correlations that violate a Bell inequality and the violation is limited by the Tsirelson bound \cite{tsi1},
quantum correlations are sandwiched between the nonsignaling polytope and the local polytope.

\section{Motivation for the results}
Local correlations are considered as classical in the device-independent framework.
When the local Hilbert space dimensions
are constrained, there are local correlations which can have nonclassicality. There are two kinds of origin of nonclassicality which are
manifested in the type of measurements used for generating the local correlations. That is, nonclassicality of local correlations can originate from noncommuting measurements that demonstrate Bell nonlocality or EPR steering
without Bell nonlocality.
I observed that just like nonlocal correlations, the local correlations which can imply the presence of nonclassicality have a Popescu-Rohrlich box decomposition. This motivated me to obtain a canonical decomposition which can have
nonzero Popescu-Rohrlich box component even for the local correlations.

Moving to the multipartite scenario, the observation of genuine nonlocality implies the
presence of genuine quantum correlation in a device-independent way.
However, there are local correlations which can imply the presence of
genuine quantum correlation when the local Hilbert space dimensions are constrained.
In this thesis, we focus on those quantum correlations which correspond to Svetlichny-type and Mermin-type scenarios.
In the Svetlichny-type scenario, genuine nonlocality is observed using genuinely entangled states and noncommuting measurements which lead to
violation of a Svetlichny inequality \cite{SI,Ghoseetal}.
In this scenario, there are tripartite qubit correlations which are local, but nevertheless, have genuine nonclassicality originating from three-way nonlocality.
In the Mermin-type scenario, genuinely entangled states and noncommuting measurements
that do not demonstrate genuine nonlocality are used to demonstrate Mermin nonlocality \cite{mermin,vMI}.
In the Mermin-type scenario, there are tripartite qubit correlations which are local, but nevertheless, have genuine nonclassicality
originating from Mermin nonlocality.
I observed that just like three-way nonlocal correlations,
the local correlations which can imply the presence of genuine nonclassicality have a Svetlichny box decomposition. This motivated me to obtain a canonical decomposition which can have
nonzero Svetlichny box component even for the local correlations.

\section{Summary and results}
In this thesis, I characterize bipartite and multipartite quantum correlations using nonsignaling polytopes.
\subsection{Bipartite quantum correlations}
In Chapters \ref{Ch2} and \ref{Ch3}, we characterize bipartite nonsignaling boxes with two binary inputs and two binary outputs.
We introduce two notions of nonclassicality of quantum correlations originating from nonlocality and EPR-steering.
To quantify these two types of nonclassicality, we define the two measures, Bell discord and Mermin discord,
which are nonzero also for boxes admitting local hidden variable model.
We obtain canonical decomposition for nonsignaling boxes using the division of the full nonsignaling polytope with respect to these two measures.
We find that any qubit correlations can be decomposed into Popescu-Rohrlich box,
a maximally EPR-steerable box and a local box with Bell and Mermin discord equal to zero.
We characterize and quantify nonclassicality of bipartite quantum correlations using the canonical decomposition and
the two measures.
We show that all quantum states which have non-null quantum discord with respect to both the subsystems \cite{Dakicetal}
can have Bell discord or Mermin discord or both of them simultaneously.
We study nonclassicality of various two-qubit states to illustrate the relevance of Bell and Mermin discord to isolate the origin of nonclassicality.
In Chapter \ref{Ch4}, we introduce a third measure to study total correlations in nonclassical probability distributions arising from various
two-qubit states.
\subsection{Multipartite quantum correlations}
In Chapter \ref{Ch5}, we investigate tripartite quantum correlations using Svetlichny-box polytope
which is a generalization of the PR-box polytope to the multipartite scenario.
We define Svetlichny discord and Mermin discord which are the multipartite generalization of the two bipartite measures introduced in Chapters \ref{Ch2} and \ref{Ch3}.
We find that tripartite qubit correlations which are contained in the Svetlichny-box polytope can be written as a convex mixture of a Svetlichny-box
which exhibits three-way nonlocality,
a three-way contextual box that exhibits the GHZ paradox and a purely classical box that does not have Svetlichny and tripartite Mermin discord.
We illustrate that Svetlichny discord and Mermin discord quantify three-way nonlocality and three-way contextuality of all pure genuinely
entangled states with respect to this decomposition.
We find that separable and biseparable mixed three-qubit states that have an irreducible genuinely entangled state component can give rise to genuine three-way nonclassicality
with respect to the measures, Svetlichny and Mermin discord. We define a measure for total correlations to divide the total amount of correlations
in a given quantum joint probability distribution into three-way nonlocality, three-way contextuality and genuinely classical correlations. 

\chapter{Bell discord and Canonical decomposition of bipartite nonsignaling boxes}
\label{Ch2}


\section*{Abstract}
We  study nonclassicality  in bipartite quantum  correlations in
the   context   of   nonsignaling  polytopes,   that   goes   beyond
nonlocality.
We introduce the  measure, Bell discord, to quantify nonclassicality
of quantum correlations originating from Bell nonlocality. We find that any
nonsignaling box can be written as a convex mixture of an irreducible
Popescu-Rohrlich box and a local box with Bell discord equals to zero.
We illustrate that nonzero Bell discord of quantum correlations originate
from incompatible  measurements  that give  rise  to  Bell nonlocality.
\section{Introduction}
Nonlocality  of  quantum correlations  implies  the  presence of  both
incompatible  measurements and  entanglement \cite{IncomN}.   All pure
bipartite entangled  states violate a Bell  inequality for appropriate
incompatible measurements \cite{GT,PRQB}.  However, Werner showed that
nonlocality  and  entanglement  are   inequivalent;  there  are  mixed
entangled  states   which  have   LHV  models  for   all  measurements
\cite{Werner}.   Thus,  not  all  entangled states  can  lead  to  the
violation of a Bell inequality even when incompatible measurements are
performed on them.
Quantum discord  was introduced as  a measure of  quantum correlations
which  quantifies   nonclassicality  of   separable  states   as  well
\cite{OZ}. In Ref. \cite{Perinotti}, a notion of discord was introduced for states in causal probabilistic
theories \cite{CAP}, which demonstrated that non-null discord is generic nonclassical feature.
It  would be  interesting  to investigate  whether local  correlations
arising  from  incompatible  measurements  performed  on  the  quantum
discordant  states   can  have   nonclassicality.

In  this  work, we  introduce the measure, Bell discord,
to characterize  quantum  correlations in  the framework  of
GNST. Just like geometric measure of quantum discord \cite{Dakicetal},
nonzero Bell discord detects the presence of nonclassicality
in quantum correlations which do not violate a Bell  inequality.
We restrict to  the NS polytope in which the
black  boxes  have  two  binary-inputs  and  two  binary-outputs, i.e., we
characterize  only  those NS  boxes  with  two binary-inputs  and  two
binary-outputs. We show that any nonsignaling box can be decomposed into
Popescu-Rohrlich box and a local box with Bell discord equals to zero.
We find that a bipartite qubit correlation has nonzero Bell discord
if the measured state has nonzero left and right quantum discord \cite{Dakicetal}
and the measurements that give rise to them are incompatible.
\section{Preliminaries}\label{prl}
In GNST, bipartite systems are described by the black boxes shared between two parties. Suppose Alice and Bob input the random variables
$A_i$ and $B_j$ into a black box which they share and obtain the outputs $a_m$ and $b_n$,  the behavior of the given black box is described by
the set of conditional probability distributions, $P(a_m,b_n|A_i,B_j)$.
In the case of two binary-inputs and two binary-outputs, i.e., $m,n,i,j \in \{0,1\}$,
a black box is characterized by $16$ probability distributions which can
be represented in matrix notation as follows,

\begin{equation}
\left( \begin{array}{cccc}
P(a_0,b_0|A_0,B_0) & P(a_0,b_1|A_0,B_0) & P(a_1,b_0|A_0,B_0) & P(a_1,b_1|A_0,B_0) \\
P(a_0,b_0|A_0,B_1) & P(a_0,b_1|A_0,B_1) & P(a_1,b_0|A_0,B_1) & P(a_1,b_1|A_0,B_1) \\
P(a_0,b_0|A_1,B_0) & P(a_0,b_1|A_1,B_0) & P(a_1,b_0|A_1,B_0) & P(a_1,b_1|A_1,B_0) \\
P(a_0,b_0|A_1,B_1) & P(a_0,b_1|A_1,B_1) & P(a_1,b_0|A_1,B_1) & P(a_1,b_1|A_1,B_1) \\
\end{array} \right).
\end{equation}

Barrett \etal\cite{Barrett} showed that the set of bipartite nonsignaling boxes ($\mathcal{N}$) with two binary-inputs and two binary-outputs
forms an $8$ dimensional convex polytope with $24$
vertices. The vertices (or extremal boxes) of this polytope
are $8$ PR-boxes,
\begin{align}
P^{\alpha\beta\gamma}_{PR}(a_m,b_n|A_i,B_j)=\left\{
\begin{array}{lr}
\frac{1}{2}, & m\oplus n=i\cdot j \oplus \alpha i\oplus \beta j \oplus \gamma\\
0 , & \text{otherwise}\\
\end{array}
\right. \label{NLV}
\end{align}
and $16$ deterministic boxes:
\be
P^{\alpha\beta\gamma\epsilon}_D(a_m,b_n|A_i,B_j)=\left\{
\begin{array}{lr}
1, & m=\alpha i\oplus \beta\\
   & n=\gamma j\oplus \epsilon \\
0 , & \text{otherwise}.\\
\end{array}
\right.
\ee
Here $\alpha,\beta,\gamma,\epsilon\in \{0,1\}$  and $\oplus$ denotes addition modulo $2$. Any NS box can be written as a convex sum of the $24$ extremal boxes:
 \be
P(a_m, b_n|A_i,B_j)=\sum^7_{k=0}p_kP^k_{PR}+\sum^{15}_{l=0}q_lP^l_{D},
 \label{CHNS}
\ee
with $\sum_kp_k+\sum_lq_l=1$. Here $k=\alpha\beta\gamma$ and $l=\alpha\beta\gamma\epsilon$.
All the deterministic boxes can be written as the product of marginals corresponding to Alice and Bob, $P_D(a_m,b_n|A_i,B_j)=P_D(a_m|A_i)P_D(b_n|B_j)$,
whereas the $8$ PR-boxes
cannot be written in product form. Note that unlike the deterministic boxes, the marginals of the PR boxes are maximally mixed: {\it i.e.},
$P(a_{m}|A_i)=\frac{1}{2}=P(b_{n}|B_j)$ for all $i,j,m,n$.
The extremal boxes in a given class are equivalent under local reversible operations (LRO) which
include local relabelling of party's inputs and outputs.

Bell polytope ($\mathcal{L}$), which is a subpolytope of $\mathcal{N}$, is a convex hull of the $16$ deterministic boxes: if $P(a_m, b_n|A_i,B_j)\in \mathcal{L}$,
\ba
P(a_m, b_n|A_i,B_j)=\sum^{15}_{l=0}q_lP^l_{D}; \sum_lq_l=1. \label{LD}
\ea
Fine \cite{Fine} showed that a box can be simulated by the deterministic local hidden variable model given above iff the box
satisfies the complete set of Bell-CHSH inequalities \cite{chsh,WernerWolf}:
\ba
\mathcal{B}_{\alpha\beta\gamma} &:=& (-1)^\gamma\braket{A_0B_0}+(-1)^{\beta \oplus \gamma}\braket{A_0B_1}\nonumber\\
&&+(-1)^{\alpha \oplus \gamma}\braket{A_1B_0}+(-1)^{\alpha \oplus \beta \oplus \gamma \oplus 1} \braket{A_1B_1}\le2, \label{BCHSH1}
\ea
which are the nontrivial facets of the Bell polytope.
Here \begin{align*}\braket{A_iB_j}=\sum_{mn}(-1)^{m\oplus n}P(a_m,b_n|A_i,B_j).\end{align*}
All nonlocal boxes lie outside the Bell polytope and violate a Bell-CHSH inequality.

Quantum boxes which belong to the Bell-CHSH scenario \cite{chsh} are obtained by two dichotomic measurements on bipartite quantum states
described by the density matrix $\rho_{AB}$ in the Hilbert space $\mathcal{H}_A\otimes\mathcal{H}_B$.
The Born's rule predicts the behavior of the quantum boxes as follows,
\begin{equation}
P(a_m,b_n|A_i,B_j)=\mathrm{Tr}\left(\rho_{AB}\mathcal{M}_{A_i}^{a_m}\otimes\mathcal{M}_{B_j}^{b_n}\right),\label{QNS}
\end{equation}
where $\mathcal{M}_{A_i}^{a_m}$ and $\mathcal{M}_{B_j}^{b_n}$
are the measurement operators generating binary outcomes $a_{m},b_{n} \in \{-1,1\}$.
A nonlocal box given by decomposition in Eq. (\ref{CHNS}) is quantum if it can be written in the above form. Since the set of quantum boxes is convex \cite{WernerWolf},
any local box can be written
in the form given in Eq. (\ref{QNS}). In this work, we characterize quantum boxes arising from
spin projective measurements $A_i=\hat{a}_i \cdot \vec{\sigma}$ and $B_j=\hat{b}_j \cdot \vec{\sigma}$ along the directions $\hat{a}_i$ and $\hat{b}_j$
on two-qubit systems. Here $\vec{\sigma}$ is the vector of Pauli matrices.
\section{Bell discord}\label{BD}
Fine showed that a quantum box violates a Bell-CHSH inequality iff joint probability distributions for the triples of observables:
$A_0$, $B_0$, $B_1$ and $A_1$, $B_0$, $B_1$ cannot be defined \cite{Fine,Fine1}. This implies that the measurements that give rise
to the violation of a Bell-CHSH inequality are incompatible,
i.e., measurement observables on Alice's and Bob's sides are noncommuting: $[A_0,A_1]\ne0$ and $[B_0,B_1]\ne0$.
However, if a quantum box does not violate a Bell-CHSH inequality, it does not necessarily imply that it cannot arise from incompatible
measurements on an entangled state.

We consider isotropic PR-box \cite{MAG06} which is a mixture of a PR-box and white noise,
\be
P=pP_{PR}+(1-p)P_N. \label{PRiso}
\ee
Here $P_{PR}$ is the canonical PR-box,
\begin{equation}
P^{000}_{PR} = \left( \begin{array}{cccc}
\half & 0 & 0 & \half \\
\half & 0 & 0 & \half \\
\half & 0 & 0 & \half \\
0 & \half & \half & 0
\end{array} \right),
\label{eq:prbox}
\end{equation}
and $P_N$ is white noise defined as follows,
\begin{equation}
P_{N} = \left( \begin{array}{cccc}
\qua & \qua & \qua & \qua \\
\qua & \qua & \qua & \qua \\
\qua & \qua & \qua & \qua \\
\qua & \qua & \qua & \qua
\end{array} \right).
\label{eq:wn}
\end{equation}
The isotropic PR-box violates the Bell-CHSH inequality, i.e., $\mathcal{B}_{000}=4p>2$ if $p>\frac{1}{2}$.
Notice that even if the isotropic PR-box is local when $p\le \frac{1}{2}$, it admits a decomposition with
the single PR-box. We call such a single PR-box in the decomposition of any box (nonlocal, or not) irreducible PR-box.

The isotropic PR-box which is quantum physically realizable if $p\le\frac{1}{\sqrt{2}}$ \cite{MAG06} illustrates the following observation.
\begin{observation}
When local boxes arising from entangled two-qubit states have an irreducible PR-box component, the projective measurements that give rise to them are incompatible.
\end{observation}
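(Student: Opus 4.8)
The plan is to prove the contrapositive: if at least one of the two local pairs of observables is compatible, the box $P$ carries no irreducible PR-box component. Interchanging Alice and Bob is itself a relabelling of inputs and outputs, so it is enough to treat the case $[B_0,B_1]=0$; the case $[A_0,A_1]=0$ is symmetric. So assume $P$ is obtained from spin projective measurements $A_i=\hat{a}_i\cdot\vec{\sigma}$, $B_j=\hat{b}_j\cdot\vec{\sigma}$ on an (entangled) two-qubit state $\rho_{AB}$ with $[B_0,B_1]=0$, and suppose toward a contradiction that the canonical decomposition of $P$ contains a PR-box with positive weight.

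First I would reduce to $\hat{b}_0=\hat{b}_1$. For traceless qubit observables with eigenvalues $\pm1$, $[B_0,B_1]=0$ holds exactly when $\hat{b}_0\parallel\hat{b}_1$, i.e.\ $\hat{b}_1=\pm\hat{b}_0$. If $\hat{b}_1=-\hat{b}_0$, apply the LRO that relabels Bob's output $n\mapsto n\oplus1$ for the input $B_1$ only; on the quantum side this merely flips $\hat{b}_1\to-\hat{b}_1$, while on the box side it permutes the eight PR-boxes among themselves and the sixteen deterministic boxes among themselves. Since whether a box carries a nonzero irreducible PR-box component is an LRO invariant (LRO being the box-world analogue of local unitaries), we may assume $B_0=B_1=:B$. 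Then $P(a_m,b_n|A_i,B_0)=P(a_m,b_n|A_i,B_1)$ for all $i,m,n$: Bob's outputs are statistically independent of Bob's input, so $P$ is the trivial extension of a box from the two-input-by-one-input Bell scenario.

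The heart of the proof is to show such a $P$ has zero irreducible PR-box weight. In the two-input-by-one-input scenario there is no Bell-CHSH inequality to violate and the nonsignaling polytope already coincides with the local polytope; moreover, a box of this degenerate type is a convex mixture of deterministic boxes whose Bob-output does not depend on Bob's input, so no PR-box is required. Concretely I would read this off the defining expression of the irreducible PR-box component (equivalently, of the Bell discord introduced just after this Observation): it is constructed from the dependence of the conditional tables / correlators $\braket{A_iB_j}$ on the $B$-input, and it collapses to zero once $\braket{A_iB_0}=\braket{A_iB_1}$ for $i=0,1$ and the full conditional distributions for $B_0$ and $B_1$ coincide. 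This contradicts the assumption, whence both $[A_0,A_1]\ne0$ and $[B_0,B_1]\ne0$, which is the claim.

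The main obstacle is exactly this last step --- certifying that a box with a single effective measurement on one side is assigned zero PR-box weight by the canonical decomposition. Once that decomposition is in hand with an explicit formula for the weight, the verification is a one-line substitution; in the absence of such a formula it rests on the structural fact that the sub-polytope of boxes with input-independent Bob-output is the convex hull of the corresponding deterministic boxes alone, so no valid extremal decomposition of such a box can use a PR-box with positive weight. I would add that entanglement of $\rho_{AB}$ is not used anywhere: the contrapositive holds for every two-qubit state, and the word ``entangled'' in the statement only serves to flag the non-vacuous case --- realised, for instance, by the quantum isotropic PR-box $pP_{PR}+(1-p)P_N$ with $0<p\le1/\sqrt{2}$.
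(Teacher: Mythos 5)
Your argument is correct in substance, but it is worth noting that the paper does not actually prove this Observation where it is stated: it only \emph{illustrates} it, by exhibiting the isotropic PR-box $P=pP_{PR}+(1-p)P_N$ with $p=\sin 2\theta/\sqrt{2}$ arising from the states $\ket{\psi(\theta)}$ under one specific choice of noncommuting observables. The general contrapositive you prove --- compatible measurements on either side force the irreducible PR-box component to vanish --- is exactly the content of the paper's own Theorem \ref{thm3} in the next chapter, and your verification is the same computation used there: with $\hat b_0=\pm\hat b_1$ (reduced to $\hat b_0=\hat b_1$ by an LRO, as you do), the correlators satisfy $\braket{A_iB_0}=\braket{A_iB_1}$, so the Bell functions pair up as $\mathcal{B}_{00}=\mathcal{B}_{10}$ and $\mathcal{B}_{01}=\mathcal{B}_{11}$, giving $\mathcal{G}=0$ and hence, via $\mathcal{G}(P)=4\mu$, a vanishing irreducible PR-box weight. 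So your proposal is more rigorous than what the paper offers at this location, at the cost of leaning on machinery (the canonical decomposition of Theorem \ref{thm1} and the linearity of $\mathcal{G}$) that the paper only develops \emph{after} using this Observation to motivate the definition of Bell discord; your observation that entanglement plays no role is likewise consistent with the paper's Theorem \ref{thm3}, which is stated for arbitrary two-qubit states.

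One caveat on your fallback structural argument: the claim that ``no valid extremal decomposition of such a box can use a PR-box with positive weight'' is too strong. White noise has an input-independent Bob table yet admits the decomposition $P_N=\tfrac{1}{2}\bigl(P^{\alpha\beta 0}_{PR}+P^{\alpha\beta 1}_{PR}\bigr)$. What is true, and what the irreducibility (i.e.\ minimization over decompositions) is designed to capture, is that such a box admits \emph{some} decomposition into deterministic boxes with Bob's output independent of his input, so the PR-box component is reducible to zero; equivalently $\mathcal{G}=0$. Your primary route through the explicit formula for $\mathcal{G}$ is the safe one.
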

For the noncommuting measurement observables $A_0=\sigma_x$, $A_1=\sigma_y$,
$B_0=\frac{1}{\sqrt{2}}(\sigma_x-\sigma_y)$ and $B_1=\frac{1}{\sqrt{2}}(\sigma_x+\sigma_y)$, the pure entangled states,
\be
\ket{\psi(\theta)}=\cos\theta\ket{00}+\sin\theta\ket{11}; \quad 0 \le \theta \le \pi/4,    \label{nmE}
\ee
give rise to the isotropic PR-box given in Eq. (\ref{PRiso})
with $p=\frac{\sin2\theta}{\sqrt{2}}$. For this choice of measurements, the box is nonlocal if $\sin2\theta>\frac{1}{\sqrt{2}}$. However, the
box has the irreducible PR-box component whenever the state is entangled.

The  observation that  a local  box  which has  an irreducible  PR-box
component  can arise  from incompatible  measurements on  an entangled
state motivates  to define a  notion of nonclassicality which  we call
Bell discord.
\begin{definition}\label{BDdef1}
A  box arising  from incompatible  measurements on  a given  two-qubit
state has \textit{Bell  discord} iff it admits a  decomposition with an
irreducible PR-box component.
\end{definition}
Bell discord is  not equivalent to Bell nonlocality  since local boxes
can  also have  an  irreducible PR-box  component;  for instance,  the
isotropic  PR-box in  Eq.  (\ref{PRiso}) has  Bell  discord if  $p>0$,
whereas it has Bell nonlocality if $p>\frac{1}{2}$.

Notice  that it  is not  necessary that  a given  local box  with Bell
discord can only  arise from incompatible measurements  on a two-qubit
state  since it  can  also  arise from  a  separable  state in  higher
dimensional space for  compatible measurements \cite{DQKD}. We will  show that any
local box with Bell discord  cannot arise from compatible measurements
on two-qubit systems.

Any isotropic PR-box,
\be
P=pP^{\alpha\beta\gamma}_{PR}+(1-p)P_N, \label{isoPR1}
\ee
has a special property that only one of the Bell functions,
\begin{align}
\mathcal{B}_{\alpha\beta}&= |(-1)^\gamma\braket{A_0B_0}+(-1)^{\beta}\braket{A_0B_1}\nonumber\\
&+(-1)^{\alpha}\braket{A_1B_0}+(-1)^{\alpha \oplus \beta  \oplus 1} \braket{A_1B_1}|, \label{bchshmod}
\end{align}
which are the modulus of the Bell-CHSH operators in Eq. (\ref{BCHSH1}), is nonzero.
This is due to the Bell function monogamy (see Appendix. \ref{mBF}) of the irreducible PR-box, $P^{\alpha\beta\gamma}_{PR}$, in the decomposition.
Thus, the above property quantifies Bell discord of the local isotropic PR-boxes.
Local boxes that have an irreducible PR-box component, in general, have more than one Bell functions nonzero.

Before defining a measure of Bell discord which quantifies irreducible PR-box in any box, we construct the following quantities,
\ba
\mathcal{G}_1&:=&\Big||\mathcal{B}_{00}-\mathcal{B}_{01}|-|\mathcal{B}_{10}-\mathcal{B}_{11}|\Big|\nonumber\\
\mathcal{G}_2&:=&\Big||\mathcal{B}_{00}-\mathcal{B}_{10}|-|\mathcal{B}_{01}-\mathcal{B}_{11}|\Big| \label{gi}\\
\mathcal{G}_3&:=&\Big||\mathcal{B}_{00}-\mathcal{B}_{11}|-|\mathcal{B}_{01}-\mathcal{B}_{10}|\Big| \nonumber .
\ea
Here $\mathcal{G}_i$ are constructed such that it satisfies the following properties: (i) positivity, i.e., $\mathcal{G}_i\ge0$, (ii) $\mathcal{G}_i=0$ for all the
deterministic boxes and (iii) the algebraic maximum of $\mathcal{G}_i$ is achieved by the PR-boxes, i.e., $\mathcal{G}_i=4$ for any PR-box.
\begin{definition}\label{BDdef}
Bell discord, $\mathcal{G}$, is defined as,
\begin{equation}
\mathcal{G} := \min_i \mathcal{G}_i, \label{defBD}
\end{equation}
where $\mathcal{G}_i$ are given in Eq. (\ref{gi}). Here $0\le\mathcal{G}\le4$.
\end{definition}
Bell discord is clearly invariant under LRO and
interchange of the subsystems since the set $\{\mathcal{G}_i, i=1,2,3\}$ is invariant under these two transformations. Therefore,
a $\mathcal{G}>0$ box cannot be transformed into a $\mathcal{G}=0$ box by LRO and vice versa.

\begin{observation}
The set of local boxes that have $\mathcal{G}=0$ forms a subset of the set of all local boxes and is nonconvex.
\end{observation}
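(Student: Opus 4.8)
The plan is to prove the two assertions separately. For the \emph{subset} claim, note first that every deterministic box has all four Bell functions $\mathcal{B}_{\alpha\beta}=2$ (each $|\braket{A_iB_j}|=1$, and one checks the signs work out so the sum is $\pm2$ in absolute value), hence $\mathcal{G}_i=0$ for all $i$ and so $\mathcal{G}=0$. Since the local polytope $\mathcal{L}$ is the convex hull of the deterministic boxes, this alone does not give $\mathcal{G}=0$ on all of $\mathcal{L}$ — indeed the whole point of the chapter is that $\mathcal{G}$ can be positive on local boxes. So the first step is simply to exhibit one local box with $\mathcal{G}>0$: the isotropic PR-box of Eq.~(\ref{PRiso}) with $0<p\le\frac12$ is local, yet by the Bell-function monogamy of its irreducible PR-box component exactly one $\mathcal{B}_{\alpha\beta}$ is nonzero (equal to $4p$) while the other three vanish, so every $\mathcal{G}_i$ reduces to $|4p-0|=4p>0$ and $\mathcal{G}=4p>0$. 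This shows $\{P\in\mathcal{L}:\mathcal{G}(P)=0\}$ is a \emph{proper} subset of $\mathcal{L}$; that it is a subset \emph{of the set of all local boxes} is immediate from the definition of the set.

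For the \emph{nonconvexity} claim, the strategy is to find two local boxes $P_1,P_2$ with $\mathcal{G}(P_1)=\mathcal{G}(P_2)=0$ whose midpoint $\tfrac12(P_1+P_2)$ has $\mathcal{G}>0$. The natural candidates are deterministic boxes. Here the key calculation is to take two deterministic boxes $P^{l_1}_D,P^{l_2}_D$ that differ in a way producing an ``anticorrelation pattern'' — for instance one with marginals giving $\braket{A_0B_0}=\braket{A_0B_1}=\braket{A_1B_0}=+1$, $\braket{A_1B_1}=+1$, and another giving the same except with some signs flipped so that the average box has $\braket{A_iB_j}$ values making one Bell function strictly exceed what the individual $\mathcal{G}_i=0$ structure would allow. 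Concretely I would look for $P_1,P_2\in\mathcal{L}$ such that the mixture coincides with (an LRO image of) an isotropic PR-box with small $p$, or more directly compute the four $\mathcal{B}_{\alpha\beta}$ for the midpoint of a well-chosen pair of deterministic boxes and verify that the three differences $\mathcal{G}_1,\mathcal{G}_2,\mathcal{G}_3$ are all positive. Because $\mathcal{B}_{\alpha\beta}$ is linear in the box but $\mathcal{G}_i$ involves nested absolute values of \emph{differences} of the $\mathcal{B}$'s, averaging two boxes on which these differences cancel can produce a box on which they reinforce — this is exactly the mechanism that breaks convexity.

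The main obstacle is the second part: one must pin down an explicit pair of deterministic (or more generally $\mathcal{G}=0$ local) boxes whose average has all of $\mathcal{G}_1,\mathcal{G}_2,\mathcal{G}_3>0$, since $\mathcal{G}=\min_i\mathcal{G}_i$ forces us to defeat all three simultaneously rather than just one. A clean way to organize this is to use the fact, established elsewhere in the chapter, that any nonsignaling box has a decomposition with an irreducible PR-box component, and to choose $P_1,P_2$ so that neither has such a component (hence $\mathcal{G}=0$) while their midpoint does — the isotropic PR-box family shows such midpoints exist, so it remains only to back-solve for two $\mathcal{G}=0$ endpoints, e.g.\ $P_1 = P^{000}_{PR}$ is \emph{not} allowed (it has $\mathcal{G}=4$), but $P_1=\tfrac12(P^{l}_D+P^{l'}_D)$ for suitable deterministic boxes giving exactly $P_N$-like flat correlations on one coordinate, paired with a complementary $P_2$, will do. Verifying the inequalities for this explicit pair is a short finite computation, and once it is done both bullets follow.
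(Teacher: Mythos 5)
Your treatment of the first claim is fine and matches the paper: every deterministic box has all four Bell functions equal to $2$, hence $\mathcal{G}=0$, while the local isotropic PR-box of Eq.~(\ref{PRiso}) with $0<p\le\frac{1}{2}$ has $\mathcal{G}=4p>0$, so the $\mathcal{G}=0$ boxes form a proper subset of the local polytope. The gap is in the nonconvexity half, which you defer as ``a short finite computation'' after proposing to find a \emph{pair} of $\mathcal{G}=0$ boxes whose midpoint has $\mathcal{G}>0$. Your first concrete candidate --- a pair of deterministic boxes --- provably cannot work. For a deterministic box with local values $a_i,b_j\in\{\pm1\}$ the signed Bell quantities form a vector $2(u,v,u,-v)$ or $2(v,u,-v,u)$ with $u,v\in\{\pm1\}$; any two such vectors agree in $0$, $2$ or $4$ positions, never $1$ or $3$. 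Hence the midpoint of two deterministic boxes has each $\mathcal{B}_{\alpha\beta}\in\{0,2\}$ with an \emph{even} number of entries equal to $2$, and a direct check of all such configurations gives $\mathcal{G}_1=\mathcal{G}_2=\mathcal{G}_3=0$. The ``anticorrelation pattern'' you hope to engineer does not exist at the level of pairs of vertices, so the step you label a short computation is actually the step where this route fails.

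The repair is simpler than what you attempt and is exactly the paper's argument: nonconvexity does not require exhibiting a pair. A convex set contains every \emph{finite} convex combination of its elements, so it suffices to display one convex combination of $\mathcal{G}=0$ boxes that lies outside the set. You already hold all the ingredients: the isotropic PR-box with $0<p\le\frac{1}{2}$ is local (it satisfies all Bell-CHSH inequalities, hence by Fine's theorem lies in the convex hull of the $16$ deterministic boxes), each deterministic box has $\mathcal{G}=0$, yet the mixture has $\mathcal{G}=4p>0$. Putting these three facts together closes the proof with no back-solving for endpoints. If you do insist on a two-element witness, the endpoints must be non-deterministic $\mathcal{G}=0$ local boxes --- for instance the midpoint of the two Mermin boxes $\frac{1}{2}\left(P^{000}_{PR}+P^{110}_{PR}\right)$ and $\frac{1}{2}\left(P^{000}_{PR}+P^{111}_{PR}\right)$ is precisely the $p=\frac{1}{2}$ isotropic PR-box --- but that machinery belongs to the next chapter and is not needed here.
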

\begin{proof}
The set of $\mathcal{G}=0$ boxes is nonconvex since certain convex combination of the deterministic boxes can have $\mathcal{G}>0$. For instance,
the boxes in Eq. (\ref{isoPR1}) can be written as a
convex combination of the deterministic boxes when $p\le\frac{1}{2}$, however, it has Bell discord $\mathcal{G}=4p>0$ if $p>0$.
As the deterministic boxes have $\mathcal{G}=0$ and the Bell polytope contains $\mathcal{G}>0$ boxes, the set of $\mathcal{G}=0$
boxes form a subset of the local boxes.
\end{proof}

The division of the Bell polytope with respect to $\mathcal{G}$ allows us to obtain the following canonical decomposition of the NS boxes (see
Appendix. \ref{pr1} for details).
\begin{theorem}
\label{thm1}
Any NS box can be decomposed into PR-box and a local box that does not have an irreducible PR-box component,
\be
P=\mu P^{\alpha\beta\gamma}_{PR}+\left(1-\mu\right)P_{L}^{\mathcal{G}=0}, \label{Gde}
\ee
where $\mu$ is the maximal irreducible PR-box component and $P_{L}^{\mathcal{G}=0}$ is the local box which has $\mathcal{G}=0$.
\end{theorem}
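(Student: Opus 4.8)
The plan is to pin $\mu$ down as $\mathcal{G}(P)/4$, exhibit the right PR-box explicitly, and then check separately that the complement is nonsignaling, local, and has $\mathcal{G}=0$, using throughout that the functions $\mathcal{B}_{\alpha\beta\gamma}$ of Eq.~(\ref{BCHSH1}) are affine in the box. The first ingredient is the behaviour of these functions on the $24$ vertices: a short correlator computation gives $\mathcal{B}_{\alpha'\beta'}(P^{\alpha\beta\gamma}_{PR})=4\,\delta_{\alpha\alpha'}\delta_{\beta\beta'}$, every deterministic box has all four Bell moduli equal to $2$, and the white noise $P_N$ has all four equal to $0$; moreover $\tfrac12(P^{\alpha\beta0}_{PR}+P^{\alpha\beta1}_{PR})=P_N$ and, for $(\alpha\beta)\neq(\alpha'\beta')$, the equal mixture $\tfrac12(P^{\alpha\beta\gamma}_{PR}+P^{\alpha'\beta'\gamma'}_{PR})$ has exactly two nonzero moduli, both equal to $2$, so it lies in $\mathcal{L}$ with $\mathcal{G}=0$. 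This also makes $\mathcal{G}_1,\mathcal{G}_2,\mathcal{G}_3$ of Eq.~(\ref{gi}) transparent: they are the three ways of splitting the four games $\{00,01,10,11\}$ into two pairs, and $\mathcal{G}$ reads off the ``most unbalanced'' split.

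Given $P$, pick a split $i^{*}$ attaining $\mathcal{G}(P)=\mathcal{G}_{i^{*}}(P)$; within $i^{*}$ there is a dominant pair (the one with the larger of the two absolute modulus-differences) and, in that pair, a larger modulus. After relabelling by LRO and subsystem swap (which act on the three splits) call that pair $\{00,01\}$ with $\mathcal{B}_{00}\ge\mathcal{B}_{01}$, so that $\mathcal{B}_{00}-\mathcal{B}_{01}\ge|\mathcal{B}_{10}-\mathcal{B}_{11}|$ and $\mathcal{G}(P)=(\mathcal{B}_{00}-\mathcal{B}_{01})-|\mathcal{B}_{10}-\mathcal{B}_{11}|$. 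Set $\mu:=\mathcal{G}(P)/4\ge0$, fix $\gamma$ with $\mathcal{B}_{00\gamma}(P)=\mathcal{B}_{00}$, and take $P^{\alpha\beta\gamma}_{PR}:=P^{00\gamma}_{PR}$. If $\mu=1$ then $P$ is itself a PR-box and there is nothing to prove; otherwise put $Q:=\tfrac1{1-\mu}(P-\mu P^{00\gamma}_{PR})$.

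I would then check the two properties of $Q$. Since $\mathcal{B}_{00\gamma}(P^{00\gamma}_{PR})=4$ and $P^{00\gamma}_{PR}$ contributes zero to the other three Bell moduli, subtracting $\mu P^{00\gamma}_{PR}$ moves only the $00$ modulus: $\mathcal{B}_{00}(Q)=|\mathcal{B}_{00}-4\mu|/(1-\mu)$ and $\mathcal{B}_{\alpha'\beta'}(Q)=\mathcal{B}_{\alpha'\beta'}(P)/(1-\mu)$ otherwise. Because $\mathcal{B}_{00}-4\mu=\mathcal{B}_{01}+|\mathcal{B}_{10}-\mathcal{B}_{11}|\ge0$, we get $\big|\,|\mathcal{B}_{00}-4\mu|-\mathcal{B}_{01}\,\big|=|\mathcal{B}_{10}-\mathcal{B}_{11}|$, i.e.\ $\mathcal{G}_1(Q)=0$, whence $\mathcal{G}(Q)=0$ (the other splits are handled the same way after relabelling). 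Finally, the monogamy of the Bell functions (Appendix~\ref{mBF}) forces every modulus of an NS box with $\mathcal{G}=0$ to be $\le2$, so $Q$ obeys every Bell--CHSH inequality and, by Fine's theorem (Eq.~(\ref{LD})), $Q=P_L^{\mathcal{G}=0}\in\mathcal{L}$.

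Two points remain, and the feasibility one is where I expect the real work. \emph{Feasibility}: one must show $P-\mu P^{00\gamma}_{PR}\ge0$, i.e.\ that every entry of $P$ on the support of $P^{00\gamma}_{PR}$ is at least $\mu/2=\mathcal{G}(P)/8$. Writing an entry at input $(i,j)$ as $\tfrac14\big(1+(-1)^mM^A_i+(-1)^nM^B_j+(-1)^{m\oplus n}E_{ij}\big)$, the smaller of the two supported entries equals $\tfrac14\big(1+c_{ij}E_{ij}-|M^A_i+c_{ij}M^B_j|\big)$ with $c_{ij}=(-1)^{ij\oplus\gamma}$ and $\sum_{ij}c_{ij}E_{ij}=\mathcal{B}_{00}$, so the needed bound ties together exactly the correlators that make $\mathcal{G}$ large and the marginals, and the content is that biased marginals are forced, through positivity at the remaining inputs, to suppress $\mathcal{G}$. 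I would establish this by combining the Bell-function monogamy relations with the positivity constraints --- most conveniently by producing, via the vertex decomposition~(\ref{CHNS}), a representation of $P$ in which a single PR-box carries weight $\ge\mathcal{G}(P)/4$ (collapse each conjugate pair using $\tfrac12(P^{\alpha\beta0}_{PR}+P^{\alpha\beta1}_{PR})=P_N$, then cancel PR-boxes of distinct games pairwise), since then $P-\mu P^{00\gamma}_{PR}$ is manifestly a nonnegative combination of the remaining extremals; showing that the peelable weight is never smaller than $\mathcal{G}(P)/4$ is the delicate step. \emph{Maximality}: running the Step-3 computation with $\mu'>\mathcal{G}(P)/4$ gives $\mathcal{B}_{00}-4\mu'<\mathcal{B}_{01}+|\mathcal{B}_{10}-\mathcal{B}_{11}|$, so the complement of $\mu'P^{00\gamma}_{PR}$ regains $\mathcal{G}>0$ --- the isotropic PR-box of Eq.~(\ref{PRiso}), where over-subtracting $P^{000}_{PR}$ eventually leaves the conjugate box $P^{001}_{PR}$, is the guiding case --- and by the normal form no other PR-box can be peeled off with larger weight while keeping a $\mathcal{G}=0$ complement; hence $\mu=\mathcal{G}(P)/4$ is the maximal irreducible PR-box component.
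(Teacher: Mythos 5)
Your overall strategy is genuinely different from the paper's: you fix $\mu=\mathcal{G}(P)/4$ up front, subtract the PR-box identified by the dominant split, and verify properties of the remainder, whereas the paper works entirely inside the extremal decomposition (\ref{CHNS}), pairing off anti-PR-boxes into $P_N$ and unequal PR-box pairs into local uniform mixtures until a single irreducible PR-box is left (Observations~\ref{umPR}--\ref{nllg0} in Appendix~\ref{pr1}). Your Step-3 algebra is correct: with $\mu=\mathcal{G}(P)/4$ one indeed gets $\mathcal{B}_{00}(Q)(1-\mu)=\mathcal{B}_{01}+|\mathcal{B}_{10}-\mathcal{B}_{11}|$ and hence $\mathcal{G}_1(Q)=0$, and your maximality argument is the right one. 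But the proof is not complete, for two reasons. First, the existence of the decomposition hinges entirely on the feasibility step $P-\mu P^{00\gamma}_{PR}\ge 0$, which you explicitly leave open ("the delicate step"); your fallback is to produce a vertex representation in which a single PR-box carries weight at least $\mathcal{G}(P)/4$ --- but that \emph{is} the paper's proof, so at the one point where real work is required the proposal defers to the argument it was meant to replace. Note also that the bound you need is not simply that every supported entry exceeds $\mu/2$ at the level of correlators: the deterministic vertices contribute to the Bell functions too, so relating the "peelable" PR-box weight in (\ref{CHNS}) to the value of $\mathcal{G}$ computed from the full box (not just from the PR-box part of the mixture) is itself a nontrivial claim that neither you nor, frankly, the paper establishes in detail.

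Second, your justification that $Q$ is local is inadequate as stated. You invoke the monogamy relation of Appendix~\ref{mBF} to conclude that $\mathcal{G}(Q)=0$ forces all four Bell moduli of $Q$ below $2$. Monogamy alone does not give this: the quadruple $(\mathcal{B}_{00},\mathcal{B}_{01},\mathcal{B}_{10},\mathcal{B}_{11})=(2.4,\,1.4,\,1.0,\,0)$ satisfies every pairwise bound $\mathcal{B}_i+\mathcal{B}_j\le 4$ and has $\mathcal{G}_1=\mathcal{G}_2=0$, yet contains a modulus above $2$. What rules such a point out is not monogamy but the realizability constraint $|\braket{A_iB_j}|\le 1$: writing $\mathcal{B}_{\alpha\beta 0}$ in terms of the correlators and using identities such as $|a+c|+|a-c|=2\max(|a|,|c|)\le 2$, one can show case by case that $\min_i\mathcal{G}_i=0$ together with $|\braket{A_iB_j}|\le 1$ forces all $|\mathcal{B}_{\alpha\beta\gamma}|\le 2$, after which Fine's theorem gives locality. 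That argument is available and would close this part of your proof, but it needs to be made; as written, the step "monogamy $\Rightarrow$ $Q\in\mathcal{L}$" is a gap. In short: the skeleton is a legitimate alternative route and the bookkeeping with $\mathcal{G}$ is sound, but both load-bearing steps --- positivity of the subtraction and locality of the remainder --- are asserted rather than proved.
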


We say that the decomposition of the NS boxes given in Eq. (\ref{Gde}) is canonical in that it represents the classification of any NS box
according to whether it has Bell discord or not, which is more general than the classification of NS boxes into nonlocal and local boxes.
Notice that the irreducible PR-box component in Eq. (\ref{Gde}) should not be confused with the nonlocal cost which goes to zero for all the local boxes \cite{EPR2,EPR2B}.

\begin{figure}[h!]
\centering
\includegraphics[scale=0.40]{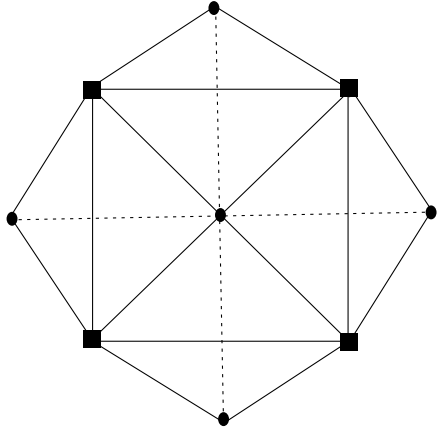}
\caption[Canonical decomposition]{A two dimensional representation of the NS polytope is shown here. Square represents the local polytope whose
vertices denoted by square points represent the deterministic boxes.
The circular points which lie above the local polytope represent the PR-boxes.
The points which lie on the lines connecting the center of the NS polytope (white noise) and the square points forms $\mathcal{G}=0$ nonconvex polytope.
Any point that goes outside the $\mathcal{G}=0$ region lies on a line joining a PR-box and a $\mathcal{G}=0$ box; for instance, any point that lies
on the dotted line can be written as a convex mixture of a PR-box and white noise.
}\label{Gpolytope}
\end{figure}

We now notice that a box has nonzero Bell discord iff it admits a decomposition that has an irreducible PR-box component.
For any box given by the decomposition in Eq. (\ref{Gde}),
$\mathcal{G}$ is linear (see Appendix \ref{lbdmd} for illustration), i.e., $\mathcal{G}(P)=\mu\mathcal{G}\left(P^{\alpha\beta\gamma}_{PR}\right)+\left(1-\mu
\right)\mathcal{G}\left(P^{\mathcal{G}=0}_L\right)$ which implies that $\mathcal{G}(P)=4\mu>0$ iff $\mu>0$.
Thus, if a box has nonzero Bell discord, it lies on a line joining a PR-box and a local box that does not have an irreducible PR-box component (see fig. \ref{Gpolytope} for
illustration).  The invariance of $\mathcal{G}$ under LRO implies that the irreducible PR-box component in the canonical decomposition given in Eq. (\ref{Gde}) is
invariant under LRO.

\section{Bell discord of two-qubit states}
We will apply Bell discord to the boxes arising from the pure entangled states and the Werner states.
Nonzero Bell discord of local boxes arising from these states originates from incompatible measurements which give rise to Bell nonlocality.
The incompatibility of measurement observables amounts to $\hat{a}_0\cdot\hat{a}_1 \ne 1$ and $\hat{b}_0\cdot\hat{b}_1 \ne 1$
for the measurement unit vectors. We will find that optimal Bell discord
is achieved by the orthogonal measurements on both the sides, i.e.,
$\hat{a}_0\cdot\hat{a}_1 = 0$ and $\hat{b}_0\cdot\hat{b}_1 = 0$. For a given state, a box has optimal Bell discord if only one of the Bell functions
$\mathcal{B}_{\alpha\beta}$ in Eq. (\ref{bchshmod}) is nonzero.

\subsection{Pure nonmaximally entangled states}
Any pure entangled state can be written in the Schmidt form \cite{Sch2} given in Eq. (\ref{nmE}).
Entanglement of these pure states can be quantified by the tangle, $\tau=\sin^22\theta$ \cite{CKW}.

(a) For the orthogonal measurement settings:
${\vec{a}_0}=\hat{x}$, ${\vec{a}_1}=\hat{y}$,
${\vec{b}_0} =\frac{1}{\sqrt{2}}(\hat{x}-\hat{y})$ and ${\vec{b}_1}=\frac{1}{\sqrt{2}}(\hat{x}+\hat{y})$,
the pure entangled states in Eq. (\ref{nmE}) give to
the isotropic PR-box as follows:
\ba
P=\frac{\sqrt{\tau}}{\sqrt{2}}P_{PR}+\left(1-\frac{\sqrt{\tau}}{\sqrt{2}}\right) P_N. \label{BSb}
\ea
The above box violates the Bell-CHSH inequality, i.e., $\mathcal{B}_{000}=2\sqrt{2\tau}>2$ if $\tau>\frac{1}{2}$
and has Bell discord $\mathcal{G}=2\sqrt{2\tau}>0$ if $\tau>0$. Notice that the irreducible PR-box component of the local box in Eq. (\ref{BSb})
is due to entanglement and the incompatible measurements that gives rise to Bell nonlocality.

(b) Popescu and Rohrlich showed that all the pure entangled states give rise to Bell nonlocality
for the state dependent settings \cite{PRQB}:
${\vec{a}_0}=\hat{z}$, ${\vec{a}_1}=\hat{x}$,
${\vec{b}_0}=\cos t\hat{z}+\sin t\hat{x}$ and ${\vec{b}_1}=\cos t\hat{z}-\sin t\hat{x}$,
where $\cos t=\frac{1}{\sqrt{1+\tau}}$. For this settings, the
box can be decomposed into PR-box and a local box which has nonmaximally mixed marginals and $\mathcal{G}=0$,
\be
P=\frac{\tau}{\sqrt{1+\tau}}P_{PR}+\left(1-\frac{\tau}{\sqrt{1+\tau}}\right)P^{\mathcal{G}=0}_L.\label{PRQ}
\ee
Here the $\mathcal{G}=0$ box, $P^{\mathcal{G}=0}_L$, becomes white noise for the maximally entangled state.
For the above box, the Bell-CHSH operator $\mathcal{B}_{000}=2\sqrt{1+\tau}>2$ if $\tau>0$ and Bell discord
$\mathcal{G}=\frac{4\tau}{\sqrt{1+\tau}}>0$ if $\tau>0$.

\begin{figure}[h!]
\centering
\includegraphics[scale=0.9]{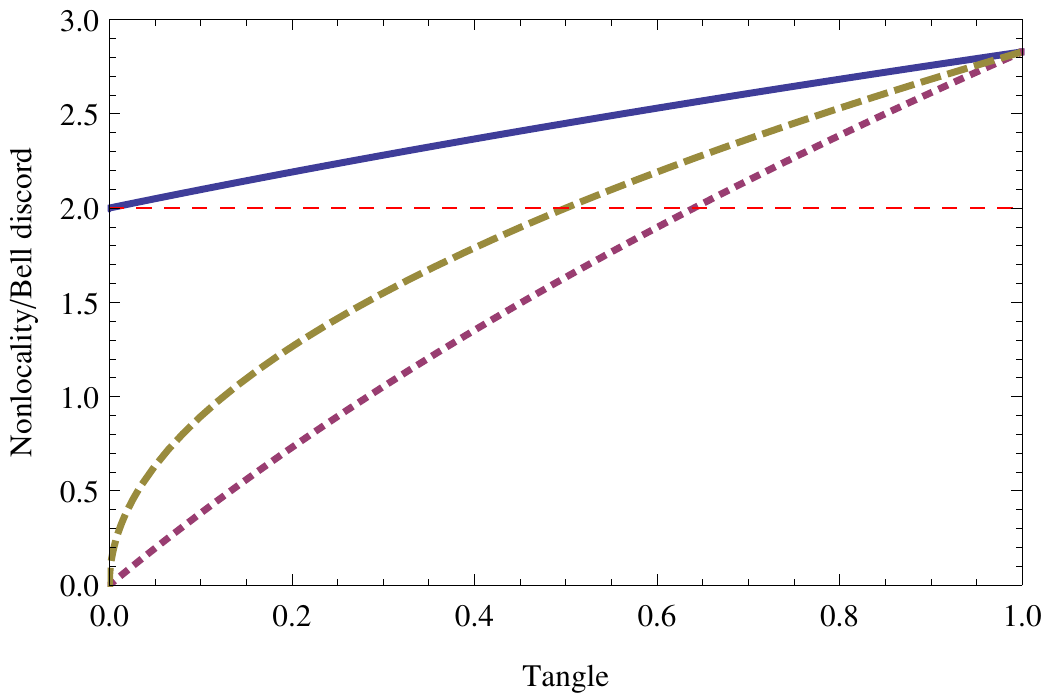}
\caption[Bell discord vs tangle]{Dashed line shows the plots of the Bell-CHSH inequality violation
and Bell discord for the box given in Eq. (\ref{BSb}). Solid and dotted lines show the plots of the Bell-CHSH inequality violation and
Bell discord respectively for the box given in Eq. (\ref{PRQ}).
We observe that the box in Eq. (\ref{PRQ}) gives optimal violation of the Bell-CHSH inequality, however, it does not give optimal Bell discord as this box
has less Bell discord than the box in Eq. (\ref{BSb}).}\label{plotine0}
\end{figure}

Notice that the box in Eq. (\ref{PRQ}) has less irreducible PR-box component than the box in Eq. (\ref{BSb}) for a given amount of entanglement
quantified by the tangle (see fig. \ref{plotine0}). Thus,
when the pure nonmaximally entangled states give rise to optimal violation of the Bell-CHSH inequality, the box does not have
optimal Bell discord and has nonmaximally mixed marginals.

\subsection{Werner states}
Consider the Werner states,
\be
\rho_W=p\ketbra{\psi^+}{\psi^+}+(1-p)\frac{\openone}{4}, \label{MQDs}
\ee
which are entangled iff $p>\frac{1}{3}$ \cite{Werner}. It is known that
the Werner states have nonzero quantum discord if $p>0$ \cite{OZ}. Similarly, we show that the Werner states can have Bell discord
if $p>0$. Notice that the separable Werner states admit a decomposition with an irreducible maximally entangled state component,
just like the local isotropic PR-box which admits a decomposition with an irreducible PR-box component.

For the orthogonal measurement settings that gives rise to the optimal Bell discord for the pure states given in Eq. (\ref{BSb}),
the Werner states give rise to the isotropic PR-box as follows,
\be
P=\frac{p}{\sqrt{2}} P_{PR}+\left(1-\frac{p}{\sqrt{2}}\right)P_N. \label{BW}
\ee
The above box violates the Bell-CHSH inequality if $p^2>\frac{1}{2}$ and has Bell discord $\mathcal{G}=2\sqrt{2p^2}>0$ if $p>0$.
Notice that Bell discord of the local box in Eq. (\ref{BW}) is due to the incompatible measurements performed on
the entangled states which cannot give rise to the violation of a
Bell-CHSH inequality or the separable nonzero quantum discord states.

It has been shown that quantum correlation in mixed states quantified by quantum discord plays the role of entanglement in pure states and
the Werner states are maximally quantum-correlated states \cite{GILU}.
Similarly, we observe that the boxes arising from the Werner states
in Eq. (\ref{BW}) have analogous behavior of the boxes arising from the pure states in
Eq. (\ref{BSb}):
\begin{observation}
When the pure entangled states and the Werner states give rise to optimal Bell discord,
the component of irreducible maximally entangled state, $p$, i.e., quantum discord of the mixed states
plays the same role as the concurrence \cite{WKW}, $\mathcal{C}=\sin2\theta$, i.e., entanglement of the pure states.
\end{observation}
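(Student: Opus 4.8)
The plan is to obtain the statement by reconstructing the two boxes already exhibited in Eqs.~(\ref{BSb}) and (\ref{BW}) from the correlation tensors of the two state families, observing that with the Bell-discord-optimal settings they are literally the same box with $\mathcal{C}=\sin2\theta$ exchanged for $p$, and then reading off the irreducible PR-box weight. First I would record, for spin measurements $A_i=\hat a_i\cdot\vec\sigma$, $B_j=\hat b_j\cdot\vec\sigma$, that $\braket{A_iB_j}=\hat a_i^{\top}T\hat b_j$ with $T_{kl}=\tr(\rho\,\sigma_k\otimes\sigma_l)$ and that the single-party marginals are fixed by the reduced Bloch vectors. For $\ket{\psi(\theta)}$ of Eq.~(\ref{nmE}) one has $T=\mathrm{diag}(\sin2\theta,-\sin2\theta,1)$ and $\vec r_A=\vec r_B=(0,0,\cos2\theta)$; for $\rho_W$ of Eq.~(\ref{MQDs}) one has $T=p\,\mathrm{diag}(1,-1,1)$ and $\vec r_A=\vec r_B=0$. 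In both cases every box generated is a mixture of PR-boxes and white noise.

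Next, substituting the orthogonal settings $\hat a_0=\hat x$, $\hat a_1=\hat y$, $\hat b_0=\tfrac{1}{\sqrt{2}}(\hat x-\hat y)$, $\hat b_1=\tfrac{1}{\sqrt{2}}(\hat x+\hat y)$ (which lie in the plane orthogonal to $\hat z$, so the marginals stay uniform), the four correlators acquire the PR-box sign pattern with common modulus $\tfrac{\mathcal{C}}{\sqrt{2}}$ for $\ket{\psi(\theta)}$, where $\mathcal{C}=\sqrt{\tau}=\sin2\theta$ (using $T_{xx}=\mathcal{C}$, $T_{yy}=-\mathcal{C}$), and $\tfrac{p}{\sqrt{2}}$ for $\rho_W$; these are exactly the isotropic PR-boxes of Eqs.~(\ref{BSb}) and (\ref{BW}), identical under $\mathcal{C}\leftrightarrow p$. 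By the Bell-function monogamy of the irreducible PR-box (Appendix~\ref{mBF}) only one $\mathcal{B}_{\alpha\beta}$ of Eq.~(\ref{bchshmod}) is nonzero, so each of these boxes sits in the optimal (single nonzero Bell function) configuration, and by the linearity of $\mathcal{G}$ on the canonical decomposition Eq.~(\ref{Gde}) — the $\mathcal{G}=0$ part here being white noise — the Bell discord is $\mathcal{G}=4\mu$, namely $2\sqrt{2}\,\mathcal{C}$ for the pure states and $2\sqrt{2}\,p$ for the Werner states.

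Putting these together, the Bell-discord-optimal box of $\ket{\psi(\theta)}$ coincides with that of $\rho_W$ once $p$ is replaced by the concurrence $\mathcal{C}=\sin2\theta$, and the irreducible PR-box weight $\mu=\tfrac{1}{4}\mathcal{G}$ is $\tfrac{\mathcal{C}}{\sqrt{2}}$ in the first case and $\tfrac{p}{\sqrt{2}}$ in the second; hence $p$ — the weight of the irreducible maximally entangled component of $\rho_W$, the parameter that controls its quantum discord — plays exactly the role of the concurrence, i.e.\ the entanglement of $\ket{\psi(\theta)}$. The one point that needs more than bookkeeping is the claim that these settings are globally optimal for $\mathcal{G}$ over all spin measurements: the crude bound $\mathcal{G}=\min_i\mathcal{G}_i\le\max_{\alpha\beta}\mathcal{B}_{\alpha\beta}$ is not tight, since the settings of Eq.~(\ref{PRQ}) produce a larger single Bell function yet a smaller $\mathcal{G}$, so a genuine optimization over the measurement directions is required — exploiting that $T$ is diagonal and that preserving uniform marginals forces the directions into the plane orthogonal to $\vec r_A$, where $\braket{A_iB_j}=\mathcal{C}\cos(\phi_i+\psi_j)$ and the surviving Bell functional reduces to $\mathcal{C}$ times a standard CHSH angle-functional bounded by $2\sqrt{2}$. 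I expect that maximization, rather than the computational core of the observation (which is just the structural identity of Eqs.~(\ref{BSb}) and (\ref{BW})), to be the main obstacle.
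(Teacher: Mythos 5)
Your proposal is correct and takes essentially the same route as the paper: it reconstructs the isotropic PR-boxes of Eqs.~(\ref{BSb}) and (\ref{BW}) from the correlation tensors, notes that they coincide under the exchange $\mathcal{C}\leftrightarrow p$, and reads off the common irreducible PR-box weight $\mu=\mathcal{C}/\sqrt{2}$ versus $p/\sqrt{2}$. The global-optimality issue you flag as the ``main obstacle'' is a genuine gap, but it is not closed in the paper either --- the observation there rests on the same structural identity of the two boxes and simply asserts that the orthogonal settings are optimal.
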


\subsection{Mixed nonmaximally entangled states}
We consider the correlations arising from the mixed states that can be written as a mixture of the Bell state and the classically-correlated state,
\be
\rho= p \ketbra{\psi^+}{\psi^+}+(1-p) \rho_{CC}, \label{rCC}
\ee
where $\rho_{CC}=\frac{1}{2}(\ketbra{00}{00}+\ketbra{11}{11})$.
We illustrate that for the measurements that give rise to optimal Bell discord, these states have the same behavior as the Werner states,
and, for the measurements that give rise to optimal Bell nonlocality, these states and the pure states in Eq. (\ref{nmE})
have similar behavior:

For the settings that give rise to the noisy PR-box in Eq. (\ref{BSb}),
the correlations arising from the states in Eq. (\ref{rCC}) have the same decomposition as for the box arising from the Werner state
in Eq. (\ref{BW}) as the classically-correlated
state in Eq. (\ref{rCC}) gives rise to white noise for this settings. Therefore,
the correlations violate the Bell-CHSH inequality if $p>\frac{1}{\sqrt{2}}$ and have Bell discord $\mathcal{G}=2\sqrt{2}p>0$ if
$p>0$.

For the settings
${\vec{a}_0}=\hat{z}$, ${\vec{a}_1}=\hat{x}$,
${\vec{b}_0}=\cos t\hat{z}+\sin t\hat{x}$ and ${\vec{b}_1}=\cos t\hat{z}-\sin t\hat{x}$, where $\cos t=\frac{1}{\sqrt{1+p^2}}$, the correlations arising
from the mixed states in Eq. (\ref{rCC})
violate the Bell-CHSH inequality i.e., $\mathcal{B}_{000}=2\sqrt{1+p^2}>2$ if $p>0$ and have Bell discord
$\mathcal{G}=\frac{4p^2}{\sqrt{1+p^2}}$. Thus, these correlations have analogous properties of the box arising from the pure states
in Eq. (\ref{PRQ});
the parameter, $p$, in the mixed entangled states plays the role of the parameter, $\sin2\theta$, of the pure states.

\section{Appendix}

\subsection{Bell function monogamy}\label{mBF}
The observation that each Bell-CHSH inequality is violated to the algebraic maximum  by only one PR-box
and a nonlocal correlation cannot violate more than a Bell-CHSH inequality suggests
trade-off between the
Bell functions,
\ba
\mathcal{B}_{\alpha\beta} &:=&|\braket{A_0B_0}+(-1)^{\beta }\braket{A_0B_1}+(-1)^{\alpha}\braket{A_1B_0}\nonumber \\
&&+(-1)^{\alpha \oplus \beta  \oplus 1} \braket{A_1B_1}|. \label{MBF}
\ea
\begin{observation}
For any given nonsignaling box, $P(a_m,b_n|A_i,B_j)$, the Bell functions in Eq. (\ref{MBF}) satisfy the monogamy relationship,
\be
\mathcal{B}_{00}+\mathcal{B}_{j}\le4, \quad \forall j=01,10,11. \label{BFm}
\ee
\end{observation}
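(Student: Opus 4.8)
The plan is to reduce the statement to the elementary bound $|\braket{A_iB_j}|\le 1$, which holds for any box obeying normalization and positivity (Eqs.~(\ref{postitvity})--(\ref{normalizations})); in fact nonsignaling plays no essential role here, so the argument applies to every box and a fortiori to the nonsignaling ones. First I would set $a=\braket{A_0B_0}$, $b=\braket{A_0B_1}$, $c=\braket{A_1B_0}$, $d=\braket{A_1B_1}$, each in $[-1,1]$, so that the four Bell functions read $\mathcal{B}_{00}=|a+b+c-d|$, $\mathcal{B}_{01}=|a-b+c+d|$, $\mathcal{B}_{10}=|a+b-c+d|$ and $\mathcal{B}_{11}=|a-b-c-d|$.

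The key observation is the identity $|S|+|S'|=\max\{|S+S'|,\,|S-S'|\}$, valid for any reals $S,S'$. For each $j\in\{01,10,11\}$, the signed sum inside $\mathcal{B}_{j}$ is obtained from the one inside $\mathcal{B}_{00}$ by flipping the signs of exactly two of the four correlators: the $B_1$-correlators $b,d$ when $j=01$, the $A_1$-correlators $c,d$ when $j=10$, and the cross pair $b,c$ when $j=11$. Hence, writing $S$ and $S'$ for the signed sums of $\mathcal{B}_{00}$ and $\mathcal{B}_{j}$, one has that $S+S'$ is twice the sum of the two correlators whose signs did \emph{not} flip, and $S-S'$ is twice the sum of the two that did. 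Each of $S\pm S'$ is therefore a $\pm1$-combination of only two correlators, so $|S+S'|\le 4$ and $|S-S'|\le 4$.

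Combining these, $\mathcal{B}_{00}+\mathcal{B}_{j}=|S|+|S'|=\max\{|S+S'|,|S-S'|\}\le 4$ for every $j\in\{01,10,11\}$, which is exactly the monogamy relation~(\ref{BFm}). I would close with the remark that saturation forces one of the two two-correlator sums to reach $\pm 4$ (e.g.\ $b=-d=\pm1$) while the complementary one vanishes, which is precisely the behaviour of a PR-box and recovers the heuristic that only one PR-box maximally violates a given Bell-CHSH inequality. The main obstacle is nothing more than bookkeeping: correctly identifying, for each of the three pairs, which two correlators change sign, so that the $|S|+|S'|$ identity cleanly splits the four-term expression into two two-term expressions; once that pairing is fixed the bound is immediate, with no case analysis on the signs of $a,b,c,d$ required.
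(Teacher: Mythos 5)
Your proof is correct, and it takes a genuinely different route from the paper's. You reduce the claim to the elementary bound $|\braket{A_iB_j}|\le 1$ via the identity $|S|+|S'|=\max\{|S+S'|,|S-S'|\}$: because the signed sum inside $\mathcal{B}_{j}$ differs from the one inside $\mathcal{B}_{00}$ by flipping exactly two of the four correlators, both $S+S'$ and $S-S'$ collapse to twice a two-correlator ($\pm1$-signed) combination and are therefore at most $4$ in modulus; I checked the three pairings ($\{b,d\}$, $\{c,d\}$, $\{b,c\}$) against Eq.~(\ref{MBF}) and they are all correct. This argument needs only positivity and normalization --- not nonsignaling and not the vertex structure of the polytope --- and it also identifies the saturation condition cleanly. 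The paper instead argues geometrically: it verifies the bound on the local polytope (where each $\mathcal{B}_{\alpha\beta}\le 2$) and on the eight PR-boxes (where only one Bell function is nonzero), and then appeals to the decomposition of any nonlocal box into a PR-box and a facet-local box, exhibiting mixtures that attain the value $4$. Strictly speaking the paper leaves the final step implicit --- one still needs the convexity of $\mathcal{B}_{00}+\mathcal{B}_{j}$ to conclude that checking the extremal boxes suffices --- so your self-contained algebraic derivation is arguably tighter; what the paper's route buys is the geometric picture (the link to irreducible PR-box components) that the rest of the chapter builds on. The one cosmetic point: for $j=11$ the unflipped pair is $\{a,d\}$ with $d$ carrying a minus sign in both sums, so ``twice the sum of the two correlators whose signs did not flip'' must be read as twice the signed two-term combination, which is indeed what your bound uses.
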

\begin{proof}
Since $\mathcal{B}_{\alpha\beta}\le2$ for all the local boxes, the trade-off relations in Eq. (\ref{BFm}) are satisfied by any correlation in the Bell polytope.
It is obvious that all the eight PR-boxes satisfy the trade-off since for any PR-box only one of the Bell functions attains the value $4$ and the rest of them are zero.
Geometrically, any correlation in the nonlocal region  lies on a line joining a PR-box and a Bell-local box which lies on the facet
of the local polytope i.e.,
any nonlocal correlation can be decomposed as follows,
\be
P_{NL}=pP^{\alpha\beta\gamma}_{PR}+(1-p)P_L, \label{GNLd}
\ee
where $P_L$ gives the local bound of a Bell-CHSH inequality.
Now we consider the nonlocal correlations which maximize the
left hand side of the trade-off in Eq. (\ref{BFm}); for instance, any convex mixture of the PR-box
and the deterministic box, $P=pP^{000}_{PR}+(1-p)P^{0000}_D$,
gives $\mathcal{B}_{00}+\mathcal{B}_{j}=4$, $\forall j=01,10,11$.
\end{proof}
The Bell function monogamy given in Eq. (\ref{BFm}) refers to the monogamy of a given correlation with respect to the different Bell-CHSH inequalities,
whereas the conventional monogamy refers to the
monogamy of a given Bell-type inequality with respect to the different marginal correlations of a given multipartite correlation \cite{Bellmono}.

\subsection{Proof of theorem \ref{thm1}}\label{pr1}
Before we show that any NS box can be written as a convex mixture of an irreducible PR-box and a local with $\mathcal{G}=0$, we make the following
observations.

\begin{observation}\label{umPR}
The unequal mixture of any two PR-boxes: $pP^i_{PR}+qP^j_{PR}$, here $p>q$, can be written as the mixture of an irreducible PR-box and a Bell-local box.
\end{observation}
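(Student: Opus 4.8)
The plan is to reduce the statement to the single fact that the equal mixture of two \emph{distinct} PR-boxes is Bell-local, and then finish with a one-line convexity identity. First I would normalize so that $p+q=1$ with $p>q$ and write
\be
pP^i_{PR}+qP^j_{PR}=(p-q)\,P^i_{PR}+2q\cdot\tfrac12\left(P^i_{PR}+P^j_{PR}\right),
\ee
where the coefficients $p-q$ and $2q$ are nonnegative and sum to $1$. If $\tfrac12(P^i_{PR}+P^j_{PR})$ is Bell-local, this exhibits the unequal mixture as a convex combination of a single PR-box carried with strictly positive weight $p-q>0$ (hence an irreducible PR-box component in the sense of Definition~\ref{BDdef1}) and a Bell-local box, which is exactly the claim.

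So the substance is to show $\tfrac12(P^i_{PR}+P^j_{PR})\in\mathcal{L}$ for $i\neq j$. By Fine's characterization (Eqs.~(\ref{LD})--(\ref{BCHSH1})) it suffices to check that this box violates no Bell-CHSH inequality. For any PR-box one has $\braket{A_kB_l}=(-1)^{kl\oplus\alpha k\oplus\beta l\oplus\gamma}$ for $P^{\alpha\beta\gamma}_{PR}$, and substituting into $\mathcal{B}_{\alpha'\beta'\gamma'}$ of Eq.~(\ref{BCHSH1}) the four terms collapse (factoring out $(-1)^{\gamma\oplus\gamma'}$ and recognizing a product of two binomials) to give
\be
\mathcal{B}_{\alpha'\beta'\gamma'}\!\left(P^{\alpha\beta\gamma}_{PR}\right)=(-1)^{\gamma\oplus\gamma'}\left(1+(-1)^{\alpha\oplus\alpha'}\right)\left(1+(-1)^{\beta\oplus\beta'}\right),
\ee
which equals $4(-1)^{\gamma\oplus\gamma'}$ when $(\alpha,\beta)=(\alpha',\beta')$ and $0$ otherwise. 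Hence each signed CHSH functional attains $+4$ on exactly one PR-box and $-4$ on exactly one (other) PR-box. Consequently, for \emph{distinct} $P^i_{PR}$ and $P^j_{PR}$, at most one of $\mathcal{B}_{\alpha'\beta'\gamma'}(P^i_{PR})$, $\mathcal{B}_{\alpha'\beta'\gamma'}(P^j_{PR})$ can be $+4$ and at most one can be $-4$, so their average lies in $[-2,2]$; by linearity of $\mathcal{B}_{\alpha'\beta'\gamma'}$ this gives $|\mathcal{B}_{\alpha'\beta'\gamma'}(\tfrac12(P^i_{PR}+P^j_{PR}))|\le2$ for every $\alpha',\beta',\gamma'$. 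Therefore $\tfrac12(P^i_{PR}+P^j_{PR})$ satisfies all Bell-CHSH inequalities and is Bell-local, and combining with the identity above finishes the proof.

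The only genuine work is the finite computation of $\mathcal{B}_{\alpha'\beta'\gamma'}$ on the PR-boxes together with the bookkeeping of which functional is extremized by which box; everything else is formal. A cleaner-looking but less uniform alternative would split into the case where $P^i_{PR}$ and $P^j_{PR}$ differ only in the bit $\gamma$ — where their equal mixture is just the white noise $P_N$, manifestly Bell-local — and the remaining cases, handled by subadditivity of the Bell functions $\mathcal{B}_{\alpha\beta}$; I would present the uniform version above. I would also flag explicitly in the writeup that ``irreducible PR-box component'' here means precisely a single PR-box carried with nonzero weight while the remainder of the decomposition is Bell-local, which is exactly what the displayed identity produces, so Observation~\ref{umPR} is then in the form needed for the proof of Theorem~\ref{thm1}.
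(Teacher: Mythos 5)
Your proposal is correct and rests on exactly the same decomposition identity as the paper's proof, $pP^i_{PR}+qP^j_{PR}=(p-q)P^i_{PR}+2q\cdot\tfrac12(P^i_{PR}+P^j_{PR})$. The only difference is that the paper simply asserts that the uniform mixture of two distinct PR-boxes is Bell-local, whereas you actually verify it by computing $\mathcal{B}_{\alpha'\beta'\gamma'}(P^{\alpha\beta\gamma}_{PR})=(-1)^{\gamma\oplus\gamma'}(1+(-1)^{\alpha\oplus\alpha'})(1+(-1)^{\beta\oplus\beta'})$ and invoking Fine's theorem — a correct and worthwhile filling-in of the step the paper leaves implicit.
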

\begin{proof}
$pP^i_{PR}+qP^j_{PR}=(p-q)P^i_{PR}+2qP^{ij}_l$. Here $P^{ij}_l=\frac{1}{2}(P^i_{PR}+P^j_{PR})$ is a Bell-local box since uniform mixture of any two PR-boxes
does not violate a Bell-CHSH inequality.
Notice that the second PR-box, $P^j_{PR}$,
in the unequal mixture is not irreducible as its presence vanishes by the uniform mixture in the other possible decomposition.
\end{proof}

\begin{observation}\label{Girre}
$\mathcal{G}$ calculates the irreducible PR-box component in the mixture of the $8$ PR-boxes: $\sum^7_{k=0} p_k P^k_{PR}$ given in Eq. (\ref{CHNS}).
\end{observation}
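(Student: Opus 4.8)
The plan is to reduce an arbitrary convex mixture of the eight PR-boxes to a normal form and then read the irreducible PR-box weight straight off the CHSH functions. First I would note that each CHSH functional $\mathcal{B}_{\alpha\beta}$ of Eq. (\ref{MBF}) is the modulus of a fixed \emph{linear} functional $S_{\alpha\beta}$ of the box, and that a one-line computation from Eq. (\ref{NLV}) gives $\langle A_iB_j\rangle(P^{\alpha\beta\gamma}_{PR})=(-1)^{i\cdot j\oplus\alpha i\oplus\beta j\oplus\gamma}$, whence $S_{\alpha'\beta'}(P^{\alpha\beta\gamma}_{PR})$ equals $4(-1)^{\gamma}$ when $(\alpha',\beta')=(\alpha,\beta)$ and $0$ otherwise: every PR-box saturates exactly one CHSH functional and annihilates the other three. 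Linearity then yields $\mathcal{B}_{\alpha\beta}(P)=4\,|p_{\alpha\beta0}-p_{\alpha\beta1}|$ for $P=\sum_{k}p_{k}P^{k}_{PR}$.

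Next I would bring $P$ into normal form. Since $\tfrac12(P^{\alpha\beta0}_{PR}+P^{\alpha\beta1}_{PR})=P_{N}$, applying Observation \ref{umPR} inside each of the four antipodal pairs $\{(\alpha\beta0),(\alpha\beta1)\}$ collapses the pair to $d_{\alpha\beta}$ copies of a single PR-box $P^{\alpha\beta\gamma^{\ast}_{\alpha\beta}}_{PR}$ plus white noise, where $d_{\alpha\beta}=|p_{\alpha\beta0}-p_{\alpha\beta1}|=\tfrac14\mathcal{B}_{\alpha\beta}(P)$; thus $P=\sum_{\alpha\beta}d_{\alpha\beta}P^{\alpha\beta\gamma^{\ast}_{\alpha\beta}}_{PR}+d_{N}P_{N}$ with $d_{N}=1-\sum_{\alpha\beta}d_{\alpha\beta}\ge0$, and at most four PR-boxes survive. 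I would then treat the remaining \emph{cross-pair} reductions. There are exactly three ways to split the four surviving directions $(00),(01),(10),(11)$ into two unordered cross-pairs, and they match the three expressions $\mathcal{G}_{1},\mathcal{G}_{2},\mathcal{G}_{3}$ of Eq. (\ref{gi}) one-to-one (e.g. the split $\{(00),(01)\},\{(10),(11)\}$ matches $\mathcal{G}_{1}$). For that split, Observation \ref{umPR} applied to $d_{00}P^{00}_{PR}+d_{01}P^{01}_{PR}$ and to $d_{10}P^{10}_{PR}+d_{11}P^{11}_{PR}$ leaves $|d_{00}-d_{01}|$ of one PR-box and $|d_{10}-d_{11}|$ of another; these come from different antipodal pairs, hence are distinct PR-boxes, so a last application of Observation \ref{umPR} leaves $\bigl||d_{00}-d_{01}|-|d_{10}-d_{11}|\bigr|=\tfrac14\mathcal{G}_{1}(P)$ of a single PR-box together with a remainder that one checks has $\mathcal{G}=0$ (indeed this split annihilates precisely $\mathcal{G}_{1}$). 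The other two splits leave $\tfrac14\mathcal{G}_{2}(P)$ and $\tfrac14\mathcal{G}_{3}(P)$ in the same way.

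The conclusion I am after is that the least single-PR-box weight one can end up with, over all such reduction schemes, is $\min_{i}\tfrac14\mathcal{G}_{i}(P)=\tfrac14\mathcal{G}(P)$ --- i.e. $\mathcal{G}$ computes the irreducible PR-box component of $\sum_{k}p_{k}P^{k}_{PR}$, which is the content of the observation and which feeds the proof of Theorem \ref{thm1}. The ``achievability'' half (three explicit schemes realizing $\mathcal{G}_{i}/4$, and simply comparing the $d_{\alpha\beta}$ in decreasing order to see which $i$ wins) is routine once the normal form is in place. The delicate half is \emph{minimality} --- that no reduction scheme can go below $\min_{i}\mathcal{G}_{i}/4$ --- for which I would invoke the linearity of $\mathcal{G}$ along the canonical decomposition (Appendix \ref{lbdmd}) together with $\mathcal{G}(P^{\alpha\beta\gamma}_{PR})=4$ and $\mathcal{G}(P_{L}^{\mathcal{G}=0})=0$, forcing $\mathcal{G}(P)=4\mu$ for a decomposition $P=\mu P^{\alpha\beta\gamma}_{PR}+(1-\mu)P_{L}^{\mathcal{G}=0}$ of that type. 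The main obstacle I anticipate is the bookkeeping of the surviving directions and signs $\gamma^{\ast}_{\alpha\beta}$ through the chained cross-pair reductions, and checking that the \emph{accumulated} remainder --- not just each pairwise uniform mixture --- still has $\mathcal{G}=0$, which is where the non-convexity of the $\mathcal{G}=0$ set has to be handled with care.
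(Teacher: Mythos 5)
Your proposal follows the paper's own route: evaluate the Bell functions on the PR-box mixture to get $\mathcal{B}_{\alpha\beta}=4|p_{\alpha\beta 0}-p_{\alpha\beta 1}|$, collapse each antipodal pair to a single PR-box plus white noise via Observation \ref{umPR}, and then perform the cross-pair reductions, whose three possible pairings correspond exactly to $\mathcal{G}_1,\mathcal{G}_2,\mathcal{G}_3$, so that $\min_i\mathcal{G}_i/4$ is the surviving single-PR-box weight. Your write-up is in fact more explicit than the paper's proof (which asserts rather than establishes the minimality/irreducibility half that you correctly flag as the delicate point), but the approach is essentially identical.
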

\begin{proof}
Notice that $P^{k+1}_{PR}$ is the anti-PR-box to $P^{k}_{PR}$ with $k=0,2,4,6$ since uniform mixture of these two PR-boxes gives white noise.
The evaluation of $\mathcal{G}_1$ for the mixture of the $8$ PR-boxes gives,
\begin{align}
\mathcal{G}_1\left(\sum_k p_k P^k_{PR}\right)&=4|\Big||p_0-p_1|-|p_2-p_3|\Big|\nonumber \\
&-\Big||p_4-p_5|-|p_6-p_7|\Big||.
\end{align}
The observation \ref{umPR} implies that the terms $|p_k-p_{k+1}|$ in this equation give the irreducible PR-box component
in the mixture of the two PR-boxes whose equal mixture gives white noise. Thus,
$\left(\min_i\mathcal{G}_i\left(\sum_k p_k P^k_{PR}\right)\right)/4$ gives the irreducible PR-box component in the mixture of
the $4$ reduced components of the PR-boxes that does not contain any anti-PR-box.
\end{proof}
\begin{observation}\label{nllg0}
Any NS box can be decomposed in a convex mixture of a nonlocal box and a local box with $\mathcal{G}=0$,
\be
P=\eta P_{NL}+ (1-\eta)P_L^{\mathcal{G}=0}. \label{cNlL}
\ee
\end{observation}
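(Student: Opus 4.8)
The plan is to split on whether $P$ is local. If $P$ is nonlocal there is nothing to prove: take $\eta=1$ and $P_{NL}=P$, so Eq.~(\ref{cNlL}) holds trivially (the second summand carries zero weight). If $P$ is local and already has $\mathcal{G}(P)=0$, take $\eta=0$ and $P_L^{\mathcal{G}=0}=P$. Hence the only case needing work is a \emph{local} box $P$ with $\mathcal{G}(P)>0$, where the task is to peel off a genuine PR-box component so that what remains is a $\mathcal{G}=0$ local box.

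For that case I would put $P$ in the form of Eq.~(\ref{CHNS}), $P=\sum_k p_k P^k_{PR}+\sum_l q_l P^l_D$, and process the PR-box part exactly as in the proof of Observation~\ref{Girre}: first reduce each of the four antipodal pairs $\{P^{2j}_{PR},P^{2j+1}_{PR}\}$ via Observation~\ref{umPR} to a single PR-box of weight $r_j=|p_{2j}-p_{2j+1}|$ plus some white noise, then repeatedly combine the surviving (pairwise non-antipodal) PR-boxes, using that a uniform mixture of two PR-boxes is Bell-local, peeling off a Bell-local box at each step until one residual PR-box $P^{\alpha\beta\gamma}_{PR}$ of weight $\mu$ is left. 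Absorbing the white noise, the Bell-local remnants, and the deterministic part $\sum_l q_l P^l_D$ into a single local box $Q$ gives $P=\mu P^{\alpha\beta\gamma}_{PR}+(1-\mu)Q$ with $Q\in\mathcal{L}$ and $P^{\alpha\beta\gamma}_{PR}$ nonlocal; so $\eta=\mu$, $P_{NL}=P^{\alpha\beta\gamma}_{PR}$, $P_L^{\mathcal{G}=0}=Q$ is the candidate decomposition.

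What remains is to certify $\mathcal{G}(Q)=0$. The clean route is to first upgrade Observation~\ref{Girre} to: for \emph{every} NS box, $\mathcal{G}$ equals four times the maximal irreducible PR-box component appearing in any decomposition of type Eq.~(\ref{CHNS}); equivalently, $\mathcal{G}=0$ iff the box is a convex combination of deterministic boxes and uniform PR-box mixtures (all Bell-local). This auxiliary characterization does double duty: it justifies the second case above ($\mathcal{G}(P)=0\Rightarrow P\in\mathcal{L}$), and, once the greedy reduction is arranged to extract the maximal irreducible PR-box component so that $\mu=\mathcal{G}(P)/4$, it forces $\mathcal{G}(Q)=0$ since $Q$ retains no irreducible PR-box content. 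Linearity of $\mathcal{G}$ along the ray $\mu P^{\alpha\beta\gamma}_{PR}+(1-\mu)Q$ then follows as well, giving $\mathcal{G}(P)=4\mu$ consistently.

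The main obstacle is precisely this upgrade of Observation~\ref{Girre} from pure PR-box mixtures to arbitrary NS boxes: the deterministic-box part shifts the correlators $\braket{A_iB_j}$, and since the Bell functions $\mathcal{B}_{\alpha\beta}$, and hence each $\mathcal{G}_i$, are only piecewise linear (nested moduli of linear functionals), one cannot naively assert $\mathcal{G}$ is unchanged. I would resolve it by exploiting that each deterministic box has $\mathcal{B}_{\alpha\beta}\in\{0,2\}$ for all $\alpha,\beta$, whereas a PR-box saturates exactly one $\mathcal{B}_{\alpha\beta}$ at $4$ and annihilates the other three, so along the reduction $\min_i\mathcal{G}_i$ tracks exactly the weight sitting on the dominant PR-box; a short sign case-analysis, together with the Bell-function monogamy relation~(\ref{BFm}) to pin down which $\mathcal{B}_{\alpha\beta}$ can be large, then closes the argument.
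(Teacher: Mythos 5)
Your overall strategy --- reduce to the case of a local box with $\mathcal{G}>0$, extract a single PR-box from the extremal decomposition of Eq.~(\ref{CHNS}), and certify that the remainder has $\mathcal{G}=0$ --- is essentially the constructive route the paper reserves for Theorem~\ref{thm1}, whereas the paper's own proof of this observation is a short geometric argument (any NS box lies on a segment joining a nonlocal box to a local box, and the division of the Bell polytope into $\mathcal{G}>0$ and $\mathcal{G}=0$ regions lets one choose the local endpoint in the $\mathcal{G}=0$ region). The constructive route is legitimate in principle, but the step you yourself flag as the main obstacle is where your proof breaks, and the ``equivalent'' characterization you propose to close it is false. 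You claim that $\mathcal{G}=0$ iff the box is a convex combination of deterministic boxes and uniform PR-box mixtures. Both relevant directions fail on examples contained in the paper: the isotropic PR-box $pP^{000}_{PR}+(1-p)P_N$ with $0<p\le\frac{1}{2}$ is a convex combination of deterministic boxes (it is local), yet has $\mathcal{G}=4p>0$; and the box $\frac{1}{2}\bigl(\frac{1}{2}(P^{000}_{PR}+P^{111}_{PR})\bigr)+\frac{1}{2}\bigl(\frac{1}{2}(P^{000}_{PR}+P^{110}_{PR})\bigr)$ is a convex combination of two uniform PR-box mixtures, yet it equals $\frac{1}{2}P^{000}_{PR}+\frac{1}{2}P_N$ and has $\mathcal{G}=2$ (this is precisely Eq.~(\ref{nggnl}) in the paper). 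Consequently, ``the residual box $Q$ is built only from deterministic boxes, uniform PR-box mixtures and white noise'' does not imply $\mathcal{G}(Q)=0$, which is exactly the conclusion your argument needs.

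The underlying difficulty is that the split of $P$ into a PR-box part and a deterministic part in Eq.~(\ref{CHNS}) is far from unique (white noise and the Mermin-type boxes sit on both sides of the split), so ``the maximal irreducible PR-box component'' is an optimization over all such decompositions, and identifying it with $\mathcal{G}/4$ --- i.e., extending Observation~\ref{Girre} from pure PR-box mixtures to arbitrary NS boxes --- is a genuinely nontrivial claim that your sketched sign analysis does not establish. (A minor point: every deterministic box has $\mathcal{B}_{\alpha\beta}=2$ for all four $\alpha\beta$, not merely $\mathcal{B}_{\alpha\beta}\in\{0,2\}$; this is what makes each $\mathcal{G}_i$ vanish on deterministic boxes, but it does not control $\mathcal{G}$ of a mixture, because the deterministic contributions enter inside the absolute values defining $\mathcal{B}_{\alpha\beta}$ and can cancel or reinforce the PR-box contribution.) To repair the argument you would need either to prove the extension of Observation~\ref{Girre} honestly, or to show directly that the decomposition maximizing the extracted PR-box weight leaves a remainder admitting no further decomposition with an irreducible PR-box component --- which is the very statement the paper itself only asserts geometrically.
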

\begin{proof}
Since the set of NS boxes is convex and the Bell polytope is contained inside the full NS polytope,
any NS box lies on a line segment joining a nonlocal box and a local box.
Suppose the local box in the decomposition given in Eq. (\ref{cNlL}) has $\mathcal{G}>0$,
then it cannot represent all the $\mathcal{G}=0$ boxes. Thus, the division of the Bell polytope into a
$\mathcal{G}>0$ region and $\mathcal{G}=0$ region allows us to write any NS box as a convex mixture of a nonlocal box and a local box with $\mathcal{G}=0$.
\end{proof}

We now rewrite the decomposition of any NS box given in Eq. (\ref{CHNS}) as a convex combination of the $8$ PR-boxes and a restricted local box
that cannot be written as a convex sum of the PR-boxes and the deterministic boxes:
\be
P=\sum^7_{k=0} g_k P^k_{PR} +\left(1-\sum^7_{k=0} g_k\right)P_L; \quad k=\alpha\beta\gamma, \label{step1}
\ee
where $P_L\ne \sum_k r_k P^k_{PR}+\sum_l s_l P^l_D$, i.e., $P_L$ cannot have nonzero $r_k$ overall possible decompositions.
We wish to reduce the combination of the $8$ PR-boxes in Eq. (\ref{step1}) to the mixture of an irreducible PR-box and a local box
by using the procedure given in observation \ref{umPR}.
It follows from the observation \ref{Girre} that we should first reduce the mixture of the $8$ PR-boxes
to the mixture of the $4$ PR-boxes which does not contain any anti-PR-box, and white noise. Then,
we further reduce it to the mixture of an irreducible PR-box and the local boxes which are the uniform mixture of the two PR-boxes:
\be
\sum^7_{k=0}g_kP^{k}_{PR}=\mu P^{\alpha\beta\gamma}_{PR}+\sum^3_{l=1} p_lP^l_L+p_NP_N. \label{step2}
\ee
Here $\mu$ is obtained by minimizing the PR-box component over all possible decompositions,
i.e., $\mu>0$ iff $\sum^7_{k=0}g_kP^{k}_{PR}\ne\sum^3_{l=1} q_lP^l_L+p_NP_N$.
Now substituting Eq. (\ref{step2}) in Eq. (\ref{step1}),
we get the following decomposition of any NS box,
\be
P=\mathcal{\mu} P^{\alpha\beta\gamma}_{PR}+(1-\mu)P_L.  \label{proofgnz}
\ee
Here
\be
P_L=\frac{1}{1-\mathcal{\mu}}\left\{\sum^3_{l=1}p_lP^l_L+p_NP_N+\left(1-\sum_k g_k\right)P_L\right\}.\nonumber
\ee
This local box cannot have an irreducible PR-box component since $\mu$ is the maximal irreducible PR-box component.
Further, it follows from the observation \ref{nllg0} that the local box in Eq. (\ref{proofgnz}) must have $\mathcal{G}=0$. This ends the proof of the theorem \ref{thm1}.

\chapter{Mermin discord and $3$-decomposition of bipartite NS boxes}\label{Ch3}
\section*{Abstract}
We introduce the measure, Mermin discord, to characterize nonclassicality of bipartite 
quantum correlations originating from EPR-steering. We obtain a $3$-decomposition that 
any bipartite box with two binary inputs and two binary outputs can be decomposed into 
Popescu-Rohrlich (PR) box, a maximally local box, and a local box with Bell and Mermin 
discord equal to zero. Bell and Mermin discord quantify two types of nonclassicality of 
correlations arising from all quantum correlated states which are neither classical-quantum 
states nor quantum-classical states. We show that Bell and Mermin discord serve us the 
witnesses of nonclassicality of local boxes at the tomography level, i.e., nonzero value 
of these measures imply incompatible measurements and nonzero quantum discord by assuming 
the dimensionality and which measurements are performed. The $3$-decomposition serves us to 
isolate the origin of the two types of nonclassicality into a PR-box and a maximally local 
box which is related to EPR-steering, respectively. We study a quantum polytope that has an 
overlap with all the four regions of the full NS polytope to figure out the constraints of quantum correlations.
\section{Introduction}
EPR-steering is  a form  of quantum nonlocality  which is  weaker than
Bell   nonlocality   \cite{WJD}.     Quantum   correlations   exhibit
EPR-steering if they  cannot be described by the  hybrid LHV-Local Hidden
State  (LHS) model  \cite{EPRsi}.   EPR-steering is  witnessed by  the
violation  of steering  inequalities  \cite{CJWR,EPRsi,CFFW}.  Both  
incompatible measurements and  entanglement are necessary  for the violation  
of an EPR-steering  inequality.   EPR-steerablity,   i.e.,  violation  of  a
steering inequality is a  resource for semi-device-independent quantum
key distribution \cite{SDIQKD}.

In Chapter \ref{Ch2}, we have seen that local qubit correlations which
have Bell discord can arise from incompatible measurements. 
If a local box has zero Bell discord, it does not necessarily imply that 
it cannot arise from incompatible measurements on an entangled state. 
There are measurement correlations which have LHV model, nevertheless, 
violate an EPR-steering inequality when they arise from two-qubit systems. 
Therefore, both incompatible measurements and entanglement are necessary 
to produce these local boxes using two-qubit systems.

In this chapter, we  introduce the measure Mermin discord  to characterize  
quantum  correlations going beyond EPR-steering.
We observe that Bell and  Mermin discord divide the  full NS polytope
into  four regions  depending on  whether Bell  discord and/or  Mermin
discord is zero. This division of  the NS polytope allows us to obtain
a $3$-decomposition  of any  NS box. This  decomposition allows  us to
isolate the origin  of nonclassicality into three  disjoint sources: a
PR-box, a maximally  local box which exhibits  EPR-steerability, and a
classical box. We show that all quantum correlated states which have nonzero 
left and right quantum discord \cite{Dakicetal} can give rise to nonclassical correlations which 
have nonzero Bell and/or Mermin discord for suitable incompatible measurements.  
\section{Mermin discord}\label{MD}
A quantum box is EPR-steerable from Alice to Bob if it cannot be described by the hybrid LHV-LHS model,
\begin{align}
P(a_m,b_n|A_i,B_j)=\sum_\lambda P(\lambda) P(a_m|A_i,\lambda)P_Q(b_n|B_j,\lambda),
\end{align}
where $P_Q(b_n|B_j,\lambda)=\tr\rho_\lambda \mathcal{M}_{B_j}^{b_n}$ is the distribution arising from a quantum state $\rho_\lambda$. 
Consider the following EPR-steering inequality,
\be
\braket{A_0B_0}-\braket{A_1B_1}\le \sqrt{2}, \label{eprst}
\ee
where $B_0=\sigma_x$ and $B_1=\sigma_y$ \cite{EPRsi}. Those local boxes that violate this steering inequality cannot
have the LHV-LHS model in which Alice and Bob have access to black-box measurements and projective qubit measurements, respectively, to 
simulate the measurement correlations \cite{SDIQKD}.

For the incompatible measurements: 
$A_0=\sigma_x$, $A_1=\sigma_y$, $B_0=\sigma_x$ and $B_1=\sigma_y$,  the Bell state, $\ket{\psi^+}$, does not give rise to Bell nonlocality, 
however, it gives rise to the violation of the EPR-steering inequality in Eq. (\ref{eprst}).
For this choice of measurements, 
the Bell state gives rise to the following maximally local box,
\begin{equation}
P_M = \left( \begin{array}{cccc}
\half & 0 & 0 & \half \\
\qua & \qua & \qua & \qua \\
\qua & \qua & \qua & \qua \\
0 & \half & \half &  0
\end{array} \right).
\label{eq:merminbox0}
\end{equation}
We call a box that gives the local bound of a Bell-CHSH inequality in Eq. (\ref{BCHSH1}), i.e., $\mathcal{B}_{\alpha\beta\gamma}=2$, maximally local.
Further, the above box is maximally EPR-steerable in that it violates the EPR-steering inequality maximally.
Notice that the following maximally local and correlated box,
\begin{equation}
P_{CC} = \left( \begin{array}{cccc}
\half & 0 & 0 & \half \\
0 & \half & \half & 0 \\
\half & 0 & 0 & \half \\
0 & \half & \half &  0
\end{array} \right),
\label{eq:ccbox}
\end{equation}
is not EPR-steerable since it cannot arise from incompatible measurements on an entangled two-qubit state.
We refer to a maximally local and correlated box which is EPR-steerable as Mermin box.  

The Mermin box in Eq. (\ref{eq:merminbox0}) can also arise from a classically-correlated state in higher 
dimensional space for compatible measurements \cite{NLRan}. However, if one of the subsystem is restricted to be qubit, the Mermin box arises from
a maximally entangled two-qubit state as it can violate the EPR-steering inequality maximally. 
Thus, the violation of the steering inequality in Eq. (\ref{eprst}) implies the presence of 
entanglement in the local boxes in a semi-device-independent way \cite{SDIQKD}.

Consider isotropic Mermin box which is the convex mixture of the Mermin box in Eq. (\ref{eq:merminbox0}) and white noise,
\be
P=p P_M+(1-p)P_N. \label{Mmot}
\ee
For incompatible measurements that lead to the maximal violation of the EPR-steering inequality in Eq. (\ref{eprst}), the nonmaximally entangled states in Eq. (\ref{nmE})
give rise to the isotropic Mermin box with $p=\sin2\theta$. Analogous to the isotropic PR-box,
the isotropic Mermin box arising from the pure entangled states, $\ket{\psi(\theta)}$, violates the EPR-steering inequality  
if $\sin2\theta>\frac{1}{\sqrt{2}}$. However,
it has the irreducible Mermin box component whenever the state is entangled.
Thus, the isotropic Mermin box illustrates the following observation. 
\begin{observation}
When local boxes arising from  entangled two-qubit states have an irreducible Mermin box component, 
the measurements that give rise to them are incompatible.
\end{observation}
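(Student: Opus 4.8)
The plan is to follow, almost verbatim, the strategy that establishes the corresponding PR-box / Bell-discord statement in Chapter~\ref{Ch2}, namely to prove the contrapositive. Suppose a box $P$ produced by spin projective measurements $A_i=\hat{a}_i\cdot\vec{\sigma}$ and $B_j=\hat{b}_j\cdot\vec{\sigma}$ on an entangled two-qubit state has \emph{compatible} settings on at least one side, i.e. $[A_0,A_1]=0$ or $[B_0,B_1]=0$; for spin observables this means $\hat{a}_1=\pm\hat{a}_0$ or $\hat{b}_1=\pm\hat{b}_0$. I want to show that such a $P$ carries no irreducible Mermin-box component. Granting the $3$-decomposition of the next subsection, in which $P$ is written as a convex mixture of a PR-box, a Mermin box of weight $\nu$, and a Bell-and-Mermin-discord-free local box, with the Mermin discord $\mathcal{M}(P)$ linear along the decomposition and equal to a positive multiple of $\nu$, this is the same as proving $\mathcal{M}(P)=0$.

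First I would invoke the invariances: Mermin discord, like Bell discord, is unchanged by local reversible operations and by interchanging the two parties. Hence a relabelling of Alice's output conditioned on her input turns $\hat{a}_1=\pm\hat{a}_0$ into the honest degeneracy $A_1=A_0$ (and symmetrically $B_1=B_0$ on Bob's side), without affecting $\mathcal{M}$. So it is enough to treat a box in which one party has, effectively, a single measurement; for such a box every correlator obeys $\braket{A_1B_j}=\braket{A_0B_j}$ and the marginal obeys $\braket{A_1}=\braket{A_0}$ (and likewise with the roles of $A$ and $B$ swapped).

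Next I would substitute these identities into the Mermin-discord functions $\mathcal{M}_k$ --- which are to be defined in the next subsection as the analogues of the $\mathcal{G}_i$ of Eq.~(\ref{gi}), built around the Mermin-box witnesses $|\braket{A_0B_0}\pm\braket{A_1B_1}|$ and $|\braket{A_0B_1}\pm\braket{A_1B_0}|$, corrected by marginal terms so as to vanish on the deterministic vertices as well as on the PR-boxes. Exactly as every $\mathcal{G}_i$ collapses to a difference of equal quantities when $A_1=A_0$ (there $\mathcal{B}_{00}=\mathcal{B}_{01}$ and $\mathcal{B}_{10}=\mathcal{B}_{11}$), the Mermin witnesses become pairwise equal under $A_1=A_0$, forcing $\mathcal{M}_k=0$ for every $k$ and hence $\mathcal{M}(P)=\min_k\mathcal{M}_k=0$; via the $3$-decomposition this yields $\nu=0$, contradicting the hypothesis, which proves the observation. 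The hypothesis is not vacuous: for the incompatible settings that maximally violate the EPR-steering inequality of Eq.~(\ref{eprst}), the pure states $\ket{\psi(\theta)}$ of Eq.~(\ref{nmE}) give $P=\sin2\theta\,P_M+(1-\sin2\theta)P_N$, which still carries an irreducible Mermin-box component of weight $\sin2\theta>0$ whenever $\theta\neq0$, although it is Bell-local and does not violate the EPR-steering inequality once $\sin2\theta\le\frac{1}{\sqrt{2}}$.

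The genuinely delicate point is the bookkeeping in the reduction step: one must check that the marginal-dependent corrections in the $\mathcal{M}_k$ --- the very ingredient that separates Mermin discord from Bell discord, and that is needed so the deterministic boxes are not spuriously assigned positive Mermin discord --- also vanish (or at any rate fail to rescue a nonzero value) on every ``one-effective-measurement-per-party'' box, and survive the outcome relabellings used to reach $A_1=A_0$. Once the explicit $\mathcal{M}_k$ and the $3$-decomposition are in hand this is a finite verification along the lines of Appendix~\ref{pr1}; apart from it, the argument is the same one that underlies the first observation of Chapter~\ref{Ch2}.
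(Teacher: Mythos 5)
Your argument is correct, but it is worth noting that the paper does not actually prove this Observation where it is stated: it is only \emph{illustrated} by the isotropic Mermin box of Eq.~(\ref{Mmot}), which the pure states of Eq.~(\ref{nmE}) produce with weight $\sin2\theta$ under the steering-optimal settings --- precisely the non-vacuousness example you append at the end. What you supply instead is a proof of the contrapositive (compatible settings on one side $\Rightarrow$ no irreducible Mermin-box component), and that is, in substance, the paper's own later Theorem~\ref{thm3}: there one writes $\braket{A_iB_j}=\hat{a}_i\cdot C\hat{b}_j$, sets $\hat{a}_0=\hat{a}_1$, and observes that the Bell functions (and likewise the Mermin functions) become pairwise equal, forcing $\mathcal{G}=\mathcal{Q}=0$; linearity of $\mathcal{Q}$ along the canonical decomposition then kills the Mermin-box weight, exactly as you argue. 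Your use of LRO to reduce the case $\hat{a}_1=-\hat{a}_0$ to $A_1=A_0$ is a small genuine improvement, since the paper only treats $\hat{a}_0=\hat{a}_1$ explicitly. One correction: the Mermin functions carry no marginal corrections --- the $\mathcal{Q}_j$ are built purely from $|\braket{A_0B_0}\pm\braket{A_1B_1}|$ and $|\braket{A_0B_1}\pm\braket{A_1B_0}|$; deterministic boxes get $\mathcal{Q}=0$ because two of these four quantities equal $2$ and two equal $0$, so the nested absolute differences (together with the minimization over permutations) vanish, not because of marginal terms. Hence the ``delicate bookkeeping'' you flag in your final paragraph is not needed, and your reduction closes without it.
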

Notice that the isotropic Mermin-box has zero Bell discord, i.e., it has $\mathcal{G}=0$. 
The observation that the local boxes which have neither Bell discord nor EPR-steerablity can arise from incompatible measurements on entangled states motivates to define a
notion of nonclassicality which we call Mermin discord.
\begin{definition}\label{MDdef}
A box arising from incompatible measurements on a two-qubit state has \textit{Mermin discord} 
if it admits a decomposition with an irreducible Mermin box component.  
\end{definition}
We observe that the isotropic Mermin box can have EPR-steerablity only when the Mermin box component 
is larger than a certain amount. Thus, analogous to the statement that Bell discord and Bell nonlocality are inequivalent, 
we have the observation that Mermin discord is not equivalent to EPR-steering. 

\begin{figure}[h!]
\centering
\includegraphics[scale=0.30]{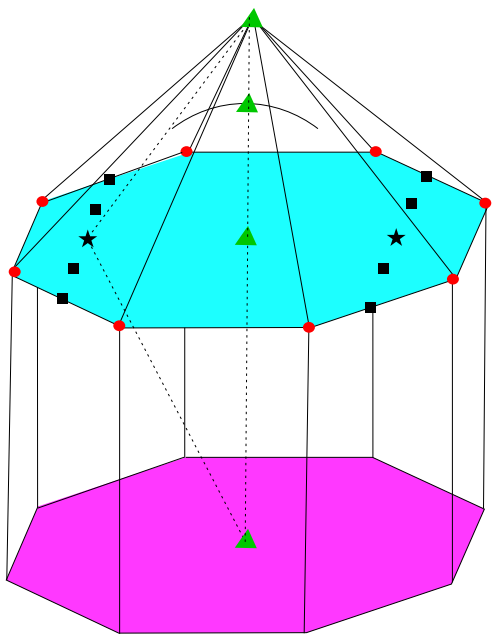}
\caption[PR-box polytope]{A three-dimensional representation of the NS polytope with two binary inputs and two binary outputs is shown here. The octagonal cylinder represents the
local polytope. The lines connecting the deterministic boxes represented by
red points define one of the facet for the local polytope;
the PR-box which violates the Bell-CHSH inequality corresponding
to this facet is represented by triangle point on the top of the NS polytope.
The region below the curved surface contains quantum correlations and the point on this curved surface is the Tsirelson box.
The star and square points on the facet of the local polytope represent quantum and nonquantum Mermin boxes respectively.
The triangular region (shown by dotted lines) which is a convex hull
of the PR-box, the Mermin box, and white noise represents the $3$-decomposition fact that
any point that lies inside the triangle can be decomposed into PR-box,
the Mermin-box and white noise. The line connecting the PR-box and white noise represents the isotropic PR-box and the line joining the Mermin
box and white noise represents the isotropic Mermin box.}\label{NS3dfig}
\end{figure}

We consider the following Mermin inequalities:
\begin{align}
\mathcal{M}_{\alpha\beta\gamma}&:=
(\alpha\oplus\beta\oplus1)\{(-1)^{\gamma}\braket{A_0B_0}\!+\!(-1)^{\alpha\oplus\beta\oplus\gamma\oplus1}\braket{A_1B_1}\}\nonumber\\ 
&+(\alpha\oplus\beta)\{(-1)^{\beta\oplus\gamma}\braket{A_0B_1}+(-1)^{\alpha\oplus\gamma}\braket{A_1B_0}\}\nonumber\\
  &\le2 \quad \text{for} \quad \alpha\beta\gamma=00\gamma,01\gamma;\nonumber \\
\mathcal{M}_{\alpha\beta\gamma}&:=(\alpha\oplus\beta)\{(-1)^{\gamma}\braket{A_0B_0}+(-1)^{\alpha\oplus\beta\oplus\gamma\oplus1}\braket{A_1B_1}\}\nonumber\\ 
&+(\alpha\oplus\beta\oplus1)\{(-1)^{\beta}\braket{A_0B_1}\!+\!(-1)^{\alpha}\braket{A_1B_0}\}\nonumber\\
  &\le2 \quad \text{for} \quad \alpha\beta\gamma=10\gamma,11\gamma. \label{bimi}
\end{align}
The left-hand side of the EPR-steering 
inequality in Eq. (\ref{eprst}) is one of the Mermin operators, $\mathcal{M}_{\alpha\beta\gamma}$, in the above inequalities. 
The multipartite generalization of $\mathcal{M}_{\alpha\beta\gamma}$ generate the Mermin inequalities \cite{mermin,WernerWolfmulti}, hence the name.
Just as the complete set of Bell-CHSH inequalities, the set of these Mermin inequalities 
is invariant under LRO and thus it forms a complete set \cite{WernerWolf}.

Consider the following $8$ maximally local boxes:
\begin{align}
P_M^{\alpha\beta\gamma}(a_m,b_n|A_i,B_j)=\left\{
\begin{array}{lr}
\frac{1}{4}, & i\oplus j =1 \\
\frac{1}{2}, & m\oplus n=i\cdot j \oplus  \alpha  i \oplus  \beta j \oplus \gamma\\ 
0 , & \text{otherwise},\nonumber\\
\end{array}
\right.   
\end{align}
here $\alpha\beta\gamma=00\gamma,10\gamma$, and,
for $\alpha\beta\gamma=01\gamma,11\gamma$,
\begin{align}
P_M^{\alpha\beta\gamma}(a_m,b_n|A_i,B_j)=\left\{
\begin{array}{lr}
\frac{1}{4}, & i\oplus j =0 \\
\frac{1}{2}, & m\oplus n=i\cdot j \oplus  \alpha  i\oplus \beta j \oplus  \gamma  \\ 
0 , & \text{otherwise},\\
\end{array} \label{Mmmm0}
\right.   
\end{align}
which are the equal mixture of the four deterministic boxes. These boxes can be obtained from the Mermin box in Eq. (\ref{eq:merminbox0}) by LRO.
Thus, there are $8$ Mermin-boxes which can have maximal EPR-steerability.
Just as there exists the correspondence between the $8$ PR-boxes and the $8$ Bell-CHSH inequalities, there exists the correspondence between the $8$ Mermin boxes
and the $8$ Mermin operators, $\mathcal{M}_{\alpha\beta\gamma}$, in Eq. (\ref{bimi}):
a Mermin box cannot take the algebraic maximum of $2$ for more than one Mermin operator. Notice that the Mermin operators can be written as the uniform mixture 
of two Bell-CHSH operators; for instance, $\mathcal{M}_{000}=\frac{1}{2}\left(\mathcal{B}_{000}+\mathcal{B}_{110}\right)$. Similarly, the Mermin boxes can also be 
decomposed into the uniform mixture of two PR-boxes; for instance, $P^{000}_M=\frac{1}{2}\left(P^{000}_{PR}+P^{110}_{PR}\right)$.

The complete set of bipartite Mermin inequalities in Eq. (\ref{bimi})
do not distinguish between EPR-steerable and non-steerable boxes since the algebraic maximum 
of any Mermin operator, $\mathcal{M}_{\alpha\beta\gamma}$, is $2$ which is equal to the right-hand side of Eq. (\ref{bimi}). 
However, magnitude of the modulus of the Mermin operators, 
$\mathcal{M}_{\alpha\beta}:=|\mathcal{M}_{\alpha\beta\gamma}|$,
serve to construct Mermin discord. Here $\mathcal{M}_{\alpha0}=|\braket{A_0B_0}+(-1)^{\alpha\oplus1}\braket{A_1B_1}|$ and 
$\mathcal{M}_{0\beta}=|\braket{A_0B_1}+(-1)^{\beta}\braket{A_1B_0}|$.

\begin{observation}
For any Mermin box, only one of the Mermin functions, $\mathcal{M}_{\alpha\beta}$, attains $2$ and the rest of them are zero, 
whereas for the deterministic boxes and the PR-boxes, two of the Mermin functions attain $2$ and the other two are zero. 
\end{observation}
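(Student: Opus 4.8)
The plan is to reduce the whole statement to a one-line fact about real numbers together with a direct evaluation of correlators on the relevant vertices, plus one appeal to LRO-invariance. Observe that the four Mermin functions $\mathcal{M}_{\alpha\beta}$ come in two types: the two ``diagonal'' ones $|\braket{A_0B_0}-\braket{A_1B_1}|$ and $|\braket{A_0B_0}+\braket{A_1B_1}|$, built from the conjugate pair $(\braket{A_0B_0},\braket{A_1B_1})$, and the two ``anti-diagonal'' ones $|\braket{A_0B_1}-\braket{A_1B_0}|$ and $|\braket{A_0B_1}+\braket{A_1B_0}|$, built from $(\braket{A_0B_1},\braket{A_1B_0})$. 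The elementary fact I would use is: if $x,y\in\{-1,+1\}$ then exactly one of $|x-y|,|x+y|$ equals $2$ and the other vanishes; if $x=y=0$ both vanish; and if exactly one of $x,y$ vanishes both equal $1$. Consequently the number of Mermin functions that attain $2$ equals the number of the two conjugate correlator pairs that consist of two $\pm1$ entries.

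Next I would evaluate the correlators $\braket{A_iB_j}=\sum_{mn}(-1)^{m\oplus n}P(a_m,b_n|A_i,B_j)$ on the vertices. For a PR-box, substituting the definition of $P^{\alpha\beta\gamma}_{PR}$ gives $\braket{A_iB_j}=(-1)^{i\cdot j\oplus\alpha i\oplus\beta j\oplus\gamma}$, which lies in $\{-1,+1\}$ for every $i,j$; hence both conjugate pairs are all-$\pm1$, so exactly two Mermin functions attain $2$. For a deterministic box $\braket{A_iB_j}=\braket{A_i}\braket{B_j}$ with $\braket{A_i},\braket{B_j}\in\{-1,+1\}$, so again both pairs are all-$\pm1$ and exactly two Mermin functions attain $2$. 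For the canonical Mermin box $P_M$ in Eq. (\ref{eq:merminbox0}) one reads off $\braket{A_0B_0}=1$, $\braket{A_1B_1}=-1$, $\braket{A_0B_1}=\braket{A_1B_0}=0$: the diagonal pair is all-$\pm1$, so exactly one diagonal-type function (namely $|\braket{A_0B_0}-\braket{A_1B_1}|$) equals $2$, while the anti-diagonal pair is all-zero and yields no value $2$; hence exactly one Mermin function attains $2$.

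Finally I would promote the $P_M$ computation to all eight Mermin boxes using LRO-invariance. The eight Mermin boxes form a single LRO orbit of $P_M$, and each LRO merely permutes the set $\{\mathcal{M}_{\alpha\beta}\}$ of four Mermin functions (this is exactly the LRO-invariance of the complete set of Mermin inequalities noted above, read at the level of the functions); therefore the count ``exactly one Mermin function equals $2$'' is LRO-invariant and holds for every Mermin box. The same argument also covers all eight PR-boxes and all sixteen deterministic boxes in one stroke, although those were already handled by the direct evaluation above.

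I do not foresee a real obstacle; the two points that need care are (i) checking that $\braket{A_iB_j}$ for a PR-box really collapses to $\pm1$ rather than to some value of smaller modulus --- it does, because for fixed inputs the PR-box concentrates all its weight on outcome pairs sharing one fixed parity $m\oplus n$ --- and (ii) phrasing the LRO-invariance of the \emph{set} $\{\mathcal{M}_{\alpha\beta}\}$ precisely enough that verifying a single representative box suffices.
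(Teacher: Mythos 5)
Your proof is correct, and it is worth noting that the paper itself states this observation with no proof at all, so your argument supplies a verification rather than paralleling an existing one. Your reduction is exactly the right one: with the paper's conventions the four functions are $\mathcal{M}_{00}=|\braket{A_0B_0}-\braket{A_1B_1}|$, $\mathcal{M}_{10}=|\braket{A_0B_0}+\braket{A_1B_1}|$, $\mathcal{M}_{01}=|\braket{A_0B_1}-\braket{A_1B_0}|$, $\mathcal{M}_{11}=|\braket{A_0B_1}+\braket{A_1B_0}|$, so the count of functions attaining $2$ is indeed the number of conjugate pairs whose entries are both $\pm1$; the PR-box correlator $\braket{A_iB_j}=(-1)^{i\cdot j\oplus\alpha i\oplus\beta j\oplus\gamma}$ and the deterministic factorization give two such pairs, while $P_M$ has one $\pm1$ pair and one zero pair. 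The LRO step is also sound: input relabelling swaps the diagonal and anti-diagonal families, input-conditioned output flips swap sum with difference within each family, and unconditioned output flips act trivially, so the set $\{\mathcal{M}_{\alpha\beta}\}$ is permuted and checking one representative per orbit suffices. Two small remarks: (i) since the correlators of every vertex involved are already known in closed form, the direct evaluation alone covers all $8+16+8$ boxes and the LRO appeal is a convenience rather than a necessity; (ii) the paper later also uses ``Mermin box'' for the $32$ nonmaximally-mixed-marginal boxes of Eq.~(\ref{eq:nmerminbox}), which lie in a different LRO orbit --- your correlator argument covers them too (they have the same $\braket{A_iB_j}$ as $P_M$), but your orbit argument as stated does not reach them, so if the observation is read in that wider sense you should add the one-line check.
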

This observation leads us to define a measure of Mermin discord similar to the measure of Bell discord.
\begin{definition}\label{defMD}
Mermin discord, $\mathcal{Q}$, is defined as,
\begin{equation}
\mathcal{Q} := \min_j \mathcal{Q}_j,
\end{equation}    
where, $\mathcal{Q}_1=\Big||\mathcal{M}_{00}-\mathcal{M}_{01}|-|\mathcal{M}_{10}-\mathcal{M}_{11}|\Big|$, and $\mathcal{Q}_2$ and $\mathcal{Q}_3$ 
are obtained by permuting $\mathcal{M}_{\alpha\beta}$ in $\mathcal{Q}_1$. Here $0\le\mathcal{Q}\le2$. 
\end{definition}
Mermin discord is constructed such that all the PR-boxes and the deterministic boxes have $\mathcal{Q}=0$, and, 
the algebraic maximum of $\mathcal{Q}$ is achieved by the Mermin boxes, 
i.e., $\mathcal{Q}=2$ for any Mermin box. Further, Mermin discord is invariant
under LRO and permutation of the parties as the set $\{\mathcal{Q}_j\}$ is invariant under these two transformations. 

We consider the following maximally-local box,
\begin{equation}
P^{nm}_M = \left( \begin{array}{cccc}
1 & 0 & 0 & 0 \\
\half  & \half  & 0 & 0 \\
\half & 0 & \half  & 0 \\
0 & \half & \half &  0
\end{array} \right).
\label{eq:nmerminbox}
\end{equation}
Notice that the Mermin box in Eq. (\ref{eq:merminbox0}) and the above box are equivalent with respect to $\braket{A_iB_j}$, i.e., both the  
boxes have $\braket{A_0B_0}=-\braket{A_1B_1}=1$
and $\braket{A_0B_1}=\braket{A_1B_0}=0$. These two maximally local boxes differ by their marginals; the Mermin box in Eq. (\ref{eq:merminbox0}) has 
maximally mixed marginals, whereas the one in Eq. (\ref{eq:nmerminbox}) has nonmaximally mixed marginals. 
\begin{observation}\label{mlq2}
A maximally-local box that has $\mathcal{Q}=2$ is, in general, a convex combination of a maximally mixed marginals Mermin box and the four 
nonmaximally mixed marginals Mermin boxes
which are equivalent with respect to $\braket{A_iB_j}$,
\be
P^{\alpha\beta\gamma}_{\mathcal{Q}=2}=\sum^4_{i=1}p_{M_i}P^{nm}_{M_i}+p_MP^{\alpha\beta\gamma}_M, \label{Q=2box}
\ee
where $P^{nm}_{M_i}$ are the four nonmaximally mixed marginals Mermin boxes which all have the same values for $\braket{A_iB_j}$ and 
$P^{\alpha\beta\gamma}_M=\frac{1}{4}\sum^4_{i=1}P^{nm}_{M_i}$ is one of the eight Mermin boxes in Eq. (\ref{Mmmm0}) which have maximally
mixed marginals.  
\end{observation}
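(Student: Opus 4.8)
The plan is to show that the hypothesis $\mathcal{Q}=2$ is extremely rigid --- it pins down all four correlators $\braket{A_iB_j}$ up to the choice of which Mermin operator is saturated --- and then to show that nonsignaling together with positivity confines the only remaining freedom, the four marginals, to a two-dimensional polytope whose extreme points are exactly the four nonmaximally mixed marginals Mermin boxes $P^{nm}_{M_i}$ and whose centroid is the maximally mixed marginals Mermin box $P^{\alpha\beta\gamma}_M$.

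\emph{Step 1 (rigidity of $\mathcal{Q}=2$).} Each $\mathcal{M}_{\alpha\beta}$ lies in $[0,2]$, hence so does every difference $|\mathcal{M}_{\alpha\beta}-\mathcal{M}_{\alpha'\beta'}|$, and each $\mathcal{Q}_j\le 2$ with equality iff one of the two differences inside the outer modulus equals $2$ and the other equals $0$. Requiring $\mathcal{Q}_1=\mathcal{Q}_2=\mathcal{Q}_3=2$ at once is possible only when exactly one of the four Mermin functions equals $2$ and the other three vanish; say $\mathcal{M}_{\alpha\beta}=2$. Since $\mathcal{M}_{\alpha\beta}$ is the modulus of a signed sum of two correlators of modulus $\le 1$, the two correlators entering it are forced to $\pm 1$ with the appropriate relative sign, while the vanishing of the other three Mermin functions forces the remaining two correlators to $0$. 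Thus all four $\braket{A_iB_j}$ are determined (up to the label $\alpha\beta\gamma$ and an overall sign, both removable by LRO); substituting into Eq. (\ref{BCHSH1}) gives $\mathcal{B}_{\alpha\beta\gamma}=2$, so the box is automatically maximally local, as claimed. Writing $P(a_m,b_n|A_i,B_j)=\frac{1}{4}(1+(-1)^m\braket{A_i}+(-1)^n\braket{B_j}+(-1)^{m\oplus n}\braket{A_iB_j})$, the box is now completely specified by its four marginals.

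\emph{Step 2 (the marginal polytope).} The two perfect-correlation conditions from Step 1 ($\braket{A_iB_j}=\pm 1$ on two input pairs) force two of the marginals to be equal and the other two opposite, so only two parameters $x,y$ survive. Imposing $P(a_m,b_n|A_i,B_j)\ge 0$ on the two input pairs with vanishing correlator reduces, after the obvious simplification, to the single inequality $|x|+|y|\le 1$ (positivity on the perfectly correlated pairs only yields $|x|\le 1$ and $|y|\le 1$, which are implied). Hence the maximally local boxes with $\mathcal{Q}=2$ and label $\alpha\beta\gamma$ form a square in the $(x,y)$-plane. Its four vertices $(x,y)\in\{(\pm 1,0),(0,\pm 1)\}$ are the four Mermin boxes with nonmaximally mixed marginals sharing this correlator matrix --- one of them being $P^{nm}_M$ of Eq. (\ref{eq:nmerminbox}) --- and its centroid $x=y=0$ is the maximally mixed marginals Mermin box $P^{\alpha\beta\gamma}_M$ of Eq. (\ref{Mmmm0}) (the box of Eq. (\ref{eq:merminbox0}) when $\alpha\beta\gamma=000$); by affineness of the box in $(x,y)$ this centroid equals $\frac{1}{4}\sum_{i=1}^4 P^{nm}_{M_i}$.

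\emph{Step 3 (conclusion).} Every point of a square is a convex combination of its four vertices, hence also of the four vertices together with the centroid; transcribing this back into boxes gives the asserted decomposition $P^{\alpha\beta\gamma}_{\mathcal{Q}=2}=\sum_{i=1}^4 p_{M_i}P^{nm}_{M_i}+p_M P^{\alpha\beta\gamma}_M$ with nonnegative weights summing to one. The phrase ``in general'' reflects that the centroid term is optional and the weights are not unique, since the centroid can always be reabsorbed into the vertex terms. The main obstacle is Step 2: one must check carefully that, once the correlators are pinned, all the positivity inequalities collapse to exactly $|x|+|y|\le 1$, and then identify the resulting square with the explicit boxes --- its vertices with the nonmaximally mixed marginals Mermin boxes of the type of Eq. (\ref{eq:nmerminbox}) and its centroid with the Mermin box of Eq. (\ref{eq:merminbox0}). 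Steps 1 and 3 are essentially bookkeeping once the rigidity of $\mathcal{Q}=2$ is established.
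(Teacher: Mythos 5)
Your proof is correct, and it takes a genuinely different---and in fact more complete---route than the paper. The paper's own argument only verifies the easy inclusion: it observes that any convex mixture of the five correlator-equivalent Mermin boxes still has $\mathcal{Q}=2$, and that the equal mixture of the four nonmaximally mixed marginals boxes reproduces the maximally mixed one. It never proves the converse, which is what the observation actually asserts, namely that \emph{every} maximally local box with $\mathcal{Q}=2$ lies in that convex hull. You supply exactly this missing direction: the rigidity argument pins all four correlators from $\mathcal{Q}=2$ alone (and, as a bonus, derives maximal locality rather than assuming it), and the positivity analysis identifies the residual marginal freedom with the square $|x|+|y|\le 1$ whose vertices are the four nonmaximally mixed marginals Mermin boxes and whose centroid is the maximally mixed one. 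What the paper's approach buys is brevity and a direct link to the explicit boxes of Eqs.~(\ref{eq:merminbox0}) and (\ref{eq:nmerminbox}); what yours buys is an actual characterization theorem rather than a consistency check.

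One small gap in your Step 1: requiring $\mathcal{Q}_1=\mathcal{Q}_2=\mathcal{Q}_3=2$ forces all four Mermin functions into $\{0,2\}$, but the abstract solutions are ``exactly one equals $2$'' \emph{or} ``exactly three equal $2$.'' The latter must be excluded separately, and it is, because the four Mermin functions come in two pairs of the form $|x+y|$ and $|x-y|$ with $|x|,|y|\le 1$, so within each pair at most one can reach $2$ (and then the other vanishes); hence at most two Mermin functions can equal $2$ simultaneously, ruling out the three-$2$s configuration. With that observation inserted, the rest of your argument goes through as written.
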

\begin{proof}
Since the two Mermin boxes in Eqs. (\ref{eq:merminbox0}) and (\ref{eq:nmerminbox}) are equivalent with respect to $\braket{A_iB_j}$,
any convex mixture of these two boxes again have $\mathcal{Q}=2$. There are 
four nonmaximally mixed marginals Mermin boxes which are equivalent with respect to $\braket{A_iB_j}$ corresponding to a given maximally mixed marginals Mermin box. 
Thus, any convex mixture of these five Mermin boxes is again a $\mathcal{Q}=2$ box. 
It can be checked that the equal mixture of the four nonmaximally mixed marginals Mermin boxes 
which are equivalent with respect to $\braket{A_iB_j}$ gives the maximally mixed marginals Mermin box.
\end{proof}

\begin{observation}\label{qdg}
$\mathcal{Q}$ divides the $\mathcal{G}=0$ region into a $\mathcal{Q}>0$ region and $\mathcal{G}=\mathcal{Q}=0$ nonconvex region. 
\end{observation}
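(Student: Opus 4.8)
The plan is to prove the observation in two stages: first that the constraint $\mathcal{Q}=0$ is genuinely a new cut inside the $\mathcal{G}=0$ region (so that $\{\mathcal{G}=0\}$ meets both $\{\mathcal{Q}>0\}$ and $\{\mathcal{Q}=0\}$), and second that the residual set $\{\mathcal{G}=\mathcal{Q}=0\}$ is not convex. Since $\mathcal{G}$ and $\mathcal{Q}$ are both invariant under LRO and under interchange of the parties (already established after Definitions \ref{BDdef} and \ref{defMD}), all three regions are well defined, so it suffices to exhibit explicit boxes.

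For the first stage I would use explicit witnesses. The deterministic boxes, white noise $P_N$ in Eq. (\ref{eq:wn}), and the classically-correlated box $P_{CC}$ in Eq. (\ref{eq:ccbox}) all have $\mathcal{G}=0$, and a direct evaluation of the Mermin functions $\mathcal{M}_{\alpha\beta}$ shows that for each of them either every $\mathcal{M}_{\alpha\beta}$ vanishes or exactly two of them equal $2$ and the other two vanish; in either case $\mathcal{Q}_1=\mathcal{Q}_2=\mathcal{Q}_3=0$, hence $\mathcal{Q}=0$. On the other side, the isotropic Mermin box $P=pP_M+(1-p)P_N$ of Eq. (\ref{Mmot}) has $\mathcal{G}=0$ (noted already in the text, because $P_M=\frac12(P^{000}_{PR}+P^{110}_{PR})$ is a uniform mixture of two PR-boxes and hence Bell-local with all Bell-CHSH operators at or below the classical bound, and mixing in white noise only shrinks them further), while one computes $\mathcal{M}_{\alpha0}=\mathcal{M}_{0\beta}=0$ except for the single pair singled out by $P_M$, for which $\mathcal{M}_{\alpha\beta}=2p$; therefore $\mathcal{Q}=2p>0$ whenever $p>0$. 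This already shows the $\mathcal{G}=0$ region splits nontrivially.

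For the second stage I would reuse the same one-parameter family. For every $p\in(0,1]$ the box $pP_M+(1-p)P_N$ lies strictly inside the local polytope — it is a convex combination of the maximally-local box $P_M$, which sits on a facet of $\mathcal{L}$, and the interior point $P_N$ — so by Eq. (\ref{LD}) it is a convex combination of the $16$ deterministic boxes, each of which has $\mathcal{G}=\mathcal{Q}=0$. Since the resulting mixture has $\mathcal{Q}=2p>0$, a convex combination of $\mathcal{G}=\mathcal{Q}=0$ boxes can leave the set, so $\{\mathcal{G}=\mathcal{Q}=0\}$ is nonconvex; and because the deterministic boxes lie in $\{\mathcal{G}=\mathcal{Q}=0\}$ while the Bell polytope contains $\mathcal{Q}>0$ boxes, this set is a proper subset of the $\mathcal{G}=0$ region.

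The main obstacle is the routine but not entirely mechanical verification that $\mathcal{Q}\big(pP_M+(1-p)P_N\big)=2p$ and $\mathcal{G}\big(pP_M+(1-p)P_N\big)=0$ along this line. Rather than compute all four $\mathcal{B}_{\alpha\beta}$ and all four $\mathcal{M}_{\alpha\beta}$ by hand, I would argue in parallel with Observation \ref{Girre}: write the box in the PR-box/Mermin-box basis, note that $P_M$ carries no irreducible PR-box component (its two-PR-box decomposition is a uniform mixture, so it cancels in the sense of Observation \ref{umPR}), so $\mathcal{G}$ reads off zero, while $\mathcal{Q}$ reads off the irreducible Mermin-box weight, which is $p$, giving $\mathcal{Q}=2p$. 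Once this bookkeeping is in place, the nonemptiness of the two subregions and the nonconvexity of $\{\mathcal{G}=\mathcal{Q}=0\}$ follow immediately.
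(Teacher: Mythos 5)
Your proposal is correct and follows essentially the same route as the paper's own (much terser) proof: both use the deterministic boxes as witnesses for $\mathcal{G}=\mathcal{Q}=0$, the (isotropic) Mermin box as a $\mathcal{G}=0$, $\mathcal{Q}>0$ witness, and the fact that a Mermin box is a convex combination of deterministic boxes to establish nonconvexity. Your version merely fills in the explicit evaluation of the $\mathcal{M}_{\alpha\beta}$ and $\mathcal{B}_{\alpha\beta}$ that the paper leaves implicit (only the parenthetical claim that $pP_M+(1-p)P_N$ lies \emph{strictly} inside $\mathcal{L}$ fails at $p=1$, but nothing in the argument depends on it).
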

\begin{proof}
Since all the deterministic boxes have $\mathcal{G}=\mathcal{Q}=0$
and the Mermin boxes have $\mathcal{G}=0$,
the set of $\mathcal{G}=\mathcal{Q}=0$ boxes forms a nonconvex subregion of the $\mathcal{G}=0$ region.
\end{proof}

The division of the $\mathcal{G}=0$ region with respect to $\mathcal{Q}$ allows us to obtain the following canonical decomposition 
of the local boxes with $\mathcal{G}=0$ (see Appendix. \ref{pr2} for details).
\begin{theorem}\label{thm2}
Any local box, $P^{\mathcal{G}=0}_{L}$, which does not have Bell discord can be decomposed into maximally local box with $\mathcal{Q}=2$ 
and a local box with $\mathcal{G}=\mathcal{Q}=0$,
\be
P^{\mathcal{G}=0}_L=\zeta P^{\alpha\beta\gamma}_{\mathcal{Q}=2}+(1-\zeta)P^{\mathcal{G}=0}_{\mathcal{Q}=0}, \label{Qcanonical1} 
\ee
where, $P^{\alpha\beta\gamma}_{\mathcal{Q}=2}$, is the maximally local box with $\mathcal{Q}=2$, $\zeta$ is the maximal irreducible component of this box 
and $P^{\mathcal{G}=0}_{\mathcal{Q}=0}$ is the local box with $\mathcal{G}=\mathcal{Q}=0$.
\end{theorem}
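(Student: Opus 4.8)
The plan is to mirror, one level down, the proof of Theorem \ref{thm1} given in Appendix \ref{pr1}: instead of peeling an irreducible PR-box off an arbitrary NS box, I would peel an irreducible maximally local box with $\mathcal{Q}=2$ off a Bell-discord-free local box, staying inside the $\mathcal{G}=0$ region throughout. The starting point is that $P^{\mathcal{G}=0}_L$ is local, so by Eq. (\ref{LD}) it is a convex combination of the $16$ deterministic boxes. In analogy with Eq. (\ref{step1}), I would first regroup this combination into a convex sum of the nonmaximally mixed marginals Mermin boxes of the type in Eq. (\ref{eq:nmerminbox}) (each of which is a uniform mixture of deterministic boxes), together with a restricted local box $P'_L$ that admits no Mermin-box component in any of its decompositions.

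The next step is the Mermin-box analogue of Observations \ref{umPR} and \ref{Girre}. I would show that an unequal mixture $pP^i_M+qP^j_M$ of two antipodal Mermin boxes (those whose equal mixture is white noise) reduces to an irreducible Mermin box plus a box carrying no Mermin discord, exactly as in Observation \ref{umPR}, the reduction using that the equal mixture of two Mermin boxes has $\mathcal{Q}=0$. Iterating this over the regrouped sum, as in the passage from Eq. (\ref{step1}) to Eq. (\ref{step2}), and then invoking Observation \ref{mlq2} to collect the extracted Mermin content (which in general has nonmaximal marginals) into a single maximally local box $P^{\alpha\beta\gamma}_{\mathcal{Q}=2}$ of the form in Eq. (\ref{Q=2box}), I obtain the decomposition in Eq. (\ref{Qcanonical1}) with $\zeta$ taken maximal over all such decompositions. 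Along the way I would verify, in analogy with Observation \ref{Girre}, that this maximal $\zeta$ equals $\tfrac12\,\mathcal{Q}(P^{\mathcal{G}=0}_L)$, so that the measure $\mathcal{Q}$ of Definition \ref{defMD} genuinely computes the irreducible Mermin-box content; consequently $\mathcal{Q}$ is linear along this decomposition and the residual box $P^{\mathcal{G}=0}_{\mathcal{Q}=0}$ has $\mathcal{Q}=0$ by maximality of $\zeta$, just as $\mathcal{G}$ behaves after Theorem \ref{thm1}.

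It then remains to check that the residual box still has $\mathcal{G}=0$, and I expect this to be the main obstacle: extracting a Mermin box must not secretly inject an irreducible PR-box into the remainder. Here I would use that each Mermin box is a uniform mixture of two PR-boxes (e.g.\ $P^{000}_M=\tfrac12(P^{000}_{PR}+P^{110}_{PR})$) and hence already lies in the $\mathcal{G}=0$ region by Observation \ref{qdg}, so that when the construction is traced back, $P^{\mathcal{G}=0}_{\mathcal{Q}=0}$ is itself a convex combination of deterministic and Mermin boxes and therefore has $\mathcal{G}=0$. Equivalently, I would phrase this as the $\mathcal{Q}$-analogue of Observation \ref{nllg0}: since $\mathcal{Q}$ splits the $\mathcal{G}=0$ region into a $\mathcal{Q}>0$ part and the nonconvex $\mathcal{G}=\mathcal{Q}=0$ part (Observation \ref{qdg}), every point of the $\mathcal{G}=0$ region lies on a segment joining a $\mathcal{Q}=2$ maximally local box to a $\mathcal{G}=\mathcal{Q}=0$ box, and maximality of $\zeta$ fixes which such decomposition to take. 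Combining this with the linearity of $\mathcal{Q}$ established above completes the proof of Theorem \ref{thm2}.
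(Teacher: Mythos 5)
Your proposal is correct and follows essentially the same route as the paper's own proof in Appendix \ref{pr2}: regroup the local $\mathcal{G}=0$ box into a mixture of $\mathcal{Q}=2$ maximally local boxes plus a residue with no Mermin-box content, peel off the irreducible $\mathcal{Q}=2$ component via the anti-Mermin-box cancellation (the exact analogues of Observations \ref{umPR} and \ref{Girre}, which the paper states as Observations \ref{imbc} and \ref{Qirre}), and conclude that the remainder has $\mathcal{G}=\mathcal{Q}=0$ from the division of the $\mathcal{G}=0$ region. The only cosmetic difference is that you start from the nonmaximally mixed marginals Mermin boxes and assemble them via Observation \ref{mlq2}, whereas the paper works directly with the $\mathcal{Q}=2$ boxes of Eq. (\ref{Q=2box}).
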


From linearity of $\mathcal{Q}$ with respect to the decomposition given in Eq. (\ref{Qcanonical1}), it follows that 
$\mathcal{Q}(P^{\mathcal{G}=0}_L)=\zeta\mathcal{Q}\left(P^{\alpha\beta\gamma}_{\mathcal{Q}=2}\right)+(1-\zeta)\mathcal{Q}\left(P^{\mathcal{G}=0}_{\mathcal{Q}=0}\right)=2\zeta$.
This implies that the component, $\zeta$, in Eq. (\ref{Qcanonical1}) is invariant under LRO. 
\newline

\section{Mermin discord of two-qubit states}
The following inequalities,
\be
\mathcal{M}_{\alpha\beta\gamma}\le\sqrt{2}, \label{compEPR}
\ee
where $\mathcal{M}_{\alpha\beta\gamma}$ are the Mermin operators given in Eq. (\ref{bimi}), form the complete set of EPR-steering inequalities 
if the measurement operators on Alice's or Bob's side are anti-commuting qubit observables \cite{EPRsi}. Suppose $B_0=\sigma_x$
and $B_1=\sigma_y$, then these inequalities can be obtained from 
the EPR-steering inequality in Eq. (\ref{eprst}) by LRO. 
The local boxes which violate an EPR-steering inequality in Eq. (\ref{compEPR}) are the subset of the local boxes which have Mermin discord.

We will apply Mermin discord to the local boxes arising from the pure entangled states in Eq. (\ref{nmE}) and the Werner states
in Eq. (\ref{MQDs}). A nonzero Mermin discord of the non-steerable boxes originates from incompatible measurements that give rise to EPR-steering. 
We will find that optimal Mermin discord is achieved by the orthogonal measurements which do 
not give rise to Bell nonlocality.

\subsection{Pure entangled states}
(a) For the settings
${\vec{a}_0}=\hat{x}$, ${\vec{a}_1}=\hat{y}$,
${\vec{b}_0}=\hat{x}$ and ${\vec{b}_1}=\hat{y}$, the pure entangled states in Eq. (\ref{nmE}) 
give rise to the noisy Mermin-box which is a mixture of a Mermin box and white noise 
as follows:
\be
P=\sqrt{\tau}\left(\frac{P^{000}_{PR}+P^{110}_{PR}}{2}\right) +(1-\sqrt{\tau}) P_N, \label{MSb}
\ee 
where $\tau=\sin2\theta$. The above box violates 
the EPR-steering inequality, i.e., $\mathcal{M}_{000}=2\sqrt{\tau}>\sqrt{2}$ if $\tau>\frac{1}{2}$ and has Mermin discord $\mathcal{Q}=2\sqrt{\tau}>0$ if $\tau>0$.
Notice that the irreducible Mermin-box component in the non EPR-steerable box in Eq. (\ref{MSb}) is due to the incompatible measurements that gives rise to EPR-steering
and entanglement.

\begin{figure}[h!]
\centering
\includegraphics[scale=0.85]{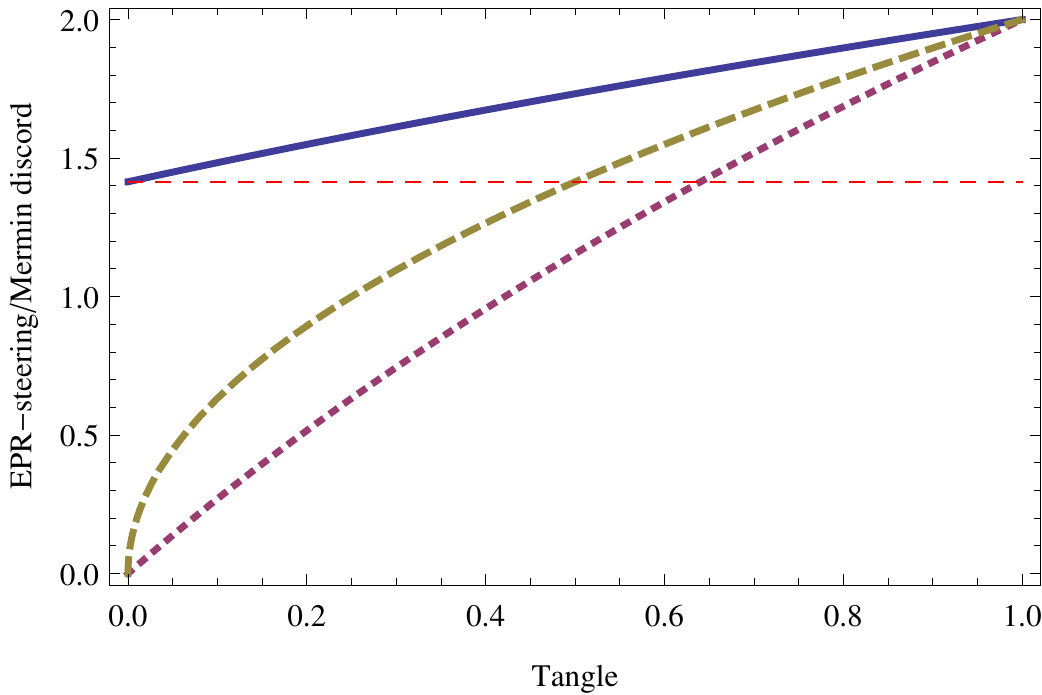} 
\caption[EPR steeering, Mermin discord vs tangle]{Dashed line shows the plots of the EPR-steering violation
and Mermin discord for the box given in Eq. (\ref{MSb}). Solid and dotted lines show the plots of the EPR-steering violation and 
Mermin discord respectively for the box given in Eq. (\ref{CSB}). We observe that the box in Eq. (\ref{CSB}) has less Mermin discord than
the box in Eq. (\ref{MSb}) despite the fact that the former gives rise to optimal violation of the EPR-steering inequality.}\label{figplot2}
\end{figure}

 (b) For the settings, ${\vec{a}_0}=\frac{1}{\sqrt{2}}(\hat{z}+\hat{x})$, ${\vec{a}_1}=\frac{1}{\sqrt{2}}(\hat{z}-\hat{x})$,
${\vec{b}_0}=\cos t\hat{z}+\sin t\hat{x}$, and ${\vec{b}_1}=\cos t\hat{z}-\sin t\hat{x}$, 
where $\cos t=\frac{1}{\sqrt{1+\tau} }$, all the pure entangled states violate the EPR-steering inequality, i.e., 
$\mathcal{M}_{000}=\sqrt{2}\sqrt{1+\tau}>\sqrt{2}$ if $\tau>0$.
For this settings, the box can be decomposed into Mermin box
and a nonmaximally mixed marginals box with $\mathcal{G}=\mathcal{Q}=0$,
\be
P=\nu\left(\frac{P^{000}_{PR}+P^{110}_{PR}}{2}\right)+\left(1-\nu\right)P^{\mathcal{G}=0}_{\mathcal{Q}=0}, \label{CSB}
\ee
where $\nu=\frac{\sqrt{2}\tau}{\sqrt{1+\tau}}$. The  $\mathcal{G}=\mathcal{Q}=0$ box, $P^{\mathcal{G}=0}_{\mathcal{Q}=0}$, in this decomposition 
becomes white noise, $P_N$, for the maximally entangled state. 
The above box has Mermin discord $\mathcal{Q}=\frac{2\sqrt{2}\tau}{\sqrt{1+\tau}}>0$ if $\tau>0$. 

Notice that the box in Eq. (\ref{MSb}) has more irreducible Mermin box component than
the box in Eq. (\ref{CSB}) for a given amount of entanglement (see fig. \ref{figplot2}).
Thus, when the pure nonmaximally entangled states give rise to optimal violation of an EPR-steering inequality, 
the box does not have optimal Mermin discord and has nonmaximally
mixed marginals.

\subsection{Werner states}
For the settings that gives rise to the optimal Mermin discord given in Eq. (\ref{MSb}), the box arising from the Werner states in Eq. (\ref{MQDs}) 
can be decomposed into Mermin box and white noise
as follows,
\be
P=(1-p) P_N+ p\left(\frac{P^{000}_{PR}+P^{110}_{PR}}{2}\right). \label{MW}
\ee 
The above box violates the EPR-steering inequality if $p>\frac{1}{\sqrt{2}}$ and 
has Mermin discord $\mathcal{Q}=2p>0$ if $p>0$. 
Thus, Mermin discord of the the local box in Eq. (\ref{MW}) also detects nonclassicality of the entangled states, which cannot give rise to the violation of an 
EPR-steering inequality, and the separable nonzero quantum discord states.

\section{Bell and Mermin discord vs nonzero quantum discord and incompatibility}\label{BMDiQD}
In the case of two-qubit states and projective measurements, we will show that both incompatible measurements and nonzero left and right quantum discord are necessary
for nonzero Bell/Mermin discord.
\newline

\begin{theorem}\label{thm3}
No compatible measurements on two-qubit states can give rise to nonzero Bell/Mermin discord.  
\end{theorem}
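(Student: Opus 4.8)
The plan is to observe that for dichotomic projective measurements on qubits the hypothesis forces a coincidence among the four correlators $\langle A_iB_j\rangle$ that is strong enough to annihilate every $\mathcal{G}_i$ and every $\mathcal{Q}_j$ simultaneously. Recall that on a two-qubit system the relevant observables are $A_i=\hat a_i\cdot\vec\sigma$ and $B_j=\hat b_j\cdot\vec\sigma$, and that $[A_0,A_1]=0$ is equivalent to $\hat a_0\times\hat a_1=0$, i.e. $\hat a_1=\varepsilon\,\hat a_0$ with $\varepsilon\in\{+1,-1\}$ (and symmetrically for Bob). Thus ``compatible measurements'' means that on at least one side the two Bloch vectors are collinear; by the invariance of $\mathcal{G}$ and $\mathcal{Q}$ under interchange of the parties, I may assume it is Alice's side.

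First I would dispose of the sign $\varepsilon$. The case $\hat a_1=-\hat a_0$ is obtained from $\hat a_1=\hat a_0$ by relabelling Alice's output for input $1$, which is a local reversible operation; since $\mathcal{G}$ and $\mathcal{Q}$ are LRO-invariant (Definitions in Chapters \ref{Ch2} and \ref{Ch3}), it is enough to treat $A_1=A_0$. In that case, for \emph{every} two-qubit state $\rho_{AB}$ and \emph{every} choice of Bob's settings, $\langle A_1B_j\rangle=\Tr\!\big(\rho_{AB}\,A_0\otimes B_j\big)=\langle A_0B_j\rangle$ for $j=0,1$; set $x:=\langle A_0B_0\rangle$ and $y:=\langle A_0B_1\rangle$. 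Note that only the correlators enter the definitions of the two discords — the marginals of the resulting box are irrelevant — which is why the argument is uniform over all two-qubit states.

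The remaining step is a direct substitution into the definitions of Bell and Mermin discord. Feeding $\langle A_1B_0\rangle=x$, $\langle A_1B_1\rangle=y$ into the Bell functions $\mathcal{B}_{\alpha\beta}$ of Eq. (\ref{bchshmod}) yields $\mathcal{B}_{00}=\mathcal{B}_{01}=2|x|$ and $\mathcal{B}_{10}=\mathcal{B}_{11}=2|y|$, so each of $\mathcal{G}_1,\mathcal{G}_2,\mathcal{G}_3$ in Eq. (\ref{gi}) collapses to an expression of the form $\big|\,t-t\,\big|=0$; hence $\mathcal{G}=\min_i\mathcal{G}_i=0$. Likewise the four Mermin functions become $\mathcal{M}_{00}=\mathcal{M}_{01}=|x-y|$ and $\mathcal{M}_{10}=\mathcal{M}_{11}=|x+y|$ (the diagonal pair $\langle A_0B_0\rangle,\langle A_1B_1\rangle$ and the off-diagonal pair $\langle A_0B_1\rangle,\langle A_1B_0\rangle$ collapse onto the same two numbers), so every $\mathcal{Q}_j$ of Definition \ref{defMD} is again of the form $\big|\,t-t\,\big|=0$ and $\mathcal{Q}=0$. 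Therefore no two-qubit box arising from measurements that commute on one side — in particular none arising from compatible measurements — can have nonzero Bell or Mermin discord, which is the claim.

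The argument has no serious obstacle once the notion of compatibility is pinned down; the only thing requiring care is the bookkeeping at the front end — confirming that for dichotomic qubit observables ``commuting'' forces collinear Bloch vectors, and arranging the proof so that the two choices of which side is compatible and the two signs $\varepsilon=\pm1$ are all absorbed by the party-interchange and LRO invariance of $\mathcal{G}$ and $\mathcal{Q}$ rather than treated by separate computations. After the reduction to $A_1=A_0$ the algebra is a one-line check, so I would keep the write-up correspondingly short.
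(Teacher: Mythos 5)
Your proof is correct and follows essentially the same route as the paper's: reduce compatibility on one side to collinear Bloch vectors, note that the correlators then satisfy $\langle A_1B_j\rangle=\pm\langle A_0B_j\rangle$, and substitute into the Bell and Mermin functions to see that they pairwise coincide, killing every $\mathcal{G}_i$ and $\mathcal{Q}_j$. The only difference is that you explicitly dispose of the antipodal case $\hat a_1=-\hat a_0$ via LRO invariance, a point the paper's proof passes over by simply setting $\hat a_0=\hat a_1$.
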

\begin{proof}
Any two-qubit state, up to local unitary equivalence, can be represented as, 
\ba
 \rho_{AB}&=&\frac{1}{4}(\openone_A\otimes\openone_B+\vec{r}\cdot\vec{\sigma}\otimes\openone_B+\openone_A\otimes\vec{s}\cdot\vec{\sigma} \nonumber \\
 &&+\sum\limits_{i=1}^{3}c_{i}\sigma_i\otimes\sigma_i), \label{a2q}
\ea
where the coefficients $c_i=\tr\rho_{AB}\sigma_i\otimes\sigma_i$, $i=x,y,z$, form a diagonal matrix denoted by $C$.
Here $|\vec{r}|^2+|\vec{s}|^2+||C||^2\le3$ with equality holds for the pure states. The expectation value of the above states is given by,
\begin{equation}
\braket{A_iB_j}=\hat{a}_i\cdot C\hat{b}_j.
\end{equation}
Let us calculate $\mathcal{G}$ and $\mathcal{Q}$ for the states given in Eq. (\ref{a2q}) for compatible measurements on Alice's side.
Suppose we choose measurement directions as $\hat{a}_0=\hat{a}_1=\hat{a}$, the measurement observables commute, i.e., $[A_0,A_1]=0$.
For this choice of compatible measurements on Alice's side, $\mathcal{B}_{00}=\mathcal{B}_{01}=2 \hat{a}_0\cdot C\hat{b}_0$, and, 
$\mathcal{B}_{10}=\mathcal{B}_{11}=2 \hat{a}_0\cdot C\hat{b}_1$. This implies that $\mathcal{G}=\mathcal{Q}=0$ for any choice of compatible measurements 
on one side and any choice of compatible/incompatible measurements on the other side. 
\end{proof}

Any separable state which has nonzero left and right quantum discord  cannot be decomposed in the classical-quantum (CQ)
or quantum-classical (QC) form \cite{Dakicetal}.
The CQ states can be written as,
\be 
\rho_{CQ}=\sum^{1}_{i=0}p_i\ketbra{i}{i}\otimes \chi_i \label{c-q},
\ee
whereas QC states can be written as, 
\be
\rho_{QC}=\sum^1_{j=0}p_j\phi_j\otimes \ketbra{j}{j}\label{q-c}.
\ee 
Here $\{\ket{i}\}$ and $\{\ket{j}\}$ are the orthonormal sets, and, $\chi_i$ and  $\phi_j$ are the arbitrary quantum states. 
Despite the CQ and QC states are not the product states in general, their joint expectation value can be written in the factorized form, 
$\braket{AB}= f(\hat{a})f(\hat{b})$, here $\hat{a}$ and $\hat{b}$ are the measurement directions chosen by Alice and Bob respectively.
This factorization of the expectation value for the CQ and QC states 
implies that they cannot have nonzero Bell/Mermin discord for all measurements.
\newline

\begin{theorem}\label{thm4}
All classical-quantum and quantum-classical states have zero Bell and Mermin discord, i.e., $\mathcal{G}=\mathcal{Q}=0$ for all measurements. 
\end{theorem}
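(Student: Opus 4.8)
The plan is to push everything down to the four joint correlators $E_{ij}:=\langle A_iB_j\rangle$, since the Bell functions $\mathcal{B}_{\alpha\beta}$ of Eq.~(\ref{bchshmod}) — hence the $\mathcal{G}_i$ of Eq.~(\ref{gi}) and $\mathcal{G}$ — and likewise the Mermin functions $\mathcal{M}_{\alpha\beta}$, the $\mathcal{Q}_j$ and $\mathcal{Q}$, depend on the box only through $E$. It therefore suffices to control the $2\times 2$ matrix $E$. First I would make the factorization claimed in the paragraph before the statement precise. Because $\mathcal{G}$ and $\mathcal{Q}$ are invariant under local unitaries (a local unitary on one side is absorbed into a relabelling of that party's spin direction), I may take the orthonormal set $\{\ket{i}\}$ in Eq.~(\ref{c-q}) to be the computational basis; then $\langle i|\hat a\cdot\vec\sigma|i\rangle=(-1)^i a_z$, and a one-line computation gives $\langle A_iB_j\rangle=(\hat a_i)_z\,\big(\textstyle\sum_k p_k(-1)^k\vec r_k\big)\!\cdot\!\hat b_j=:f(\hat a_i)\,g(\hat b_j)$, with $\vec r_k$ the Bloch vector of $\chi_k$ and $|f|,|g|\le 1$; in particular $E$ has rank at most one. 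For quantum--classical states, Eq.~(\ref{q-c}), the same holds by the Alice--Bob swap, under which $\mathcal{G}$ and $\mathcal{Q}$ are invariant.

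The substantive step is then purely algebraic: if $E_{ij}=f_ig_j$ has rank at most one, then $\mathcal{G}=\mathcal{Q}=0$. Setting $\xi=f_0+f_1$ and $\eta=f_0-f_1$, the four Bell functions collapse to the symmetric form $\mathcal{B}_{00}=|\xi g_0+\eta g_1|$, $\mathcal{B}_{01}=|\xi g_0-\eta g_1|$, $\mathcal{B}_{10}=|\eta g_0+\xi g_1|$, $\mathcal{B}_{11}=|\eta g_0-\xi g_1|$, so that the pairings entering the $\mathcal{G}_i$ simplify via $\big||x+y|-|x-y|\big|=2\min(|x|,|y|)$; e.g.\ $|\mathcal{B}_{00}-\mathcal{B}_{01}|=2\min(|\xi g_0|,|\eta g_1|)$ and $|\mathcal{B}_{10}-\mathcal{B}_{11}|=2\min(|\eta g_0|,|\xi g_1|)$. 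A finite case analysis on the relative sizes of $|\xi|,|\eta|,|g_0|,|g_1|$ should force at least one of $\mathcal{G}_1,\mathcal{G}_2,\mathcal{G}_3$ to vanish; the subcase $f_0=\pm f_1$ (rows or columns of $E$ equal up to sign) already reduces to the argument of Theorem~\ref{thm3}, so only $f_0\ne\pm f_1$ is genuinely new. The Mermin functions are even more rigid, splitting into a diagonal pair $|f_0g_0\pm f_1g_1|$ and an off-diagonal pair $|f_0g_1\pm f_1g_0|$; applying the same identity twice together with $(f_0g_0)(f_1g_1)=(f_0g_1)(f_1g_0)$ should make one of the $\mathcal{Q}_j$ collapse to $0$, hence $\mathcal{Q}=0$. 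Combining the two steps proves the theorem, since by the first step every box obtained from a CQ or QC state with projective qubit measurements has a rank-$\le 1$ correlator matrix.

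I expect the second step to be the real obstacle. The clean formulation one would like — ``every rank-one correlator matrix has $\mathcal{G}=\mathcal{Q}=0$'' — is exactly what is needed, but before relying on it I would check it carefully against the nested-modulus structure of the $\mathcal{G}_i$ and $\mathcal{Q}_j$; if it does not hold for arbitrary rank-one $E$, the proof must pin down which feature of the CQ/QC factorization is actually doing the work (the bounds $|f_i|,|g_j|\le 1$, the concrete form $f(\hat a)=(\hat a)_z$, or the requirement that the full box carry the correct marginals). Everything else — the reduction to the correlators and the factorization computation — is routine.
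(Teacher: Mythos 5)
Your first step---pushing everything down to the correlator matrix and showing that CQ states give $\braket{A_iB_j}=f(\hat a_i)\,g(\hat b_j)$, with the QC case following by the party swap---is exactly the paper's first step and is fine. The problem is the second step, and the suspicion you voice about it is justified: the clean lemma you want, that every rank-$\le1$ correlator matrix has $\mathcal{G}=\mathcal{Q}=0$, is \emph{false}. Take a pure product state (a one-term CQ state) with Bloch vectors $\hat r,\hat s$ and choose $\hat a_0\cdot\hat r=1$, $\hat a_1\cdot\hat r=\frac12$, $\hat b_0\cdot\hat s=\frac34$, $\hat b_1\cdot\hat s=\frac14$, so $f_0=1$, $f_1=\frac12$, $g_0=\frac34$, $g_1=\frac14$. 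The four Bell functions then take the values $\frac54,\,1,\,\frac34,\,0$, and the three pairings in Eq.~(\ref{gi}) give $\mathcal{G}_1=\mathcal{G}_2=\frac12$, $\mathcal{G}_3=1$, hence $\mathcal{G}=\frac12>0$; the four Mermin functions take the values $\frac78,\frac58,\frac58,\frac18$ and give $\mathcal{Q}=\frac14>0$. So precisely in the regime you flag as ``genuinely new'' ($f_0\ne\pm f_1$ with both nonzero), the case analysis you hope will force some $\mathcal{G}_i$ or $\mathcal{Q}_j$ to vanish cannot succeed: the rank-one structure together with $|f_i|,|g_j|\le1$ is not enough.

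What actually does the work in the paper's proof is neither the factorization nor those bounds but a restriction on the measurements: the paper evaluates $\mathcal{G}$ and $\mathcal{Q}$ only at settings with $\hat a_0\cdot\hat r=1$ and $\hat a_1\cdot\hat r=0$, i.e.\ $f_0=1$, $f_1=0$, justified by the asserted (not proved) claim that such orthogonal settings are optimal. A zero row in $E$ forces $\mathcal{B}_{00}=\mathcal{B}_{10}$ and $\mathcal{B}_{01}=\mathcal{B}_{11}$, so $\mathcal{G}_1=0$ immediately, and likewise for $\mathcal{Q}$. To complete an argument along your lines you would have to prove that optimality claim---which the example above shows cannot follow from the rank-one structure alone, since non-orthogonal settings yield strictly larger $\mathcal{G}$ on a product state than the orthogonal ones do---or else reinterpret ``for all measurements'' as a statement about the optimized quantities. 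As written, your proposal does not close this gap, and the gap is real.
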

\begin{proof}
In the Bloch sphere representation, the CQ
states in Eq. (\ref{c-q}) can be written as:
\begin{eqnarray}
\rho_{CQ}&=&\frac{p_0}{4}\left(\openone
+\hat{r}\cdot\vec{\sigma}\right)\otimes\left(\openone
+\vec{s}_0\cdot\vec{\sigma}\right)\nonumber \\&&+\frac{p_1}{4}\left(\openone
-\hat{r}\cdot\vec{\sigma}\right)\otimes\left(\openone
+\vec{s}_1\cdot\vec{\sigma}\right),
\end{eqnarray}
where $\hat{r}$ is the Bloch vector for the projectors $\ketbra{i}{i}$ and
$\vec{s}_i$ are the Bloch vector for the states $\chi_i$. Notice that $\hat{r}$ appears twice in the above decomposition because of the orthogonality of 
projectors on Alice's side; as a result of this, the expectation value factorizes as follows,
\be
\braket{A_iB_j}=\left(\hat{a}_i\cdot\hat{r}\right) \left(\hat{b}_j\cdot(p_0\vec{s}_0-p_1\vec{s}_1)\right), \label{product-like}
\ee
whose form is similar to that of a product state, \begin{align*}\rho=\rho_A\otimes\rho_B=\frac{1}{4}\left[\left(\openone
+\vec{r}\cdot\vec{\sigma}\right)\otimes\left(\openone
+\vec{s}\cdot\vec{\sigma}\right)\right].\end{align*} 
We have observed that the optimal settings have the following property: 
for the Bell discord one has, $\hat{a}_0\cdot\hat{a}_1=0$, $\hat{b}_0\cdot\hat{b}_1=0$ and $\hat{a}_i\cdot\hat{b}_j=\pm\frac{1}{\sqrt{2}}$, whereas
for the Mermin discord one has: $\hat{a}_0\cdot\hat{a}_1=0$, $\hat{b}_0\cdot\hat{b}_1=0$ and $\hat{a}_i=\pm\hat{b}_j$. 
Since the optimal settings that maximizes $\mathcal{G}$ and $\mathcal{Q}$ have the common property that measurements on
Alice's side or Bob's side are orthogonal, we choose orthogonal measurements on Alice' side
to maximize $\mathcal{G}$ and $\mathcal{Q}$ with respect to the correlation given in Eq. (\ref{product-like}).
Suppose we choose $\hat{a}_0 \cdot \hat{r}=1$, the orthogonality  
condition ($\hat{a}_0\cdot\hat{a}_1=0$) implies that $\hat{a}_1 \cdot \hat{r}=0$. For this choice of orthogonal measurements on Alice's side,   
$\mathcal{B}_{00}=|(\hat{b}_0+\hat{b}_1)\cdot(p_0\vec{s}_0-p_1\vec{s}_1)|$,
$\mathcal{B}_{01}=|(\hat{b}_0-\hat{b}_1)\cdot(p_0\vec{s}_0-p_1\vec{s}_1)|$,
$\mathcal{B}_{10}=|(\hat{b}_0+\hat{b}_1)\cdot(p_0\vec{s}_0-p_1\vec{s}_1)|$, and 
$\mathcal{B}_{11}=|(\hat{b}_0-\hat{b}_1)\cdot(p_0\vec{s}_0-p_1\vec{s}_1)|$ which
implies that $\mathcal{G}=\mathcal{Q}=0$ for all possible measurements on Bob's side. Similarly, we can prove that $\mathcal{G}=\mathcal{Q}=0$ 
for the QC states since $\mathcal{G}$ and $\mathcal{Q}$ are symmetric under the permutation of the parties. 
\end{proof}
Since the joint expectation value of any quantum-correlated state, which has nonzero left and right quantum discord, cannot be written in the factorized form, 
i.e., $\braket{AB}\ne f(\hat{a})f(\hat{b})$, 
all quantum correlated states can give rise to nonzero Bell/Mermin discord for suitable incompatible measurements.

\section{3-decomposition of NS boxes}\label{3-d}
The canonical decomposition given in Eq. (\ref{Gde}) is not the most general one for any given NS box.
Since the canonical decomposition for the boxes with $\mathcal{G}=0$ given in Eq. (\ref{Qcanonical1}) implies that the $\mathcal{G}=0$ box in Eq. (\ref{Gde}) 
can be decomposed into box with $\mathcal{Q}=2$ and a box $\mathcal{G}=\mathcal{Q}=0$,
we obtain the following $3$-decomposition fact of NS boxes.
\begin{theorem}\label{thm5}
Any NS box can be written as a convex mixture of a PR-box, a maximally-local box with $\mathcal{Q}=2$ and a local box with $\mathcal{G}=\mathcal{Q}=0$,
\be
P=\mu P^{\alpha\beta\gamma}_{PR}+\nu P^{\alpha\beta\gamma}_{\mathcal{Q}=2}+(1-\mu-\nu)P^{\mathcal{G}=0}_{\mathcal{Q}=0}. \label{pDecomp}
\ee
\end{theorem}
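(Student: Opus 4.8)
The plan is to obtain the $3$-decomposition simply by composing the two canonical decompositions already established: Theorem~\ref{thm1} peels off a PR-box, and Theorem~\ref{thm2} further splits the leftover $\mathcal{G}=0$ local box. First I would apply Theorem~\ref{thm1} to an arbitrary NS box $P$ to write
\be
P=\mu P^{\alpha\beta\gamma}_{PR}+(1-\mu)P^{\mathcal{G}=0}_L,\nonumber
\ee
where $\mu\in[0,1]$ is the maximal irreducible PR-box weight and $P^{\mathcal{G}=0}_L$ is a local box with $\mathcal{G}=0$. Since $P^{\mathcal{G}=0}_L$ is a local box with no Bell discord, Theorem~\ref{thm2} applies to it verbatim, giving
\be
P^{\mathcal{G}=0}_L=\zeta P^{\alpha\beta\gamma}_{\mathcal{Q}=2}+(1-\zeta)P^{\mathcal{G}=0}_{\mathcal{Q}=0},\nonumber
\ee
with $\zeta\in[0,1]$ the maximal irreducible weight of the $\mathcal{Q}=2$ maximally-local box. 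Substituting the second line into the first yields
\be
P=\mu P^{\alpha\beta\gamma}_{PR}+(1-\mu)\zeta\, P^{\alpha\beta\gamma}_{\mathcal{Q}=2}+(1-\mu)(1-\zeta)\,P^{\mathcal{G}=0}_{\mathcal{Q}=0},\nonumber
\ee
so setting $\nu:=(1-\mu)\zeta$ reproduces Eq.~(\ref{pDecomp}).

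Next I would do the routine bookkeeping: the three coefficients $\mu$, $\nu=(1-\mu)\zeta$ and $(1-\mu)(1-\zeta)=1-\mu-\nu$ are manifestly nonnegative and sum to one, so the right-hand side is a genuine convex combination. I would then record that the weights are fixed by the invariants: linearity of $\mathcal{G}$ along the decomposition of Eq.~(\ref{Gde}) gives $\mathcal{G}(P)=4\mu$, and linearity of $\mathcal{Q}$ along Eq.~(\ref{Qcanonical1}) together with $\mathcal{Q}\big(P^{\alpha\beta\gamma}_{PR}\big)=0$ pins down $\nu$ via $\mathcal{Q}(P^{\mathcal{G}=0}_L)=2\zeta$; hence $\mu$ and $\nu$ inherit LRO-invariance from $\mathcal{G}$ and $\mathcal{Q}$. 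This also shows the three pieces are the claimed disjoint sources: $P^{\alpha\beta\gamma}_{PR}$ carries all the Bell discord, $P^{\alpha\beta\gamma}_{\mathcal{Q}=2}$ carries all the Mermin discord (equivalently the maximal EPR-steerability), and $P^{\mathcal{G}=0}_{\mathcal{Q}=0}$ has $\mathcal{G}=\mathcal{Q}=0$.

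The main obstacle I anticipate is not the algebra but making sure the composition is legitimate and the superscripts in Eq.~(\ref{pDecomp}) are consistent. Concretely, I would need to check that the particular $\mathcal{G}=0$ box produced by Theorem~\ref{thm1} lies in the domain of Theorem~\ref{thm2} (it does, since Theorem~\ref{thm2} is stated for \emph{any} local box with $\mathcal{G}=0$), and that ``the'' $\mathcal{Q}=2$ box appearing here is understood in the sense of Observation~\ref{mlq2}, i.e.\ it may itself be a convex mixture of a maximally-mixed-marginals Mermin box with the four associated nonmaximally-mixed-marginals Mermin boxes — so the label $\alpha\beta\gamma$ refers to the underlying maximally-mixed Mermin box $P^{\alpha\beta\gamma}_M=\tfrac14\sum_{i=1}^4 P^{nm}_{M_i}$. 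Finally, I would verify that the PR-box label and the Mermin-box label can be taken compatibly; this is where the correspondence between the $8$ PR-boxes and the $8$ Mermin operators, together with the Bell- and Mermin-function monogamy relations, is used: only one $\mathcal{B}_{\alpha\beta}$ and only one $\mathcal{M}_{\alpha\beta}$ can be ``activated'', which forces the extracted PR-box and $\mathcal{Q}=2$ box to sit in the same $\alpha\beta\gamma$ sector. Everything else is the straightforward nested extraction described above.
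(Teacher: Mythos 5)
Your proposal is correct and follows essentially the same route as the paper, which likewise obtains Theorem~\ref{thm5} by substituting the Theorem~\ref{thm2} decomposition of the $\mathcal{G}=0$ local box into the Theorem~\ref{thm1} decomposition of an arbitrary NS box. Your extra bookkeeping on the convexity of the weights and the consistency of the $\alpha\beta\gamma$ labels is a harmless (and arguably welcome) elaboration of details the paper leaves implicit.
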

The $3$-decomposition given above serves as the most general canonical decomposition of the NS boxes as it classifies any given NS box according to whether it has 
Bell or/and Mermin discord.

\textbf{$3$-decomposition of two-qubit states.--} For any given quantum correlated state, there are three types of incompatible measurements which give rise to
(i) $\mathcal{G}>0$ and $\mathcal{Q}=0$ (ii) $\mathcal{G}=0$ and $\mathcal{Q}>0$ and (iii) $\mathcal{G}>0$ and $\mathcal{Q}>0$ ($3$-decomposition).
We will analyze $3$-decomposition of the pure entangled states and the Werner states in order to illustrate the new insights that may be obtained regarding the origin of 
nonclassicality.
\subsection{Maximally entangled state}
When the maximally entangled state gives rise to a nonlocal box which has a $3$-decomposition, 
the box also violates an EPR-steering inequality.
For the measurement settings: 
${\vec{a}_0}=\hat{x}$, ${\vec{a}_1}=\hat{y}$,
${\vec{b}_0} =\sqrt{p}\hat{x}-\sqrt{1-p}\hat{y}$ and ${\vec{b}_1}=\sqrt{1-p}\hat{x}+\sqrt{p}\hat{y}$, where $\frac{1}{2}\le p \le1$, the box
arising from the Bell state, $\ket{\psi^+}=\frac{1}{\sqrt{2}}(\ket{00}+\ket{11})$, 
can be decomposed into PR-box, a Mermin box which is a uniform mixture of two PR-boxes and white noise as follows,
\be
P=\mu P^{000}_{PR}+\nu \left(\frac{P^{000}_{PR}+P^{110}_{PR}}{2}\right)+(1-\mu-\nu)P_N, \label{0meb1}
\ee 
where $\mu=\sqrt{1-p}$ 
and $\nu=\sqrt{p}-\sqrt{1-p}$. 
The above box has Bell and Mermin discord simultaneously when $\frac{1}{2}<p<1$, i.e.,
$\mathcal{G}=4\sqrt{1-p}>0$ if $p\ne1$ and $\mathcal{Q}=2(\sqrt{p}-\sqrt{1-p})>0$ if $p\ne\frac{1}{2}$. 
The box in Eq. (\ref{0meb1}) violates the Bell-CHSH inequality, i.e., $\mathcal{B}_{000}=2\left(\sqrt{p}+\sqrt{1-p}\right)>2$ if
$p\ne1$ and the EPR-steering inequality, i.e., $\mathcal{M}_{000}=2\sqrt{p}>\sqrt{2}$ if $p\ne\frac{1}{2}$. 
Notice that when the settings becomes optimal for the violation of the EPR-steering inequality which happens at $p=1$, 
the PR-box and Mermin box components in the $3$-decomposition go to zero and maximal respectively. 
Thus, the Mermin-box component in the nonlocal box in Eq. (\ref{0meb1}) originates from incompatible measurements that give rise to maximal EPR-steerability.
\subsection{Pure nonmaximally entangled states}
(a) We define the settings:
${\vec{a}_0}=s\hat{x}+c\hat{y}$,
${\vec{a}_1}=c\hat{x}-s\hat{y}$,
${\vec{b}_0}=\frac{1}{\sqrt{2}}(\hat{x}+\hat{y})$ and
${\vec{b}_1}=\frac{1}{\sqrt{2}}(\hat{x}-\hat{y})$, where $s=\sin2\theta$ and $c=\cos2\theta$.
For this state dependent settings, the pure nonmaximally entangled states in Eq. (\ref{nmE}) give rise to a $3$-decomposition as follows,
\be
P=\left(1-\mu-\nu\right) P_N+ \nu\left(\frac{P^{000}_{PR}+P^{11\gamma}_{PR}}{2}\right)+\mu P^{000}_{PR}, \label{0BMSb}
\ee
where
$\nu=|c+s-|c-s||$ and $\mu=\frac{s}{\sqrt{2}}|s-c|$.
The box has nonzero Bell and Mermin discord as follows (see fig. \ref{pt3}), 
\ba
\mathcal{G}&=&2\sqrt{2\tau}|\sqrt{\tau}-\sqrt{1-\tau}|\nonumber \\
&&>0 \quad \text{except when} \quad s  \ne0, \frac{1}{\sqrt{2}} \nonumber
\ea
and 
\begin{align}
\mathcal{Q}&=\sqrt{2}s\Big||c+s|-|c-s|\Big|>0 \quad \text{except when} \quad s \ne0, \frac{1}{2}\nonumber\\
&=\left\{\begin{array}{lr}
2\sqrt{2}\tau \quad \text{when} \quad c>s\\ 
2\sqrt{2\tau(1-\tau^2)} \quad \text{when} \quad s>c.\\ 
\end{array}
\right.\nonumber
\end{align}
Notice that the box in Eq. (\ref{0BMSb}) has only Bell discord when $\theta=\pi/4$ 
since the settings becomes optimal for Bell discord. 
Similarly, it has only Mermin discord when $\theta=\pi/8$ since the settings becomes optimal for Mermin discord.

\begin{figure}[h!]
\centering
\includegraphics[scale=0.85]{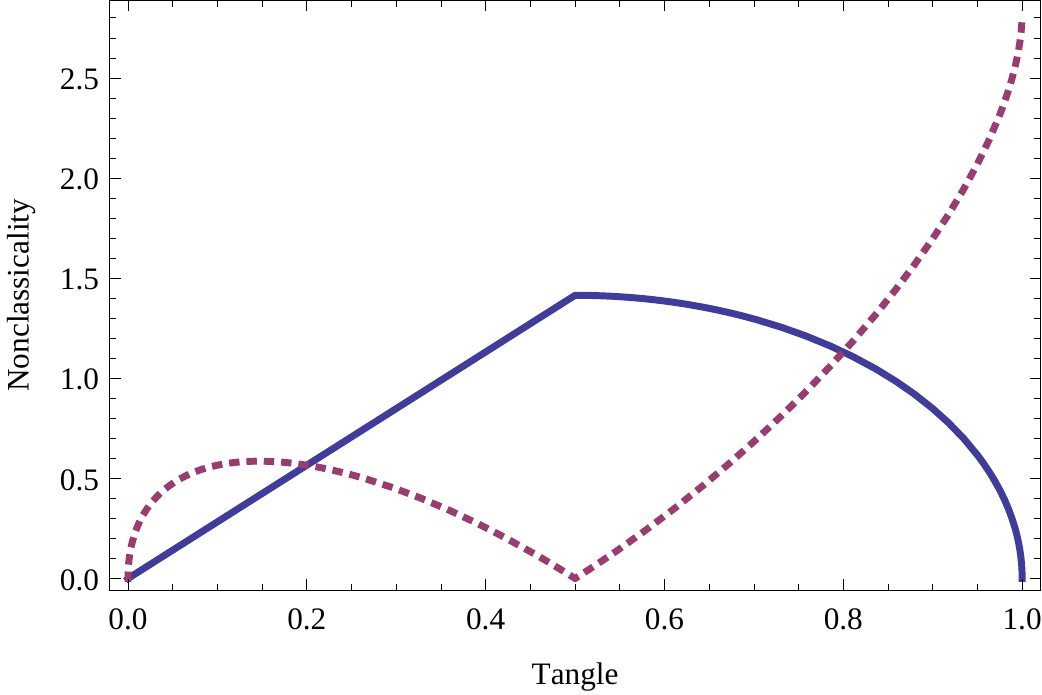} 
\caption[Bell and Mermin discord of the pure two-qubit states having 3-
decomposition]{Bell and Mermin discord of the box given in Eq. (\ref{0BMSb}) are shown by dotted and solid lines respectively.}\label{pt3}
\end{figure}
\begin{figure}
\centering
\includegraphics[scale=0.85]{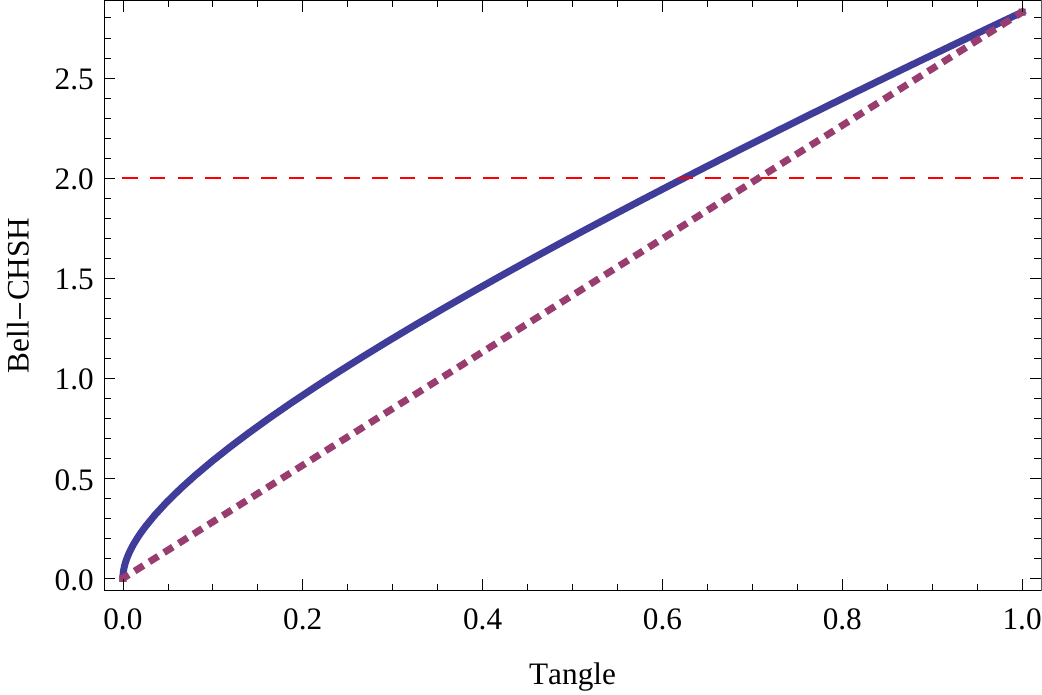} 
\caption[Two different quantum correlation arising from a given pure two-qubit state]{The violation of the Bell-CHSH inequality for the box in Eqs. (\ref{0BMSb}) and (\ref{0BMSb1}) are shown by dotted and solid lines respectively.}\label{pt4}
\end{figure}

(b) For the settings: ${\vec{a}_0}=c\hat{x}+s\hat{z}$,
${\vec{a}_1}=s\hat{x}-c\hat{z}$,
${\vec{b}_0}=\frac{1}{\sqrt{2}}(\hat{x}+\hat{z})$ and
${\vec{b}_1}=\frac{1}{\sqrt{2}}(-\hat{x}+\hat{z})$, the box arising from the pure entangled states has the following $3$-decomposition, 
\be
P=\left(1-\nu-\mu\right) P^{\mathcal{G}=0}_{\mathcal{Q}=0}+\nu\left(\frac{P^{000}_{PR}+P^{11\gamma}_{PR}}{2}\right)+\mu P^{000}_{PR}, \label{0BMSb1}
\ee
where the PR-box and Mermin box components, $\mu$ and $\nu$, 
are the same as for the box given in Eq. (\ref{0BMSb}).
The $\mathcal{G}=\mathcal{Q}=0$ box, $P^{\mathcal{G}=0}_{\mathcal{Q}=0}$, in Eq. (\ref{0BMSb1}) has nonmaximally mixed marginals, whereas the 
$\mathcal{G}=\mathcal{Q}=0$ box in Eq. (\ref{0BMSb})
has maximally mixed marginals. Thus, the boxes in Eqs. (\ref{0BMSb}) and (\ref{0BMSb1}) differ only by their marginals because of this reason
the violation of the Bell-CHSH inequality is larger for the latter box than the former box (see fig. \ref{pt4}).
\subsection{Mixed quantum discordant states} 
For the settings
${\vec{a}_0}=p\hat{x}+\sqrt{1-p^2}\hat{y}$,
${\vec{a}_1}=\sqrt{1-p^2}\hat{x}-p\hat{y}$,
${\vec{b}_0}=\frac{1}{\sqrt{2}}(\hat{x}+\hat{y})$ and
${\vec{b}_1}=\frac{1}{\sqrt{2}}(\hat{x}-\hat{y})$, the Werner states in Eq. (\ref{MQDs}) give rise to 
a $3$-decomposition as follows, 
\be
P=(1-\mu-\nu) P_N+ \nu\left(\frac{P^{000}_{PR}+P^{11\gamma}_{PR}}{2}\right)+\mu P^{000}_{PR}, \label{BMW}
\ee
where $\nu=\frac{p}{\sqrt{2}}|p+\sqrt{1-p^2}-\left|p-\sqrt{1-p^2}\right||$ and $\mu=\frac{p}{\sqrt{2}}\left|p-\sqrt{1-p^2}\right|$.
The box has nonzero Bell and Mermin discord as follows,
\ba
\mathcal{G}&=&2\sqrt{2}p\left|p-\sqrt{1-p^2}\right|\nonumber\\
&&>0 \quad \text{except when} \quad p\ne0, \frac{1}{\sqrt{2}} \nonumber
\ea
and 
\begin{align}
\mathcal{Q}&=\sqrt{2}p\left|p+\sqrt{1-p^2}-\left|p-\sqrt{1-p^2}\right|\right|\nonumber\\
&>0 \quad \text{except when} \quad p\ne0, 1 \nonumber\\
&=\left\{\begin{array}{lr}
2\sqrt{2}p^2 \quad \text{when} \quad 0\le p\le\frac{1}{2}\\ 
2\sqrt{2}\sqrt{p^2(1-p^2)} \quad \text{when} \quad \frac{1}{2}\le p\le1.\\ 
\end{array}
\right.\nonumber
\end{align}

\section{Tsirelson bound}
\begin{figure}[h!]\label{Tsirelson}
\centering
\includegraphics[scale=0.30]{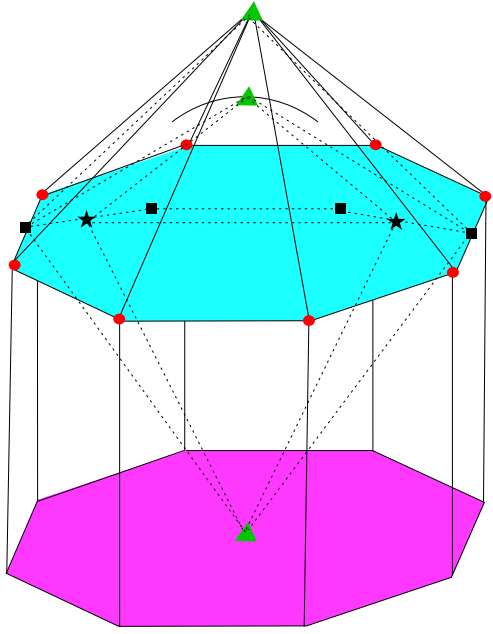} 
\caption[Quantum polytope]{The square and the star points on the facet of the local polytope represent the classically-correlated (CC) boxes  and the quantum Mermin boxes
respectively. 
The subpolytope, $\mathcal{N}_{mm}$, 
formed by the PR-boxes and the CC boxes is represented by the region connecting 
the triangle point on the top, the square points and the triangle point at the centre of the bottom (white noise). 
The subpolytope, $\mathcal{N}_{Tmm}$, whose vertices are the Tsirelson boxes and CC boxes is represented by the region connecting 
the triangle point on the curved surface,
the square points and white noise. The subpolytope, $\mathcal{N}_{Q}$, whose vertices are the Tsirelson boxes and Mermin boxes is represented
by the region connecting the triangle point on the curved surface,
the star points and white noise.
The region connecting the square points and white noise represents the subpolytope, $\mathcal{L}_{mm}$, formed by the CC boxes.
The subpolytope, $\mathcal{L}_{Q}$, formed by the Mermin boxes is represented by the region connecting the star points and white noise.}\label{finalfig}
\end{figure} 
Here we are interested in a restricted NS polytope, $\mathcal{N}_{Q}$, whose vertices are the $8$ Tsirelson boxes,
\be
P^{\alpha\beta\gamma}_{T}=\frac{1}{\sqrt{2}}P^{\alpha\beta\gamma}_{PR}+\left(1-\frac{1}{\sqrt{2}}\right)P_N, \label{tbox}
\ee
and the $8$ quantum Mermin-boxes, $P^{\alpha\beta\gamma}_M$, which are given in Eq. (\ref{Mmmm0}) to figure out the constraints of quantum correlations.
This polytope can be realized by quantum theory  which we illustrate by the correlations arising from the convex mixture of 
the $8$ maximally entangled states,
\be
\rho=\sum^1_{k=0}\sum^1_{j=0}p^j_k \ketbra{\psi^j_k}{\psi^j_k} +\sum^1_{k=0}\sum^1_{j=0} q^j_k\ketbra{\phi^j_k}{\phi^j_k}, \label{ch2bds}
\ee
where $\ket{\psi^j_k}=\frac{1}{\sqrt{2}}(\ket{00}+(-1)^{j} i^{k}\ket{11})$ and $\ket{\phi^j_k}=\frac{1}{\sqrt{2}}(\ket{01}+(-1)^{j} i^{k}\ket{10})$.
For the measurement settings, $\mathcal{M}_{T}$:
\be
{\vec{a}_0}=\hat{x}, \quad {\vec{a}_1}=\hat{y},\quad
{\vec{b}_0}=\frac{1}{\sqrt{2}}(\hat{x}-\hat{y}) \quad \text{and} \quad {\vec{b}_1}=\frac{1}{\sqrt{2}}(\hat{x}+\hat{y}), \label{M_N}
\ee 
the correlation arising from the states in Eq. (\ref{ch2bds}) can be decomposed into $8$ Tsirelson boxes,
\ba
P(\rho,\mathcal{M}_{T})\!&=&\!p^0_0P^{000}_{T}+p^1_0P^{001}_{T}+p^0_1P^{100}_{T}+p^1_1P^{101}_{T}\nonumber\\
&&+q^0_0P^{011}_{T}+q^1_0P^{010}_{T}+q^0_1P^{111}_{T}+q^1_1P^{110}_{T}.\label{8Tb}
\ea
For the measurement settings, $\mathcal{M}_{M}$:   
\be
{\vec{a}_0}=\hat{x}, \quad {\vec{a}_1}=\hat{y}, \quad 
{\vec{b}_0}=-\hat{y} \quad \text{and} \quad {\vec{b}_1}=\hat{x},\label{M_C}
\ee 
the correlation arising from the states in Eq. (\ref{ch2bds}) can be decomposed into $8$ Mermin boxes,
\ba
P(\rho,\mathcal{M}_{M})\!&=&\!p^0_0P^{000}_{M}+p^1_0P^{001}_{M}+p^0_1P^{100}_{M}+p^1_1P^{101}_{M}\nonumber\\
&&+q^0_0P^{011}_{M}+q^1_0P^{010}_{M}+q^0_1P^{111}_{M}+q^1_1P^{110}_{M}.\label{8Mb}
\ea
Since the set of quantum correlations is convex \cite{Pitowski,WernerWolfmulti},
any convex mixture of the two correlations given in Eqs. (\ref{8Tb}) and (\ref{8Mb}),
\be
P=\lambda P(\rho,\mathcal{M}_{T})+(1-\lambda)P(\rho,\mathcal{M}_{M}),
\ee
is also quantum realizable which implies that the polytope $\mathcal{N}_{Q}$
is quantum. 

We obtain the following relationship between the two quantum correlations given in Eqs. (\ref{8Tb}) and (\ref{8Mb}). 
\begin{observation}
For any state given in Eq. (\ref{ch2bds}), Bell discord of the correlation given in Eq. (\ref{8Tb}) is related to the Mermin discord of the correlation 
given in Eq. (\ref{8Mb}) as follows,
\be
\mathcal{G}(P(\rho, \mathcal{M}_T))=\sqrt{2}\mathcal{Q}(P(\rho, \mathcal{M}_{M})). \label{rBMD}
\ee
\end{observation}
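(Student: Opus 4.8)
The plan is to reduce Eq.~(\ref{rBMD}) to a short algebraic identity between the four correlators of the two boxes, using that both $\mathcal{G}$ and $\mathcal{Q}$ depend on a box \emph{only} through the four correlators $\braket{A_iB_j}=\sum_{mn}(-1)^{m\oplus n}P(a_m,b_n|A_i,B_j)$: the Bell functions $\mathcal{B}_{\alpha\beta}$ in Eq.~(\ref{gi}) and the Mermin functions $\mathcal{M}_{\alpha\beta}$ entering Definition~\ref{defMD} involve no marginals. The first observation is that the settings $\mathcal{M}_T$ of Eq.~(\ref{M_N}) and $\mathcal{M}_M$ of Eq.~(\ref{M_C}) share Alice's directions $\vec a_0=\hat x,\vec a_1=\hat y$, while Bob's directions satisfy $\vec b_1^{M}=\hat x=\tfrac{1}{\sqrt2}(\vec b_0^{T}+\vec b_1^{T})$ and $\vec b_0^{M}=-\hat y=\tfrac{1}{\sqrt2}(\vec b_0^{T}-\vec b_1^{T})$.

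Since for spin projective measurements $\braket{A_iB_j}=\hat a_i\cdot T_\rho\,\hat b_j$ is bilinear in the directions, with $T_\rho$ the $3\times3$ correlation tensor of $\rho$, this relabelling of Bob's directions carries over verbatim to the correlators: writing $(w,x,y,z)$ for $(\braket{A_0B_0},\braket{A_0B_1},\braket{A_1B_0},\braket{A_1B_1})$ of $P(\rho,\mathcal{M}_T)$, the four correlators of $P(\rho,\mathcal{M}_M)$ are $\big(\tfrac{1}{\sqrt2}(w-x),\,\tfrac{1}{\sqrt2}(w+x),\,\tfrac{1}{\sqrt2}(y-z),\,\tfrac{1}{\sqrt2}(y+z)\big)$. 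This uses nothing about the particular states in Eq.~(\ref{ch2bds}), so the identity~(\ref{rBMD}) in fact holds for every two-qubit state, the family in Eq.~(\ref{ch2bds}) being relevant only because it realizes the quantum polytope $\mathcal{N}_Q$.

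I would then plug these into the four Mermin functions of $P(\rho,\mathcal{M}_M)$ and identify each with $\tfrac{1}{\sqrt2}$ times a Bell function of $P(\rho,\mathcal{M}_T)$; reading $\mathcal{M}_{\alpha\beta}=|\mathcal{M}_{\alpha\beta\gamma}|$ off Eq.~(\ref{bimi}), using that changing $\gamma$ only flips an overall sign, one obtains the dictionary $\mathcal{M}_{00}\leftrightarrow\tfrac{1}{\sqrt2}\mathcal{B}_{11}$, $\mathcal{M}_{01}\leftrightarrow\tfrac{1}{\sqrt2}\mathcal{B}_{10}$, $\mathcal{M}_{10}\leftrightarrow\tfrac{1}{\sqrt2}\mathcal{B}_{01}$, $\mathcal{M}_{11}\leftrightarrow\tfrac{1}{\sqrt2}\mathcal{B}_{00}$. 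Substituting into $\mathcal{Q}_1,\mathcal{Q}_2,\mathcal{Q}_3$ of Definition~\ref{defMD} and comparing with $\mathcal{G}_1,\mathcal{G}_2,\mathcal{G}_3$ of Eq.~(\ref{gi}) shows that $\{\mathcal{Q}_j(P(\rho,\mathcal{M}_M))\}$ is exactly $\tfrac{1}{\sqrt2}\{\mathcal{G}_i(P(\rho,\mathcal{M}_T))\}$ as multisets; since $\min$ is insensitive to the labelling, $\mathcal{Q}(P(\rho,\mathcal{M}_M))=\tfrac{1}{\sqrt2}\,\mathcal{G}(P(\rho,\mathcal{M}_T))$, which is Eq.~(\ref{rBMD}).

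The step I expect to be fiddly is the last one: because of the nested absolute values, establishing equality of the two multisets $\{\mathcal{Q}_j\}$ and $\tfrac{1}{\sqrt2}\{\mathcal{G}_i\}$, rather than merely a one-sided inequality, requires carefully tracking the signs $(-1)^{\alpha\oplus\beta}$ through the moduli and confirming the dictionary above. A quick consistency check: for $\rho=\ketbra{\psi^+}{\psi^+}$ the box $P(\rho,\mathcal{M}_T)$ is the canonical Tsirelson box, with $\mathcal{G}=2\sqrt2$, and $P(\rho,\mathcal{M}_M)$ is a Mermin box, with $\mathcal{Q}=2$, which fixes the factor $\sqrt2$. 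An equivalent, correlator-free route is to combine the explicit decompositions in Eqs.~(\ref{8Tb}) and~(\ref{8Mb}) with $P^{\alpha\beta\gamma}_T=\tfrac{1}{\sqrt2}P^{\alpha\beta\gamma}_{PR}+(1-\tfrac{1}{\sqrt2})P_N$ and $P^{\alpha\beta\gamma}_M=\tfrac{1}{2}(P^{\alpha\beta\gamma}_{PR}+P^{\overline{\alpha}\,\overline{\beta}\gamma}_{PR})$, and then to invoke Observation~\ref{Girre} together with its $\mathcal{Q}$-analogue from the proof of Theorem~\ref{thm2} to read both $\mathcal{G}$ and $\mathcal{Q}$ off as the same function of the common weights $\{p^j_k,q^j_k\}$, up to the overall factor $(4/\sqrt2)/2=\sqrt2$.
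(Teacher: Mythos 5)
Your proposal is correct and follows essentially the same route as the paper: the paper likewise uses linearity of quantum expectation values under the linear relation between Bob's settings in Eqs.~(\ref{M_N}) and~(\ref{M_C}) to show $\mathcal{B}_{\alpha\beta}(P(\rho,\mathcal{M}_T))=\sqrt{2}\,\mathcal{M}_{\alpha\beta}(P(\rho,\mathcal{M}_M))$ (its Eq.~(\ref{B-M})), from which the identity for $\mathcal{G}$ and $\mathcal{Q}$ follows since both are the same minimization over the respective function sets. Your added remark that the argument never uses the specific Bell-diagonal form of the states in Eq.~(\ref{ch2bds}) is accurate and a mild strengthening, but does not change the method.
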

\begin{proof}
The Bell functions for the settings given in Eq. (\ref{M_N}) reduce to the Mermin functions for the settings given in Eq. (\ref{M_C}) as follows: 
\ba
\mathcal{B}_{\alpha\beta}&\!=\!&\frac{1}{\sqrt{2}}|\braket{\sigma_x\otimes (\sigma_x+\sigma_y)}+(-1)^\beta\braket{\sigma_x\otimes (\sigma_x-\sigma_y)}
+(-1)^\alpha\braket{\sigma_y\otimes (\sigma_x+\sigma_y)}\nonumber\\
&&+(-1)^{\alpha\oplus\beta\oplus1}\braket{\sigma_y\otimes (\sigma_x-\sigma_y)}|\nonumber\\
&=&\left\{
\begin{array}{lr}
(\alpha\oplus\beta\oplus1)\sqrt{2}|(-1)^{\beta}\braket{\sigma_x\otimes \sigma_x}\!+\!(-1)^{\alpha}\braket{\sigma_y\otimes \sigma_y}| 
\\+(\alpha\oplus\beta)\sqrt{2}|(-1)^{\gamma}\braket{\sigma_x\otimes \sigma_y}+(-1)^{\alpha\oplus\beta\oplus\gamma\oplus1}\braket{\sigma_y\otimes \sigma_x}| \\
=\sqrt{2}\mathcal{M}_{\alpha\beta}     \quad \text{for} \quad \alpha\beta=00, 01,\\
(\alpha\oplus\beta)\sqrt{2}|(-1)^{\beta}\braket{\sigma_x\otimes \sigma_x}\!+\!(-1)^{\alpha}\braket{\sigma_y\otimes \sigma_y}|\\
+(\alpha\oplus\beta\oplus1)\sqrt{2}|(-1)^{\gamma
}\braket{\sigma_x\otimes \sigma_y}+(-1)^{\alpha\oplus\beta\oplus\gamma\oplus1}\braket{\sigma_y\otimes \sigma_x}|  \\
=\sqrt{2}\mathcal{M}_{\alpha\beta}    \quad\! \!\text{for} \quad \alpha\beta=10,11
\end{array}
\right.
 \label{B-M}
\ea
due to the linearity of quantum theory, $\braket{A+B}=\braket{A}+\braket{B}$.
The relationship between the Bell and Mermin functions given in Eq. (\ref{B-M}) implies that 
$\mathcal{G}(\rho, \mathcal{M}_T)=\sqrt{2}\mathcal{Q}(\rho, \mathcal{M}_{M})$.
\end{proof} 
The relationship between Bell and Mermin discord given in Eq. (\ref{rBMD}) implies that the Mermin boxes  
limit nonlocality of the most nonlocal quantum boxes
to the Tsirelson bound since $\mathcal{G}(\rho, \mathcal{M}_T)\le 2\sqrt{2}$ follows from the fact that $\mathcal{Q}(\rho, \mathcal{M}_M)\le2$.

We now discuss the constraints of the quantum region, $\mathcal{N}_{Q}$, inside the full NS polytope. 
Notice that correlations in the region $\mathcal{N}_{Q}$ have maximal local randomness i.e., $\braket{A}_i=\braket{B}_j=0$. If the full NS polytope is constrained by
maximal local randomness, it gives rise to a subpolytope, $\mathcal{N}_{mm}$, whose vertices are the $8$ PR-boxes and $8$ classically-correlated (CC) boxes,
\be
P_{CC}^{\alpha\beta\gamma}(a_m,b_n|A_i,B_j)=\left\{
\begin{array}{lr}
\frac{1}{2}, & m\oplus n=\alpha i \oplus \beta j \oplus \gamma \\ 
0 , & \text{otherwise}.\\
\end{array} \label{CCE}
\right.
\ee  
The polytope, $\mathcal{N}_{Tmm}$, whose vertices are the $8$ Tsirelson boxes and the $8$ CC boxes is obtained by constraining $\mathcal{N}_{mm}$ by the 
Tsirelson inequalities, $\mathcal{B}_{\alpha\beta\gamma}\le2\sqrt{2}$ \cite{tsi1}. The polytope $\mathcal{N}_{Tmm}$ is quantum since its vertices are quantum realizable \cite{Pitowski}. 
Notice that the polytope, $\mathcal{N}_Q$, is contained inside $\mathcal{N}_{Tmm}$ (see fig. \ref{finalfig}).  
Since the Mermin boxes with maximally mixed marginals limits nonlocality of quantum correlations, finding the physical constraints of
$\mathcal{N}_Q$ would help us to single out quantum theory.
The set of local boxes which have maximal local randomness forms a polytope, $\mathcal{L}_{mm}$,
whose vertices are the CC boxes.
Inside this polytope, there exists a polytope, $\mathcal{L}_{Q}$, whose vertices are the $8$ maximally mixed marginals Mermin boxes. 

\section{Conclusions}\label{conc}
We have introduced the measures, Bell discord ($\mathcal{G}$) and Mermin discord ($\mathcal{Q}$), to characterize quantum correlations 
arising from two-qubit states within the framework of GNST. 
We find that when local boxes have nonzero Bell/Mermin discord, they can arise from incompatible measurements on two-qubit states 
which have entanglement in the case of pure states and quantum correlation going beyond entanglement in the case of mixed states. 
Nonzero Bell discord of local boxes which have nonclassicality originates from incompatible measurements that give rise 
to Bell nonlocality. We have observed that there are local boxes which exhibits EPR-steerability. We have introduced Mermin boxes 
which are maximally local and have maximal EPR-steerability. Nonzero Mermin discord of non EPR-steerable boxes which have nonclassicality
originates from incompatible measurements that give rise to EPR-steering.
We have introduced  a $3$-decomposition which allows us to isolate the origin of nonclassicality into three disjoint sources: a PR-box, a Mermin box, 
and a classical box.

We find that all quantum-correlated states which are neither classical-quantum states nor quantum-classical states
can give rise to a $3$-decomposition, i.e., nonzero Bell discord or/and Mermin discord for suitable incompatible measurements.
We find that when pure entangled states and Werner states give rise optimal Bell or Mermin discord, 
quantum correlation quantified by quantum discord in the Werner states
plays a role analogous to entanglement in the pure states.
We have shown that Bell and Mermin discord in general serve as the 
witnesses of nonclassicality of local boxes at the tomography level \cite{Koon}, i.e., nonzero Bell/Mermin discord implies the presence of both nonzero quantum discord 
and incompatible measurements when the dimension of the measured systems is restricted to be $2 \times 2$ and  
measurements performed are restricted to be projective. 
However, we have considered only those boxes with two binary inputs and two binary outputs.
Similarly, it would be interesting to study quantum correlations arising from $d_A \times d_B$ states  
by using NS polytope in which the black boxes have more inputs and more outputs \cite{Barrett,minput}.
In Ref. \cite{JebaTri}, I have generalized Bell and Mermin discord to the multipartite scenario using Svetlichny inequalities 
and Mermin inequalities which detect genuine nonlocality and GHZ paradox \cite{BNL}.
\section{Appendix}
\subsection{Mermin boxes}
Bell polytope admits two types of Mermin boxes which can be distinguished by their marginals. We have found that there are $8$ Mermin boxes which have maximally mixed
marginals. The following $32$ maximally local boxes:
\ba
P_M^{\alpha\beta\gamma\epsilon}=\frac{1}{2}(\delta^i_{m\oplus
  i\oplus\alpha}\delta^j_{n\oplus   j\oplus\beta}   +\delta^i_{m\oplus
  \gamma}\delta^j_{n\oplus\epsilon}),\nonumber\\
   P_{M'}^{\alpha\beta\gamma\epsilon}=\frac{1}{2}(\delta^i_{m\oplus
  i\oplus\alpha}\delta^j_{n\oplus\beta}+\delta^i_{m\oplus
  \gamma}\delta^j_{n\oplus j \oplus \epsilon}), \label{nMmmm} 
\ea 
which are equal mixture of two deterministic boxes, can be obtained from the Mermin box in Eq. (\ref{eq:nmerminbox}) by LRO. Thus, there
are $32$ Mermin boxes with nonmaximally mixed marginals. 
As all the Mermin boxes are maximally-local, they lie on the facet of the Bell polytope (see fig. \ref{NS3dfig}).
\subsection{Proof of theorem \ref{thm2}}\label{pr2}
Since all Mermin boxes have $\mathcal{G}=0$, we obtain the following observation.
\begin{observation}
Any local box with $\mathcal{G}=0$ can be written as a convex mixture of the maximally local boxes with $\mathcal{Q}=2$ and the deterministic boxes,
\be
P^{\mathcal{G}=0}_L=\sum^7_{k=0} p_k P^{k}_{\mathcal{Q}=2}+\sum^{15}_{l=0}q_lP^l_{D}. \label{MNS}
\ee
Here $P^{k}_{\mathcal{Q}=2}$ is one of the maximally local box given in Eq. (\ref{Q=2box}).
\end{observation}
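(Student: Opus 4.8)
The plan is to split the claim into two steps: first, that a box with $\mathcal{G}=0$ is automatically local and hence, by Fine's characterization, already a convex mixture of deterministic boxes; and second, that the maximally local $\mathcal{Q}=2$ boxes are themselves uniform mixtures of deterministic boxes, so that the Fine decomposition can be regrouped to display them explicitly --- which is the form (\ref{MNS}) that the subsequent extraction of the maximal $\mathcal{Q}=2$ component in the proof of Theorem~\ref{thm2} needs. For the first step, note that by Theorem~\ref{thm1} every NS box writes as $P=\mu P^{\alpha\beta\gamma}_{PR}+(1-\mu)P^{\mathcal{G}=0}_L$ with $\mathcal{G}(P)=4\mu$; so $\mathcal{G}(P)=0$ forces $\mu=0$ and $P\in\mathcal{L}$, whence by Eq.~(\ref{LD}) we may write $P^{\mathcal{G}=0}_L=\sum_{l=0}^{15}\tilde q_l P^l_D$ for a probability vector $(\tilde q_l)$. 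This is already of the form (\ref{MNS}) with all $p_k=0$; the point of the observation is to make the $\mathcal{Q}=2$ boxes available as coarser building blocks, since it is those --- not the bare deterministic boxes --- that Theorem~\ref{thm2} isolates.

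For the second step I would invoke Eqs.~(\ref{Mmmm0}) and (\ref{nMmmm}) together with Observation~\ref{mlq2}: each of the eight maximally-mixed-marginal Mermin boxes is the uniform mixture of four deterministic boxes, each of the thirty-two nonmaximally-mixed-marginal Mermin boxes is the uniform mixture of two deterministic boxes, and every maximally local box with $\mathcal{Q}=2$ is in turn a convex mixture of these Mermin boxes. So, starting from the Fine decomposition of $P^{\mathcal{G}=0}_L$, whenever a sub-collection of the deterministic boxes carrying equal weight assembles into such a $\mathcal{Q}=2$ box, that sub-mixture can be replaced by a single term $p_k P^k_{\mathcal{Q}=2}$, one box per type $k=\alpha\beta\gamma$; collecting all such replacements and leaving whatever deterministic weight is not absorbed yields exactly a representation of the form~(\ref{MNS}).

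I expect the real difficulty to lie not here but in how (\ref{MNS}) is then used. Once $P^{\mathcal{G}=0}_L$ is in that form, one would, mimicking Eqs.~(\ref{step1})--(\ref{step2}) of the proof of Theorem~\ref{thm1}, reduce the $\mathcal{Q}=2$ part to a single irreducible box $\zeta P^{\alpha\beta\gamma}_{\mathcal{Q}=2}$ plus a residual local box --- using that an unequal mixture of two Mermin boxes sharing the same values of $\braket{A_iB_j}$ reduces to an irreducible $\mathcal{Q}=2$ box plus lower-weight pieces, the analogue of Observation~\ref{umPR} --- and then one must check that the residue $P^{\mathcal{G}=0}_{\mathcal{Q}=0}$ genuinely has $\mathcal{Q}=0$ \emph{and} $\mathcal{G}=0$. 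The $\mathcal{Q}=0$ part follows by minimality of $\zeta$ over all representations (\ref{MNS}) together with linearity of $\mathcal{Q}$ along the decomposition and the fact that a $\mathcal{Q}=2$ box saturates only one Mermin function; the $\mathcal{G}=0$ part follows because a $\mathcal{Q}=2$ maximally local box lies on a facet of $\mathcal{L}$ and has $\mathcal{G}=0$, so subtracting it from a $\mathcal{G}=0$ box cannot create an irreducible PR-box component --- the analogue of Observation~\ref{nllg0}, using linearity of $\mathcal{G}$. That verification, not the combinatorial regrouping, is where I would concentrate the effort.
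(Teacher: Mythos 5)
Your argument is correct and follows essentially the same route as the paper, which justifies this observation only with the one-line remark that all Mermin boxes have $\mathcal{G}=0$ (and hence, being uniform mixtures of deterministic boxes, lie inside the local polytope, so adding the $\mathcal{Q}=2$ boxes to the generating set costs nothing); your two steps --- Fine decomposition into deterministic boxes, then regrouping equal-weight sub-collections into $\mathcal{Q}=2$ boxes via Eqs.~(\ref{Mmmm0}), (\ref{nMmmm}) and Observation~\ref{mlq2} --- are a fleshed-out version of exactly that. Your closing remarks about extracting the irreducible $\mathcal{Q}=2$ component and verifying $\mathcal{G}=\mathcal{Q}=0$ for the residue correctly locate where the real work sits, but that belongs to the proof of Theorem~\ref{thm2} rather than to this observation.
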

The following observations are useful to show the theorem \ref{thm2}.
\begin{observation}\label{imbc}
The unequal mixture of any two Mermin boxes which differ by $\braket{A_iB_j}$: $pP^1_M+qP^2_M$; $p>q$, 
can be written as a convex mixture of an irreducible Mermin box and a  box with $\mathcal{Q}=0$.  
\end{observation}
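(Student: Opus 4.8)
The plan is to follow exactly the strategy used for the PR-box case in Observation~\ref{umPR}: peel off the excess weight of the heavier box so as to leave behind a \emph{uniform} mixture of the two Mermin boxes, and then show that this uniform mixture has $\mathcal{Q}=0$. Concretely, I would write
\be
pP^1_M+qP^2_M=(p-q)P^1_M+2q\left(\frac{P^1_M+P^2_M}{2}\right),
\ee
a positive combination with the same total weight $p+q$. Two claims then finish the argument: (i) the box $\frac{1}{2}(P^1_M+P^2_M)$ has $\mathcal{Q}=0$; and (ii) in this decomposition the component $P^1_M$, carrying weight $p-q$, is irreducible, while $P^2_M$ has been eliminated and is therefore reducible.

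For claim~(i) I would use that $\mathcal{Q}$ is a function of the correlators $\braket{A_iB_j}$ only, through the four Mermin functions $\mathcal{M}_{\alpha\beta}$, and that these split into two groups of two: one group built from the pair $(\braket{A_0B_0},\braket{A_1B_1})$, the other from $(\braket{A_0B_1},\braket{A_1B_0})$. Since for any Mermin box exactly one $\mathcal{M}_{\alpha\beta}$ equals $2$ and the other three vanish, the correlators of a Mermin box are, up to interchanging the two groups, of the form $(\pm1,\mp1,0,0)$ or $(\pm1,\pm1,0,0)$. I would then run the short case analysis for the pair $(P^1_M,P^2_M)$ — both attaining their maximum in the same group, or in different groups — and in each case average the correlators. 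Because $P^1_M$ and $P^2_M$ differ by $\braket{A_iB_j}$, one finds that either all four correlators of $\frac{1}{2}(P^1_M+P^2_M)$ vanish, whence $\mathcal{Q}=0$ trivially, or exactly two of its Mermin functions are nonzero and equal (both equal to $1$) while the other two vanish. In the latter situation one invokes the elementary fact that every two-element subset of $\{00,01,10,11\}$ is one block of exactly one of the three two-by-two partitions that define $\mathcal{Q}_1,\mathcal{Q}_2,\mathcal{Q}_3$; for that $\mathcal{Q}_j$ one gets $\Big||t-t|-|0-0|\Big|=0$, so $\mathcal{Q}=\min_j\mathcal{Q}_j=0$.

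Claim~(ii) follows from linearity of $\mathcal{Q}$ with respect to this decomposition (already used after Theorem~\ref{thm2}): $\mathcal{Q}(pP^1_M+qP^2_M)=(p-q)\mathcal{Q}(P^1_M)+2q\,\mathcal{Q}\!\left(\frac{P^1_M+P^2_M}{2}\right)=2(p-q)$, using $\mathcal{Q}(P^1_M)=2$ and claim~(i). Since $\mathcal{Q}$ vanishes on every $\mathcal{Q}=0$ box, no decomposition of $pP^1_M+qP^2_M$ into an irreducible Mermin box plus a $\mathcal{Q}=0$ box can have Mermin-box weight exceeding $p-q$; hence $P^1_M$ appears irreducibly with weight $p-q$, while $P^2_M$, which carried the strictly smaller weight $q$, has disappeared and is not irreducible — exactly the phenomenon noted for PR-boxes in Observation~\ref{umPR}. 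The only step that is not a one-line computation is claim~(i): unlike the PR-box case, where ``the uniform mixture of two PR-boxes is Bell-local'' is immediate, here $\mathcal{Q}$ is a minimum of three terms, so the real content is to see that averaging two Mermin boxes that differ in their correlators always yields a ``balanced'' pattern of Mermin functions (two equal nonzero values, or none), after which the partition combinatorics forces one $\mathcal{Q}_j$ to be zero.
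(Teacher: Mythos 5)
Your proposal is correct and takes essentially the same route as the paper: the identity $pP^1_M+qP^2_M=(p-q)P^1_M+2q\cdot\frac{1}{2}\left(P^1_M+P^2_M\right)$ together with the claim that the uniform mixture of two Mermin boxes differing in $\braket{A_iB_j}$ has $\mathcal{Q}=0$. The paper simply asserts that last claim in one line, whereas you supply the case analysis on the Mermin functions that justifies it; this is a welcome elaboration, not a different argument.
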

\begin{proof}
$pP^1_M+qP^2_M=(p-q)P^1_M+2qP_{\mathcal{Q}=0}$. Here $P_{\mathcal{Q}=0}=\frac{1}{2}(P^1_M+P^2_M)$
is a box with $\mathcal{Q}=0$ since it is a uniform mixture of the two Mermin boxes which differ by $\braket{A_iB_j}$.  
\end{proof}
\begin{observation}\label{Qirre}
$\mathcal{Q}$ calculates the component of irreducible maximally local box with $\mathcal{Q}=2$ 
in the mixture of the $8$ maximally local boxes: $\sum^7_{k=0} p_k P^k_{\mathcal{Q}=2}$ given in Eq. (\ref{MNS}). 
\end{observation}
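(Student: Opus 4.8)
The plan is to mirror the proof of Observation~\ref{Girre}, with the eight Mermin boxes $P^{\alpha\beta\gamma}_M$ playing the role of the eight PR-boxes and the eight Mermin functions $\mathcal{M}_{\alpha\beta}$ playing the role of the eight Bell functions. The starting point is the identity $P^{\alpha\beta\gamma}_M=\tfrac12\bigl(P^{\alpha\beta\gamma}_{PR}+P^{(\alpha\oplus1)(\beta\oplus1)\gamma}_{PR}\bigr)$ noted after Eq.~(\ref{Mmmm0}); since $P^{\alpha\beta\gamma}_{PR}$ and $P^{\alpha\beta(\gamma\oplus1)}_{PR}$ are anti-PR-boxes, this shows that $P^{\alpha\beta\gamma}_M$ and $P^{\alpha\beta(\gamma\oplus1)}_M$ are \emph{anti-Mermin boxes}, i.e.\ their uniform mixture is white noise. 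Hence the eight Mermin boxes fall into four anti-pairs, which under the labelling $k=\alpha\beta\gamma$ are $(0,1),(2,3),(4,5),(6,7)$. Moreover, by Observation~\ref{mlq2} each $P^k_{\mathcal{Q}=2}$ appearing in Eq.~(\ref{MNS}) has the same correlators $\braket{A_iB_j}$ as the corresponding maximally-mixed-marginal box $P^k_M$, and $\mathcal{Q}$ depends only on the correlators, so it suffices to establish the claim for mixtures $\sum_{k=0}^{7}p_k P^k_M$.

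The computation then proceeds in four steps. (i) Evaluate $\mathcal{Q}_1$ on $\sum_k p_k P^k_M$: using that each $P^k_M$ attains the value $2$ on exactly one Mermin function and vanishes on the other three (the observation preceding Definition~\ref{defMD}), and that the two partners of an anti-pair attain it on the same function with opposite sign, one finds $\mathcal{Q}_1\bigl(\sum_k p_k P^k_M\bigr)=2\bigl|\,\bigl||p_0-p_1|-|p_2-p_3|\bigr|-\bigl||p_4-p_5|-|p_6-p_7|\bigr|\,\bigr|$, and $\mathcal{Q}_2,\mathcal{Q}_3$ are the two analogous expressions coming from the other two ways of grouping the four anti-pairs into two blocks. (ii) By Observation~\ref{imbc}, $|p_k-p_{k+1}|$ is exactly the irreducible Mermin-box component of the sub-mixture $p_k P^k_M+p_{k+1}P^{k+1}_M$, so replacing each anti-pair by this reduced form rewrites $\sum_k p_k P^k_M$ as a mixture of at most four ``reduced'' Mermin boxes (one per anti-pair, with weights $|p_0-p_1|,|p_2-p_3|,|p_4-p_5|,|p_6-p_7|$) plus white noise, without altering any correlator and hence without altering $\mathcal{Q}$. (iii) Any two of these reduced Mermin boxes differ in $\braket{A_iB_j}$, so again by Observation~\ref{imbc} their uniform mixture has $\mathcal{Q}=0$; applying this reduction to a pairing of the four reduced boxes leaves a single irreducible Mermin box whose weight is precisely the inner expression of the corresponding $\mathcal{Q}_j$ divided by $2$. (iv) Minimising over the three pairings selects the decomposition with the largest cancellation, so $\tfrac12\mathcal{Q}=\tfrac12\min_j\mathcal{Q}_j$ equals the maximal irreducible Mermin-box (equivalently, maximally local $\mathcal{Q}=2$-box) component of $\sum_k p_k P^k_{\mathcal{Q}=2}$; in particular this component is nonzero iff $\mathcal{Q}>0$, exactly paralleling the conclusion of Observation~\ref{Girre}.

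I expect the main obstacle to be the bookkeeping in step (i): pinning down which of the eight $P^k_{\mathcal{Q}=2}$ (equivalently $P^k_M$) attains $2$ on which Mermin function $\mathcal{M}_{\alpha\beta}$ --- the $\gamma$ bit flips the signs of the correlators and so does not change which function is attained, which is precisely what makes anti-pair partners line up on a common function --- and verifying that the three functionals $\mathcal{Q}_1,\mathcal{Q}_2,\mathcal{Q}_3$ of Definition~\ref{defMD} are in bijection with the three partitions of the four anti-pairs into two blocks of two. Once this dictionary between indices, anti-pairs, and Mermin functions is fixed, the remaining reductions follow immediately from Observations~\ref{imbc} and \ref{mlq2}, just as in the PR-box/Bell-discord case.
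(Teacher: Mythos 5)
Your proposal is correct and follows essentially the same route as the paper: identify the four anti-pairs $(P^k_{\mathcal{Q}=2},P^{k+1}_{\mathcal{Q}=2})$ for $k=0,2,4,6$, evaluate $\mathcal{Q}_1$ on the mixture to get $2\bigl|\,\bigl||p_0-p_1|-|p_2-p_3|\bigr|-\bigl||p_4-p_5|-|p_6-p_7|\bigr|\,\bigr|$, invoke Observation~\ref{imbc} to read off $|p_k-p_{k+1}|$ as the irreducible component of each anti-pair, and minimise over the three pairings. Your preliminary reduction via Observation~\ref{mlq2} to the maximally-mixed-marginal Mermin boxes (where anti-pairs average to white noise rather than merely to a zero-expectation box) is a harmless refinement of the paper's argument, not a different proof.
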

\begin{proof}
Notice that the uniform mixture of $P^{k}_{\mathcal{Q}=2}$ and $P^{k+1}_{\mathcal{Q}=2}$ with $k=0,2,4,6$ 
gives a zero-expectation box, which has $\braket{A_iB_j}=0$ $\forall i,j$. 
We call $P^{k+1}_{\mathcal{Q}=2}$ anti-Mermin box.
The evaluation of $\mathcal{Q}_1$ for the mixture of the $8$ maximally local boxes gives,  
\begin{align}
\mathcal{Q}_1\left(\sum^7_{k=0} p_k P^k_{\mathcal{Q}=2}\right)&=2|\Big||p_0-p_1|-|p_2-p_3|\Big|\nonumber\\
&-\Big||p_4-p_5|-|p_6-p_7|\Big||.
\end{align}
The observation \ref{imbc} implies that the terms $|p_k-p_{k+1}|$ in this equation give the irreducible maximally local box component  
in the mixture of the two maximally local boxes boxes whose equal mixture gives a zero-expectation box. Thus, 
$\left(\min_i\mathcal{Q}_i\left(\sum^7_{k=0} p_k P^k_{\mathcal{Q}=2}\right)\right)/2$ gives the irreducible component of the box with $\mathcal{Q}=2$ in the mixture of 
the $4$ reduced components of the $\mathcal{Q}=2$ boxes that does not contain any anti-Mermin-box.  
\end{proof}

Let us now prove the theorem \ref{thm2} which goes similar to the proof of the theorem \ref{thm1}.
Any local box with $\mathcal{G}=0$ given by the decomposition in Eq. (\ref{MNS}) can be rewritten as a convex mixture of
the $8$ maximally local boxes which have $\mathcal{Q}=2$ and a local box which does not have the components of the $\mathcal{Q}=2$ boxes,
\be
P^{\mathcal{G}=0}_L=\sum^7_{k=0} q_k P^{k}_{\mathcal{Q}=2}+\left(1-\sum^7_{k=0} q_k\right)P_L, \label{step1m}
\ee
where $P_L\ne \sum_k r_k P^{k}_{\mathcal{Q}=2}+\sum_l s_l P^l_D$, i.e., $P_L$ cannot have nonzero $r_k$ overall possible decompositions. 
It follows from observations \ref{imbc} and  \ref{Qirre} that the mixture of the $8$ maximally local boxes in this decomposition
can be written as the mixture of an irreducible $\mathcal{Q}=2$ box, and the $7$ boxes which are the uniform 
mixture of two $\mathcal{Q}=2$ boxes:
\be
\sum^7_{k=0} q_k P^{k}_{\mathcal{Q}=2}=\zeta P^{\alpha\beta\gamma}_{\mathcal{Q}=2}+ \sum^4_{i=1} t_i P^i_{zc}+\sum^3_{i=1} v_i P^i_{L}.\label{step2m}
\ee
Here $\zeta$ is obtained by minimizing the component of the single maximally local box overall possible decomposition,  $P^i_{zc}$ are the zero-expectation boxes,
and $P^i_{L}$ are the uniform mixture two maximally local boxes which are not the zero-expectation boxes. 
Now substituting Eq. (\ref{step2m}) in Eq. (\ref{step1m}), 
we get the following decomposition of any box with $\mathcal{G}=0$,
\be
P^{\mathcal{G}=0}_L=\zeta P_{\mathcal{Q}=2}+(1-\zeta)P^{\mathcal{G}=0}_{\mathcal{Q}=0}.
\ee
Here 
\be
P^{\mathcal{G}=0}_{\mathcal{Q}=0}=\frac{1}{(1-\zeta)}\Big\{\sum^4_{i=1} t_i P^i_{zc}+\sum^3_{i=1} v_i P^i_{L}+\left(1-\sum^7_{k=0} q_k\right)P_L\Big\}. \nonumber
\ee
This box has $\mathcal{G}=\mathcal{Q}=0$ since it does not have the irreducible Mermin box and PR-box components, i.e., it belongs to the $\mathcal{G}=\mathcal{Q}=0$
region.

\subsection{Linearity of Bell and Mermin discord w.r.t the canonical decompositions}\label{lbdmd}
$\mathcal{G}$ is, in general, not linear for the decomposition of a given correlation into the convex mixture of two $\mathcal{G}>0$ boxes.
For instance, consider a correlation which is the convex mixture of two PR-boxes,
\be
P=pP^i_{PR}+qP^j_{PR}; \quad p>q,\label{nlg>}
\ee
which has $\mathcal{G}(P)=4(p-q)$. Suppose $\mathcal{G}$ is linear for this decomposition, $\mathcal{G}(P)=p\mathcal{G}(P^i_{PR})+q\mathcal{G}(P^j_{PR})=4\ne4(p-q)$.
However, $\mathcal{G}$ is linear for the decomposition of the correlation in Eq. (\ref{nlg>}) into a mixture of a single PR-box and a $\mathcal{G}=0$ box,
\be
P=(p-q)P^i_{PR}+2q\left(\frac{P^i_{PR}+P^j_{PR}}{2}\right). \label{ggnl}
\ee
$\mathcal{G}$ is, in general, also not linear for the decomposition of a correlation into the convex mixture of two $\mathcal{G}=0$ boxes.
For instance, consider the following uniform mixture of two Mermin boxes (the triangle point on the facet of the local polytope in fig. \ref{NS3dfig}),
\be
P=\frac{1}{2}P^{1}_M+\frac{1}{2}P^{2}_M,\label{nggnl}
\ee
where $P^{1}_M=\frac{1}{2}\left(P^{000}_{PR}+P^{111}_{PR}\right)$ and $P^{2}_M=\frac{1}{2}\left(P^{000}_{PR}+P^{110}_{PR}\right)$. Evaluation of $\mathcal{G}$
on the right hand side by using linearity gives $\frac{1}{2}\mathcal{G}(P^{1}_M)+\frac{1}{2}\mathcal{G}(P^{2}_M)=0$, however, $\mathcal{G}(P)=2$.
The correlation in Eq. (\ref{nggnl}) can also be written in the isotropic PR-box form as follows,
\be
P=\frac{1}{2}P^{000}_{PR}+\frac{1}{2}P_N.
\ee
It is obvious that $\mathcal{G}$ is linear for this decomposition. Similarly, we can observe that Mermin discord is, in general, not linear for the 
the decomposition of a given correlation into a mixture of two $\mathcal{Q}>0$ boxes or $\mathcal{Q}=0$ boxes and linear for the canonical decomposition. 

\chapter{On total correlations in bipartite quantum probability distributions}
\label{Ch4}
\section*{Abstract}
We discuss the problem of separating the total correlations in a given quantum probability distribution into nonlocality, contextuality, and classical correlations.
Bell discord and Mermin discord which quantify nonclassicality of quantum correlations going beyond Bell nonlocality and EPR-steering, respectively,
are interpreted as distance measures in the nonsignaling polytope.
A measure of total correlations is introduced to divide the total amount of correlations into a purely nonclassical and a
classical part. We show that quantum correlations arising from the two-qubit states satisfy additivity relations among these three measures.


\section{Introduction}When measurements on an ensemble of entangled particles give rise to the violations of a Bell inequality \cite{bell64,BNL},
one may ask the question of EPR2 \cite{EPR2} whether all the particle pairs in the ensemble behave nonlocally or only some pairs are nonlocally correlated and the other pairs are locally correlated. EPR2 approach to quantum correlation consists
in decomposing a given quantum joint probability distribution into a nonlocal and a local distribution
to find out whether the correlation is
fully nonlocal or it has local content.
EPR2 showed that if the particle pairs are in the singlet state, they all behave nonlocally. However, EPR2 showed that nonmaximally entangled states cannot have nonlocality purely. Thus, total correlations arising from measurements on composite quantum systems
can be divided into a purely nonlocal and a local part.

In Chapters \ref{Ch2} and \ref{Ch3}, Bell discord and Mermin discord have been proposed as measures of quantum correlations 
to quantify nonlocality and EPR-steering of correlations arising from the quantum correlated states \cite{OZ,GTC,Modietal}
and it has been  observed that any bipartite qubit correlation can be decomposed in a convex mixture of an irreducible nonlocal correlation,
an irreducible EPR-steerable correlation and a local correlations which has null Bell and Mermin discord.
This $3$-decomposition fact of quantum correlations suggests that when measurements on an ensemble
of the bipartite quantum system gives rise to Bell and Mermin discord
simultaneously, the ensemble can be divided into a purely nonlocal, an EPR-steerable and a local part which might have classical correlations.

In this work, we discuss the analogous problem of dividing the total correlations in a given quantum state into a purely nonclassical and a classical part
\cite{HV,GPW,Modietal} to quantum joint probability distributions.
We show that Bell discord and Mermin discord defined in Chapters \ref{Ch2} and \ref{Ch3}
can be interpreted as distance measures in the nonsignaling polytope and thus they are analogous to the geometric measure of
quantum discord \cite{Dakicetal}.
Inspired by this interpretation, we define a third distance measure to quantify the amount of total correlations in quantum joint probability distributions.
We study additivity relation for quantum correlations in two-qubit systems.

\section{The three distance measures}\label{measures}
The distance measures are useful tool in quantum information theory to quantify nonclassicality of quantum states and to divide the total correlations
in a given quantum state into a nonclassical and a purely a classical part \cite{Hetal,Modietal,ModietalUVQC}.
In Ref. \cite{ModietalUVQC}, measures of quantum correlations that go beyond entanglement were defined using the concept of distance measures
and it was shown that the distance from a given state to its closest product state
gives total correlations. Similarly, we will propose Bell discord and Mermin discord as distance measures for nonclassicality of quantum correlations
going beyond nonlocality. We will define a distance measure that is nonzero iff a given correlation described by the
joint probability distributions (JPD) is nonproduct to quantify
total correlations in quantum JPD.

Bell-CHSH scenario \cite{chsh} can be abstractly described in terms black boxes shared between two spatially separated observers; Alice and Bob input two variables $A_i$ and
$B_j$ into the box and obtain two distinct outputs $a_m$ and $b_n$ on their part of the box ($i,j,m,n\in\{0,1\}$).
The behavior of a given box is described by the set of $16$ joint
probability distributions (JPD),
\ba
P(a_m,b_n|A_i,B_j)&=&\frac{1}{4}[1+(-1)^m\braket{A_i}+(-1)^n\braket{B_j}\nonumber\\
&&+(-1)^{m\oplus n}\braket{A_iB_j}],
\ea
where $\braket{A_iB_j}=\sum_{m=n}P(a_m,b_n|A_i,B_j)-\sum_{m\ne n}P(a_m,b_n|A_i,B_j)$ are joint expectation values,
and, $\braket{A_i}=P(a_0|A_i)-P(a_1|A_i)$ and $\braket{B_j}=P(b_0|B_j)-P(b_1|B_j)$ are marginal expectation values.
Here $\oplus$ denotes addition modulus $2$.
The set of nonsignaling boxes ($\mathcal{N}$) corresponding to this scenario forms an $8$ dimensional convex polytope which has $24$ extremal boxes \cite{Barrett}:
they are $8$ PR-boxes,
\be
P^{\alpha\beta\gamma}_{PR}(a_m,b_n|A_i,B_j)=\left\{
\begin{array}{lr}
\frac{1}{2}, & m\oplus n=ij \oplus \alpha i\oplus \beta j \oplus \gamma\\
0 , & \text{otherwise},\\
\end{array}
\right. \label{NLV}
\ee
and $16$ deterministic boxes:
\be
P^{\alpha\beta\gamma\epsilon}_D=\left\{
\begin{array}{lr}
1, & m=\alpha i\oplus \beta,\\
& n=\gamma j\oplus \epsilon \\
0 , & \text{otherwise}.\\
\end{array}
\right.
\ee

\subsection{Bell discord}
All the Bell-CHSH inequalities \cite{WernerWolf},
\ba
\mathcal{B}_{\alpha\beta\gamma} &:= &(-1)^\gamma\braket{A_0B_0}+(-1)^{\beta \oplus \gamma}\braket{A_0B_1}\nonumber\\
&+&(-1)^{\alpha \oplus \gamma}\braket{A_1B_0}+(-1)^{\alpha \oplus \beta \oplus \gamma \oplus 1} \braket{A_1B_1}\le2, \label{BCHSH}
\ea
form eight facets for the Bell polytope.
We may consider the eight Bell functions, $\mathcal{B}_{\alpha\beta\gamma}$,
to form the eight orthogonal coordinates for the metric space in which
distance is measured by the modulus of these Bell functions, $\mathcal{B}_{\alpha\beta}:=|\mathcal{B}_{\alpha\beta\gamma}|$.
\begin{observation}
The Bell functions, $\mathcal{B}_{\alpha\beta}$, satisfy the triangle inequality,
\begin{equation}
\mathcal{B}_{\alpha\beta}(P_1,P_2)\le \mathcal{B}_{\alpha\beta}(P_1)+\mathcal{B}_{\alpha\beta}(P_2). \label{Tr}
\end{equation}
\end{observation}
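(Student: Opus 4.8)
The plan is to unpack the notation first. Since the excerpt proposes the moduli $\mathcal{B}_{\alpha\beta}:=|\mathcal{B}_{\alpha\beta\gamma}|$ (well defined, as flipping $\gamma$ merely negates $\mathcal{B}_{\alpha\beta\gamma}$) as orthogonal coordinates of a metric space, the symbol $\mathcal{B}_{\alpha\beta}(P_1,P_2)$ is to be read as the distance $|\mathcal{B}_{\alpha\beta\gamma}(P_1)-\mathcal{B}_{\alpha\beta\gamma}(P_2)|$ between the two boxes along that coordinate, while $\mathcal{B}_{\alpha\beta}(P)=|\mathcal{B}_{\alpha\beta\gamma}(P)|$ is the distance of $P$ from the reference point white noise $P_N$, which has $\braket{A_iB_j}=0$ for all $i,j$ and hence $\mathcal{B}_{\alpha\beta\gamma}(P_N)=0$. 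With this reading, the claimed inequality is exactly the triangle inequality for the pseudometric induced by the linear functional $\mathcal{B}_{\alpha\beta\gamma}$, routed through the point $P_N$.

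First I would record that $\mathcal{B}_{\alpha\beta\gamma}$ is a genuine linear functional on the ambient real vector space of (signed) boxes. Each joint expectation value $\braket{A_iB_j}=\sum_{m,n}(-1)^{m\oplus n}P(a_m,b_n|A_i,B_j)$ is linear in the $16$ entries of $P$, and $\mathcal{B}_{\alpha\beta\gamma}$ is a fixed signed sum of four such expectations; hence for any two boxes $\mathcal{B}_{\alpha\beta\gamma}(P_1-P_2)=\mathcal{B}_{\alpha\beta\gamma}(P_1)-\mathcal{B}_{\alpha\beta\gamma}(P_2)$. Consequently $\mathcal{B}_{\alpha\beta}=|\mathcal{B}_{\alpha\beta\gamma}|$ is the absolute value of a linear functional --- a seminorm --- and $(P_1,P_2)\mapsto\mathcal{B}_{\alpha\beta}(P_1-P_2)$ is the associated translation-invariant pseudometric.

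The inequality itself then reduces to the elementary bound $|x-y|\le|x|+|y|$ for real numbers, applied with $x=\mathcal{B}_{\alpha\beta\gamma}(P_1)$ and $y=\mathcal{B}_{\alpha\beta\gamma}(P_2)$:
\[
\mathcal{B}_{\alpha\beta}(P_1,P_2)=\bigl|\mathcal{B}_{\alpha\beta\gamma}(P_1)-\mathcal{B}_{\alpha\beta\gamma}(P_2)\bigr|\le\bigl|\mathcal{B}_{\alpha\beta\gamma}(P_1)\bigr|+\bigl|\mathcal{B}_{\alpha\beta\gamma}(P_2)\bigr|=\mathcal{B}_{\alpha\beta}(P_1)+\mathcal{B}_{\alpha\beta}(P_2).
\]
Equivalently one may insert $P_N$ and use $\mathcal{B}_{\alpha\beta\gamma}(P_N)=0$ to display this as the standard chain $\mathcal{B}_{\alpha\beta}(P_1,P_2)\le\mathcal{B}_{\alpha\beta}(P_1,P_N)+\mathcal{B}_{\alpha\beta}(P_N,P_2)$. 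The same one-line argument covers each of the four functionals $\mathcal{B}_{\alpha\beta}$ (and, verbatim, the Mermin functionals $\mathcal{M}_{\alpha\beta}$).

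Since the computation is this short, there is no real technical obstacle; the one point I would be careful to state explicitly is conceptual --- $\mathcal{B}_{\alpha\beta}$ is only a \emph{pseudo}metric, because many distinct boxes (for instance, boxes differing only in their marginals, or differing in expectation values that do not enter this particular Bell functional) sit at zero $\mathcal{B}_{\alpha\beta}$-distance. To promote the collection $\{\mathcal{B}_{\alpha\beta}\}$ to a genuine metric on the nonsignaling polytope one must also incorporate the marginal coordinates $\braket{A_i}$ and $\braket{B_j}$. I would flag this so that the subsequent reading of Bell discord as a ``distance measure'' is calibrated correctly.
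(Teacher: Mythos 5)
Your proof is correct, but it is a genuinely different (and substantially stronger) argument than the one in the paper. The paper does not actually prove the general statement: it only verifies the inequality on a single one-parameter family, namely $P=pP^{000}_{PR}+qP^{001}_{PR}$ with the associated isotropic boxes $P_1=pP^{000}_{PR}+(1-p)P_N$ and $P_2=qP^{001}_{PR}+(1-q)P_N$, computing $\mathcal{B}_{00}(P_1,P_2)=4|p-q|\le 4p+4q$. You instead observe that $\mathcal{B}_{\alpha\beta\gamma}$ is a linear functional on the space of boxes, so $\mathcal{B}_{\alpha\beta}=|\mathcal{B}_{\alpha\beta\gamma}|$ is a seminorm and the claim is the elementary bound $|x-y|\le|x|+|y|$; this covers all pairs of nonsignaling boxes at once and transfers verbatim to the Mermin functionals. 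What the paper's version buys is only a concrete geometric picture along one coordinate axis; what yours buys is an actual proof. One small caveat: the paper never defines the two-argument symbol, and its worked example is consistent with reading $\mathcal{B}_{\alpha\beta}(P_1,P_2)$ as the difference of the moduli, $\bigl|\mathcal{B}_{\alpha\beta}(P_1)-\mathcal{B}_{\alpha\beta}(P_2)\bigr|=4|p-q|$, rather than your reading $\bigl|\mathcal{B}_{\alpha\beta\gamma}(P_1)-\mathcal{B}_{\alpha\beta\gamma}(P_2)\bigr|$, which on that example gives $4(p+q)$ because $P^{001}_{PR}$ is the anti-PR box. This does not affect your argument --- both readings satisfy the stated inequality by the same one-line estimate ($|x-y|\le|x|+|y|$ and $\bigl||x|-|y|\bigr|\le|x|+|y|$) --- but it is worth flagging which convention you adopt. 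Your closing remark that $\mathcal{B}_{\alpha\beta}$ is only a pseudometric is also a useful correction to the paper's looser ``distance measure'' language.
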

\begin{proof}
Consider the following convex mixture of the two PR-boxes,
\be
P=pP^{000}_{PR}+qP^{001}_{PR},
\ee
which has $\mathcal{B}_{00}(P)=4|p-q|$. Here $\mathcal{B}_{00}(P)$ can be regarded as measuring the distance between the boxes $P_1=pP^{000}_{PR}+(1-p)P_N$ and
$P_2=qP^{001}_{PR}+(1-q)P_N$ which have $\mathcal{B}_{00}(P_1)=4p$ and $\mathcal{B}_{00}(P_2)=4q$. The triangle inequality in Eq. (\ref{Tr})
follows since $\mathcal{B}_{00}(P_1, P_2)=4|p-q|\le \mathcal{B}_{\alpha\beta}(P_1)+\mathcal{B}_{\alpha\beta}(P_2)= 4p+4q=4$.
\end{proof}
The isotropic PR-boxes,
\be
P_{iPR}^{\alpha\beta\gamma}=p_{nl}P^{\alpha\beta\gamma}_{PR}+(1-p_{nl})P_N, \label{isoPR}
\ee
define the eight orthogonal coordinates in which each coordinate is a line joining a PR-box and white noise.
Geometrically for a given box, each $\mathcal{B}_{\alpha\beta}$ measures the distance of a box which is, in general, different than the given box from the origin.
The white noise, $P_N$, which has
$\mathcal{B}_{\alpha\beta\gamma}=0$ is at the origin.
Since a PR-box can lie on top of only one of the facets, the distance of a PR-box from the origin is measured by only one of the Bell functions. For instance, the PR-box, $P^{00\gamma}_{PR}$, gives $\mathcal{B}_{00}=4$ and the other $\mathcal{B}_{\alpha\beta}$ are zero; it is at the largest
distance from the origin. Since the isotropic PR-boxes in Eq. (\ref{isoPR})
lie along only one of the coordinates, they have only one of the Bell function nonzero, i.e.,
$\mathcal{B}_{\alpha\beta}=4p_{nl}$ and the rest of the three Bell functions take zero.
All the four Bell functions measure the distance of any deterministic box simultaneously
since the deterministic boxes have $\mathcal{B}_{\alpha\beta}=2$ for all $\alpha\beta$, i.e., they lie on the hyperplane.

Bell discord, $\mathcal{G}$, is constructed using the Bell functions as follows,
\be
\mathcal{G}=\min_i\mathcal{G}_i,
\ee
where $\mathcal{G}_1=\Big||\mathcal{B}_{00}-\mathcal{B}_{01}|-|\mathcal{B}_{10}-\mathcal{B}_{11}|\Big|$ and $\mathcal{G}_2$ and $\mathcal{G}_3$ are obtained by permuting
$\mathcal{B}_{\alpha\beta}$ in $\mathcal{G}_1$. Here $0\le\mathcal{G}\le4$. The deterministic boxes have $\mathcal{G}=0$, whereas the PR-boxes have $\mathcal{G}=4$.
As Bell discord is made up of $\mathcal{B}_{\alpha\beta}$, it also satisfies the triangle inequality.
\begin{proposition}
If a given nonextremal correlation has an irreducible PR-box component, $\mathcal{G}$ measures how far the given correlation from a local box that does not have an
irreducible PR-box component in the metric space defined by the Bell functions.
\end{proposition}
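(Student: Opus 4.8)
The plan is to derive the proposition directly from the canonical decomposition of Theorem~\ref{thm1} together with the triangle inequality satisfied by the Bell functions. First I would invoke Theorem~\ref{thm1} to write the given nonextremal box as $P=\mu P^{\alpha\beta\gamma}_{PR}+(1-\mu)P^{\mathcal{G}=0}_L$, where $\mu>0$ is the \emph{maximal} irreducible PR-box component and $P^{\mathcal{G}=0}_L$ is a local box with $\mathcal{G}=0$. Since $\mathcal{G}$ is linear with respect to this decomposition (Appendix~\ref{lbdmd}), $\mathcal{G}(P)=\mu\,\mathcal{G}(P^{\alpha\beta\gamma}_{PR})+(1-\mu)\,\mathcal{G}(P^{\mathcal{G}=0}_L)=4\mu$. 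I then treat the four moduli $\mathcal{B}_{\alpha\beta}$ as the coordinates of the metric space in which distances are measured, noting that white noise sits at the origin (all $\mathcal{B}_{\alpha\beta}(P_N)=0$, hence $\mathcal{G}(P_N)=0$) and that $\mathcal{G}$, being assembled from the $\mathcal{B}_{\alpha\beta}$, inherits the triangle inequality $\mathcal{G}(P_1,P_2)\le\mathcal{G}(P_1)+\mathcal{G}(P_2)$, with $\mathcal{G}(P)$ itself being the distance of $P$ from $P_N$.

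The lower bound is then a soft consequence of the triangle inequality: for any box $Q$ with $\mathcal{G}(Q)=0$, applying the triangle inequality through the intermediate point $P_N$ gives $\mathcal{G}(P)=\mathcal{G}(P,P_N)\le\mathcal{G}(P,Q)+\mathcal{G}(Q,P_N)=\mathcal{G}(P,Q)$, so $\mathcal{G}(P,Q)\ge\mathcal{G}(P)=4\mu$. Thus no box without an irreducible PR-box component lies closer to $P$ than $\mathcal{G}(P)$.

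To obtain equality I would exhibit a minimizer, the natural candidate being $P^{\mathcal{G}=0}_L$ itself. Because $P=\mu P^{\alpha\beta\gamma}_{PR}+(1-\mu)P^{\mathcal{G}=0}_L$ lies on the segment joining $P^{\mathcal{G}=0}_L$ and $P^{\alpha\beta\gamma}_{PR}$, and because the irreducible PR-box $P^{\alpha\beta\gamma}_{PR}$ has, by the Bell-function monogamy of Appendix~\ref{mBF}, only a single nonzero Bell function (equal to $4$), passing from $P^{\mathcal{G}=0}_L$ to $P$ displaces the box along exactly that one Bell coordinate, by precisely the amount that $\mathcal{G}$ is constructed to extract; combined with the linearity already used this yields $\mathcal{G}(P,P^{\mathcal{G}=0}_L)=\mathcal{G}(P)=4\mu$. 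Maximality of $\mu$ ensures that no alternative decomposition produces a $\mathcal{G}=0$ box nearer to $P$, so $\mathcal{G}(P)=\min_{Q:\,\mathcal{G}(Q)=0}\mathcal{G}(P,Q)$, which is the assertion.

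I expect the main obstacle to be this equality step: one must pin down the notion of distance between two boxes carefully enough that ``displacement along a single Bell coordinate'' becomes a rigorous statement, and one must argue that $P^{\mathcal{G}=0}_L$ is genuinely the minimizer rather than merely a feasible competitor. This is where the Bell-function monogamy of the irreducible PR-box and the maximality of $\mu$ in Theorem~\ref{thm1} carry the weight, whereas the direction ``$\mathcal{G}(P)$ is a lower bound on the distance'' requires nothing beyond the triangle inequality.
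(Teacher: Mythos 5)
Your first paragraph is, in substance, the paper's entire proof: invoke the canonical decomposition $P=\mu P^{\alpha\beta\gamma}_{PR}+(1-\mu)P^{\mathcal{G}=0}_L$ of Theorem~\ref{thm1}, use linearity of $\mathcal{G}$ along that decomposition to get $\mathcal{G}(P)=4\mu$, and read this off as the distance of $P$ from the $\mathcal{G}=0$ box it is mixed with, since $P$ lies on the segment joining $P^{\alpha\beta\gamma}_{PR}$ and $P^{\mathcal{G}=0}_L$. The paper stops there: the proposition (note the indefinite article, ``a local box'') only claims that $\mathcal{G}$ is the distance to the particular $\mathcal{G}=0$ box occurring in the canonical decomposition, not to the nearest one. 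Your remaining two paragraphs attempt to upgrade this to the genuinely stronger statement $\mathcal{G}(P)=\min_{Q:\,\mathcal{G}(Q)=0}\mathcal{G}(P,Q)$, in closer analogy with the geometric measure of quantum discord. That part is not supported by what is actually established in the text: the two-argument quantity $\mathcal{G}(P_1,P_2)$ is never defined, the triangle inequality of the paper is only verified on one family of isotropic boxes rather than proved in general, and since $\mathcal{G}$ vanishes on every deterministic box while those boxes are far from $P_N$, the ``distance from the origin'' reading of $\mathcal{G}$ is degenerate, so the step $\mathcal{G}(Q,P_N)=\mathcal{G}(Q)=0$ does not give you a pseudometric you can legitimately chain through $P_N$. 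You correctly flag this yourself. Since the proposition as stated requires only the weaker claim, your proof succeeds; just be aware that the minimization characterization is an unproven embellishment, not part of what either the statement or the paper's argument delivers.
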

\begin{proof}
Any NS correlation can be written as a convex combination of an irreducible PR-box and a local box which has $\mathcal{G}=0$ \cite{Jeba},
\be
P=\mathcal{G'}P^{\alpha\beta\gamma}_{PR}+(1-\mathcal{G'})P^{\mathcal{G}=0}_L. \label{gcano}
\ee
This canonical decomposition implies that the correlation that has an irreducible PR-box component lies on the line segment joining the PR-box and the local box
with $\mathcal{G}=0$.
Thus, Bell discord of the correlation in Eq. (\ref{gcano}) given by $\mathcal{G}(P)=4\mathcal{G}'$ gives the
distance of the given correlation from the $\mathcal{G}=0$ box in the canonical decomposition.
\end{proof}

Consider the following convex mixture of the PR-box and the deterministic box,
\be
P=pP^{000}_{PR}+qP^{0000}_D.
\ee
For these correlations, $\mathcal{B}_{000}=p\mathcal{B}_{000}(P^{000}_{PR})+q\mathcal{B}_{000}(P_D)=4p+2q=2(p+1)$ and $\mathcal{G}=4p$.
Notice that, $\mathcal{B}_{00}\ge \mathcal{G}$; $\mathcal{B}_{00}$ measures the distance of the correlation from the origin and is equal to the sum of the distance of the noisy deterministic box, $qP_D+(1-q)P_N$, and the noisy PR-box, $pP_{PR}+(1-p)P_N$, whereas $\mathcal{G}$ measures the distance of the correlation from the
deterministic box and is equal to the distance of the correlation from the origin minus the distance of the noisy deterministic box.

\textit{Bell-CHSH inequality violation versus nonzero Bell discord:-}
For any NS box given by the canonical decomposition in Eq. (\ref{gcano}),
the Bell-CHSH operator $\mathcal{B}_{\alpha\beta\gamma}(P)=4\mathcal{G'}+l(1-\mathcal{G'})$, where $l=\mathcal{B}_{\alpha\beta\gamma}\left(P^{\mathcal{G}=0}_L\right)$.
Consider the case when $l\ge0$. If $\mathcal{G'}>\frac{1}{2}$, it is for sure that the correlation gives
the violation of the Bell-CHSH inequality. Now consider the following two cases.

(i) Suppose $\mathcal{B}_{\alpha\beta\gamma}\left(P^{\mathcal{G}=0}_L\right)=0$, the correlations cannot give 
rise to the violation of the Bell-CHSH inequality when $0\le p\le\frac{1}{2}$.
Therefore, for the violation of the Bell-CHSH inequality upon increasing the PR-box content, 
first the box has to be lifted to the face of the Bell polytope by the PR-box content which happens at $\mathcal{G'}=\frac{1}{2}$.

(ii) Suppose $\mathcal{B}_{\alpha\beta\gamma}\left(P^{\mathcal{G}=0}_L\right)=2$. Then any small amount of the PR-box content will give rise 
to the violation of the Bell-CHSH
inequality because the box lies on the face of the Bell polytope when $\mathcal{G'}=0$.

Thus, the violation of a Bell inequality depends on the amount of
irreducible PR-box content as well as the local box in the
canonical decomposition, whereas nonzero Bell discord depends only on the amount of irreducible PR-box content.
Popescu and Rohrlich showed that all pure entangled states violate a Bell-CHSH inequality \cite{PRQB}. However, there are mixed entangled states that do not violate
a Bell-CHSH inequality \cite{Hetal}.
The reason for the nonviolation of any Bell inequality by some entangled states is that the local box in the canonical decomposition does not have sufficient amount of magnitude for
the Bell operator to lift the correlation to go outside the Bell polytope.
\subsection{Mermin discord}
We may as well consider the eight Mermin functions,
\ba
\mathcal{M}_{\alpha\beta\gamma}&:=
&(\alpha\oplus\beta\oplus1)\{(-1)^{\beta}\braket{A_0B_1}\!+\!(-1)^{\alpha}\braket{A_1B_0}\}\nonumber\\ &&+(\alpha\oplus\beta)\{(-1)^{\gamma}\braket{A_0B_0}+(-1)^{\alpha\oplus\beta\oplus\gamma\oplus 1}\braket{A_1B_1}\}\nonumber\\
&& \text{for} \quad \alpha\beta\gamma=00\gamma,01\gamma;\nonumber \\
\mathcal{M}_{\alpha\beta\gamma}&:=&(\alpha\oplus\beta)\{(-1)^{\beta}\braket{A_0B_1}\!+\!(-1)^{\alpha}\braket{A_1B_0}\}\nonumber\\ &&+(\alpha\oplus\beta\oplus1)\{(-1)^{\gamma}\braket{A_0B_0}+(-1)^{\alpha\oplus\beta\oplus\gamma\oplus1}\braket{A_1B_1}\}\nonumber\\
&& \text{for} \quad \alpha\beta\gamma=10\gamma,11\gamma,
\ea
to form eight orthogonal coordinates for the metric space in which $\mathcal{M}_{\alpha\beta}:=|\mathcal{M}_{\alpha\beta\gamma}|$ serve as the distance function.
The following eight Mermin boxes which have maximally mixed marginals,
\begin{align}
P_M^{\alpha\beta\gamma}(a_m,b_n|A_i,B_j)&=\left\{
\begin{array}{lr}
\frac{1}{4}, & i\oplus j =0 \\
\frac{1}{2}, & m\oplus n=i\cdot j \oplus \alpha i \oplus \beta j \oplus \gamma\\
0 , & \text{otherwise},\nonumber\\
\end{array}
\right. \text{for} \quad \alpha\beta\gamma=00\gamma,10\gamma \\
&=\left\{
\begin{array}{lr}
\frac{1}{4}, & i\oplus j =1 \\
\frac{1}{2}, & m\oplus n=i\cdot j \oplus \alpha i\oplus \beta j \oplus \gamma \\
0 , & \text{otherwise},\\
\end{array} \label{Mmmm}
\right. \text{for} \quad \alpha\beta\gamma=01\gamma,11\gamma
\end{align}
lie along extremum of only one of the coordinates. Therefore, the distance of the isotropic Mermin-boxes,
\be
P_{iM}^{\alpha\beta\gamma}=p_{c}P^{\alpha\beta\gamma}_M+(1-p_{c})P_N, \label{isoM}
\ee
are measured by only one of the Mermin functions.

Mermin discord, $\mathcal{Q}$, is constructed using the Mermin functions as follows,
\be
\mathcal{Q}=\min_i\mathcal{Q}_i.
\ee
Here $\mathcal{Q}_1=\Big||\mathcal{M}_{00}-\mathcal{M}_{01}|-|\mathcal{M}_{10}-\mathcal{M}_{11}|\Big|$ and $\mathcal{Q}_2$ and $\mathcal{Q}_3$ are obtained by permuting
$\mathcal{M}_{\alpha\beta}$ in $\mathcal{Q}_1$. Since the distance of the PR-boxes and the deterministic boxes are simultaneously measured by two Mermin functions (i.e., they
lie on the hyperplane), they have $\mathcal{Q}=0$. The isotropic Mermin boxes in Eq. (\ref{isoM}) have $\mathcal{Q}=2p_c$.
\begin{proposition}
If a given correlation has nonzero Mermin discord, $\mathcal{Q}$ measures the distance of the given correlation from a correlation with $\mathcal{Q}=0$
in the metric space of Mermin functions.
\end{proposition}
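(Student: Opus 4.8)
The plan is to follow the proof of the preceding proposition for Bell discord, with PR-boxes replaced by Mermin boxes and Theorem~\ref{thm1} replaced by the canonical decompositions of Chapter~\ref{Ch3}. The first step is to produce, for any box $P$ with $\mathcal{Q}(P)>0$, a decomposition
\[
P=\zeta\,P^{\alpha\beta\gamma}_{\mathcal{Q}=2}+(1-\zeta)\,P_{\mathcal{Q}=0},
\]
in which $P^{\alpha\beta\gamma}_{\mathcal{Q}=2}$ is a maximally-local box with $\mathcal{Q}=2$, $P_{\mathcal{Q}=0}$ is a correlation with $\mathcal{Q}=0$, and $\zeta$ is the maximal (irreducible) weight of the $\mathcal{Q}=2$ box. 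When $\mathcal{G}(P)=0$ this is exactly Theorem~\ref{thm2}; in general I would obtain it from the $3$-decomposition (Theorem~\ref{thm5}) by lumping the PR-box and the $\mathcal{G}=\mathcal{Q}=0$ box into the single correlation $P_{\mathcal{Q}=0}$. Geometrically this says that a box with nonzero Mermin discord lies on the line segment joining a $\mathcal{Q}=2$ box and a $\mathcal{Q}=0$ box, just as a box with nonzero Bell discord lies on a segment joining a PR-box and a $\mathcal{G}=0$ box.

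The second step is to evaluate $\mathcal{Q}$ on this decomposition. Using the linearity of Mermin discord with respect to the canonical decomposition (Appendix~\ref{lbdmd}) together with $\mathcal{Q}(P^{\alpha\beta\gamma}_{\mathcal{Q}=2})=2$ and $\mathcal{Q}(P_{\mathcal{Q}=0})=0$, one gets $\mathcal{Q}(P)=2\zeta$. In the metric space whose coordinates are the Mermin functions $\mathcal{M}_{\alpha\beta}$ --- with white noise ($\mathcal{M}_{\alpha\beta}=0$ for all $\alpha\beta$) at the origin and each $\mathcal{Q}=2$ box sitting at value $2$ on a single coordinate axis --- the only active coordinate varies linearly along the segment from $P_{\mathcal{Q}=0}$ to $P^{\alpha\beta\gamma}_{\mathcal{Q}=2}$, so $\mathcal{Q}(P)=2\zeta$ is precisely the displacement of $P$ away from $P_{\mathcal{Q}=0}$ measured in this metric. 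This is the isotropic-Mermin-box relation $\mathcal{Q}=2p_{c}$ generalized from white noise to an arbitrary $\mathcal{Q}=0$ base point, and the exact analog of $\mathcal{G}(P)=4\mathcal{G}'$ for Bell discord; I would also note that, being assembled from the $\mathcal{M}_{\alpha\beta}$, $\mathcal{Q}$ inherits the triangle inequality by the same reasoning used for the Bell functions, which is what justifies the word ``distance''.

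I expect the main obstacle to be the first step in the case $\mathcal{G}(P)>0$: one must check that collapsing the PR-box and the $\mathcal{G}=\mathcal{Q}=0$ component of the $3$-decomposition into one correlation still leaves $\mathcal{Q}=0$ and does not change $\zeta$ --- i.e. that $\zeta=\mathcal{Q}(P)/2$ coincides with the irreducible Mermin-box weight read off the $3$-decomposition. This is not automatic, because $\mathcal{Q}$ is \emph{not} additive under arbitrary convex combinations (Appendix~\ref{lbdmd}) and a mixture of two $\mathcal{Q}=0$ boxes can have $\mathcal{Q}>0$; the argument has to use the specific structure of the $3$-decomposition and the maximality built into $\zeta$. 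A secondary, interpretive point is that the set of $\mathcal{Q}=0$ correlations is nonconvex (cf. Observation~\ref{qdg}), so ``the correlation with $\mathcal{Q}=0$'' must be read as the distinguished $\mathcal{Q}=0$ box of the canonical decomposition rather than a global nearest point; this is handled exactly as in the Bell-discord proposition, and the LRO-invariance of $\zeta$ established in Chapter~\ref{Ch3} shows the choice is canonical.
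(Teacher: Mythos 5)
Your proposal is correct and follows essentially the same route as the paper: the paper likewise invokes the canonical decomposition $P=\mathcal{Q}'P^{\alpha\beta\gamma}_{\mathcal{Q}=2}+(1-\mathcal{Q}')P_{\mathcal{Q}=0}$ (with the PR-box and $\mathcal{G}=\mathcal{Q}=0$ components absorbed into $P_{\mathcal{Q}=0}$) and reads off $\mathcal{Q}(P)=2\mathcal{Q}'$ by linearity along that segment. The only difference is that you explicitly flag the need to check that the lumped box still has $\mathcal{Q}=0$ and that $\zeta$ equals the irreducible Mermin-box weight; the paper takes this for granted by citing the earlier canonical-decomposition result, so your version is if anything slightly more careful.
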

\begin{proof}
Any NS correlation can be written as a convex mixture of a maximally local box with $\mathcal{Q}=2$
which lies on extremum of one of the coordinates, $\mathcal{M}_{\alpha\beta}$, and a $\mathcal{Q}=0$ box \cite{Jeba},
\be
P=\mathcal{Q'}P^{\alpha\beta\gamma}_{\mathcal{Q}=2}+(1-\mathcal{Q'})P_{\mathcal{Q}=0}. \label{canoQ}
\ee
This canonical decomposition implies that the correlation that has an irreducible Mermin box component lies on a line segment joining the
$\mathcal{Q}=2$ box and the $\mathcal{Q}=0$ box.
Thus, Mermin discord of the correlation in Eq. (\ref{canoQ}) given by $\mathcal{Q}(P)=2\mathcal{Q}'$ measures the distance of the given correlation from the $\mathcal{Q}=0$ box,
$P_{\mathcal{Q}=0}$, in the canonical decomposition.
\end{proof}
\subsection{$\mathcal{T}$ measure}
The analysis of quantum correlations arising from the two-qubit states done in the last chapter implies that
up to local reversible operations any quantum correlation
can be decomposed into a convex mixture of a PR-box, a Mermin-box, and a restricted local box,
\be
P=\mathcal{G}'P^{000}_{PR}+\mathcal{Q}'\left(\frac{P^{000}_{PR}+P^{11\gamma}_{PR}}{2}\right)+(1-\mathcal{G}'-\mathcal{Q}')P^{\mathcal{G}=0}_{\mathcal{Q}=0}, \label{CQd}
\ee
where $\frac{1}{2}\left(P^{000}_{PR}+P^{11\gamma}_{PR}\right)$ are the two Mermin boxes canonical to the PR-box, $P^{000}_{PR}$,
and $P^{\mathcal{G}=0}_{\mathcal{Q}=0}$ is the local box which has $\mathcal{G}=\mathcal{Q}=0$. The local box in this decomposition is, in general, a nonproduct
box and, therefore, possesses classical correlations. The $3$-decomposition given in Eq. (\ref{CQd}) implies that total nonclassical correlation in a given qubit box
is a sum of Bell discord and Mermin discord.

The observation that $\mathcal{G}$ and $\mathcal{Q}$ measure the distance of a given box from the corresponding
$\mathcal{G}=0$ box and $\mathcal{Q}=0$ box, respectively, in the $3$-decomposition invites us to define the quantity $\mathcal{T}$ that gives the
distance of a given quantum box from the corresponding uncorrelated box that is a product of the marginals of the given box.
\begin{definition}
$\mathcal{T}$ is defined as,
\be
\mathcal{T}=\max_{\alpha\beta}\mathcal{T}_{\alpha\beta}. \label{tot}
\ee
Here,
\be
\mathcal{T}_{\alpha\beta}=|\mathcal{B}_{\alpha\beta}-\mathcal{B}^{prod}_{\alpha\beta}|, \nonumber
\ee
where,
\ba
\mathcal{B}^{prod}_{\alpha\beta}&=&|\braket{A_0}\braket{B_0}+(-1)^\beta\braket{A_0}\braket{B_1}\nonumber
\\&&+(-1)^\alpha\braket{A_1}\braket{B_0}+(-1)^{\alpha\oplus\beta\oplus1}\braket{A_1}\braket{B_1}|.\nonumber
\ea
\end{definition}
This measure has the following properties:
\begin{enumerate}
\item $\mathcal{T}\ge0$.
\item $\mathcal{T}=0$ iff the box is product i.e., $P(a_m,b_n|A_i,B_j)=P_A(a_m|A_i)P_B(b_n|B_j)$.
\begin{proof}
Since $\mathcal{B}_{\alpha\beta}=\mathcal{B}^{prod}_{\alpha\beta}$ for the product box, $\mathcal{T}_{\alpha\beta}=0$ $\forall$ $\alpha\beta$.
For any box that can not written in the product form, $\mathcal{B}_{\alpha\beta}\ne\mathcal{B}^{prod}_{\alpha\beta}$ which, in turn, implies that $\mathcal{T}_{\alpha\beta}>0$
for any nonproduct box.
\end{proof}
\item Maximization in Eq. (\ref{tot}) makes $\mathcal{T}$ invariant under LRO and permutation of the parties.
As the canonical decomposition for quantum correlations in Eq. (\ref{CQd}) implies that $\max\mathcal{B}_{\alpha\beta}$
contains the total amount of nonclassicality in the given JPD, maximization is used in Eq. (\ref{tot}) rather than minimization.
\begin{proof}Under local reversible operations and the permutation of the parties $\mathcal{T}_{\alpha\beta}$ in Eq. (\ref{tot}) transform into each
other.
\end{proof}
\end{enumerate}
As a consequence of the three properties of $\mathcal{T}$ given above, we obtain the following additivity relation for quantum correlations.
\begin{thm}
When a given two-qubit state gives rise to Bell and/or Mermin discord, the correlation satisfy,
\be
\mathcal{T}=\mathcal{G}+\mathcal{Q}\pm\mathcal{C}.
\ee
Here $\mathcal{C}$ quantifies classical correlations.
\end{thm}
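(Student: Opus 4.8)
The plan is to start from the canonical $3$-decomposition of two-qubit correlations recorded in Eq.~(\ref{CQd}),
\be
P=\mathcal{G}'P^{000}_{PR}+\mathcal{Q}'\left(\frac{P^{000}_{PR}+P^{11\gamma}_{PR}}{2}\right)+(1-\mathcal{G}'-\mathcal{Q}')P^{\mathcal{G}=0}_{\mathcal{Q}=0},
\ee
together with the already-established linearity relations $\mathcal{G}(P)=4\mathcal{G}'$ and $\mathcal{Q}(P)=2\mathcal{Q}'$. Since $\mathcal{T}$, $\mathcal{G}$ and $\mathcal{Q}$ are all invariant under LRO and under interchange of the parties, it costs nothing to assume $P$ is already presented in this canonical form, so the whole problem reduces to evaluating $\mathcal{T}=\max_{\alpha\beta}\mathcal{T}_{\alpha\beta}$ on the right-hand side.

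First I would compute the Bell functions $\mathcal{B}_{\alpha\beta}(P)$ by linearity in the convex weights. The coordinate aligned with the PR-box component, $\mathcal{B}_{00}$, receives $4$ from $P^{000}_{PR}$, $2$ from the Mermin box $\frac{1}{2}(P^{000}_{PR}+P^{11\gamma}_{PR})$, and $\mathcal{B}_{00}(P^{\mathcal{G}=0}_{\mathcal{Q}=0})$ from the residual box, so that $\mathcal{B}_{00}(P)=\mathcal{G}(P)+\mathcal{Q}(P)+(1-\mathcal{G}'-\mathcal{Q}')\,\mathcal{B}_{00}(P^{\mathcal{G}=0}_{\mathcal{Q}=0})$. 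The structural input is that the PR-boxes and the Mermin boxes all have maximally mixed marginals, so $\braket{A_i}$ and $\braket{B_j}$ for $P$ come entirely from the residual local box, scaled by $1-\mathcal{G}'-\mathcal{Q}'$; consequently $\mathcal{B}^{prod}_{\alpha\beta}(P)=(1-\mathcal{G}'-\mathcal{Q}')^{2}\,\mathcal{B}^{prod}_{\alpha\beta}(P^{\mathcal{G}=0}_{\mathcal{Q}=0})$. Combining these, $\mathcal{T}_{00}(P)=|\mathcal{G}(P)+\mathcal{Q}(P)+(1-\mathcal{G}'-\mathcal{Q}')[\,\mathcal{B}_{00}(P^{\mathcal{G}=0}_{\mathcal{Q}=0})-(1-\mathcal{G}'-\mathcal{Q}')\,\mathcal{B}^{prod}_{00}(P^{\mathcal{G}=0}_{\mathcal{Q}=0})\,]|$.

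Next I would \emph{define} $\mathcal{C}$ to be the classical-correlation content carried by the residual $\mathcal{G}=\mathcal{Q}=0$ box — the magnitude of the bracketed quantity above — and argue that the outer maximization over $\alpha\beta$ in the definition of $\mathcal{T}$ is attained precisely at this aligned coordinate $00$: the Bell-function monogamy of Appendix~\ref{mBF} forces every other $\mathcal{B}_{\alpha\beta}$ to be strictly smaller once an irreducible PR-box (and/or Mermin-box) component is present, so the $\max$ does not jump. Collecting terms then yields $\mathcal{T}=\mathcal{G}+\mathcal{Q}\pm\mathcal{C}$, with the sign decided by whether the residual classical correlation in the $00$-coordinate reinforces or opposes the nonclassical part. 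Finally I would confirm both signs genuinely occur on the worked examples: for the isotropic boxes and the Werner states the residual is white noise, so $\mathcal{C}=0$ and $\mathcal{T}=\mathcal{G}+\mathcal{Q}$, whereas for the state-dependent settings behind Eqs.~(\ref{PRQ}) and (\ref{0BMSb1}) the residual has nonmaximally mixed marginals and contributes a $\mathcal{C}$ with a definite sign that can be read off directly from the explicit decompositions.

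The step I expect to be the real obstacle is the middle one: because $\mathcal{B}^{prod}_{\alpha\beta}$ enters \emph{quadratically} in the residual weight, the naive hope that ``$\mathcal{T}$ of the residual box appears linearly'' fails, and one must show that for the correlations actually produced by two-qubit states the bracketed combination collapses to a clean multiple of the residual box's correlation measure — equivalently, that $\mathcal{C}$ can be consistently defined so the additivity is exact rather than merely approximate. Pinning this down, together with the monogamy argument that fixes which coordinate achieves the maximum, is where the work lies.
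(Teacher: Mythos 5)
Your proposal follows essentially the same route as the paper's own proof: take the canonical $3$-decomposition of Eq.~(\ref{CQd}), argue that the maximization in $\mathcal{T}$ is attained at the coordinate aligned with the PR-box, expand $\mathcal{B}_{00}(P)$ linearly in the convex weights to peel off $\mathcal{G}+\mathcal{Q}$, and absorb the residual into $\mathcal{C}$. Two remarks on where you diverge. First, you justify the claim that the max sits at the $00$ coordinate via Bell-function monogamy, whereas the paper simply asserts that the canonical form maximizes $\mathcal{B}_{00}$; your version is the more complete one. Second, your observation that $\mathcal{B}^{prod}_{\alpha\beta}(P)=(1-\mathcal{G}'-\mathcal{Q}')^{2}\,\mathcal{B}^{prod}_{\alpha\beta}(P^{\mathcal{G}=0}_{\mathcal{Q}=0})$ is correct and is a point the paper's proof actually gets wrong: the paper writes the product term as scaling \emph{linearly} in $(1-\mathcal{G}'-\mathcal{Q}')$, which fails because the marginal expectations each pick up one factor of the residual weight. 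However, the ``real obstacle'' you flag at the end is not an obstacle at all: the theorem carries no independent specification of $\mathcal{C}$ beyond its interpretation as the classical content of the residual box, so you are free to \emph{define} $\mathcal{C}$ as the magnitude of your (correctly quadratic) bracketed quantity, $\mathcal{C}=(1-\mathcal{G}'-\mathcal{Q}')\left|\mathcal{B}_{00}(P^{\mathcal{G}=0}_{\mathcal{Q}=0})-(1-\mathcal{G}'-\mathcal{Q}')\mathcal{B}^{prod}_{00}(P^{\mathcal{G}=0}_{\mathcal{Q}=0})\right|$, which still vanishes exactly when the residual is a product box or white noise and therefore retains the intended interpretation. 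With that definition committed to, your argument closes and is in fact a cleaner version of the one in the text.
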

\begin{proof}
Consider the correlation given by the canonical decomposition given in Eq. (\ref{CQd}). Since this correlation maximizes $\mathcal{B}_{00}$,
\ba
\mathcal{T}(P)&=&|\mathcal{B}_{00}(P)-\mathcal{B}^{prod}_{00}(P)|\nonumber\\
\!&=&\!\left|4\mathcal{G}'\!+\!2\mathcal{Q}'\!+\!\left(1-\mathcal{G}'-\mathcal{Q}'\right)
\left(\mathcal{B}_{00}\left(P^{\mathcal{Q}=0}_{\mathcal{G}=0}\right)-\mathcal{B}^{prod}_{00}\left(P^{\mathcal{Q}=0}_{\mathcal{G}=0}\right)\right)\right|\nonumber\\
&=&\mathcal{G}+\mathcal{Q}\pm\mathcal{C},
\ea
where
\be
\mathcal{C}=\left(1-\mathcal{G}'-\mathcal{Q}'\right)\left|\mathcal{B}_{00}\left(P^{\mathcal{Q}=0}_{\mathcal{G}=0}\right)-\mathcal{B}^{prod}_{00}\left(P^{\mathcal{Q}=0}_{\mathcal{G}=0}\right)\right|.
\ee
\end{proof}
\section{Quantum correlations}\label{QC2}
Here we study total correlations in the quantum boxes obtained by spin projective measurements on the two-qubit systems: Alice performs measurements $A_i=\hat{a}_i\cdot\vec{\sigma}$
on her qubit along the two directions $\hat{a}_i$ and Bob performs measurements $B_j=\hat{b}_j\cdot\vec{\sigma}$
on her qubit along the two directions $\hat{b}_j$.
Any quantum-correlated state which is neither a classical-quantum state nor a quantum-classical state can give rise to (1) a Bell discordant box which
has $\mathcal{G}>0$ and $\mathcal{Q}=0$, (2) a Mermin discordant box
which has $\mathcal{G}=0$ and $\mathcal{Q}>0$, and (3) a Bell-Mermin discordant box which has $\mathcal{G}>0$ and $\mathcal{Q}>0$, for three different incompatible measurements \cite{Jeba}.
Just like the set of zero quantum discord is non-convex \cite{QCall,Caves}, 
the set of $\mathcal{G}=\mathcal{Q}=0$ correlations forms a nonconvex subset of all local correlations. The set of quantum correlations that
violate a Bell-CHSH inequality is a subset of $\mathcal{G}>0$ correlations.
The set of quantum correlations that violate an EPR-steering inequality \cite{CJWR},
\be
\mathcal{M}_{\alpha\beta\gamma}\le\sqrt{2},
\ee
with $[A_0, A_1]=-1$ or $[B_0, B_1]=-1$, is a subset of $\mathcal{Q}>0$ correlations.

For the incompatible measurements: 
$A_0=\sigma_x$, $A_1=\sigma_y$, $B_0=\sigma_x$ and $B_1=\sigma_y$,  the Bell state, 
\be
\ket{\psi^+}=\frac{1}{\sqrt{2}}(\ket{00}+\ket{11}),
\ee
does not give rise to Bell nonlocality, 
however, it gives rise to Peres' version of KS paradox \cite{Peres}. For this choice of measurements, 
the Bell state gives rise to the following Mermin box,
\begin{equation}
P_M = \left( \begin{array}{cccc}
\half & 0 & 0 & \half \\
\qua & \qua & \qua & \qua \\
\qua & \qua & \qua & \qua \\
0 & \half & \half &  0
\end{array} \right).
\label{eq:merminbox}
\end{equation}
Yet, this correlation is contextual in the sense that it exhibits logical contradiction with noncontextual-realism, i.e., the outcomes 
does not admit a non-contextual-realist value
assignment as follows: The first and fourth rows in Eq. (\ref{eq:merminbox}) imply that the outcomes of $A_0B_0=1$ and $A_1B_1=-1$; if the outcomes are predetermined
noncontextually, it should satisfy, $A_0B_1A_1B_0=-1$, but this contradicts the  rows $2$ and $3$ because there is  a nonzero  probability  
for $A_0B_1=A_1B_0=1$ or  $A_0B_1=A_1B_0=-1$. We shall refer Mermin box as a contextual box when it violates an EPR-steering inequality.  
The measurements that gives rise to maximal violation of a Bell-CHSH inequality
(the Tsirelson bound) does not give rise to the violation of an EPR-steering inequality and vice versa due to the monogamy between nonlocality and contextuality,
\be
\mathcal{G}+2\mathcal{Q}\le4. \label{mGQ}
\ee
For general incompatible measurements, quantum correlations arising from the entangled states violate a Bell-CHSH inequality and an EPR-steering inequality simultaneously,
however, the trade-off exists between the amount of nonlocality and the amount of contextuality as given by the above relation.
This trade-off relation is analogous to the trade-off relationship between KCBS inequality and
Bell-CHSH inequality derived in Ref. \cite{KCK} in the sense that both reveals monogamy between contextuality and nonlocality.

Since the correlations arising from the product states, $\rho_{AB}=\rho_A \otimes \rho_B$, factorize as the product of marginals corresponding to Alice and Bob,
they have $\mathcal{T}=0$. The set of $\mathcal{T}=0$ boxes is a subset of the set of boxes with $\mathcal{G}=\mathcal{Q}=0$, $\left\{P^{\mathcal{G}=0}_{\mathcal{Q}=0}\right\}$.
Any nonproduct state can give rise to nonzero $\mathcal{T}$.
The set of $\mathcal{G}>0$ boxes and $\mathcal{Q}>0$ boxes are the subset of $\mathcal{T}>0$ boxes.
\subsection{Maximally entangled state}
Define the measurement settings:
${\vec{a}_0}=\hat{x}$, ${\vec{a}_1}=\hat{y}$,
${\vec{b}_0} =\sqrt{p}\hat{x}-\sqrt{1-p}\hat{y}$ and ${\vec{b}_1}=\sqrt{1-p}\hat{x}+\sqrt{p}\hat{y}$, where $\frac{1}{2}\le p \le1$.
For this settings, the correlations arising from the Bell state, $\ket{\psi^+}$, 
can be decomposed in a convex mixture of a PR-box, a contextual box, and white noise as,
\be
P=\mathcal{G}'P^{000}_{PR}+\mathcal{Q}'\left(\frac{P^{000}_{PR}+P^{110}_{PR}}{2}\right)+(1-\mathcal{G}'-\mathcal{Q}')P_N, \label{meb1}
\ee
where $\mathcal{G}'=\sqrt{1-p}$
and $\mathcal{Q}'=\sqrt{p}-\sqrt{1-p}$. These correlations violate the Bell-CHSH inequality i.e., $\mathcal{B}_{00}=2\left(\sqrt{p}+\sqrt{1-p}\right)>2$ if
$p\ne1$ and violate the EPR-steering inequality i.e., $\mathcal{M}_{11}=2\sqrt{p}>\sqrt{2}$ if $p\ne\frac{1}{2}$.
Since the correlation maximally violates the Bell-CHSH inequality when $p=\frac{1}{2}$, each pair in the ensemble of two-qubits exhibits
nonlocality for the chosen measurements \cite{EPR2}. When $p$ is increased from $\frac{1}{2}$ to $1$, the number of pairs exhibiting nonlocality decreases and goes to zero
when $p=1$. However, the correlation maximally violates the EPR-steering inequality when $p=1$ which implies that each pair in the ensemble of two-qubits
exhibits local contextuality as the measurements gives rise to the bipartite version of the GHZ paradox \cite{UNLH,GHZ}.
If $p$ is decreased from $1$ to $\frac{1}{2}$, the number of pairs exhibiting local contextuality decreases and the number of pairs
exhibiting nonlocality increases as the violation EPR-steering inequality decreases and the violation of Bell-CHSH inequality increases.
The total amount of correlations in the JPD given in Eq. (\ref{meb1}) is quantified by,
\be
\mathcal{T}=2\left(\sqrt{p}+\sqrt{1-p}\right)=\mathcal{G}+\mathcal{Q}=\left\{\begin{array}{lr}
\mathcal{G} \quad \text{when} \quad p=\frac{1}{2}\\
\mathcal{Q} \quad \text{when} \quad p=1\\
\end{array},
\right.
\ee
which implies that the JPD does not have the component of a classically correlated box.
When the chosen measurements are performed on the ensemble of two-qubits,
each pair in a fraction of the ensemble quantified by $\mathcal{Q}'$ behaves contextually, each pair in a fraction of the ensemble quantified by $\sqrt{2}\mathcal{G}'$
behaves nonlocally and the remaining fraction behaves as noise.
\subsection{Schmidt states}
Consider the correlations arising from the Schmidt states:
\be
\rho_{S}\!=\!\frac{1}{4}\!\left(\!\openone\!\otimes\!\openone\!+\!c(\!\sigma_z\!\otimes\!\openone\!
+\!\openone\!\otimes\!\sigma_z\!)\!+\!s(\!\sigma_x\!\otimes\!\sigma_x\!-\!\sigma_y\!\otimes\!\sigma_y\!)
\!+\!\sigma_z\!\otimes\!\sigma_z\!\right)\!, \label{Schmidt1}
\ee
where $c=\cos2\theta$, $s=\sin2\theta$ and $0\le\theta\le\frac{\pi}{4}$.
The correlation can be decomposed into a convex mixture of a correlation arising from the maximally entangled state and a correlation arising from
a classically correlated state,
\be
P=sP\left(\ket{\psi^+}\right)+(1-s)P\left(\rho_{CC}\right),
\ee
where $P\left(\ket{\psi^+}\right)$ is a correlation arising from the maximally entangled state and
$P\left(\rho_{CC}\right)$ is a correlation arising from the classically correlated state,
\be
\rho=\frac{1}{2}\left(1+\frac{c}{1-s}\right)\ket{00}\bra{00}+\frac{1}{2}\left(1-\frac{c}{1-s}\right)\ket{11}\bra{11},\nonumber
\ee
which is not a physical state.
\subsubsection{Bell-Schmidt box}
\textit{(i) Maximally mixed marginals correlations:-}
The Schmidt states give to the noisy PR-box:
\ba
P=s\left[\frac{1}{\sqrt{2}} P^{000}_{PR}+\left(1-\frac{1}{\sqrt{2}}\right)P_N\right]+(1-s)P_N, \label{BSb1}
\ea
for the measurement settings:
${\vec{a}_0}=\hat{x}$, ${\vec{a}_1}=\hat{y}$,
${\vec{b}_0} =\frac{1}{\sqrt{2}}(\hat{x}-\hat{y})$ and ${\vec{b}_1}=\frac{1}{\sqrt{2}}(\hat{x}+\hat{y})$.
These correlations violate the Bell-CHSH inequality i.e., $\mathcal{B}_{00}=2\sqrt{2}s>2$ if $s>\frac{1}{\sqrt{2}}$.
Since the local box in Eq. (\ref{BSb1}) gives $\mathcal{B}_{00}=0$,
violation of a Bell-CHSH inequality is not achieved by entanglement when $0<p\le\frac{1}{\sqrt{2}}$.
The correlations have,
\be
\mathcal{T}=\mathcal{G}=2\sqrt{2}s,
\ee
which implies that both $\mathcal{T}$ and $\mathcal{G}$ measure the distance of the box from white noise. For this measurement settings, a fraction of the ensemble quantified
by $s$ exhibits nonlocality purely and the remaining fraction behaves as white noise.

\textit{(i) Nonmaximally mixed marginals correlations:-}
For the Popescu-Rohrlich measurement settings \cite{PRQB}: ${\vec{a}_0}=\hat{z}$, ${\vec{a}_1}=\hat{x}$,
${\vec{b}_0}=\cos t\hat{z}+\sin t\hat{x}$ and ${\vec{b}_1}=\cos t\hat{z}-\sin t\hat{x}$,
where $\cos t=\frac{1}{\sqrt{1+s^2}}$, the correlations can be decomposed into PR-box and a local box with nonmaximally mixed marginals and $\mathcal{G}=0$,
\ba
P&=&s^2\left[\frac{1}{\sqrt{1+s^2}}P_{PR}+\left(1-\frac{1}{\sqrt{1+s^2}}\right)P_N\right]\nonumber\\
&&+\left(1-s^2\right)P^{\mathcal{G}=0}_L(\rho).\label{PRQ1}
\ea
Here $P^{\mathcal{G}=0}_L(\rho)$ is a distribution arising from the product state,
\be
\rho=\rho_A \otimes \rho_B, \label{ScPr1}
\ee
where
\be
\rho_A=\rho_B=\frac{1}{2}\left[1+\frac{c}{1-s^2}\right]\ket{0}\bra{0}+\frac{1}{2}\left[1-\frac{c}{1-s^2}\right]\ketbra{1}{1}\nonumber.
\ee
The $\mathcal{G}=0$ box in this decomposition is responsible for the violation of the Bell inequality when $0< s\le\frac{1}{\sqrt{2}}$;
as the box is already lifted to the face of the Bell polytope when $s=0$, any tiny amount of entanglement can give rise to the violation of the Bell-CHSH inequality i.e., $\mathcal{B}_{00}=2\sqrt{1+s^2}>2$ if $s>0$.
The correlations have,
\be
\mathcal{T}=\mathcal{G}=\frac{4s^2}{\sqrt{1+s^2}}.
\ee
That is both $\mathcal{G}$ and $\mathcal{T}$ measure the distance of the box from the local box in the canonical decomposition as
$P^{\mathcal{G}=0}_L$ in Eq. (\ref{PRQ1}) is a product box.
Despite the correlations in Eq. (\ref{BSb1}) do not violate the Bell-CHSH inequality when $0< s\le\frac{1}{\sqrt{2}}$, they have more nonlocality
than the correlations in Eq. (\ref{PRQ1}) as the former correlations have more irreducible PR-box component than the latter correlations.
When the Popescu-Rohrlich measurements
are performed on the Schmidt state, a fraction of the ensemble quantified by $\frac{\sqrt{2}s^2}{\sqrt{1+s^2}}$ exhibits nonlocality purely and the pairs in the
remaining fraction are
uncorrelated.

For the settings ${\vec{a}_0}=\hat{z}$, ${\vec{a}_1}=\hat{x}$,
${\vec{b}_0}=\frac{1}{\sqrt{2}}(\hat{z}+\hat{x})$ and ${\vec{b}_1}=\frac{1}{\sqrt{2}}(\hat{z}-\hat{x})$, the correlations can be decomposed as follows,
\be
P=s\left[\frac{1}{\sqrt{2}}P_{PR}+\left(1-\frac{1}{\sqrt{2}}\right)P_N\right]+(1-s)P^{\mathcal{G}=0}_L(\rho),\label{ZSb1}
\ee
where $P^{\mathcal{G}=0}_L(\rho)$ arises from the correlated state,
\be
\rho=\frac{1}{2}\left(1+\frac{c}{1-s}\right)\ket{00}\bra{00}+\frac{1}{2}\left(1-\frac{c}{1-s}\right)\ket{11}\bra{11}. \nonumber
\ee
The difference between this box and the box in Eq. (\ref{BSb1}) is that the $\mathcal{G}=0$ box in Eq. (\ref{ZSb1}) is not a product box.
The correlations violate the Bell inequality, i.e.,
$\mathcal{B}_{00}=\sqrt{2}(1+s)>2$ if $s>\sqrt{2}-1$; since the $\mathcal{G}=0$ box in Eq. (\ref{ZSb1}) is nonproduct, more entangled states violate the Bell inequality compared
to the correlations in Eq. (\ref{BSb1}).
The correlations have $\mathcal{G}=2\sqrt{2}s$ and
$\mathcal{T}=\sqrt{2}s(1+s)$. Since the JPD has the component of a classically correlated box, it has $\mathcal{T}\ne\mathcal{G}$.
The classical correlations are quantified by,
\be
\mathcal{C}=\mathcal{G}-\mathcal{T}=\sqrt{2}s(1-s)>0 \quad \text{when} \quad s\ne0,1.
\ee
Thus, a fraction of the ensemble given by $s$ exhibits nonlocality purely, and the pairs in the remaining fraction exhibit classical correlations.
\subsubsection{Mermin-Schmidt box}
(i) For the settings
${\vec{a}_0}=\hat{x}$, ${\vec{a}_1}=-\hat{y}$,
${\vec{b}_0}=\hat{y}$ and ${\vec{b}_1}=\hat{x}$, the Schmidt states give rise to the noisy Mermin-box:
\be
P=s \left(\frac{P^{000}_{PR}+P^{111}_{PR}}{2}\right)+(1-s) P_N, \label{MSb1}
\ee
which violates the EPR-steering inequality i.e., $\mathcal{M}_{00}=2s>\sqrt{2}$ if $s>\frac{1}{\sqrt{2}}$.
Grudka \etal \cite{Grudkaetal} have quantified contextuality of isotropic XOR-boxes and it has been 
found that an isotropic XOR box is contextual only when the component of the XOR box 
is larger than a certain amount; similarly, we observe that the isotropic Mermin box in Eq. (\ref{MSb1}) 
can exhibit EPR-steering only when the Mermin box component 
is larger than a certain amount. Thus, analogous to the statement that Bell discord and nonlocality are inequivalent, 
we have the observation that Mermin discord is not equivalent to contextuality of quantum correlations. 
The local correlations in Eq. (\ref{MSb1}) have,
\be
\mathcal{T}=\mathcal{Q}=2s,
\ee
which implies that a fraction of the ensemble quantified by $s$ behaves contextually, and the remaining fraction behaves as white noise.

(ii) For the settings
${\vec{a}_0}=\frac{1}{\sqrt{2}}(\hat{z}+\hat{x})$, ${\vec{a}_1}=\frac{1}{\sqrt{2}}(\hat{z}-\hat{x})$,
${\vec{b}_0}=\cos t\hat{z}-\sin t\hat{x}$, and ${\vec{b}_1}=\cos t\hat{z}+\sin t\hat{x}$,
where $\cos t=\frac{1}{\sqrt{1+s^2} }$, the correlations can be decomposed in a convex mixture of a Mermin box and a local box with $\mathcal{Q}=0$
and nonmaximally mixed marginals,
\ba
P&=&s^2\left[\frac{\sqrt{2}}{\sqrt{1+s^2}}\left(\frac{P^{000}_{PR}+P^{111}_{PR}}{2}\right)+\left(1-\frac{\sqrt{2}}{\sqrt{1+s^2}}\right)P_N\right]\nonumber\\
&&+\left(1-s^2\right)P_{\mathcal{Q}=0}(\rho), \label{CSB1}
\ea
where $P_{\mathcal{Q}=0}(\rho)$ is a local box arising from the product state in Eq. (\ref{ScPr1}).
Since the $\mathcal{Q}=0$ box in this decomposition gives the local bound when $s=0$, the box violates the EPR-steering inequality, i.e.,
$\mathcal{M}_{00}=\sqrt{2}\sqrt{1+s^2}>\sqrt{2}$ if $s>0$. The box has,
\be
\mathcal{T}=\mathcal{Q}=\frac{2\sqrt{2}s^2}{\sqrt{1+s^2}}.
\ee
Since the $\mathcal{Q}=0$ box in Eq. (\ref{CSB1}) is a product box, the amount of total correlations equals to Mermin discord. Notice that for a
given amount of entanglement, the correlations in Eq. (\ref{MSb1})
have more Mermin discord than that for the correlations
in Eq. (\ref{CSB1}) which implies that the latter correlations have less amount of contextuality than the former correlations.

For the settings
${\vec{a}_0}=\frac{1}{\sqrt{2}}(\hat{z}+\hat{x})$, ${\vec{a}_1}=\frac{1}{\sqrt{2}}(\hat{z}-\hat{x})$,
${\vec{b}_0}=\frac{1}{\sqrt{2}}(\hat{z}-\hat{x})$, and ${\vec{b}_1}=\frac{1}{\sqrt{2}}(\hat{z}+\hat{x})$,
the Schmidt states give rise to the following correlation,
\be
P=s\left(\frac{P^{000}_{PR}+P^{111}_{PR}}{2}\right)+(1-s)P^{\mathcal{G}=0}_L(\rho),\label{CSB2}
\ee
where $P^{\mathcal{G}=0}_L(\rho)$ arises from the correlated state,
\be
\rho=\frac{1}{2}\left(1+\frac{c}{1-s}\right)\ket{00}\bra{00}+\frac{1}{2}\left(1-\frac{c}{1-s}\right)\ket{11}\bra{11}.\nonumber
\ee
This box violates the EPR-steering inequality i.e., $\mathcal{M}_{00}=(1+s)>\sqrt{2}$ if $s>\sqrt{2}-1$ which is larger violation than that for the box in Eq. (\ref{MSb1}).
The correlations have $\mathcal{T}=s(1+s)$ and $\mathcal{Q}=2s$ which implies
that the classical correlations in the JPD is quantified as follows,
\be
\mathcal{C}=\mathcal{Q}-\mathcal{T}=s(1-s)>0 \quad \text{when} \quad s\ne0, 1.
\ee
\subsubsection{Bell-Mermin-Schmidt box}
(i) The correlations can be decomposed into a convex mixture of a PR-box, a Mermin-box, and white noise:
\ba
P&=&\left(1-q-g\right) P_N+ \frac{q}{2}\left(P^{000}_{PR}+P^{11\gamma}_{PR}\right)\nonumber\\
&+&g\left[\frac{1}{\sqrt{2}}P^{000}_{PR}+\left(1-\frac{1}{\sqrt{2}}\right)P_N\right], \label{BMSb1}
\ea
for the settings:
${\vec{a}_0}=s\hat{x}+c\hat{y}$,
${\vec{a}_1}=c\hat{x}-s\hat{y}$,
${\vec{b}_0}=\frac{1}{\sqrt{2}}(\hat{x}+\hat{y})$ and
${\vec{b}_1}=\frac{1}{\sqrt{2}}(\hat{x}-\hat{y})$, where
$q=\frac{s\left||c+s|-|c-s|\right|}{\sqrt{2}}$ and $g=|s(s-c)|$.
This box gives,
\be
\mathcal{G}=2\sqrt{2}s|s-c|>0 \quad \text{except when} \quad \theta \ne0, \frac{\pi}{8}, \nonumber
\ee
\ba
\mathcal{Q}&=&s\sqrt{2}\Big||c+s|-|c-s|\Big|>0 \quad \text{except when} \quad \theta \ne0, \frac{\pi}{4}\nonumber\\
&=&\left\{\begin{array}{lr}
2\sqrt{2}s^2 \quad \text{when} \quad c>s\\
2\sqrt{2}cs \quad \text{when} \quad s>c\\
\end{array}
\right.\nonumber
\ea
and
\ba
\mathcal{T}&=&\left\{
\begin{array}{lr}
2\sqrt{2}s^2 \quad \text{when} \quad s>c\\
2\sqrt{2}cs \quad \text{when} \quad c>s\\
\end{array}
\right.\nonumber\\
&=&\mathcal{G}+\mathcal{Q},
\ea
which implies that the box has nonclassical correlations purely as the box does not have classical correlation component;
a fraction of the ensemble quantified by $g$ exhibits nonlocality wholly, a fraction of the ensemble quantified by $q$ exhibits contextuality and
the remaining fraction behaves as white noise.

(ii) For the settings: ${\vec{a}_0}=c\hat{x}+s\hat{z}$,
${\vec{a}_1}=s\hat{x}-c\hat{z}$,
${\vec{b}_0}=\frac{1}{\sqrt{2}}(\hat{x}+\hat{z})$ and
${\vec{b}_1}=\frac{1}{\sqrt{2}}(-\hat{x}+\hat{z})$, the correlations have the same amount of $\mathcal{G}$ and $\mathcal{Q}$ as for the correlations in Eq. (\ref{BMSb1}),
however, they have a different amount of $\mathcal{T}$ which is given as follows,
\ba
\mathcal{T}&=&\left\{
\begin{array}{lr}
\sqrt{2}s^2(1+s) \quad \text{when} \quad s>c\\
\sqrt{2}cs(1+s) \quad \text{when} \quad c>s.\\
\end{array}
\right.\nonumber
\ea
Thus, the correlations arising from the latter settings (ii) have the decomposition that has the same amount of PR-box and Mermin box components as for the former settings (i) except that white noise in Eq. (\ref{BMSb1})
is replaced by the classically correlated box.
The classical correlations are quantified by,
\ba
\mathcal{C}&=&\mathcal{G}+\mathcal{Q}-\mathcal{T}\nonumber\\
&=&\left\{
\begin{array}{lr}
\sqrt{2}s^2(1-s) \quad \text{when} \quad s>c\\
\sqrt{2}cs(1-s) \quad \text{when} \quad c>s.\\
\end{array}
\right.\nonumber
\ea
\subsection{Werner states} Consider the correlations arising from the Werner states \cite{Werner},
\be
\rho_W=p\ketbra{\psi^+}{\psi^+}+(1-p)\frac{\openone}{4}.
\ee
The Werner states are entangled if $p>\frac{1}{3}$ and have nonzero quantum discord if $p>0$ \cite{OZ}.
Since the Werner states have the component of an irreducible entangled state if $p>0$, they can give rise to nonclassical correlations if $p>0$.
As the Werner states can only give rise to maximally mixed marginals correlations,
the nonclassical correlations arising from the Werner states cannot have the component of classical correlation.
\subsubsection{Bell-Werner box}
The correlations have the following decomposition,
\be
P=p\left[\frac{1}{\sqrt{2}} P^{000}_{PR}+\left(1-\frac{1}{\sqrt{2}}\right)P_N\right]+(1-p)P_N.
\ee
for the settings that correspond to the correlation in Eq. (\ref{BSb1}). These correlations have,
\be
\mathcal{T}=\mathcal{G}=2\sqrt{2}p.
\ee
\subsubsection{Mermin-Werner box}
The Werner states give rise to the noisy Mermin box,
\be
P=(1-p) P_N+ p\left(\frac{P^{000}_{PR}+P^{111}_{PR}}{2}\right),
\ee
for the settings corresponding to the correlation in Eq. (\ref{MSb1}).
These correlations have,
\be
\mathcal{T}=\mathcal{Q}=2p.
\ee
\subsubsection{Bell-Mermin-Werner box}
Th correlations admit the following decomposition:
\be
P=(1-q-r) P_N+ \frac{q}{2}\left(P^{000}_{PR}+P^{11\gamma}_{PR}\right)+|r|P^{000}_{PR}, \label{BMWb1}
\ee
for the settings:
${\vec{a}_0}=\sqrt{p}\hat{x}+\sqrt{1-p}\hat{y}$,
${\vec{a}_1}=\sqrt{1-p}\hat{x}-\sqrt{p}\hat{y}$,
${\vec{b}_0}=\frac{1}{\sqrt{2}}(\hat{x}+\hat{y})$ and
${\vec{b}_1}=\frac{1}{\sqrt{2}}(\hat{x}-\hat{y})$, where
$q=p\sqrt{2(1-p)}$ and $r=\frac{1}{\sqrt{2}}p\left(\sqrt{p}-\sqrt{1-p}\right)$. The box gives
\ba
\mathcal{G}&=&2\sqrt{2}p|\sqrt{p}-\sqrt{1-p}|>0 \quad \text{except when} \quad p\ne0, \frac{1}{2} \nonumber,\\
\mathcal{Q}&=&\sqrt{2}p\Big|\sqrt{p}+\sqrt{1-p}-\left|\sqrt{p}-\sqrt{1-p}\right|\Big|\nonumber\\
&&>0 \quad \text{except when} \quad p\ne0, 1\nonumber\\
&=&\left\{\begin{array}{lr}
2p\sqrt{2p}\quad \text{when} \quad 0\le p\le\frac{1}{2}\nonumber\\
2p\sqrt{2(1-p)} \quad \text{when} \quad\frac{1}{2}\le p\le1\nonumber
\end{array}
\right.
\ea
and
\ba
\mathcal{T}&=&\left\{\begin{array}{lr}
2p\sqrt{2(1-p)}\quad \text{when} \quad 0\le p\le\frac{1}{2}\nonumber\\
2p\sqrt{2p}\quad \text{when} \quad\frac{1}{2}\le p\le1\nonumber
\end{array}
\right.\\
&=&\mathcal{G}+\mathcal{Q}.
\ea
\subsection{Mixture of maximally entangled state with colored noise}
Consider the correlations arising from the mixture of the Bell state and the classically correlated state,
\be
\rho= p \ketbra{\psi^+}{\psi^+}+(1-p) \rho_{CC}, \label{rCC1}
\ee
where $\rho_{CC}=\frac{1}{2}(\ketbra{00}{00}+\ketbra{11}{11})$.
Only when suitable incompatible measurements that lie in the \plane{x}{z} are performed on these states, correlations arising from these states have different nonclassical behavior than the Werner states.
\subsubsection{Bell discordant box}
For the settings that give rise to the noisy PR-box in Eq. (\ref{BSb1}),
\be
\mathcal{T}=\mathcal{G}=2\sqrt{2}p.
\ee

The measurement settings,
${\vec{a}_0}=\hat{z}$, ${\vec{a}_1}=\hat{x}$,
${\vec{b}_0}=\cos t\hat{z}+\sin t\hat{x}$ and ${\vec{b}_1}=\cos t\hat{z}-\sin t\hat{x}$, where $\cos t=\frac{1}{\sqrt{1+p^2}}$, gives rise to the
violation of the Bell inequality, $\mathcal{B}_{00}=2\sqrt{1+p^2}>2$, if $p>0$. Since the box lies at the face of the Bell polytope when $p=0$, any tiny amount of
entanglement gives rise to the violation Bell-CHSH inequality.
The correlations have $\mathcal{G}=\frac{4p^2}{\sqrt{1+p^2}}$ and $\mathcal{T}=2\sqrt{1+p^2}$ which implies that the classical correlations is quantified as follows,
\be
\mathcal{C}=\mathcal{T}-\mathcal{G}=\frac{2(1-p^2)}{\sqrt{1+p^2}}.
\ee
\subsubsection{Mermin discordant box}
The measurement settings,
${\vec{a}_0}=\frac{1}{\sqrt{2}}(\hat{z}+\hat{x})$, ${\vec{a}_1}=\frac{1}{\sqrt{2}}(\hat{z}-\hat{x})$,
${\vec{b}_0}=\cos t\hat{z}+\sin t\hat{x}$ and ${\vec{b}_1}=\cos t\hat{z}-\sin t\hat{x}$, where $\cos t=\frac{1}{\sqrt{1+p^2}}$, gives rise to the
violation of the EPR-steering inequality, $\mathcal{M}_{00}=\sqrt{2}\sqrt{1+p^2}>\sqrt{2}$, if $p>0$. The correlations have
$\mathcal{Q}=\frac{2\sqrt{2}p^2}{\sqrt{1+p^2}}$ and $\mathcal{T}=\sqrt{2}\sqrt{1+p^2}$ which implies that the amount of classical correlations in the JPD is
quantified as follows,
\be
\mathcal{C}=\mathcal{T}-\mathcal{Q}=\frac{\sqrt{2}(1-p^2)}{\sqrt{1+p^2}}.
\ee
\subsubsection{Bell-Mermin discordant box}
For the measurement settings
${\vec{a}_0}=\sqrt{p}\hat{z}+\sqrt{1-p}\hat{x}$,
${\vec{a}_1}=\sqrt{1-p}\hat{z}-\sqrt{p}\hat{x}$,
${\vec{b}_0}=\frac{1}{\sqrt{2}}(\hat{z}+\hat{x})$ and
${\vec{b}_1}=\frac{1}{\sqrt{2}}(\hat{z}-\hat{x})$,
the correlations have the same amount of Bell discord and Mermin discord as for the correlations in Eq. (\ref{BMWb1}), however, the box has different amount of
total correlations,
\ba
\mathcal{T}&=&\left\{\begin{array}{lr}
(1+p)\sqrt{2(1-p)}\quad \text{when} \quad 0\le p\le\frac{1}{2}\nonumber\\
(1+p)\sqrt{2p}\quad \text{when} \quad\frac{1}{2}\le p\le1\nonumber
\end{array}
\right.\\
&>&\mathcal{G}+\mathcal{Q},
\ea
because of the classically correlated noise. The amount of classical correlations is given by,
\ba
\mathcal{C}&=&\mathcal{T}-\mathcal{G}-\mathcal{Q}\nonumber\\
&=&\left\{
\begin{array}{lr}
(1-p)\sqrt{2(1-p)}\quad \text{when} \quad 0\le p\le\frac{1}{2}\nonumber\\
(1-p)\sqrt{2p}\quad \text{when} \quad\frac{1}{2}\le p\le1.\nonumber
\end{array}
\right.\nonumber
\ea

\section{Conclusion}
We have interpreted Bell discord and Mermin discord as distance measures for nonlocality and contextuality which led us to construct the distance measure, $\mathcal{T}$, which is zero iff the box is a product. We have discussed the problem of separating the total correlations in the quantum boxes into nonlocality, contextuality and classical correlations using these
three measures. We have studied the additivity relation for quantum correlations in two-qubit systems.
The distance measure interpretation has allowed us to understand why some entangled states cannot lead to the violation of a Bell-CHSH inequality.

\chapter{Isolating genuine nonclassicality in tripartite quantum correlations}
\label{Ch5}

\section*{Abstract}
We introduce the measures, Svetlichny and Mermin discord, to characterize the presence of genuine nonclassicality 
in tripartite quantum correlations. We show that
any correlation in the Svetlichny-box polytope which is a subpolytope of full nonsignaling polytope admits a three-way decomposition 
using these measures of nonclassicality. This decomposition allows us to isolate the origin of nonclassicality into 
three disjoint sources: a Svetlichny box, a maximally two-way nonlocal box, and a classical correlation. 
Svetlichny and Mermin discord quantify three-way nonlocality and three-way contextuality of quantum correlations with respect to 
the three-way decomposition in that they reveal the presence of incompatible measurements. 
A third measure is introduced to separate the total correlations in a quantum joint probability distribution into a purely nonclassical and a classical part.

\section{Introduction}
Correlations between outcomes of local measurements on entangled states are in general incompatible 
with local hidden variable (LHV) theories \cite{bell64}. In the multipartite scenario, distinct types of LHV theories exist \cite{BNL}. 
In the tripartite case, Svetlichny showed that quantum correlations can have genuine nonlocality which cannot be explained by 
hybrid local-nonlocal hidden variable (HLHV) theory \cite{SI}. Just like bipartite quantum correlations cannot violate a Bell-CHSH inequality
more than the Tsirelson bound \cite{BNL}, multipartite quantum correlations cannot violate a Svetlichny inequality more than a certain bound \cite{multi}. 
Quantum theory is only a subclass of a multipartite generalized nonsignaling theory 
that predicts extremal genuine nonlocality \cite{MAG06}. Generalized nonsignaling theories have been under investigation to find out what physical principles 
exactly captures quantum correlations in addition to nonsignaling principle and nonlocality  \cite{PR,BNL}. 
In Ref. \cite{LOmulti}, it was shown that a complete characterization of quantum correlations requires genuine multipartite principles.  
Genuine multipartite nonlocality is a resource for multipartite quantum information tasks \cite{DQKD}. 
Thus, characterizing and quantifying multipartite correlations using genuine multipartite concepts is of interest to both foundations and quantum information.

Georgi \etal \cite{GTC} introduced a notion of genuine discord to quantify tripartite nonclassicality in quantum states 
that cannot be reduced to the correlations in subsystems.
In this work, we introduce two notions of genuine discord for tripartite NS boxes.
We characterize genuine nonclassicality of tripartite quantum correlations by using two binary inputs and two binary outputs nonsignaling (NS) polytope \cite{Pironioetal}. 
We define Svetlichny and Mermin discord using Svetlichny and Mermin operators which put an upper bound on the correlations under the constraints 
of the HLHV model \cite{SI} and fully LHV model \cite{mermin}. 
Analogous to genuine quantum discord \cite{GTC}, these measures detect the presence of genuine nonclassicality in Svetlichny-local correlations as well. 
We obtain a $3$-decomposition that any correlation in the Svetlichny-box polytope which is a subpolytope of full NS polytope 
can be written as a convex combination of a Svetlichny-box, a maximally three-way contextual box, 
and a box which does not have Svetlichny and Mermin discord.
Svetlichny and Mermin discord quantify the components of Svetlichny-box and three-way contextual box respectively in the $3$-decomposition.
Thus, Svetlichny and Mermin discord quantify genuine nonclassicality of Svetlichny-local quantum correlations originating from 
Svetlichny nonlocality and three-way contextuality respectively.
We identify the set of genuinely nonclassical biseparable and separable three-qubit states using Svetlichny and Mermin discord.

This chapter is organized as follows. In Sec. \ref{prl}, we review the tripartite nonsignaling polytope with two-inputs and two-outputs. 
In Sec. \ref{Spoly}, we define Svetlichny-box polytope and the two measures, Svetlichny and Mermin
discord. In this section, we find the canonical decomposition of any correlation in the Svetlichny-box polytope.
In Sec. \ref{QC}, we characterize quantum correlations arising from $2\times 2 \times 2$ states.
We present conclusions in Sec. \ref{conc}.
\section{Preliminaries}\label{prl}
Consider the Bell scenario in which three spatially separated parties, Alice, Bob and Charlie, share a tripartite box which has two binary inputs and two binary outputs per party. 
The correlation between the outputs is captured by the set of joint probability distributions (JPDs), $P(a_m,b_n,c_o|A_i,B_j,C_k)$, here $m, n, o, i, j, k \in \{0,1\}$. 
In addition to positivity and normalization, the JPDs characterizing a given box satisfy nonsignaling constraints:
\be
\sum_m P(a_m,b_n,c_o|A_i,B_j,C_k)=P(b_n,c_o|B_j,C_k) \quad \forall n,o,i,j,k, 
\ee
and the permutations. The set of such NS boxes forms a convex polytope, 
$\mathcal{N}$, in a $26$ dimensional space \cite{Barrett}. Any box that 
belongs to this polytope can be uniquely described by $6$ single-party, $12$ two-party and $8$ three-party expectations as follows,
\ba
&&P(a_m,b_n,c_o|A_i,B_j,C_k)\nonumber \\
&&=\frac{1}{8}[1+(-1)^m\braket{A_i}+(-1)^n\braket{B_j}+(-1)^o\braket{C_k}+(-1)^{m\oplus n}\braket{A_iB_j}\nonumber\\
&&+(-1)^{m\oplus o}\braket{A_iC_k}+(-1)^{n\oplus o}\braket{B_jC_k}+(-1)^{m\oplus n\oplus o}\braket{A_iB_jC_k}].
\ea
Pironio \etal \cite{Pironioetal} found that $\mathcal{N}$ has $53856$ extremal boxes (vertices) which belong to $46$ classes. The vertices in each class are equivalent 
in that they can be converted into each other through local reversible operations (LRO), which include local relabeling of the inputs
and outputs \cite{Barrett}. These $46$ classes of vertices 
can be classified into local, two-way nonlocal and $44$ classes of three-way nonlocal vertices. 

Two-way local polytope, $\mathcal{L}_2$, is a convex subpolytope of $\mathcal{N}$ whose vertices are the $64$ local vertices and the $48$ two-way nonlocal vertices.
The local vertices are fully deterministic boxes given as follows,
\be
P^{\alpha\beta\gamma\epsilon\zeta\eta}_D(a_m,b_n,c_o|A_i,B_j,C_k)=\left\{
\begin{array}{lr}
1, & m=\alpha i\oplus \beta\\
   & n=\gamma j\oplus \epsilon \\
   & o=\zeta k \oplus \eta\\
0 , & \text{otherwise}.\\
\end{array}
\right.  \label{DB} \ee 
Here $\alpha,\beta,\gamma,\epsilon, \zeta, \eta\in \{0,1\}$  and $\oplus$ denotes addition modulo $2$. 
The two-way nonlocal vertices are the bipartite PR-boxes: there are $16$ vertices in which PR-box is shared between $A$ and $B$,
\begin{align}
P^{\alpha\beta\gamma\epsilon}_{12}(a_m,b_n,c_o|A_i,B_j,C_k)
=\left\{
\begin{array}{lr}
\frac{1}{2}, & m\oplus n=i\cdot j \oplus \alpha i\oplus \beta j \oplus \gamma \quad \& \quad o=\epsilon k\\ 
0 , & \text{otherwise},\\
\end{array}
\right. \label{PR}
\end{align}
and the other $32$ two-way nonlocal vertices, $P^{\alpha\beta\gamma\epsilon}_{13}$ and $P^{\alpha\beta\gamma\epsilon}_{23}$, 
in which PR-box is shared by $AC$ and $BC$ are similarly defined.
$\mathcal{L}_2$ can be divided into a two-way nonlocal region and Bell-local polytope, $\mathcal{L}$, whose vertices are the deterministic
boxes given in Eq. (\ref{DB}).
All correlations in $\mathcal{L}$ can be explained by the LHV theory, i.e., the correlations can be decomposed as follows,
\ba
P(a_m,b_n,c_o|A_i,B_j,C_k)=\sum_\lambda p_\lambda P_\lambda(a_m|A_i)P_\lambda(b_n|B_j)P_\lambda(c_k|C_k), \label{LHV}
\ea
whereas all correlations in the two-way nonlocal region can be decomposed into the
hybrid local-nonlocal form in which arbitrary nonlocality consistent with nonsignaling principle is allowed between
two parties in the different bipartitions,
\begin{align}
P(a_m,b_n,c_o|A_i,B_j,C_k)=p_1\sum_\lambda p_\lambda P_\lambda^{AB|C}+p_2\sum_\lambda q_\lambda P_\lambda^{AC|B}+p_3\sum_\lambda r_\lambda P_\lambda^{A|BC}, \label{HLNL}
\end{align}
where $P_\lambda^{AB|C}=P_\lambda(a_m,b_n|A_i,B_j)P_\lambda(c_o|C_k)$, and, where $P_\lambda^{AC|B}$ and
$P_\lambda^{A|BC}$  are  similarly  defined. 
  
Bell-nonlocal correlations that do not admit the decomposition in Eq. (\ref{HLNL}) exhibit genuine three-way nonlocality. 
Three-way nonlocal correlations violate a facet inequality corresponding to $\mathcal{L}_2$. Bancal \etal \cite{Banceletal} 
found that $\mathcal{L}_2$ has $185$ classes of facet inequalities. 
In this work, we consider two classes of $3$-way nonlocal vertices that belong to the classes $8$ and $46$ given in Pironio \etal \cite{Pironioetal}. The extremal boxes 
that belong to the class $8$ violate a class $99$ facet inequality to its algebraic maximum. A representative of the class $99$ facet inequality
is given by,
\be
\mathcal{L}^{99}_2=\braket{A_0B_0}+\braket{A_0C_0}+\braket{B_1C_0}+\braket{A_1B_0C_1}-\braket{A_1B_1C_1}\le3. \label{NSFI}
\ee
The representative of class $8$ extremal box given in the table of Ref. \cite{Pironioetal} has 
$\braket{A_0B_0}=\braket{A_0B_1}=\braket{A_0C_0}=\braket{B_0C_0}=\braket{B_1C_0}=\braket{A_1B_0C_1}=-\braket{A_1B_1C_1}=1$ and the rest of the expectations are zero 
which imply $\mathcal{L}^{99}_2=5$. 
The extremal boxes that belong to 
the class $46$ are $16$ Svetlichny-boxes,
\begin{align}
P^{\alpha\beta\gamma\epsilon}_{\rm Sv}(a_m,b_n,c_o|A_i,B_j,C_k)\!=\!\left\{
\begin{array}{lr}
\frac{1}{4}, & \!m\!\oplus \!n\!\oplus \!o\!=\!i\cdot j \!\oplus \!i\cdot k\! \oplus \!j\cdot k \!\oplus \!\alpha i\!\oplus\! \beta j\! \oplus\! \gamma k \!\oplus\! \epsilon\\ 
0 , & \text{otherwise},\\
\end{array}
\right. \label{NLV} 
\end{align}
which violate one of the class $185$ facet inequalities,
\be
\mathcal{S}_{\alpha\beta\gamma\epsilon}=\sum_{ijk}(-1)^{i\cdot j \oplus i\cdot k \oplus j\cdot k \oplus \alpha i\oplus \beta j \oplus \gamma k \oplus \epsilon}\braket{A_iB_jC_k}\le4, \label{SI}
\ee
to its algebraic maximum of $8$. A class $185$ facet inequality is known as Svetlichny inequality \cite{SI}. 
We will refer to the correlations which do not violate a Svetlichny inequality as Svetlichny-local.

In this work, we consider quantum correlations arising from Svetlichny scenario \cite{SI} in which the parties generate the JPDs by making spin projective measurements
$A_i=\hat{a}_i\cdot \vec{\sigma}$, $B_j=\hat{b}_j\cdot \vec{\sigma}$ and 
$C_k=\hat{c}_k\cdot \vec{\sigma}$ on an ensemble of three-qubit system described by the 
density matrix $\rho$ in the Hilbert space $\mathcal{H}^A_2\otimes\mathcal{H}^B_2\otimes\mathcal{H}^C_2$.
The correlation predicted by quantum theory is defined as follows,
\be
P(a_m,b_n,c_o|A_i,B_j,C_k)=\mathrm{Tr} \left(\rho \Pi^{a_m}_{A_i}\otimes \Pi^{b_n}_{B_j}\otimes \Pi^{c_o}_{C_k}\right), \label{QCD}
\ee
where 
\begin{align}
\Pi^{a_{m}}_{A_i}={1/2}\left[\openone +a_{m}\hat{a}_i \cdot \vec{\sigma}\right], \Pi^{b_{n}}_{B_j}={1/2}\left[\openone+b_{n}\hat{b}_j \cdot \vec{\sigma}\right]
 \&  \Pi^{c_{0}}_{C_k}={1/2}\left[\openone+c_{o}\hat{c}_k \cdot \vec{\sigma}\right] \nonumber
\end{align}
are the projectors generating binary outcomes $a_{m},b_{n}, c_o \in \{-1,1\}$.  
Any such tripartite quantum correlation can be written as a convex mixture of the extremal boxes of the tripartite NS polytope.

\section{Svetlichny-box polytope and two notions of genuine nonclassicality for Svetlichny-local boxes} \label{Spoly}
Svetlichny-box polytope, $\mathcal{R}$, is a restricted NS polytope in which we discard in total $53856-128=53728$ extremal boxes.
The $128$ extremal boxes of $\mathcal{R}$ are the Svetlichny-boxes, the bipartite PR-boxes and the deterministic boxes. Svetlichny-box polytope is convex, 
i.e., if $P\in\mathcal{R}$,
\begin{equation}
P=\sum^{15}_{i=0}p_iP^i_{Sv}+\sum^{15}_{i=0}q_iP^i_{12}+\sum^{15}_{i=0}r_iP^i_{13}+\sum^{15}_{i=0}s_iP^i_{23}+\sum^{63}_{j=0}t_jP^j_{D},
\label{eq:gendecomp}
\end{equation}
with $\sum_ip_i+\sum_iq_i+\sum_ir_i+\sum_is_i+\sum_jt_j=1$, $i=\alpha\beta\gamma\epsilon$ and $j=\alpha\beta\gamma\epsilon\zeta\eta$.
Svetlichny-box polytope can be divided into a three-way nonlocal region and the two-way local polytope ($\mathcal{L}_2$).

\begin{figure}[h!]
\centering
\includegraphics[scale=0.38]{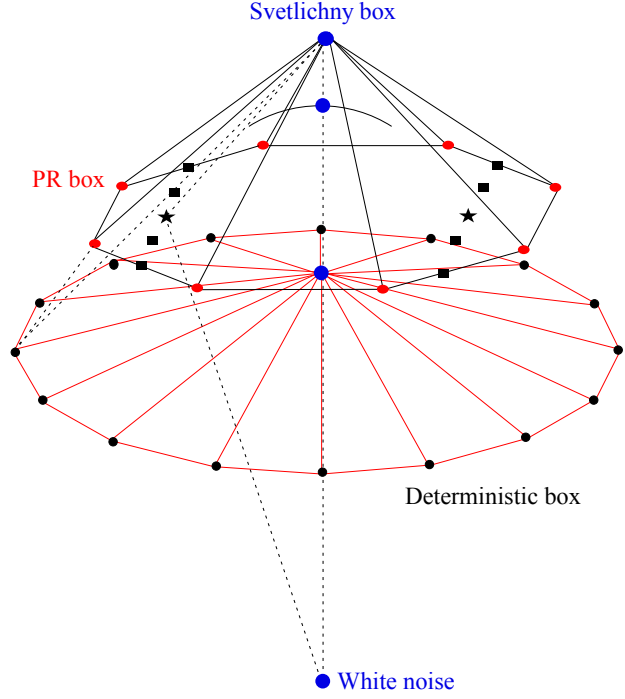} 
\caption[Svetlichny-box polytope]{A three-dimensional representation of the Svetlichny-box polytope is shown here.  
The fully deterministic boxes are represented by the circular points on the hexadecagon. The bipartite PR-boxes are represented by the
circular points on the octagon. The circular point on the top represents the Svetlichny-box.
The region that lies above the hexadecagon and below the octagon represents the two-way nonlocal region.
The region below the curved surface contains quantum correlations and the point on this curved surface represents the quantum box that achieves 
maximal Svetlichny nonlocality.
The star and square points represent quantum and nonquantum Mermin boxes respectively. 
The triangular region (shown by dotted lines) which is a convex hull
of the Svetlichny-box, the Mermin box and white noise represents the $3$-decomposition fact that  
any point that lies inside the triangle can be decomposed into Svetlichny-box,
the Mermin-box and white noise. The circular point at the center of the hexadecagon is the isotropic Svetlichny-box with $p_{Sv}=\frac{1}{2}$ 
which can be decomposed as an equal mixture of
the $16$ deterministic boxes or an equal mixture of two quantum Mermin boxes.}\label{NS3dfig1}
\end{figure} 

$\mathcal{L}_2$ is a convex hull of the $48$ two-way nonlocal vertices and the $64$ deterministic boxes, i.e., if $P \in \mathcal{L}_2$,
\begin{eqnarray}
P&=&\sum^{15}_{i=0}q_iP^i_{12}+\sum^{15}_{i=0}r_iP^i_{13}+\sum^{15}_{i=0}s_iP^i_{23}+\sum^{63}_{j=0}t_jP^j_{D};\nonumber\\
&&\sum_iq_i+\sum_ir_i+\sum_is_i+\sum_jt_j=1.
\label{eq:gendecompSl}
\end{eqnarray}
The set of correlations in $\mathcal{L}_2$ is only a subset of the Svetlichny-local correlations as there are three-way nonlocal correlations that satisfy the Svetlichny inequalities \cite{Banceletal}.
\begin{proposition}\label{SIv}
The complete set of Svetlichny inequalities is a necessary and sufficient condition for the correlations in $\mathcal{R}$ to belong to the two-way local polytope.
\end{proposition}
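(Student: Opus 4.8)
The plan is to settle necessity with a one-line appeal to the hybrid local--nonlocal structure and to reduce sufficiency to a single structural lemma about the Svetlichny boxes, exploiting that $\mathcal{R}$ is the convex hull of the vertices of $\mathcal{L}_2$ together with the $16$ Svetlichny boxes $P^k_{Sv}$. For necessity: every $P\in\mathcal{L}_2$ admits a decomposition of the hybrid form in Eq.~(\ref{HLNL}), and the Svetlichny operators $\mathcal{S}_{\alpha\beta\gamma\epsilon}$ were constructed exactly as upper bounds on such correlations \cite{SI}; hence every $P\in\mathcal{L}_2$ obeys all $16$ Svetlichny inequalities, and since $\mathcal{L}_2\subseteq\mathcal{R}$ this direction is immediate.

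For sufficiency the key step I would establish is the lemma: \emph{each Svetlichny box $P^{\alpha\beta\gamma\epsilon}_{Sv}$ violates exactly one facet inequality of $\mathcal{L}_2$, namely its own Svetlichny inequality $\mathcal{S}_{\alpha\beta\gamma\epsilon}\le4$.} The ingredients are that $P^{\alpha\beta\gamma\epsilon}_{Sv}$ has vanishing one- and two-party expectations and $\braket{A_iB_jC_k}=(-1)^{i\cdot j\oplus i\cdot k\oplus j\cdot k\oplus\alpha i\oplus\beta j\oplus\gamma k\oplus\epsilon}$, from which $\mathcal{S}_{\alpha'\beta'\gamma'\epsilon'}(P^{\alpha\beta\gamma\epsilon}_{Sv})$ works out to $8$ when all four labels agree, to $-8$ when $\alpha'\beta'\gamma'=\alpha\beta\gamma$ but $\epsilon'\ne\epsilon$, and to $0$ otherwise --- so within the Svetlichny class only the one inequality is violated. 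For the other classes of facet inequalities of $\mathcal{L}_2$ catalogued by Bancal \etal \cite{Banceletal}: the positivity facets hold because $P^{\alpha\beta\gamma\epsilon}_{Sv}$ is a genuine probability box; any facet whose coefficients touch only one- and two-party expectations is trivially satisfied (its left-hand side vanishes); and the facets involving three-party expectations need only be tested against the single correlator pattern $(-1)^{i\cdot j\oplus i\cdot k\oplus j\cdot k}$ (all Svetlichny boxes being LRO-images of $P^{0000}_{Sv}$ and the facet classes LRO-closed), where one checks the bound holds in every case --- e.g.\ the class-$99$ representative of Eq.~(\ref{NSFI}) takes the value $\braket{A_1B_0C_1}-\braket{A_1B_1C_1}=(-1)^1-(-1)^3=0\le3$. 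Carrying out this finite verification against the explicit facet list of \cite{Banceletal} is where the real labour lies, and it is the main obstacle.

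Granting the lemma, sufficiency follows at once. Suppose $P\in\mathcal{R}$ satisfies all Svetlichny inequalities but $P\notin\mathcal{L}_2$; then $P$ violates some facet inequality $F(Q)\le c$ of $\mathcal{L}_2$, i.e.\ $F(P)>c$, while $F$ is valid on every vertex of $\mathcal{L}_2$. Writing $P=\sum_V\lambda_V V+\sum_k p_k P^k_{Sv}$ as in Eq.~(\ref{eq:gendecomp}), with $V$ ranging over the deterministic and bipartite-PR vertices, affine-linearity gives $F(P)=\sum_V\lambda_V F(V)+\sum_k p_k F(P^k_{Sv})\le c$ unless $F(P^k_{Sv})>c$ for some $k$; by the lemma this forces $F\le c$ to be $\mathcal{S}_k\le4$, contradicting $\mathcal{S}_k(P)\le4$. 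Hence $P\in\mathcal{L}_2$.

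As a more self-contained alternative that sidesteps the full facet list, I would instead mirror the proof of Theorem~\ref{thm1}: reduce the Svetlichny-box content of $P$ by the identity $pP^i_{Sv}+qP^j_{Sv}=(p-q)P^i_{Sv}+2q\bigl(\tfrac12(P^i_{Sv}+P^j_{Sv})\bigr)$ together with the sub-lemma that $\tfrac12(P^i_{Sv}+P^j_{Sv})\in\mathcal{L}_2$ for every $i,j$ (white noise when $P^j_{Sv}$ is the anti-box; otherwise a shared-randomness mixture of a bipartite PR-box and its output-flipped copy on the two parties whose offsets coincide, with a fair coin on the dissenting party), so that $P$ collapses to an irreducible component $\mu P^{\alpha\beta\gamma\epsilon}_{Sv}$ plus a box lying on the facet $\mathcal{S}_{\alpha\beta\gamma\epsilon}=4$; linearity of $\mathcal{S}_{\alpha\beta\gamma\epsilon}$ along this canonical decomposition then gives $\mathcal{S}_{\alpha\beta\gamma\epsilon}(P)=4+4\mu$, so all Svetlichny inequalities holding forces $\mu=0$ and $P\in\mathcal{L}_2$. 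On this route the obstacle shifts to proving the sub-lemma and verifying that the reduction can always leave the local remainder seated on the Svetlichny facet.
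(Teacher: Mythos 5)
Your necessity argument and your computation $\mathcal{S}_{\alpha'\beta'\gamma'\epsilon'}\bigl(P^{\alpha\beta\gamma\epsilon}_{Sv}\bigr)=8(-1)^{\epsilon\oplus\epsilon'}$ when $\alpha'\beta'\gamma'=\alpha\beta\gamma$ and $0$ otherwise are both correct, and the vertex-decomposition argument of your Route A is logically sound \emph{conditional on} your lemma. But that lemma --- that a Svetlichny box violates no facet of $\mathcal{L}_2$ other than its own Svetlichny inequality --- is not a side technicality; it is essentially the entire content of the proposition. If some $P^{k}_{Sv}$ violated another facet $F\le c$ of $\mathcal{L}_2$, then mixing it with a point of $\mathcal{L}_2$ that saturates $F$ while having small Svetlichny functions would produce a correlation in $\mathcal{R}\setminus\mathcal{L}_2$ obeying all sixteen Svetlichny inequalities, falsifying sufficiency outright. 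So by deferring the finite check over the $185$ facet classes of \cite{Banceletal} you have reformulated the proposition rather than proved it. For comparison, the paper's own proof takes a different route: it invokes the Fine/Werner--Wolf analogy, LRO-invariance of the set of Svetlichny inequalities, and the canonical-decomposition picture of Fig.~\ref{NS3dfig1} to assert that any correlation in $\mathcal{R}$ outside $\mathcal{L}_2$ sits on a segment from a Svetlichny box to a box saturating a Svetlichny facet --- which is exactly your lemma in geometric clothing, so the same gap is present there; your formulation at least isolates precisely what must be verified.

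Your Route B does not repair this, and its pivotal sub-claim is false as stated: the reduction to $P=\mu P^{\alpha\beta\gamma\epsilon}_{Sv}+(1-\mu)P'$ with $P'\in\mathcal{L}_2$ cannot in general leave $P'$ seated on the facet $\mathcal{S}_{\alpha\beta\gamma\epsilon}=4$. The isotropic Svetlichny box of Eq.~(\ref{nSv}) is a counterexample: its remainder is white noise with $\mathcal{S}_{000}=0$, the identity $\mathcal{S}_{0000}(P)=4+4\mu$ fails, and for $p_{Sv}\le\frac{1}{2}$ the box lies in $\mathcal{L}_2$ even though $\mu>0$. Hence ``$\mu>0$'' does not certify $P\notin\mathcal{L}_2$; and if you instead push $P'$ out to the boundary of $\mathcal{L}_2$ along the segment toward the Svetlichny box, the facet you cross is a priori arbitrary, which returns you to the Route A lemma. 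My recommendation is to keep Route A but actually carry out, or cite a source for, the verification that the Svetlichny boxes satisfy every non-Svetlichny facet of $\mathcal{L}_2$; without that the proof is incomplete.
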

\begin{proof}Svetlichny inequality can be interpreted as bipartite Bell-CHSH inequality between any two combined system and the third system which can be readily seen by rewriting Svetlichny operator as bipartite Bell-CHSH operator, for instance,
\be
\mathcal{S}_{0000}\!=\!\braket{(A_0B_1+A_1B_0)(C_0+C_1)-(A_0B_0-A_1B_1)(C_0-C_1)}.\nonumber
\ee
Here we have considered the combined system AB as a single subsystem. In the bipartite scenario, the complete set of Bell-CHSH inequalities,
\ba
\mathcal{B}_{\alpha\beta\gamma}&:= &(-1)^\gamma\braket{A_0B_0}+(-1)^{\beta \oplus \gamma}\braket{A_0B_1}\nonumber\\
&+&(-1)^{\alpha \oplus \gamma}\braket{A_1B_0}+(-1)^{\alpha \oplus \beta \oplus \gamma \oplus 1} \braket{A_1B_1}\le2, \label{BCHSH}
\ea
serve as the necessary and sufficient condition for the correlations to belong to the Bell polytope and is invariant under LRO \cite{Fine,WernerWolf}.
Just as the complete set of Bell-CHSH inequalities,
the set of Svetlichny inequalities in Eq. (\ref{SI}) is invariant under LRO and the permutations of the parties and, therefore, they form a complete set of inequalities \cite{WernerWolfmulti}.
As any genuinely nonlocal correlation in $\mathcal{R}$ can be written as a convex combination of an irreducible Svetlichny-box and a Svetlichny-local box (see fig. \ref{NS3dfig1}),
it violates a Svetlichny inequality.
If genuine nonlocality of a correlation is due to some other extremal
three-way nonlocal box, it may not violate a Svetlichny inequality; for instance, the class $8$ three-way nonlocal box which violates a class $99$
facet inequality does not violate a Svetlichny inequality.
\end{proof}

The Bell-local polytope ($\mathcal{L}$), which is a subpolytope of the two-way local polytope,
is a convex hull of the $64$ deterministic boxes, i.e., if $P \in \mathcal{L}$,
\be
P=\sum^{63}_{j=0}t_jP^j_{D}; \quad \sum_j t_j=1. \label{LD}
\ee
\begin{proposition}\label{MIBIv}
The necessary and sufficient condition for a correlation to admit the local deterministic hidden variable model inEq. (\ref{LD}) is that the correlation and its three bipartite marginals satisfy all the Mermin inequalities and all the Bell-CHSH inequalities.
\end{proposition}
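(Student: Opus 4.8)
The plan is to prove the two implications separately, paralleling the proof of Proposition~\ref{SIv} and Fine's theorem for the bipartite case; throughout I read the statement (as for Proposition~\ref{SIv}) as asserting the equivalence for correlations lying in the Svetlichny-box polytope $\mathcal{R}$.

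\emph{Necessity} is the routine direction. If $P$ admits the decomposition in Eq.~(\ref{LD}) it is a convex mixture of deterministic boxes, hence each of its three bipartite marginals is a mixture of bipartite deterministic boxes and so, by Fine's theorem \cite{Fine}, satisfies every Bell-CHSH inequality in Eq.~(\ref{BCHSH}); and $P$ itself satisfies every Mermin inequality by the standard Mermin--Klyshko argument for LHV theories, since on a deterministic value assignment each Mermin operator reduces to $\pm$ a bipartite CHSH operator between two of the parties and is therefore bounded by $2$. So this half only requires citing the known LHV bounds and observing that marginalising Eq.~(\ref{LD}) over one party again leaves an LHV model for the remaining pair.

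\emph{Sufficiency} is the real content. One wants to show that the polytope cut out of $\mathcal{R}$ by the complete sets of Bell-CHSH and Mermin inequalities is exactly the Bell-local polytope $\mathcal{L}$. Since both families are valid for $\mathcal{L}$, the inclusion ``$\supseteq$'' is immediate, so the task is ``$\subseteq$''. The first step is to check that every non-deterministic vertex of $\mathcal{R}$ is excluded by one of the listed inequalities. For a representative bipartite PR-box $P^{\alpha\beta\gamma\epsilon}_{12}$ of Eq.~(\ref{PR}), the $AB$-marginal equals the bipartite PR-box $P^{\alpha\beta\gamma}_{PR}$, which violates $\mathcal{B}_{\alpha\beta\gamma}\le 2$ maximally; by the LRO-invariance of the complete CHSH set \cite{Fine,WernerWolf} every bipartite PR-box is caught on one of its three bipartite marginals. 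For a representative Svetlichny-box $P^{0000}_{\rm Sv}$ of Eq.~(\ref{NLV}) one computes that all one- and two-body correlators vanish and $\braket{A_iB_jC_k}=(-1)^{ij\oplus ik\oplus jk}$, whence $\braket{A_0B_0C_0}-\braket{A_0B_1C_1}-\braket{A_1B_0C_1}-\braket{A_1B_1C_0}=4>2$ violates a Mermin inequality maximally; by the LRO-invariance of the complete Mermin set \cite{mermin,WernerWolfmulti} every Svetlichny-box is caught. Hence the only vertices of $\mathcal{R}$ surviving all the inequalities are the deterministic boxes.

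\emph{The main obstacle} is that ``every non-local vertex is caught'' does not yet close the argument, because intersecting $\mathcal{R}$ with the inequality half-spaces may create new extreme points lying on the bounding hyperplanes of the Mermin/CHSH facets but outside $\mathcal{L}$ --- and this is exactly the step in which the restriction to $\mathcal{R}$ is essential. I would handle it by a direct decomposition argument in the spirit of Theorems~\ref{thm1}--\ref{thm2}: take an arbitrary $P\in\mathcal{R}$ satisfying all the inequalities, write it in the form of Eq.~(\ref{eq:gendecomp}), and show the weights on the Svetlichny-boxes and on the bipartite PR-boxes may be taken to vanish. The bipartite-PR weights are pinned down by the three bipartite marginals: each lies in a bipartite NS polytope and, satisfying CHSH, has zero irreducible PR-box content. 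The Svetlichny-box weights are pinned down by a monogamy relation for the Svetlichny and Mermin functionals on $\mathcal{R}$ --- a Svetlichny-box saturates only ``its own'' Mermin operator to the algebraic maximum --- so a non-zero irreducible Svetlichny component would force the violation of the corresponding Mermin inequality. I expect this monogamy bookkeeping to be the delicate part, since one must rule out simultaneous cancellations across all the Mermin operators at once (one may also first reduce to $\mathcal{L}_2$ via Proposition~\ref{SIv}, which then needs the auxiliary fact that on $\mathcal{R}$ the Mermin inequalities already imply the Svetlichny inequalities, provable by the same monogamy considerations); an acceptable fallback, given that $\mathcal{R}$ has only $128$ vertices, is to replace this step by a finite computation of the facets of $\mathcal{R}\cap\{\text{CHSH}\}\cap\{\text{Mermin}\}$ and verify directly that they collapse to the claimed list.
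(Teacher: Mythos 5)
Your necessity argument is fine, and it is essentially all the paper offers for that direction as well. On sufficiency you have correctly diagnosed that the paper's own argument is not a proof: it only exhibits a box (the Mermin box) that escapes the marginal Bell-CHSH inequalities but is caught by a Mermin inequality, and a box that escapes the Mermin inequalities but is caught by a marginal Bell-CHSH inequality; it never shows that the two families jointly cut $\mathcal{L}_2$ (or $\mathcal{R}$) down to $\mathcal{L}$. Your plan to close this by a decomposition argument, however, fails at the step ``the bipartite-PR weights are pinned down by the three bipartite marginals'': tracing out the third party can completely erase bipartite PR-box content --- exactly as it does for the Mermin box itself --- so the absence of irreducible PR-box content in each bipartite marginal does not force the weights $q_i,r_i,s_i$ in Eq.~(\ref{eq:gendecomp}) to vanish.

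Worse, the sufficiency claim itself is false, so neither the monogamy bookkeeping nor your fallback facet enumeration can rescue it. Consider
\be
P=\tfrac{1}{2}\left(P^{0000}_{Sv}+P^{1000}_{Sv}\right),
\ee
which is an equal mixture of a PR-box between $B$ and $C$ and its anti-box, flagged by Alice's outcome for the input $A_0$; it lies in $\mathcal{L}_2\subset\mathcal{R}$. All one- and two-body correlators vanish, so the three bipartite marginals are white noise and satisfy every Bell-CHSH inequality trivially. The three-body correlators are $\braket{A_0B_jC_k}=(-1)^{j\cdot k}$ and $\braket{A_1B_jC_k}=0$; since each Mermin operator in Eq.~(\ref{MI}) involves exactly two correlators with $A_0$ and two with $A_1$, every Mermin function evaluates to at most $2$. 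Yet the $A_0$ slice imposes the perfect parity constraint $m\oplus n\oplus o=j\cdot k$, which admits no deterministic local assignment (multiplying the four constraints gives $+1=-1$), so $P$ does not admit the model of Eq.~(\ref{LD}). The facet it violates is the lifted CHSH inequality $\braket{A_0\left(B_0C_0+B_0C_1+B_1C_0-B_1C_1\right)}\le 2$, which belongs to neither of the two families named in the proposition. The honest conclusion is that the stated conditions are necessary but not sufficient; any correct version must include the lifted CHSH facets (equivalently, the remaining facet classes of the tripartite Bell polytope), and the paper's proof suffers from the same gap as yours.
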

\begin{proof}
The decomposition in Eq. (\ref{LD}) implies that all three bipartite marginal distributions can be written as a convex combination of the $16$ deterministic boxes that are the vertices of the bipartiteBell polytope, however, the converse is not true as there 
are nonlocal correlations whose bipartite marginal correlations admit a
local deterministic model.
Therefore, the three complete set of Bell-CHSH inequalities corresponding to the three bipartite marginals is only a sufficient condition for the correlations to belong to the tripartite Bell-local polytope. Notice that the nonlocal correlations that satisfy all the Bell-CHSH inequalities violate a Mermin inequality in Eq. (\ref{MI}), for instance, a tripartite Mermin box in Eq. (\ref{Mbdec})
whose marginal correlations
are white noise violate a Mermin inequality.
The set of Mermin inequalities in Eq. (\ref{MI}) is invariant under LRO and thus it forms a complete set of inequalities \cite{WernerWolfmulti}.
Consider the following Mermin inequality,
\be
\braket{(A_0B_0-A_1B_1)C_0-(A_0B_1-A_1B_0)C_1}\le2.
\ee
This inequality becomes bipartite Bell-CHSH inequality between $A$ and $B$ iff $C$ is deterministic i.e., $\braket{C_i}=\pm1$.
Therefore, there are nonlocal correlations that do not violate a Mermin inequality; however, they violate a Bell-CHSH inequality
since nonlocality is due to one of the bipartite marginals.
\end{proof}

\subsection{Svetlichny discord}
Consider isotropic Svetlichny-box which is a convex mixture of the Svetlichny-box and white noise,
\be
P=p_{Sv}P^{0000}_{Sv}+(1-p_{Sv})P_N.\label{nSv}
\ee
The isotropic Svetlichny-box violates the Svetlichny inequality i.e., $\mathcal{S}_{0000}=8p_{Sv}>4$ if $p_{Sv}>\frac{1}{2}$.
Notice that even if the isotropic Svetlichny-box is local when $p_{Sv}\le\frac{1}{2}$, it admits a decomposition that has
the single Svetlichny-box component. We call such a single Svetlichny-box in the decomposition of any correlation (three-way nonlocal, or not)
irreducible Svetlichny-box.

The isotropic Svetlichny-box which is quantum realizable if $p_{Sv}\le\frac{1}{\sqrt{2}}$ illustrates the following observation.
\begin{observation}\label{m1}
When a Svetlichny-local correlation arising from a given genuinely entangled state has an irreducible Svetlichny-box component,
the correlation arises from incompatible measurements which are noncommuting on each side: $[A_0,A_1]\ne0$, $[B_0,B_1]\ne0$ and $[C_0,C_1]\ne0$.
\end{observation}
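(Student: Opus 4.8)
The statement asserts a single implication, so the plan is to establish its contrapositive: if the projective qubit measurements on at least one of the three parties commute, then the box produced by Eq.~(\ref{QCD}) carries no irreducible Svetlichny-box component, i.e.\ its Svetlichny discord vanishes. (That the hypothesis is not vacuous is indicated above: for $0<p_{Sv}\le\frac{1}{2}$ the isotropic Svetlichny-box of Eq.~(\ref{nSv}) is Svetlichny-local yet has a nonzero Svetlichny-box component, and it is realised by a genuinely entangled three-qubit state under measurements that are noncommuting on every party.) Because the family of Svetlichny operators $\{\mathcal{S}_{\alpha\beta\gamma\epsilon}\}$ of Eq.~(\ref{SI}) is invariant under permutations of the parties (with the induced relabelling of $\alpha,\beta,\gamma$), it is enough to treat the case $[A_0,A_1]=0$. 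For qubit observables $A_i=\hat{a}_i\cdot\vec{\sigma}$ this is equivalent to $\hat{a}_0=\pm\hat{a}_1$; the case $\hat{a}_0=-\hat{a}_1$ differs from $\hat{a}_0=\hat{a}_1$ only by a relabelling of Alice's outcomes, an LRO under which both the canonical decomposition and the Svetlichny discord are invariant, so I may assume $\hat{a}_0=\hat{a}_1=:\hat{a}$.

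Next I would use the Bloch (Fano) representation of the three-qubit state, in which the three-party expectations are trilinear in the measurement directions; with $\hat{a}_0=\hat{a}_1$ this forces $\braket{A_0B_jC_k}=\braket{A_1B_jC_k}=:E_{jk}$ for all $j,k$. Substituting into Eq.~(\ref{SI}) and performing the sum over Alice's input collapses the $i$-dependent phase,
\[
\sum_{i=0}^{1}(-1)^{i(j\oplus k\oplus\alpha)}=2\,\delta_{j\oplus k,\,\alpha},
\]
so that $\mathcal{S}_{\alpha\beta\gamma\epsilon}=2(-1)^{\epsilon}\sum_{j\oplus k=\alpha}(-1)^{j\cdot k\oplus\beta j\oplus\gamma k}E_{jk}$. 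Hence $|\mathcal{S}_{\alpha\beta\gamma\epsilon}|$ is independent of $\epsilon$ and depends on the remaining labels only through $\alpha$ and $\beta\oplus\gamma$: the sixteen Svetlichny moduli degenerate to the four numbers $2|E_{00}\mp E_{11}|$ and $2|E_{01}\mp E_{10}|$, each attained by exactly four label tuples and all bounded by $4$.

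This degeneracy is exactly what annihilates the Svetlichny discord, by the same mechanism as in Theorem~\ref{thm3}: there a coincidence of pairs of Bell functions, $\mathcal{B}_{00}=\mathcal{B}_{01}$ and $\mathcal{B}_{10}=\mathcal{B}_{11}$, makes each \emph{difference of differences} $\mathcal{G}_i$ vanish and hence $\mathcal{G}=\mathcal{Q}=0$, and here the analogous coincidences among the $|\mathcal{S}_{\alpha\beta\gamma\epsilon}|$ make the combinations defining Svetlichny discord vanish. Since Svetlichny discord is linear along the canonical decomposition of a box into an irreducible Svetlichny-box and a Svetlichny-local part — just as $\mathcal{G}$ is linear along Eq.~(\ref{Gde}) — a zero value means the irreducible Svetlichny-box component is zero. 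Reading the contrapositive back, an irreducible Svetlichny-box component forces $[A_0,A_1]\neq0$, and, by the permutation symmetry invoked above, also $[B_0,B_1]\neq0$ and $[C_0,C_1]\neq0$; the argument is uniform in whether the box is Svetlichny-local or Svetlichny-nonlocal. The main obstacle is this last bookkeeping step: verifying that the explicit algebraic form chosen for Svetlichny discord — a minimum over permutations of nested moduli of differences of the $|\mathcal{S}_{\alpha\beta\gamma\epsilon}|$ — genuinely evaluates to zero on the degenerate configuration above. This is a finite check that parallels the bipartite computation in the proof of Theorem~\ref{thm3} and is insensitive to which representative permutation one fixes.
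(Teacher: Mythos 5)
Your proposal is correct, but it proves the observation by a genuinely different route than the paper does. The paper's own ``proof'' is really only an illustration: it exhibits the GGHZ states of Eq.~(\ref{GGHZ}) together with one specific set of noncommuting measurements, shows they produce the isotropic Svetlichny-box of Eq.~(\ref{nSv}) with $p_{Sv}=\sin2\theta/\sqrt{2}$, and then asserts the general implication; it never rules out that some correlation with an irreducible Svetlichny-box component might arise from measurements commuting on one side. You instead prove the contrapositive in full generality, mirroring the strategy of Theorem~\ref{thm3} from the bipartite chapter: if $[A_0,A_1]=0$ then $\hat a_0=\pm\hat a_1$, the three-party correlators become $i$-independent, the sum over $i$ in Eq.~(\ref{SI}) produces the factor $2\delta_{j\oplus k,\alpha}$, and the resulting degeneracy $\mathcal{S}_{000}=\mathcal{S}_{011}$, $\mathcal{S}_{001}=\mathcal{S}_{010}$, $\mathcal{S}_{100}=\mathcal{S}_{111}$, $\mathcal{S}_{101}=\mathcal{S}_{110}$ forces $\mathcal{G}_1=0$ and hence $\mathcal{G}=\min_i\mathcal{G}_i=0$; linearity of $\mathcal{G}$ along the canonical decomposition ($\mathcal{G}=8\mathcal{G}'$) then kills the irreducible component, and permutation invariance of $\{\mathcal{S}_{\alpha\beta\gamma}\}$ extends this to Bob and Charlie. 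I have checked the degeneracy pattern and it does annihilate $\mathcal{G}_1$ exactly as you claim. What your approach buys is an actual proof of the stated implication for arbitrary three-qubit states and arbitrary projective measurements; what the paper's example buys is only the complementary, existential half of the story (that such Svetlichny-local boxes with nonzero irreducible component do occur), which you correctly flag as needed to make the hypothesis non-vacuous.
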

\begin{proof}
For the incompatible measurements $A_0=\sigma_x$, $A_1=\sigma_y$,    
$B_0=\sigma_x$, $B_1=\sigma_y$ and $C_k=\frac{1}{\sqrt{2}}\left(\sigma_x-(-1)^k\sigma_y\right)$, the GHZ state, 
\be
\ket{\psi_{GHZ}}=\frac{1}{\sqrt{2}}\left(\ket{000}+\ket{111}\right), \label{GHZ}
\ee
violates the Svetlichny inequality, $\mathcal{S}_{0000}\le4$, to its quantum bound of 
$4\sqrt{2}$. For this choice of measurements,
the GGHZ states,
\be
\ket{\psi_{GGHZ}}=\cos\theta\ket{000}+\sin\theta\ket{111}; \quad 0\le\theta\le\frac{\pi}{4},\label{GGHZ}
\ee
give rise to the isotropic Svetlichny-box in Eq. (\ref{nSv}) with $p_{Sv}=\frac{\sin2\theta}{\sqrt{2}}$. 
Thus, the nonzero irreducible Svetlichny-box component
implies the presence of incompatible measurements and genuine entanglement even if the correlation is local. 
\end{proof}
The observation that Svetlichny-local quantum correlations that have an irreducible Svetlichny-box component can arise from incompatible measurements performed
on the genuinely entangled states
motivates to define a notion of genuine nonclassicality which we call Svetlichny discord.
\begin{definition}\label{df1}
A quantum correlation arising from incompatible measurements performed on a given three-qubit state is said to have \textit{Svetlichny discord} iff the
correlation admits a decomposition with an irreducible Svetlichny-box component. 
\end{definition}
Svetlichny discord is of course not equivalent to Svetlichny nonlocality since there are Svetlichny-local correlations that have an irreducible Svetlichny-box
component; for instance, the isotropic Svetlichny-box in Eq. (\ref{nSv}) has Svetlichny discord if $p_{Sv}>0$ and exhibits Svetlichny nonlocality
if $p_{Sv}>\frac{1}{2}$.

We now define a measure of Svetlichny discord to detect
irreducible Svetlichny-box component in any correlation by using the modulus of the Svetlichny functions in Eq. (\ref{SI}), 
\be
\mathcal{S}_{\alpha\beta\gamma}=\left|\sum_{ijk}(-1)^{i\cdot j \oplus i\cdot k \oplus j\cdot  k \oplus \alpha i\oplus \beta j \oplus \gamma k }\braket{A_iB_jC_k}\right|. \label{mSf}
\ee
\begin{definition}
Svetlichny discord, $\mathcal{G}$, is defined as,
\be
\mathcal{G}=\min\{\mathcal{G}_1,...,\mathcal{G}_9\}, \label{GBD}
\ee
where
\ba
\mathcal{G}_1&=&|\Big||\mathcal{S}_{000}-\mathcal{S}_{001}|-|\mathcal{S}_{010}-\mathcal{S}_{011}|\Big| \nonumber\\
&&-\Big||\mathcal{S}_{100}-\mathcal{S}_{101}|-|\mathcal{S}_{110}-\mathcal{S}_{111}|\Big||\nonumber,
\ea
and the other eight $\mathcal{G}_i$ are obtained by permuting $\mathcal{S}_{\alpha\beta\gamma}$ in $\mathcal{G}_1$. Here $0\le\mathcal{G}\le8$.
\end{definition}
Svetlichny discord is constructed such that it satisfies the following properties: (i) positivity, i.e., $\mathcal{G}\ge0$,
(ii) the bipartite PR-boxes and the deterministic boxes have $\mathcal{G}=0$,
(iii) the algebraic maximum of Svetlichny discord is achieved by the Svetlichny boxes, i.e., $\mathcal{G}=8$ for any Svetlichny-box.
Svetlichny discord is clearly invariant under LRO since the set $\{\mathcal{G}_i\}$ is invariant under LRO.
Svetlichny discord divides the correlations in the two-way local polytope into two disjoint sets: $\mathcal{G}>0$ boxes and $\mathcal{G}=0$ boxes.
Before characterizing the $\mathcal{G}>0$ boxes, we make the following two observations.
\begin{observation}
The set of $\mathcal{G}=0$ boxes forms a subpolytope of the two-way local polytope and is nonconvex. 
\end{observation}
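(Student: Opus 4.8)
The plan is to transcribe the bipartite argument used for the $\mathcal{G}=0$ Observation in Chapter \ref{Ch2}, adapting the bookkeeping to the tripartite Svetlichny-discord setting. First I would record that every extremal box of the two-way local polytope $\mathcal{L}_2$ — the $64$ fully deterministic boxes $P^{\alpha\beta\gamma\epsilon\zeta\eta}_D$ and the $48$ bipartite PR-boxes $P^{\alpha\beta\gamma\epsilon}_{12}, P^{\alpha\beta\gamma\epsilon}_{13}, P^{\alpha\beta\gamma\epsilon}_{23}$ — has $\mathcal{G}=0$; this is precisely property (ii) built into the definition of Svetlichny discord, so nothing new is needed. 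Consequently the $\mathcal{G}=0$ locus already contains all vertices of $\mathcal{L}_2$, and since $\mathcal{G}$ is LRO-invariant it is a union of faces/segments of $\mathcal{L}_2$ determined by the piecewise-linear condition $\mathcal{G}=0$.

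Next I would prove the two substantive claims — that the $\mathcal{G}=0$ set is a \emph{proper} subset of $\mathcal{L}_2$ and that it is nonconvex — with a single witness, the isotropic Svetlichny-box $P = p_{Sv}P^{0000}_{Sv}+(1-p_{Sv})P_N$ for $0<p_{Sv}\le\tfrac12$. By Proposition \ref{SIv} this box lies in $\mathcal{L}_2$ (it satisfies every Svetlichny inequality, $\mathcal{S}_{0000}=8p_{Sv}\le4$), while evaluating the $\mathcal{G}_i$ on it gives $\mathcal{G}(P)=8p_{Sv}>0$; hence $\{\mathcal{G}=0\}\subsetneq\mathcal{L}_2$. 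For nonconvexity I would use the same $P$: as noted in fig. \ref{NS3dfig1} the point $p_{Sv}=\tfrac12$ is the centre of the hexadecagon, i.e. the uniform mixture of the $16$ deterministic boxes, and more generally for any $p_{Sv}\le\tfrac12$ the isotropic Svetlichny-box is a convex combination of deterministic boxes — each with $\mathcal{G}=0$ — whose mixture nevertheless has $\mathcal{G}=8p_{Sv}>0$. Thus the $\mathcal{G}=0$ set is not closed under convex combinations.

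To justify the word ``subpolytope'' in the loose, possibly disconnected sense used throughout this chapter (cf. the ``$\mathcal{G}=0$ nonconvex polytope'' of the figures), I would observe that $\mathcal{G}$ is a piecewise-linear function of the expectation coordinates $\braket{A_iB_jC_k}$ — a minimum over absolute values of differences of absolute values of linear functionals — so its zero set is a finite union of polytopes, namely the union over all branch/sign selections of the linear pieces, intersected with $\mathcal{L}_2$. The one genuinely delicate point is the combinatorial bookkeeping: checking on each branch that $\mathcal{G}=0$ cuts out an actual face rather than something empty or spurious, and that the resulting union recovers exactly the set containing all vertices of $\mathcal{L}_2$. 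I expect this branch-counting, rather than any single hard inequality, to be the main obstacle, and in the write-up it can be dispatched as in Chapter \ref{Ch2} by reducing $\mathcal{G}$ on the relevant extremal mixtures — mixtures of Svetlichny-boxes, of bipartite PR-boxes, and of deterministic boxes — and reading off where it vanishes.
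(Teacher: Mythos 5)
Your proposal is correct and follows essentially the same route as the paper's own proof: the paper also notes that the deterministic boxes and bipartite PR-boxes all have $\mathcal{G}=0$ (so the set sits inside $\mathcal{L}_2$), and establishes nonconvexity with exactly your witness, the isotropic Svetlichny-box with $0<p_{Sv}\le\tfrac12$, which is a convex mixture of deterministic boxes yet has $\mathcal{G}=8p_{Sv}>0$. Your extra remarks on the piecewise-linear structure of $\mathcal{G}$ and the meaning of ``subpolytope'' go beyond what the paper writes down, but do not change the argument.
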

\begin{proof}
The set of $\mathcal{G}=0$ boxes is nonconvex since certain convex mixture 
of the $\mathcal{G}=0$ boxes can have $\mathcal{G}>0$; for instance, the isotropic Svetlichny-box in Eq. (\ref{nSv}) can be written as the convex mixture of the
deterministic boxes if $p_{Sv}\le\frac{1}{2}$, however, it has Svetlichny discord $\mathcal{G}=8p$ if $p_{Sv}>0$. Thus, the set of $\mathcal{G}=0$ boxes forms 
a nonconvex subpolytope of the two-way local polytope as the deterministic boxes and the bipartite PR-boxes have $\mathcal{G}=0$.
\end{proof}

\begin{observation}\label{umSv}
An unequal mixture of any two Svetlichny-boxes: $pP^i_{Sv}+qP^j_{Sv}$, here $p>q$, can be written as the convex sum of an irreducible Svetlichny-box and a Svetlichny-local box.
\end{observation}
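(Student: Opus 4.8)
The plan is to mirror, step for step, the argument used for the bipartite PR-box case in Observation \ref{umPR}, with PR-boxes replaced by Svetlichny-boxes and the Bell-CHSH facets replaced by the Svetlichny facets of Eq.~(\ref{SI}). First I would record the algebraic identity
\be
pP^i_{Sv}+qP^j_{Sv}=(p-q)P^i_{Sv}+2q\left(\frac{P^i_{Sv}+P^j_{Sv}}{2}\right),
\ee
which rewrites the unequal mixture as a nonnegative combination of the single box $P^i_{Sv}$ with weight $p-q>0$ and the uniform mixture $P^{ij}_l:=\frac{1}{2}(P^i_{Sv}+P^j_{Sv})$ with weight $2q$, the two weights adding to $p+q$, exactly as in the bipartite proof (where the weights likewise need not sum to one, since the statement is used inside the larger decomposition Eq.~(\ref{eq:gendecomp})).

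The substance of the proof is to show that $P^{ij}_l$ is Svetlichny-local. Since $P^{ij}_l\in\mathcal{R}$, Proposition \ref{SIv} reduces this to checking that $P^{ij}_l$ satisfies every Svetlichny inequality $\mathcal{S}_{\alpha\beta\gamma\epsilon}\le4$. To do that I would first compute the signed Svetlichny operators on a single Svetlichny-box: using that $P^{\alpha\beta\gamma\epsilon}_{Sv}$ has all three-party expectations equal to $(-1)^{i\cdot j\oplus i\cdot k\oplus j\cdot k\oplus\alpha i\oplus\beta j\oplus\gamma k\oplus\epsilon}$ and all one- and two-party marginals vanishing, together with the orthogonality relation $\sum_{ijk}(-1)^{\delta_1 i\oplus\delta_2 j\oplus\delta_3 k}=8$ if $\delta_1=\delta_2=\delta_3=0$ and $0$ otherwise, one gets $\mathcal{S}_{\alpha'\beta'\gamma'\epsilon'}(P^{\alpha\beta\gamma\epsilon}_{Sv})=(-1)^{\epsilon\oplus\epsilon'}8$ when $(\alpha'\beta'\gamma')=(\alpha\beta\gamma)$ and $0$ in all other cases. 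Thus each Svetlichny-box contributes to only one of the eight modulus Svetlichny functions $\mathcal{S}_{\alpha\beta\gamma}$ of Eq.~(\ref{mSf}).

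Then I would split into two cases. If $i$ and $j$ agree in their first three bits and differ only in $\epsilon$, the relevant operator averages $8$ and $-8$, so $P^{ij}_l$ is white noise, which is certainly Svetlichny-local. If $i$ and $j$ differ in $(\alpha\beta\gamma)$, then the only nonzero Svetlichny operators on $P^{ij}_l$ are the two attached to the first three bits of $i$ and of $j$, each equal to $\pm\tfrac{1}{2}\cdot 8=\pm4$, so $P^{ij}_l$ saturates but never violates a Svetlichny inequality; by Proposition \ref{SIv} it lies in $\mathcal{L}_2$ and is Svetlichny-local. Finally, following the remark accompanying Observation \ref{umPR}, I would point out that $P^j_{Sv}$ is \emph{not} irreducible in $pP^i_{Sv}+qP^j_{Sv}$, since its weight is absorbed into $P^{ij}_l$ in the displayed alternative decomposition, while $P^i_{Sv}$ survives with positive weight $p-q$; hence the decomposition has the claimed form of an irreducible Svetlichny-box plus a Svetlichny-local box.

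I do not expect a real obstacle here: the content is the same three-line argument as in the bipartite case, and the only care needed is the bookkeeping in the case analysis — confirming that the sign of each Svetlichny operator never pushes $|\mathcal{S}_{\alpha\beta\gamma\epsilon}|$ above $4$ on the half-mixture. Because every Svetlichny-box feeds into just one of the eight modulus Svetlichny functions, and that one is capped at $8$, the uniform mixture of two of them is automatically capped at $4$, so there is nothing deeper to verify.
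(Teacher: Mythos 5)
Your proposal is correct and follows essentially the same route as the paper: the identity $pP^i_{Sv}+qP^j_{Sv}=(p-q)P^i_{Sv}+2q\cdot\frac{1}{2}(P^i_{Sv}+P^j_{Sv})$ plus the claim that the uniform half-mixture is Svetlichny-local, together with the same closing remark about $P^j_{Sv}$ not being irreducible. The only difference is that the paper simply asserts that the uniform mixture of two Svetlichny-boxes lies in the two-way local polytope, whereas you actually verify it by computing $\mathcal{S}_{\alpha'\beta'\gamma'\epsilon'}(P^{\alpha\beta\gamma\epsilon}_{Sv})=(-1)^{\epsilon\oplus\epsilon'}\,8\,\delta_{\alpha\alpha'}\delta_{\beta\beta'}\delta_{\gamma\gamma'}$ and checking both cases against Proposition \ref{SIv} --- a welcome filling-in of a detail the paper leaves implicit.
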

\begin{proof}
$pP^i_{Sv}+qP^j_{Sv}=(p-q)P^i_{Sv}+2qP^{ij}_{SvL}$. Here $P^{ij}_{SvL}=\frac{1}{2}(P^i_{Sv}+P^j_{Sv})$ is a Svetlichny-local box since uniform mixture of any two Svetlichny-boxes
belongs to the two-way local polytope.  
Notice that the second Svetlichny-box, $P^j_{Sv}$,  
in the unequal mixture is not irreducible as its presence vanishes with the first Svetlichny-box in the other possible decomposition by the uniform mixture. 
\end{proof}

We obtain the following canonical decomposition of the correlations in $\mathcal{R}$.
\begin{lem}
Any correlation that belongs to the Svetlichny-box polytope can be written as a convex mixture of an irreducible Svetlichny-box and a Svetlichny-local box 
with $\mathcal{G}=0$,
\be
P=\mathcal{G}'P^{\alpha\beta\gamma\epsilon}_{Sv}+(1-\mathcal{G}')P^{\mathcal{G}=0}_{SvL}. \label{canoG>}
\ee
\end{lem}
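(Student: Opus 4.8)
The plan is to follow, step for step, the proof of Theorem~\ref{thm1} given in Appendix~\ref{pr1}, with PR-boxes replaced by Svetlichny-boxes and the Bell polytope replaced by the two-way local polytope $\mathcal{L}_2$. First I would take an arbitrary $P\in\mathcal{R}$ written as in Eq.~(\ref{eq:gendecomp}) and collect together the bipartite-PR-box and deterministic terms, rewriting $P=\sum_{i=0}^{15} g_i P^i_{Sv}+\left(1-\sum_i g_i\right)P_L$, where $P_L$ is a ``restricted'' Svetlichny-local box chosen so that it carries no nonzero Svetlichny-box weight in \emph{any} of its decompositions (equivalently, minimise the total Svetlichny-box weight first). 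This is the exact analogue of the step leading to Eq.~(\ref{step1}) in the bipartite case, and the analogue of Observation~\ref{nllg0} guarantees that, after dividing $\mathcal{L}_2$ into its $\mathcal{G}>0$ and $\mathcal{G}=0$ regions, such a regrouping is always possible.

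Next I would reduce the mixture $\sum_i g_i P^i_{Sv}$ of the sixteen Svetlichny vertices. The sixteen boxes $P^{\alpha\beta\gamma\epsilon}_{Sv}$ pair up into eight ``anti-pairs'' $\{P^{\alpha\beta\gamma0}_{Sv},P^{\alpha\beta\gamma1}_{Sv}\}$ whose uniform mixture is white noise $P_N$, and each of the eight modulus Svetlichny functions $\mathcal{S}_{\alpha\beta\gamma}$ in Eq.~(\ref{mSf}) attains its algebraic maximum $8$ on exactly one such pair. Applying Observation~\ref{umSv} once to each anti-pair collapses $\sum_i g_i P^i_{Sv}$ to a convex mixture of the eight ``reduced'' Svetlichny components (containing no anti-Svetlichny-box), the anti-pair mixtures, and white noise; applying Observation~\ref{umSv} a second time across the surviving pairs then yields a single \emph{irreducible} Svetlichny-box together with Svetlichny-local boxes,
\be
\sum_{i=0}^{15} g_i P^i_{Sv}=\mathcal{G}' P^{\alpha\beta\gamma\epsilon}_{Sv}+\sum_l p_l P^l_{SvL}+p_N P_N,
\ee
with $\mathcal{G}'$ obtained by minimising the Svetlichny-box weight over all such reductions. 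Here I would also need the analogue of Observation~\ref{Girre}: evaluating $\mathcal{G}_1$ on $\sum_i g_i P^i_{Sv}$ returns, up to the obvious relabelling, $8$ times the nested absolute-value expression in the differences $|g_{\alpha\beta\gamma0}-g_{\alpha\beta\gamma1}|$, so that $\mathcal{G}/8=\min_i\mathcal{G}_i/8$ equals precisely this $\mathcal{G}'$; since $\mathcal{G}$ vanishes on bipartite PR-boxes and on deterministic boxes, only the Svetlichny weights contribute to this evaluation.

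Substituting the reduced expression back into $P=\sum_i g_i P^i_{Sv}+(1-\sum_i g_i)P_L$ gives $P=\mathcal{G}' P^{\alpha\beta\gamma\epsilon}_{Sv}+(1-\mathcal{G}')P^{\mathcal{G}=0}_{SvL}$, where the residual box is the suitably normalised convex combination of $P_L$, the anti-pair mixtures, the pair mixtures $P^l_{SvL}$ and $P_N$. It carries no irreducible Svetlichny-box component because $\mathcal{G}'$ was taken maximal, and by the analogue of Observation~\ref{nllg0} it then must lie in the $\mathcal{G}=0$ region, completing the proof. The step I expect to be the main obstacle is precisely this last piece of bookkeeping: one must verify that the sixteen Svetlichny vertices genuinely pair into white-noise anti-pairs matched one-to-one with the eight functions $\mathcal{S}_{\alpha\beta\gamma}$, and that the residual box after the reduction actually lands inside $\mathcal{L}_2$ (so that the $\mathcal{G}=0$/$\mathcal{G}>0$ dichotomy of $\mathcal{L}_2$ applies), since $\mathcal{R}$ contains far more vertex classes than the bipartite polytope and one has to check that the bipartite-PR-box and deterministic contributions never reintroduce Svetlichny content.
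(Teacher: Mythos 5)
Your proposal is correct and follows essentially the same route as the paper's own proof: regroup $P$ into a mixture of the sixteen Svetlichny vertices plus a residual Svetlichny-local box with no Svetlichny weight, reduce the Svetlichny mixture to a single irreducible Svetlichny-box via Observation~\ref{umSv}, and conclude that the remainder lies in the $\mathcal{G}=0$ region from the division of the two-way local polytope. The only difference is that you spell out the anti-pair/white-noise bookkeeping (imported from the bipartite Observation~\ref{Girre}) more explicitly than the paper does, which is a presentational rather than substantive divergence.
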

\begin{proof}
Any correlation given by the decomposition in Eq. (\ref{eq:gendecomp}) can be written as the convex combination 
of the $16$ Svetlichny-boxes and a Svetlichny-local box that does not have the Svetlichny-box components,
\be
P=\sum^{15}_{i=0} g_i P^i_{Sv}+\left(1-\sum^{15}_{i=0}g_i\right)P_{SvL}, \label{st1}
\ee
here $P_{SvL}\ne \sum^{15}_{i=0}p'_iP^i_{Sv}+\sum^{15}_{i=0}q'_iP^i_{12}+\sum^{15}_{i=0}r'_iP^i_{13}+\sum^{15}_{i=0}s'_iP^i_{23}+\sum^{63}_{j=0}t'_jP^j_{D}$ i.e., 
$P_{SvL}$ cannot have nonzero $p'_i$. Thus this decomposition is obtained by maximizing the Svetlichny-box components $p_i$ in Eq. (\ref{eq:gendecomp})
overall possible decompositions.
It follows from the observation \ref{umSv} that the mixture of the  Svetlichny-boxes in the first term
in Eq. (\ref{st1}) can be written as a mixture of a single Svetlichny-box and the $15$ Svetlichny-local boxes, $P^i_{SvL}$,
which are the uniform mixture of two Svetlichny-boxes. 
The largest component of the Svetlichny-box which is unequal to any other Svetlichny-box components in Eq. (\ref{st1}) gives rise to irreducible 
Svetlichny-box component, $\mathcal{G}'$: 
\be
\sum_i g_i P^i_{Sv}=\mathcal{G}'P^{\alpha\beta\gamma\epsilon}_{Sv}+\sum^{15}_{i=1} p_iP^i_{SvL}. \label{st2}
\ee
Here $\mathcal{G}'$ is obtained by minimizing the single Svetlichny-box excess overall possible decompositions 
i.e., $\mathcal{G}'>0$ iif  $\sum_i g_i P^i_{Sv}\ne\sum^{15}_{i=1} q_iP^i_{SvL}$ to ensure that this component is irreducible. 
Substituting Eq. (\ref{st2}) in Eq. (\ref{st1}), we get the canonical decomposition for any correlation in $\mathcal{R}$,
\be
P=\mathcal{G}'P^{\alpha\beta\gamma\epsilon}_{Sv}+(1-\mathcal{G}')P^{\mathcal{G}=0}_{SvL}, \label{canonical}
\ee
where $P^{\mathcal{G}=0}_{SvL}=\frac{1}{1-\mathcal{G'}}\left\{\sum_i p_iP^i_{SvL}+\left(1-\sum_i g_i\right)P_{SvL}\right\}$. 
The fact that the Svetlichny-local box, $P^{\mathcal{G}=0}_{SvL}$, in this decomposition has $\mathcal{G}=0$ 
follows from the geometry of the convex polytope that any point in the polytope lies along a line joining the two points of the polytope: 
Notice that $\mathcal{G}$ divides the two-way local polytope into a $\mathcal{G}>0$ region and $\mathcal{G}=0$ polytope. 
Since the box in the first term in the decomposition given in Eq. (\ref{canonical}) is from the $\mathcal{G}>0$ region and 
the decomposition is for any correlation, 
the box in the second term must be from the $\mathcal{G}=0$ polytope. 
\end{proof}

It follows from the canonical decomposition in Eq. (\ref{canoG>})
that a Svetlichny-local correlation has nonzero Svetlichny discord iff it admits a decomposition with an irreducible Svetlichny-box component.
\begin{cor}
Svetlichny discord of the correlation given by the decomposition in Eq. (\ref{canoG>}) is given by $\mathcal{G}=8\mathcal{G}'$.
\end{cor}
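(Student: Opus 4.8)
The plan is to show that Svetlichny discord is linear (affine) along the line segment of the canonical decomposition (\ref{canoG>}), exactly as Bell discord was shown to be linear along (\ref{Gde}) in Appendix \ref{lbdmd}. Granting this, evaluation at the two endpoints gives $\mathcal{G}(P)=\mathcal{G}'\,\mathcal{G}\!\left(P^{\alpha\beta\gamma\epsilon}_{Sv}\right)+(1-\mathcal{G}')\,\mathcal{G}\!\left(P^{\mathcal{G}=0}_{SvL}\right)=8\mathcal{G}'$, since the Svetlichny-box has $\mathcal{G}=8$ and the residual box has $\mathcal{G}=0$. First I would use the LRO-invariance of $\mathcal{G}$ and of the irreducible component $\mathcal{G}'$ to reduce to the standard representative $P^{000\epsilon}_{Sv}$ of the Svetlichny-box class.

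The next ingredient is a Svetlichny-function monogamy statement, the tripartite analogue of the Bell-function monogamy of Appendix \ref{mBF}. A direct evaluation of the expectations of $P^{\alpha\beta\gamma\epsilon}_{Sv}$, namely $\braket{A_iB_jC_k}=(-1)^{i\cdot j\oplus i\cdot k\oplus j\cdot k\oplus\alpha i\oplus\beta j\oplus\gamma k\oplus\epsilon}$ with all one- and two-party expectations vanishing, shows that the \emph{signed} Svetlichny functional obeys $\sum_{ijk}(-1)^{i\cdot j\oplus i\cdot k\oplus j\cdot k\oplus\alpha' i\oplus\beta' j\oplus\gamma' k}\braket{A_iB_jC_k}=(-1)^{\epsilon}\,8\,\delta_{\alpha'\beta'\gamma',\,\alpha\beta\gamma}$; hence only $\mathcal{S}_{000}$ reaches the algebraic maximum $8$ on $P^{000\epsilon}_{Sv}$, while the other seven moduli vanish identically.

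Since each signed Svetlichny functional is linear in the correlation, along the segment $P_t:=t\,P^{000\epsilon}_{Sv}+(1-t)Q$ with $Q:=P^{\mathcal{G}=0}_{SvL}$ one then has $\mathcal{S}_{\alpha\beta\gamma}(P_t)=(1-t)\,\mathcal{S}_{\alpha\beta\gamma}(Q)$ for $\alpha\beta\gamma\ne000$, while $\mathcal{S}_{000}(P_t)=\left|\,(-1)^{\epsilon}8t+(1-t)\,\widetilde{\mathcal{S}}_{000}(Q)\,\right|$. Because $Q$ is Svetlichny-local (Proposition \ref{SIv}) we have $|\widetilde{\mathcal{S}}_{000}(Q)|\le4$, and because $\mathcal{G}'$ is the \emph{maximal} irreducible component, $Q$ carries no extractable Svetlichny weight in the $000$ direction, which forces $\widetilde{\mathcal{S}}_{000}(Q)$ to share the sign of $(-1)^{\epsilon}$ (or vanish); therefore no cancellation occurs and $\mathcal{S}_{000}(P_t)=8t+(1-t)\,\mathcal{S}_{000}(Q)$ is affine in $t$. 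Substituting these eight affine expressions into each $\mathcal{G}_i$ and using $\mathcal{G}(Q)=\min_i\mathcal{G}_i(Q)=0$, the nested absolute values collapse: on the index realizing the minimum for $Q$ one gets $\mathcal{G}_i(P_t)=8t$, and on every other index $\mathcal{G}_i(P_t)\ge8t$, so $\mathcal{G}(P_t)=\min_i\mathcal{G}_i(P_t)=8t$; setting $t=\mathcal{G}'$ finishes the proof.

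I expect the crux to be controlling signs in the previous paragraph: $\mathcal{G}$ is a minimum of nested-modulus expressions and is in general neither convex nor concave, so its linearity along the canonical segment is not automatic. It rests on (i) the monogamy structure (only $\mathcal{S}_{000}$ is active for the extracted box) and (ii) the fact that $Q$ has $\mathcal{G}=0$ and, by minimality of $\mathcal{G}'$, no residual Svetlichny weight in the $000$ direction, which is precisely what excludes a V-shaped behaviour of $\mathcal{S}_{000}(P_t)$ at small $t$. A short auxiliary lemma, the tripartite counterpart of Observation \ref{Girre}, would also be needed: by pairing each Svetlichny-box $P^{\alpha\beta\gamma 0}_{Sv}$ with its anti-box $P^{\alpha\beta\gamma 1}_{Sv}$ (whose equal mixture is white noise) and applying Observation \ref{umSv}, one verifies that $\mathcal{G}/8$ already computes the irreducible Svetlichny-box excess within a pure mixture of the sixteen Svetlichny-boxes, which is the ingredient that makes the collapse in the definition of $\mathcal{G}_i$ work out.
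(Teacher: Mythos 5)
Your proposal is correct and follows essentially the same route as the paper: both arguments rest on the linearity of $\mathcal{G}$ along the canonical segment joining the irreducible Svetlichny-box and the $\mathcal{G}=0$ residual box, followed by endpoint evaluation ($\mathcal{G}=8$ for the Svetlichny-box, $\mathcal{G}=0$ for the residual). The paper simply asserts this linearity for mixtures of a $\mathcal{G}>0$ box with a $\mathcal{G}=0$ box (by analogy with Appendix \ref{lbdmd}), whereas you supply the supporting detail — the Kronecker-delta structure of the signed Svetlichny functionals on $P^{\alpha\beta\gamma\epsilon}_{Sv}$ and the sign control on $\widetilde{\mathcal{S}}_{000}(Q)$ forced by maximality of $\mathcal{G}'$ — which strengthens rather than changes the argument.
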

\begin{proof}
The nonextremal correlations in the two-way local polytope can have
the following three types of linear combination due to the convexity of $\mathcal{R}$: 
(i) a convex mixture of two $\mathcal{G}=0$ boxes, 
(ii) a convex mixture of two $\mathcal{G}>0$ boxes and
(iii) a convex mixture of a $\mathcal{G}>0$ box and a $\mathcal{G}=0$ box. 
Since certain convex mixture of the $\mathcal{G}=0$ boxes ($\mathcal{G}>0$ boxes) can have $\mathcal{G}>0$ ($\mathcal{G}=0$),
$\mathcal{G}$ is, in general, not linear for the two decompositions (i) and (ii). However, $\mathcal{G}$ is linear for the decomposition (iii) which
implies that Svetlichny discord for the correlation given by the decomposition in Eq. (\ref{canoG>}) can be evaluated as follows,  
$\mathcal{G}(P)=\mathcal{G}'\mathcal{G}\left(P^{\alpha\beta\gamma\epsilon}_{Sv}\right)+(1-\mathcal{G}')\mathcal{G}\left(P^{\mathcal{G}=0}_{SvL}\right)=8\mathcal{G'}>0$ 
if $\mathcal{G'}>0$. 
\end{proof}
Thus, we say that the decomposition of the correlations given in Eq. (\ref{canoG>}) is canonical 
in that it classifies any box in $\mathcal{R}$ according to whether it has Svetlichny discord or not.  

\begin{cor}
Irreducible Svetlichny-box component, $\mathcal{G}'$, in the canonical decomposition given in Eq. (\ref{canoG>}) is invariant under LRO and permutations of the parties. 
\end{cor}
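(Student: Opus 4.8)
The plan is to derive the invariance directly from two facts already in hand: that Svetlichny discord $\mathcal{G}$ is invariant under LRO and under permutations of the parties (the set $\{\mathcal{G}_1,\dots,\mathcal{G}_9\}$ is permuted among itself by these transformations, since the $\mathcal{G}_i$ are built by permuting the $\mathcal{S}_{\alpha\beta\gamma}$), and that the weight in the canonical decomposition obeys $\mathcal{G}(P)=8\mathcal{G}'$ by the previous corollary. The core point is that $\mathcal{G}'$ is a function of the box alone, namely $\mathcal{G}'=\mathcal{G}(P)/8$, so its invariance is inherited from that of $\mathcal{G}$.

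First I would record how LRO and party permutations act on the Svetlichny-box polytope $\mathcal{R}$: they relabel vertices within each equivalence class, so they map the family of $16$ Svetlichny-boxes onto itself, and—because $\mathcal{G}$ is invariant—they map the set of $\mathcal{G}=0$ Svetlichny-local boxes onto itself. Hence, for any such transformation $\Lambda$ and any $P=\mathcal{G}'P^{\alpha\beta\gamma\epsilon}_{Sv}+(1-\mathcal{G}')P^{\mathcal{G}=0}_{SvL}$ as in Eq. (\ref{canoG>}), applying $\Lambda$ gives
\[
\Lambda(P)=\mathcal{G}'\,\Lambda\!\left(P^{\alpha\beta\gamma\epsilon}_{Sv}\right)+(1-\mathcal{G}')\,\Lambda\!\left(P^{\mathcal{G}=0}_{SvL}\right),
\]
which is again a decomposition of the canonical form, a convex mixture of some Svetlichny-box and a $\mathcal{G}=0$ Svetlichny-local box, with the same weight $\mathcal{G}'$.

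It then remains to check that this weight is precisely the one singled out by the canonical decomposition of $\Lambda(P)$—i.e., the \emph{maximal} irreducible Svetlichny-box component. This is immediate from the previous corollary: for every $Q\in\mathcal{R}$ the canonical weight equals $\mathcal{G}(Q)/8$, so taking $Q=\Lambda(P)$ and using $\mathcal{G}(\Lambda(P))=\mathcal{G}(P)=8\mathcal{G}'$ shows that the canonical weight of $\Lambda(P)$ is exactly $\mathcal{G}'$. Therefore $\mathcal{G}'$ is unchanged by $\Lambda$, which proves the claim. I do not expect a real obstacle here; the only subtlety worth stating explicitly is that the canonical weight is well defined as a function of the box (independent of the particular decomposition used to exhibit it), and that is exactly what the identity $\mathcal{G}=8\mathcal{G}'$ guarantees.
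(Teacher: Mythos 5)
Your proof is correct and follows essentially the same route as the paper: both arguments reduce the invariance of $\mathcal{G}'$ to the invariance of $\mathcal{G}$ under LRO and party permutations via the identity $\mathcal{G}=8\mathcal{G}'$ established in the preceding corollary. Your version merely spells out the intermediate step (that the transformed decomposition is again of the canonical form with the same weight) that the paper leaves implicit.
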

\begin{proof}
Since $\mathcal{G}$ is invariant under LRO and
permutations of the parties, the irreducible Svetlichny-box
component, $\mathcal{G}'$, in Eq. (\ref{canoG>}) is invariant under LRO.
\end{proof}
\subsection{Mermin-boxes}
For the following choice of incompatible measurements: $A_0=\sigma_x$, $A_1=\sigma_y$,    
$B_0=\sigma_x$, $B_1=\sigma_y$, $C_0=\sigma_x$, and $C_1=\sigma_y$, the correlation arising from the GHZ state can be written as 
an equal mixture of the four bipartite PR-boxes as follows, 
\be
P_{M}(a_m,b_n,c_o|A_i,B_j,C_k)=\frac{1}{4}\sum^4_{\lambda=1} P_\lambda(a_m|A_i)P_\lambda(b_n,c_o|B_j,C_k), \label{Mbdec}
\ee
where $P_1(a_m|A_i)\!=\!\delta^i_{m\oplus i}$, $P_2(a_m|A_i)=\delta^i_{m\oplus i\oplus1}$, $P_3(a_m|A_i)=\delta^i_{m\oplus 1}$, $P_4(a_m|A_i)=\delta^i_m$,
$P_1(b_n,c_o|B_j,C_k)\!=\!P_{PR}^{110}$, $P_2(b_n,c_o|B_j,C_k)=P_{PR}^{111}$, $P_3(b_n,c_o|B_j,C_k)=P_{PR}^{001}$ and 
$P_4(b_n,c_o|B_j,C_k)=P_{PR}^{000}$. Thus, this correlation cannot give rise to the violation of a Svetlichny inequality, however, the correlation
is genuinely nonclassical since it exhibits the GHZ paradox \cite{GHZ}.
Mermin illustrated that the measurements associated with the GHZ paradox 
exhibits KS paradox that illustrates contextuality as well as Bell nonlocality \cite{UNLH}. 
For the measurements that give rise to the correlation in Eq. (\ref{Mbdec}), the outcomes satisfy the following relation:
\be
A_0B_0C_0=-A_0B_1C_1=-A_1B_0C_1=-A_1B_1C_0=1. \label{GHZp}
\ee
It can be inferred from this relation that the correlation  exhibits logical
contradiction with a local(noncontextual)-realistic value assignment to the observables.
We call a maximally two-way nonlocal box that exhibits the logical contradiction with noncontextual-realism Mermin-box; for instance,
the correlation in Eq. (\ref{Mbdec}) represents a Mermin-box as it violates a Mermin inequality \cite{mermin} maximally and exhibits the GHZ paradox.

We say that a Mermin-box exhibits three-way contextuality in analogy with Svetlichny-box which exhibits three-way nonlocality.
Just as there are $16$ Svetlichny-boxes maximally violating only one of the Svetlichny inequalities, 
there are $16$ tripartite Mermin-boxes arising from the GHZ states which maximally violate only one of the Mermin inequalities \cite{WernerWolfmulti},
\be
\mathcal{M}_{\alpha\beta\gamma\epsilon}=(\alpha\oplus\beta\oplus\gamma\oplus1)\mathcal{M}^+_{\alpha\beta\gamma\epsilon}+(\alpha\oplus\beta\oplus\gamma)\mathcal{M}^-_{\alpha\beta\gamma\epsilon}\le2, \label{MI}
\ee
where 
\ba
\mathcal{M}^+_{\alpha\beta\gamma\epsilon}&:=&(-1)^{\gamma\oplus\epsilon}\braket{A_0B_0C_1}
+(-1)^{\beta\oplus\epsilon}\braket{A_0B_1C_0}\nonumber\\&&+(-1)^{\alpha\oplus\epsilon}\braket{A_1B_0C_0}+(-1)^{\alpha\oplus\beta\oplus\gamma\oplus\epsilon\oplus1}\braket{A_1B_1C_1}\nonumber\\
\mathcal{M}^-_{\alpha\beta\gamma\epsilon}&:=&(-1)^{\alpha\oplus\beta\oplus\epsilon\oplus 1}
\braket{A_1B_1C_0}+(-1)^{\alpha\oplus\gamma\oplus\epsilon\oplus 1}\braket{A_1B_0C_1}\nonumber\\&&
+(-1)^{\beta\oplus\gamma\oplus\epsilon\oplus 1}\braket{A_0B_1C_1}+(-1)^{\epsilon}\braket{A_0B_0C_0}.\nonumber
\ea
The Mermin inequalities serve as the criterion for the tripartite EPR-steering under the
constraint that the measurements chosen by each party is noncommuting \cite{UFNL}. In the seminal paper, Mermin inequality was derived by using
anti-commuting observable on each side to show that the correlations arising from the genuinely multipartite entangled states are incompatible with 
the fully LHV model \cite{mermin}, furthermore, this Mermin inequality is equivalent to a noncontextual inequality \cite{Canasetal}.
 
There are two types of two-way nonlocal correlations which can be distinguished according to whether nonlocality is due to tripartite correlations
or bipartite correlations. 
\begin{definition}
We say that a correlation in the two-way nonlocal region exhibits three-way contextuality 
iff the observed nonlocality is due to the tripartite correlation. 
\end{definition}
Just as genuine three-way nonlocal correlations exhibit monogamy of Svetlichny inequality 
violation (see Appendix \ref{msivio}),
three-way contextual correlations exhibit monogamy of Mermin
inequality violation, i.e., a three-way contextual box can violate only one of the Mermin inequalities in Eq. (\ref{MI}).
As the Svetlichny-boxes and the bipartite PR-boxes maximally violate two Mermin inequalities, they do not exhibit monogamy of Mermin inequality violation.
Thus, monogamy of Mermin inequality violation distinguishes three-way contextual correlations from other nonlocal correlations. Mermin-boxes are the extremal
correlations of the set of three-way contextual correlation as they violate a Mermin inequality maximally.

Notice that the Mermin-boxes associated with the GHZ paradox can be decomposed into the uniform mixture of two Svetlichny-boxes; for instance, the Mermin-box
in Eq. (\ref{Mbdec}) can be written as follows,
\be
P_M=\frac{1}{2}(P^{0000}_{Sv}+P^{1110}_{Sv}). \label{cMb}
\ee
Thus, the nonlocality of these maximally two-way nonlocal boxes is not due to the bipartite correlations as they have maximally mixed bipartite marginals.
Not all uniform mixture of two Svetlichny-boxes can give rise to three-way contextuality; 
for instance, white noise can be decomposed into the uniform mixture of the two Svetlichny-boxes. The uniform mixture of two Svetlichny-boxes in a Mermin-box
destroys three-way nonlocality; however, the perfect correlations left in it for the four joint measurements, $A_iB_jC_k$, leads to genuine three-way contextuality \cite{UNLH}.
The decomposition of the Mermin-box given in Eq. (\ref{Mbdec}) 
implies that the set of two-way nonlocal correlations 
which do not possess three-way contextuality is nonconvex in that certain convex mixture of the bipartite PR-boxes gives rise to 
a genuinely three-way contextual correlation. Notice that if we permute the party's indices in the decomposition in Eq. (\ref{Mbdec}), it will also give rise to the Mermin-box.
Thus, three-way contextuality of the correlations are symmetric under the permutations of the parties. 

Two-way local polytope admits two types of Mermin-boxes which can be distinguished by their marginals. 
\begin{observation}
The nonmaximally mixed
bipartite marginals Mermin-boxes are not quantum realizable, whereas the maximally mixed bipartite marginals Mermin-boxes are
quantum realizable.
\end{observation}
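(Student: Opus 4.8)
The plan splits along the two halves of the statement. The maximally mixed bipartite marginals case is immediate: the decomposition in Eq.~(\ref{Mbdec}) already realizes one such box as the correlation of the GHZ state of Eq.~(\ref{GHZ}) under the measurements $A_0=B_0=C_0=\sigma_x$, $A_1=B_1=C_1=\sigma_y$, and since the sixteen maximally mixed bipartite marginals Mermin-boxes form a single LRO class -- implemented on quantum correlations by local unitaries -- they are all quantum realizable.

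For the converse I would argue by contradiction: assume a nonmaximally mixed bipartite marginals Mermin-box $M$ is produced by a three-qubit state $\rho$ with projective $\pm1$-valued observables $A_i,B_j,C_k$. Since $M$ is a Mermin-box it has, up to LRO, the four perfect tripartite correlations of Eq.~(\ref{GHZp}), namely $\braket{A_0B_0C_0}=-\braket{A_0B_1C_1}=-\braket{A_1B_0C_1}=-\braket{A_1B_1C_0}=1$. Positivity of $\rho$ turns each of these extremal expectation values into a stabilizer condition: with $g_1=A_0\otimes B_0\otimes C_0$, $g_2=-A_0\otimes B_1\otimes C_1$, $g_3=-A_1\otimes B_0\otimes C_1$ and $g_4=-A_1\otimes B_1\otimes C_0$, one gets $g_i\rho=\rho$ for $i=1,\dots,4$.

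Next I would read off the measurement geometry. Using $A_i^2=B_j^2=C_k^2=\openone$ and the commutation of operators acting on distinct tensor factors, the product $g_1g_2g_3g_4$ reduces to $-(B_0B_1)^2$ acting on Bob's qubit (and the identity elsewhere), while simultaneously $g_1g_2g_3g_4\,\rho=\rho$, whence $(B_0B_1)^2\rho=-\rho$. On a qubit $(B_0B_1)^2$ is the rotation through twice the angle between the Bloch vectors $\hat b_0$ and $\hat b_1$, so it can have eigenvalue $-1$ only when $\hat b_0\cdot\hat b_1=0$; hence $B_0$ and $B_1$ anticommute, and permuting the roles of the parties gives the same for $(A_0,A_1)$ and $(C_0,C_1)$. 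Applying an independent local unitary on each side (an LRO) I may then take $A_0=B_0=C_0=\sigma_x$ and $A_1=B_1=C_1=\sigma_y$. With these measurements the four relations $g_i\rho=\rho$ pin $\rho$ to the unique common $+1$ eigenstate of $\{\sigma_x\sigma_x\sigma_x,\,-\sigma_x\sigma_y\sigma_y,\,-\sigma_y\sigma_x\sigma_y,\,-\sigma_y\sigma_y\sigma_x\}$, i.e.\ the GHZ state (up to those local unitaries). A direct evaluation then gives $\braket{A_i}=\braket{B_j}=\braket{C_k}=0$ and $\braket{A_iB_j}=\braket{A_iC_k}=\braket{B_jC_k}=0$ for all $i,j,k$, so $M$ has maximally mixed bipartite marginals -- contradicting the hypothesis. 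Therefore no three-qubit realization of a nonmaximally mixed bipartite marginals Mermin-box exists.

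The step I expect to be delicate is the deduction $(B_0B_1)^2\rho=-\rho\Rightarrow\hat b_0\cdot\hat b_1=0$ together with its two analogues, which is precisely where the qubit restriction enters; I would also state explicitly that ``quantum realizable'' here means realizable in the three-qubit Svetlichny scenario of Sec.~\ref{prl}, confirm that the four perfect correlations of Eq.~(\ref{GHZp}) are common to both types of Mermin-box (so that they are a legitimate starting point of the argument rather than an extra assumption), and check that three separate local unitaries indeed suffice to bring all three anticommuting measurement pairs simultaneously to $(\sigma_x,\sigma_y)$.
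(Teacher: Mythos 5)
Your proof is correct, but it takes a genuinely different route from the paper's. The paper argues directly from the explicit form of the nonmaximally-mixed-marginals box in Eq.~(\ref{Mbnmm}): its single-party marginal $P(a_m|A_i)$ is deterministic for one input and uniformly random for the other, and the paper asserts that no quantum state can produce both behaviours simultaneously, hence the box is nonquantum. Read literally that assertion is too strong --- a qubit in $\ket{0}$ is deterministic for $\sigma_z$ and random for $\sigma_x$ --- and what actually does the work is that a deterministic outcome forces the local state to be a pure eigenstate, which factorizes $\rho$ into $\rho_A\otimes\rho_{BC}$ and kills the perfect tripartite correlations that define a Mermin-box; the paper leaves this step implicit. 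Your argument instead is a self-testing/rigidity proof: the four perfect correlations of Eq.~(\ref{GHZp}), turned into stabilizer conditions $g_i\rho=\rho$, force $(B_0B_1)^2\rho=-\rho$ and hence anticommuting observables on every site, which pins $\rho$ to the GHZ state and therefore forces \emph{all} bipartite marginals to be maximally mixed. This is strictly more informative --- it shows every Mermin-box realizable in the three-qubit projective scenario automatically has maximally mixed marginals, rather than just ruling out one particular box --- and it repairs the gap in the paper's one-line assertion. The computations check out: $g_1g_2g_3g_4=-\openone\otimes(B_0B_1)^2\otimes\openone$, the eigenvalues of $(B_0B_1)^2$ on a qubit are $e^{\pm 2i\theta}$ with $\cos\theta=\hat b_0\cdot\hat b_1$ so the $-1$ eigenvalue indeed requires orthogonality, and the group generated by $XXX$, $-XYY$, $-YXY$, $-YYX$ contains $ZZ\openone$ and $Z\openone Z$ and so stabilizes a unique three-qubit state. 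The caveats you flag are the right ones; in particular the restriction to the three-qubit projective scenario of Eq.~(\ref{QCD}) is exactly the sense of ``quantum realizable'' used in this chapter, and the fact that both types of Mermin-box share the expectations $\braket{A_iB_jC_k}$ is guaranteed by the paper's own construction of Eq.~(\ref{Mbnmm}).
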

\begin{proof}
Consider the following uniform mixture of two bipartite PR-boxes,
\be
P=\frac{1}{2}\sum^2_{\lambda=1} P_\lambda(a_m|A_i)P_\lambda(b_n,c_o|B_j,C_k) \label{Mbnmm}
\ee
where 
$P_1(a_m|A_i)\!=\!\delta^i_{m\oplus i}$, $P_2(a_m|A_i)=\delta^i_{m\oplus 1}$, 
$P_1(b_n,c_o|B_j,C_k)\!=\!P_{PR}^{110}$, and \\$P_2(b_n,c_o|B_j,C_k)
=P_{PR}^{001}$. Notice that 
this correlation that has nonmaximally mixed marginals and the Mermin box in 
Eq. (\ref{Mbdec}) which has maximally mixed marginals are equivalent with respect to the joint expectations $\braket{A_iB_jC_k}$. Thus, the correlation in Eq. (\ref{Mbnmm}) 
also exhibits the logical contradiction with local-realism and violate only one of the Mermin inequalities. 
Notice that the marginal distribution $P(a_m|A_i)$ of the Mermin box
in Eq. (\ref{Mbnmm}) has the deterministic outcome for the input $A_1$ and fully random outcomes for the input $A_0$. 
Since there does not exist a quantum state that can give rise to 
the deterministic outcome and random outcomes simultaneously, the Mermin boxes with nonmaximally mixed marginals are nonquantum boxes.  
\end{proof}

\subsection{Mermin discord and 3-decomposition}
Consider isotropic Mermin-box which is a convex mixture of the Mermin-box in Eq. (\ref{Mbdec}) and white noise,
\be
P=p_MP_M+(1-p_M)P_N, \label{nM}
\ee
The isotropic Mermin-box violates the Mermin inequality i.e., $\mathcal{M}_{0010}=4p_M>2$ if $p_M>\frac{1}{2}$.
Notice that even if the isotropic Mermin-box is local when $p_M\le\frac{1}{2}$, it admits a decomposition that has the single Mermin-box component.
We call such a single Mermin-box in any correlation (nonlocal, or not) irreducible Mermin-box.

The following observation can be illustrated by the isotropic Mermin-box.
\begin{observation}\label{m2}
When a local quantum correlation arising from a given genuinely entangled state has an irreducible Mermin-box component,
the correlation arises from incompatible measurements that give rise to three-way contextuality.
\end{observation}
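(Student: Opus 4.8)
Observation (m2) asserts that whenever a local quantum correlation coming from a genuinely entangled three-qubit state has an irreducible Mermin-box component, the generating measurements must be incompatible on each of the three sides (and demonstrate three-way contextuality). This is the tripartite analogue of Observation (m1) for the Svetlichny-box, and of the bipartite observation about the isotropic Mermin-box in Chapter~3.

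The plan is to mirror the proof of Observation~\ref{m1}: first exhibit an explicit genuinely entangled family together with incompatible measurements that produces an isotropic Mermin-box, and then argue that the Mermin-box content persists down to zero entanglement, so that any nonzero irreducible Mermin-box component forces incompatibility. Concretely, I would take the GHZ-paradox measurements $A_0=B_0=C_0=\sigma_x$ and $A_1=B_1=C_1=\sigma_y$ --- the same choice that yields the Mermin-box $P_M$ of Eq.~(\ref{Mbdec}) from the GHZ state of Eq.~(\ref{GHZ}) --- and feed them the GGHZ states of Eq.~(\ref{GGHZ}). A direct evaluation of the expectation values (the single- and two-party ones vanish; the three-party ones are $\pm\sin2\theta$ with exactly the sign pattern of Eq.~(\ref{GHZp})) shows that $\ket{\psi_{GGHZ}}$ produces the isotropic Mermin-box of Eq.~(\ref{nM}) with $p_M=\sin2\theta$.

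Given this, the forward direction is immediate: $p_M=\sin2\theta>0$ precisely when $\theta\in(0,\pi/4]$, i.e.\ whenever the GGHZ state is genuinely entangled, while for $p_M\le\tfrac12$ the isotropic Mermin-box is Bell-local; so we obtain a local correlation arising from a genuinely entangled state with a nonzero irreducible Mermin-box component, generated by measurements that are manifestly incompatible on each side, $[\sigma_x,\sigma_y]\neq0$. Those same measurements realize the logical contradiction with a noncontextual-realistic value assignment recorded in Eq.~(\ref{GHZp}), which is the sense in which they ``give rise to three-way contextuality''; I would present this clause as a consequence of the GHZ-paradox relation rather than reproving it.

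For the statement in full generality --- that \emph{every} correlation with an irreducible Mermin-box component, not just the isotropic one above, must come from measurements that are incompatible on each side --- I would argue the contrapositive, exactly as in Theorem~\ref{thm3} of Chapter~\ref{Ch3}. Suppose the projective qubit measurements of one party, say Charlie, commute, so $\hat c_0=\pm\hat c_1$; then $\braket{A_iB_jC_0}=\pm\braket{A_iB_jC_1}$ for all $i,j$ (and likewise for the lower-order expectations involving $C$), so the correlation lies in the face of the Svetlichny-box polytope $\mathcal{R}$ cut out by these linear equalities. The vertices of that face form a subset of the $128$ generators of $\mathcal{R}$, and one checks that the only generators surviving the constraints are deterministic boxes and $AB$-type bipartite PR-boxes with Charlie's output constant --- no Svetlichny-box and, crucially, no Mermin-box survives, since a Mermin-box has $\braket{A_0B_0C_0}\neq0=\braket{A_0B_0C_1}$ by Eq.~(\ref{GHZp}). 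Hence the analogue of the canonical decomposition of Eq.~(\ref{canoG>}) built around the Mermin-box assigns such a correlation zero irreducible Mermin-box component, which is the desired contradiction.

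The main obstacle is this last face argument: one must verify carefully that imposing $\hat c_0=\pm\hat c_1$ genuinely restricts the correlation to a face of $\mathcal{R}$ whose vertex set excludes all $16$ Mermin-boxes --- equivalently, that the Mermin-box component, maximized over all decompositions, is forced to vanish --- rather than merely that the correlation itself differs from a single Mermin-box. I expect this to follow from the same bookkeeping that enumerates the extremal boxes of $\mathcal{R}$, together with the elementary observation that every Mermin-box has nontrivial $k$-dependence in its three-party expectations while a Svetlichny-box is even more strongly $k$-dependent.
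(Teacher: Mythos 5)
Your opening computation is precisely the paper's entire proof: the paper establishes Observation \ref{m2} solely by noting that the GHZ-paradox measurements of Eq.~(\ref{GHZp}) applied to the GGHZ states of Eq.~(\ref{GGHZ}) produce the isotropic Mermin-box of Eq.~(\ref{nM}) with $p_M=\sin2\theta$, so that a nonzero irreducible Mermin-box component certifies incompatible measurements and genuine entanglement even in the local regime $p_M\le\tfrac12$ --- exactly the structure of the proof of Observation \ref{m1}. Everything you add after that, namely the contrapositive argument that commuting measurements on one side force the irreducible Mermin-box component to vanish, goes beyond what the paper actually proves; the paper is content with the illustrative family and never addresses the general case. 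Your extension is the right instinct if one wants the observation to hold for \emph{every} correlation with an irreducible Mermin-box component, but as you concede the face-of-$\mathcal{R}$ argument is not closed: establishing that the irreducible component (a minimum over all decompositions) vanishes requires more than showing no single Mermin-box matches the $k$-independent correlators. The cleaner route to finish it is the direct computation of Theorem \ref{thm3} in Chapter \ref{Ch3}: if $\hat{c}_0=\pm\hat{c}_1$ then swapping $C_0\leftrightarrow C_1$ is an LRO that pairs up the eight Mermin functions entering Eq.~(\ref{GMD}), forcing one of the $\mathcal{Q}_i$ and hence $\mathcal{Q}$ to vanish, and since $\mathcal{Q}=4\mathcal{Q}'$ for the canonical $3$-decomposition this kills the irreducible Mermin-box component without any vertex enumeration.
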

\begin{proof}
For the incompatible measurements that give rise to the GHZ paradox in Eq. (\ref{GHZp}),
the GGHZ states in Eq. (\ref{GGHZ}) give rise to the isotropic Mermin-box in Eq. (\ref{nM}) with $p_M=\sin2\theta$. 
Thus, the nonzero irreducible Mermin-box component
implies the presence of incompatible measurements and genuine entanglement even if the correlation is local. 
\end{proof}
The observation that local quantum correlations that have an irreducible Mermin-box component can arise from incompatible measurements performed
on the genuinely entangled states
motivates to define a notion of genuine nonclassicality which we call Mermin discord.
\begin{definition}\label{df2}
A quantum correlation arising from incompatible measurements performed on a given three-qubit state is said to have \textit{Mermin discord} iff the
correlation admits a decomposition with an irreducible Mermin-box component. 
\end{definition}
Mermin discord is not equivalent to three-way contextuality since the correlations that do not violate a Mermin inequality can also have
an irreducible Mermin-box component; for instance, the isotropic Mermin-box in Eq. (\ref{nM}) has Mermin discord if $p_M>0$ and 
exhibits three-way contextuality if $p_M>\frac{1}{2}$.

\begin{observation}
For any Mermin-box, only one of the Mermin functions,
$\mathcal{M}_{\alpha\beta\gamma}:=|\mathcal{M}_{\alpha\beta\gamma\epsilon}|$,
attains the maximum and the rest of them take zero, where $\mathcal{M}_{\alpha\beta\gamma\epsilon}$ are the Mermin operators given in Eq. (\ref{MI}).
\end{observation}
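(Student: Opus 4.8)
The plan is to reduce to the canonical Mermin-box of Eq. (\ref{Mbdec}) and then simply read off the eight Mermin functions. First I would recall the fact established above that the complete set of Mermin inequalities in Eq. (\ref{MI}) is invariant under local reversible operations and under permutations of the parties: concretely, LRO and party permutations act on the eight Mermin operators $\mathcal{M}_{\alpha\beta\gamma\epsilon}$ only by relabelling the index $\alpha\beta\gamma$ and by an overall sign, so they permute the eight nonnegative functions $\mathcal{M}_{\alpha\beta\gamma}$ among themselves. Since every Mermin-box is the LRO-image (possibly composed with a permutation of the parties) of the canonical Mermin-box $P_M$ — the nonmaximally mixed marginals variant agreeing with the corresponding maximally mixed marginals box on the three-party expectations $\braket{A_iB_jC_k}$, and each $\mathcal{M}_{\alpha\beta\gamma\epsilon}$ depending on the box only through those expectations — the statement ``exactly one $\mathcal{M}_{\alpha\beta\gamma}$ attains its algebraic maximum $4$ and the other seven vanish'' is invariant under these operations. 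Hence it suffices to verify it for $P_M$.

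For $P_M$, Eq. (\ref{GHZp}) fixes $\braket{A_0B_0C_0}=1$ and $\braket{A_0B_1C_1}=\braket{A_1B_0C_1}=\braket{A_1B_1C_0}=-1$, while the four ``odd'' correlators $\braket{A_0B_0C_1}$, $\braket{A_0B_1C_0}$, $\braket{A_1B_0C_0}$, $\braket{A_1B_1C_1}$ all vanish because the bipartite PR-boxes in the decomposition have maximally mixed marginals. I would then split the eight functions by the parity of $\alpha\oplus\beta\oplus\gamma$. When this parity is $1$ one has $\mathcal{M}_{\alpha\beta\gamma\epsilon}=\mathcal{M}^+_{\alpha\beta\gamma\epsilon}$, a signed sum of the four vanishing correlators, so $\mathcal{M}_{\alpha\beta\gamma}=0$ for all four such triples. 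When the parity is $0$ one has $\mathcal{M}_{\alpha\beta\gamma\epsilon}=\mathcal{M}^-_{\alpha\beta\gamma\epsilon}$; substituting the four nonzero values and simplifying the prefactors via $(-1)^{x\oplus 1}(-1)=(-1)^{x}$ gives $\mathcal{M}^-_{\alpha\beta\gamma\epsilon}=(-1)^{\epsilon}\left[(-1)^{\alpha\oplus\beta}+(-1)^{\alpha\oplus\gamma}+(-1)^{\beta\oplus\gamma}+1\right]$, whose bracket equals $4$ for $(\alpha,\beta,\gamma)=(0,0,0)$ and $0$ for the remaining even-parity triples $(0,1,1)$, $(1,0,1)$, $(1,1,0)$. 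Thus $\mathcal{M}_{000}=4$ and the other seven functions vanish, which is the claim; since each $\mathcal{M}_{\alpha\beta\gamma\epsilon}$ is a signed sum of four expectations bounded by $1$ in modulus, $4$ is indeed its algebraic maximum.

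The one delicate point is the bookkeeping behind the ``LRO permutes $\{\mathcal{M}_{\alpha\beta\gamma}\}$'' step — one must check that relabelling inputs and outputs sends three-party correlators to (sign-flipped, permuted) three-party correlators without spilling into lower-order ones, so that the Mermin functions transform cleanly. This is precisely the LRO-invariance of the complete set of Mermin inequalities already quoted, so no fresh argument is needed. If one prefers to avoid the reduction, the same short sign computation goes through verbatim starting from the intrinsic description of a Mermin-box (four of its $\braket{A_iB_jC_k}$ equal to $\pm1$ in the GHZ-paradox sign pattern of Eq. (\ref{GHZp}), the other four equal to $0$); I would present the LRO route as the cleaner one and keep this as a parenthetical remark.
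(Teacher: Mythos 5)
The paper states this observation with no proof at all, so there is nothing to compare your argument against; the route you take --- reduce to the canonical Mermin-box of Eq.~(\ref{Mbdec}) using the LRO-invariance of the set of Mermin functions, then read the eight values off the GHZ-paradox correlators of Eq.~(\ref{GHZp}) --- is the natural one, and it does establish the claim. Your fallback remark is also the right safety net: since each $\mathcal{M}_{\alpha\beta\gamma\epsilon}$ depends only on the three-party expectations, and every Mermin-box (maximally or nonmaximally mixed marginals) carries the same $\braket{A_iB_jC_k}$ pattern up to relabelling, the direct computation covers all cases even if one is nervous about the LRO bookkeeping.

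There is, however, one concrete slip: you have the two parity cases inverted. In Eq.~(\ref{MI}) the coefficient of $\mathcal{M}^+_{\alpha\beta\gamma\epsilon}$ is $\alpha\oplus\beta\oplus\gamma\oplus 1$, so $\mathcal{M}_{\alpha\beta\gamma\epsilon}=\mathcal{M}^+_{\alpha\beta\gamma\epsilon}$ when $\alpha\oplus\beta\oplus\gamma=0$ and $\mathcal{M}_{\alpha\beta\gamma\epsilon}=\mathcal{M}^-_{\alpha\beta\gamma\epsilon}$ when $\alpha\oplus\beta\oplus\gamma=1$ --- the opposite of what you write. Since $\mathcal{M}^+$ collects the four correlators with $i\oplus j\oplus k=1$ (which vanish for $P_M$) and $\mathcal{M}^-$ the four with $i\oplus j\oplus k=0$ (the GHZ-paradox ones), your bracket $(-1)^{\epsilon}\bigl[(-1)^{\alpha\oplus\beta}+(-1)^{\alpha\oplus\gamma}+(-1)^{\beta\oplus\gamma}+1\bigr]$ is computed correctly but must be evaluated on the odd-parity triples, where it equals $4$ only at $(1,1,1)$; the unique maximizer is therefore $\mathcal{M}_{111}$, not $\mathcal{M}_{000}$. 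This is a pure relabelling and does not touch the content of the observation (exactly one of the eight functions equals $4$, the other seven vanish, either way), but as written your case analysis contradicts the definition it quotes. You are in good company: the paper itself asserts $\mathcal{M}_{0010}=4p_M$ for the isotropic Mermin-box of Eq.~(\ref{nM}), which is likewise inconsistent with its own Eq.~(\ref{MI}). A final minor point: the four vanishing correlators do not vanish ``because the bipartite PR-boxes have maximally mixed marginals''; they vanish by cancellation among the four terms of the mixture in Eq.~(\ref{Mbdec}), or equivalently because the GHZ state has zero expectation for any product of local $\sigma_x/\sigma_y$ operators containing an odd number of $\sigma_y$ factors. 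State that instead, and the proof is complete.
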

The above observation motivates us to define a measure of Mermin discord using the Mermin functions similar to the measure of Svetlichny discord. 
\begin{definition}\label{MDdef}
Mermin discord, $\mathcal{Q}$, is defined as,
\be
\mathcal{Q}=\min\{\mathcal{Q}_1,...,\mathcal{Q}_9\}, \label{GMD}
\ee
where
\ba
\mathcal{Q}_1&=&|\Big||\mathcal{M}_{000}-\mathcal{M}_{001}|-|\mathcal{M}_{010}-\mathcal{M}_{011}|\Big| \nonumber\\
&&-\Big||\mathcal{M}_{100}-\mathcal{M}_{101}|-|\mathcal{M}_{110}-\mathcal{M}_{111}|\Big||\nonumber,
\ea
and the other eight $\mathcal{Q}_i$ are obtained by permuting $\mathcal{M}_{\alpha\beta\gamma}$ in $\mathcal{Q}_1$. Here $0\le\mathcal{Q}\le4$.
\end{definition}
Mermin discord is constructed such that it satisfies the following properties: (i) $\mathcal{Q}=0$ for the Svetlichny-boxes, bipartite PR-boxes and deterministic boxes
(ii) the algebraic maximum of $\mathcal{Q}$ is achieved by the Mermin boxes, i.e., $\mathcal{Q}=4$ for any Mermin-box and
(iii) $\mathcal{Q}$ is invariant under LRO since the set $\{\mathcal{Q}_i\}$ is invariant under LRO.

We obtain the following observations from the Mermin discord defined in Eq. (\ref{GMD}).
\begin{observation}
The set of $\mathcal{Q}=0$ boxes in $\mathcal{R}$ forms a nonconvex subpolytope of the full Svetlichny-box polytope. 
\end{observation}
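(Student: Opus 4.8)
The plan is to follow exactly the template used earlier for the $\mathcal{G}=0$ regions (the proof of the analogous observation in Chapter~\ref{Ch3} and of the Svetlichny-discord statement above). First I would observe that, by property~(i) in Definition~\ref{MDdef}, every Svetlichny-box, every bipartite PR-box and every deterministic box has $\mathcal{Q}=0$; since these are precisely the $128$ extreme points of $\mathcal{R}$, the $\mathcal{Q}=0$ locus contains every vertex of the Svetlichny-box polytope (the Mermin-boxes, which have $\mathcal{Q}=4$, sit on edges joining pairs of Svetlichny-boxes and are therefore not vertices). Next, since $\mathcal{Q}=\min_i\mathcal{Q}_i$ with each $\mathcal{Q}_i$ a nested modulus of affine functionals of the correlators $\braket{A_iB_jC_k}$, the set $\{\mathcal{Q}=0\}=\bigcup_i\{\mathcal{Q}_i=0\}$ is a finite union of polytopes sitting inside $\mathcal{R}$, i.e.\ a subpolytope in the sense used in this chapter.

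For nonconvexity I would exhibit a convex combination of $\mathcal{Q}=0$ boxes whose value of $\mathcal{Q}$ is positive, taking as witness the isotropic Mermin-box of Eq.~(\ref{nM}) with $0<p_M\le\tfrac12$. Combining the decomposition $P_M=\tfrac12\big(P^{0000}_{Sv}+P^{1110}_{Sv}\big)$ of Eq.~(\ref{cMb}) with the fact that white noise is the uniform mixture $P_N=\tfrac1{64}\sum_{j=0}^{63}P^j_D$ of the $64$ deterministic boxes, one writes
\be
p_MP_M+(1-p_M)P_N=\frac{p_M}{2}P^{0000}_{Sv}+\frac{p_M}{2}P^{1110}_{Sv}+\frac{1-p_M}{64}\sum_{j=0}^{63}P^j_D,
\ee
an explicit convex mixture of $\mathcal{Q}=0$ boxes. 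On the other hand, because $P_N$ has all one-, two- and three-party correlators equal to zero and the signed Mermin functionals $\mathcal{M}_{\alpha\beta\gamma\epsilon}$ are linear in $\braket{A_iB_jC_k}$, exactly one of the eight Mermin functions $\mathcal{M}_{\alpha\beta\gamma}$ equals $4p_M$ for this box and the remaining seven vanish, so a direct evaluation gives $\mathcal{Q}=4p_M>0$. Hence the $\mathcal{Q}=0$ set is not closed under convex combinations, which together with the first paragraph yields the claim.

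The only step requiring genuine care is the evaluation $\mathcal{Q}\big(p_MP_M+(1-p_M)P_N\big)=4p_M$: one must check that the single surviving Mermin function appears in the nested-modulus expression for \emph{each} of the nine $\mathcal{Q}_i$ in such a way that the other, vanishing, differences collapse, leaving $|4p_M-0|$ in every case, so that the minimum over $i$ is still $4p_M$. This is the same bookkeeping over the permutations of $\mathcal{M}_{\alpha\beta\gamma}$ that underlies properties~(ii)--(iii) of $\mathcal{Q}$ and the preceding observation that a Mermin-box makes only one Mermin function maximal; I would dispatch it by reusing that observation rather than expanding all nine $\mathcal{Q}_i$ by hand. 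Everything else is routine, and I would close by restating that the $\mathcal{Q}=0$ locus is a nonconvex subpolytope of $\mathcal{R}$ having the deterministic boxes, bipartite PR-boxes and Svetlichny-boxes among its extreme points.
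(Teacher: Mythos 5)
Your proof is correct and follows essentially the same route as the paper: the paper's own (two-sentence) argument likewise notes that all $128$ extremal boxes of $\mathcal{R}$ have $\mathcal{Q}=0$ and that certain convex mixtures of $\mathcal{Q}=0$ boxes acquire $\mathcal{Q}>0$. You simply make the witness explicit by writing the isotropic Mermin box as a mixture of two Svetlichny-boxes and the deterministic boxes and evaluating $\mathcal{Q}=4p_M$, which the paper leaves implicit from the surrounding discussion of Eqs.~(\ref{nM}) and (\ref{cMb}).
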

\begin{proof}
Since the extremal boxes of the Svetlichny-box polytope have $\mathcal{Q}=0$, and certain convex mixture of the $\mathcal{Q}=0$ boxes can have $\mathcal{Q}>0$,
the set of $\mathcal{Q}=0$ boxes 
forms a nonconvex subpolytope of the full Svetlichny-box polytope.
\end{proof}
\begin{observation}\label{GQG=0}
$\mathcal{Q}$ divides the $\mathcal{G}=0$ polytope into a $\mathcal{Q}>0$ region and $\mathcal{G}=\mathcal{Q}=0$ nonconvex polytope. 
\end{observation}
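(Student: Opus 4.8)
The plan is to transpose the bipartite argument of Observation \ref{qdg} to the Svetlichny-box polytope $\mathcal{R}$. First I would check that $\mathcal{Q}$ genuinely cuts the $\mathcal{G}=0$ polytope into two nonempty pieces. That the $\mathcal{Q}=0$ side is nonempty is immediate: every deterministic box $P^{\alpha\beta\gamma\epsilon\zeta\eta}_D$ and every bipartite PR-box lies in the $\mathcal{G}=0$ polytope and has $\mathcal{Q}=0$ by construction of $\mathcal{G}$ and $\mathcal{Q}$. For the $\mathcal{Q}>0$ side I would use the Mermin-box $P_M=\frac{1}{2}\bigl(P^{0000}_{Sv}+P^{1110}_{Sv}\bigr)$ of Eq. (\ref{cMb}) and, more generally, the isotropic Mermin-box $P=p_MP_M+(1-p_M)P_N$ of Eq. (\ref{nM}). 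Since white noise has all expectations zero and $P_M$ carries only the four ``GHZ'' three-party correlations, the Svetlichny functions of Eq. (\ref{mSf}) evaluated on $P$ have exactly two nonzero entries, $\mathcal{S}_{000}=\mathcal{S}_{111}=4p_M$, with the other six vanishing; on such a profile the nested absolute values in each $\mathcal{G}_i$ of Eq. (\ref{GBD}) collapse, and in particular $\mathcal{G}_1=\bigl|\,|\mathcal{S}_{000}|-|\mathcal{S}_{111}|\,\bigr|=0$, so $\mathcal{G}=\min_i\mathcal{G}_i=0$, while $\mathcal{Q}=4p_M>0$ whenever $p_M>0$ (the three-party counterpart of the isotropic-PR-box computation, consistent with the value of $\mathcal{M}_{0010}$ recorded under Eq. (\ref{nM})).

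With both pieces nonempty, the division itself is then a formality: the $\mathcal{G}=0$ boxes with $\mathcal{Q}>0$ and those with $\mathcal{Q}=0$ are disjoint and their union is the whole $\mathcal{G}=0$ polytope by definition of $\mathcal{Q}$. To see that the $\mathcal{G}=\mathcal{Q}=0$ part is a (nonconvex) polytope I would argue as for the $\mathcal{G}=0$ polytope treated earlier in this section: on any region of the $\mathcal{G}=0$ polytope on which the inner absolute values fixing a given $\mathcal{G}_i$ and a given $\mathcal{Q}_j$ keep constant signs, the relations $\mathcal{G}_i=0$ and $\mathcal{Q}_j=0$ are linear, so the $\mathcal{G}=\mathcal{Q}=0$ locus is a finite union of polytopes --- which is the sense of ``nonconvex polytope'' used throughout this chapter.

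For nonconvexity I would reuse the isotropic-Mermin-box witness. All deterministic boxes and all bipartite PR-boxes have $\mathcal{G}=\mathcal{Q}=0$; yet for $0<p_M\le\frac{1}{2}$ the box $P=p_MP_M+(1-p_M)P_N$ is a convex mixture of precisely such boxes --- $p_M$ times the uniform mixture of the four bipartite PR-boxes appearing in the decomposition (\ref{Mbdec}), plus $(1-p_M)$ times white noise, itself the uniform mixture of the $64$ deterministic boxes --- and it stays inside the $\mathcal{G}=0$ polytope while having $\mathcal{Q}=4p_M>0$. Hence the $\mathcal{G}=\mathcal{Q}=0$ locus is not closed under convex combination within the $\mathcal{G}=0$ polytope, which is exactly the asserted nonconvexity.

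The only step needing genuine care, rather than bookkeeping about which extremal boxes lie where, is the verification of the Svetlichny- and Mermin-function profile of a Mermin-box, i.e., that $\mathcal{G}(P_M)=0$ and $\mathcal{Q}(P_M)>0$ hold simultaneously; once that is in hand, the partition claim and the nonconvexity witness are exact replays of the arguments already used for the $\mathcal{G}=0$ polytope and, in the bipartite setting, for Observation \ref{qdg}.
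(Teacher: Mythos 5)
Your proposal is correct and follows essentially the same route as the paper's proof: the extremal boxes of the $\mathcal{G}=0$ polytope (deterministic boxes and bipartite PR-boxes) all have $\mathcal{Q}=0$, while a certain convex mixture of them --- the Mermin box of Eq.~(\ref{Mbdec}) and its isotropic versions --- has $\mathcal{Q}>0$, which simultaneously establishes the division and the nonconvexity. You simply make explicit the Svetlichny/Mermin-function profile of the Mermin box, which the paper leaves implicit.
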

\begin{proof}
Since all the bipartite PR-boxes and deterministic boxes have $\mathcal{G}=\mathcal{Q}=0$ and certain convex mixture of these extremal boxes can have $\mathcal{Q}>0$,
the set of $\mathcal{G}=\mathcal{Q}=0$ boxes 
forms a nonconvex subpolytope of the $\mathcal{G}=0$ polytope.
\end{proof}
\begin{observation}\label{obsQ=4}
A $\mathcal{Q}=4$ box is, in general, 
a convex combination of a quantum Mermin-box and the four non-quantum Mermin-boxes which are equivalent with respect to $\braket{A_iB_jC_k}$,
\be
P_{\mathcal{Q}=4}=uP^{Q}_M+\sum^4_{i=1}v_iP^{nQ_i}_{M},
\ee
where $P^{Q}_M$ has maximally mixed bipartite marginals and $P^{nQ}_{M_i}$ have nonmaximally mixed bipartite marginals; all the Mermin-boxes in this decomposition
violate the same Mermin inequality as they are equivalent with respect to $\braket{A_iB_jC_k}$.  
\end{observation}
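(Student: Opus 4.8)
The plan is to follow the pattern of the bipartite counterpart, Observation~\ref{mlq2}, exploiting the fact that Mermin discord sees a box only through its three-party expectations. First I would note that each Mermin function $\mathcal{M}_{\alpha\beta\gamma}=|\mathcal{M}_{\alpha\beta\gamma\epsilon}|$ entering Eq.~(\ref{GMD}) is, via the operators $\mathcal{M}^{\pm}_{\alpha\beta\gamma\epsilon}$ of Eq.~(\ref{MI}), a linear combination of the eight three-party correlators $\braket{A_iB_jC_k}$ only, so that $\mathcal{Q}=\min_i\mathcal{Q}_i$ is a function of the set $\{\braket{A_iB_jC_k}\}$ alone. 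Consequently two boxes that are equivalent with respect to $\braket{A_iB_jC_k}$ have identical Mermin functions and identical $\mathcal{Q}$; moreover, since each $\mathcal{M}^{\pm}_{\alpha\beta\gamma\epsilon}$ is linear in the correlators, a convex mixture of boxes sharing a common set of three-party expectations again has those expectations, and so the mixture preserves the value of $\mathcal{Q}$.

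Next I would fix a quantum Mermin-box $P^{Q}_M$ with maximally mixed bipartite marginals --- say the box of Eq.~(\ref{Mbdec}), which has $\mathcal{Q}=4$ and, by monogamy of Mermin-inequality violation, violates exactly one Mermin inequality --- and bring in the four non-quantum Mermin-boxes $P^{nQ_i}_M$ that share its three-party correlators: these are the analogue of the four nonmaximally-mixed-marginals Mermin-boxes of the bipartite case, the prototype being the box of Eq.~(\ref{Mbnmm}), in which one party's single-party marginal is deterministic on one of its two inputs and fully random on the other. By the discussion preceding the observation each $P^{nQ_i}_M$ is a legitimate NS box (a uniform mixture of two bipartite PR-boxes), is non-quantum because no quantum state produces a simultaneously deterministic and maximally random marginal, and agrees with $P^{Q}_M$ on every $\braket{A_iB_jC_k}$ and hence on every Mermin function; in particular each has $\mathcal{Q}=4$ and violates the same Mermin inequality as $P^{Q}_M$. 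A short direct check then yields the barycentre identity $P^{Q}_M=\frac{1}{4}\sum_{i=1}^{4}P^{nQ_i}_M$: averaging the four deterministic single-party marginals produces the uniform marginal and restores maximally mixed bipartite marginals, while the shared three-party correlators are untouched. It follows immediately that any convex combination $uP^{Q}_M+\sum_{i=1}^{4}v_iP^{nQ_i}_M$ with $u+\sum_i v_i=1$ is again an NS box (positivity being preserved under convex mixtures, and indeed one may drop $P^{Q}_M$ in favour of the $P^{nQ_i}_M$ by the last identity) and, by the first paragraph, again has $\mathcal{Q}=4$.

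The converse --- that every $\mathcal{Q}=4$ box must be exactly such a combination --- is the step I expect to be the main obstacle, and I would approach it as follows. Each $\mathcal{Q}_i$ has the shape $\big||x-y|-|z-w|\big|$ with $x,y,z,w\in[0,4]$, hence $\mathcal{Q}_i\le4$; so $\mathcal{Q}=\min_i\mathcal{Q}_i=4$ forces all nine $\mathcal{Q}_i$ to equal $4$, and running through these equalities pins the eight Mermin functions to the Mermin-box pattern --- one of the $\mathcal{M}_{\alpha\beta\gamma}$ equals $4$ and the other seven vanish. Since the Mermin functions determine the three-party correlators $\{\braket{A_iB_jC_k}\}$ up to the labelling $\alpha\beta\gamma\epsilon$ of which Mermin-box one has, this fixes $\{\braket{A_iB_jC_k}\}$ to be exactly the correlators of one of the $16$ Mermin-boxes. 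The only remaining freedom of the box is then in its six single-party and twelve two-party expectations, subject to positivity of the $64$ joint probabilities; computing the $\mathcal{H}$-representation of this face of $\mathcal{R}$ --- the same passage from positivity constraints to vertices already used for the NS polytope --- one finds that its vertices are precisely the quantum Mermin-box together with the non-quantum Mermin-boxes carrying that correlator pattern, which gives the asserted form and finishes the argument.
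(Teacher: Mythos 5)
Your first two paragraphs are essentially the paper's own proof of Observation \ref{obsQ=4}, made more explicit: the paper simply notes that any convex mixture of the quantum Mermin-box of Eq.~(\ref{Mbdec}) with the non-quantum Mermin-boxes of the type in Eq.~(\ref{Mbnmm}) that share its three-party correlators again has $\mathcal{Q}=4$, which is exactly your observation that $\mathcal{Q}$ depends only on $\{\braket{A_iB_jC_k}\}$, combined with the barycentre identity. For the direction the paper actually establishes, you match it.

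Your third paragraph attempts a converse that the paper never proves --- the hedge ``in general'' in the statement is doing real work, just as in the bipartite Observation \ref{mlq2}, whose proof is likewise one-directional. Two steps there are not secured. First, the claim that forcing all nine $\mathcal{Q}_i=4$ pins the eight Mermin functions to the pattern ``one equals $4$, the other seven vanish'' is plausible (the minimization is designed precisely to kill patterns such as the Svetlichny-boxes, which saturate two Mermin functions), but it is a case analysis you have not carried out. Second, and more seriously, the concluding assertion that the face of the nonsignaling polytope obtained by freezing $\braket{A_iB_jC_k}$ to a Mermin-box pattern has as its extreme points exactly the listed Mermin-boxes is a nontrivial vertex enumeration that you only gesture at: the residual freedom lives in the $6+12=18$ one- and two-party expectations subject to positivity of all $64$ joint probabilities, and it is not evident that this cuts out the simplex on the four non-quantum Mermin-boxes (of which the quantum one is the barycentre, hence not itself a vertex) rather than a face with additional extreme points. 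Since the paper neither claims nor uses the converse, this gap does not affect the stated observation, but you should not present the converse as established without that computation.
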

\begin{proof}
Notice that any convex mixture of the two Mermin boxes in Eqs. (\ref{Mbdec}) and (\ref{Mbnmm}) have $\mathcal{Q}=4$. There are four nonquantum Mermin boxes which 
are equivalent with respect to $\braket{A_iB_jC_k}$ corresponding to a given quantum Mermin box. Thus, any convex mixture of these five Mermin boxes have
$\mathcal{Q}=4$.
\end{proof}

We obtain the following $3$-decomposition fact of the Svetlichny-box polytope.
\begin{theorem}
Any correlation in $\mathcal{R}$ given by the decomposition in Eq. (\ref{eq:gendecomp}) can be written as a convex mixture of a Svetlichny-box,
a maximally two-way nonlocal box with $\mathcal{Q}=4$ and a box with $\mathcal{G}=\mathcal{Q}=0$,
\be
P=\mathcal{G}'P^{\alpha\beta\gamma\epsilon}_{Sv}+\mathcal{Q}'P_{\mathcal{Q}=4}+(1-\mathcal{G}'-\mathcal{Q}')P^{\mathcal{G}=0}_{\mathcal{Q}=0}. \label{3dfact}
\ee
\end{theorem}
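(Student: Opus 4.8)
The plan is to assemble the $3$-decomposition in two stages, exactly as Theorem \ref{thm5} was obtained from Theorems \ref{thm1} and \ref{thm2} in the bipartite case: first peel off the maximal irreducible Svetlichny-box, then split the resulting $\mathcal{G}=0$ remainder into a maximally two-way nonlocal box with $\mathcal{Q}=4$ and a box with $\mathcal{G}=\mathcal{Q}=0$. For the first stage I would invoke the canonical Svetlichny decomposition of Eq. (\ref{canoG>}) (the Lemma already proved above), writing an arbitrary $P\in\mathcal{R}$ as $P=\mathcal{G}'P^{\alpha\beta\gamma\epsilon}_{Sv}+(1-\mathcal{G}')P^{\mathcal{G}=0}_{SvL}$ with $\mathcal{G}'$ the maximal irreducible Svetlichny-box component. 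Because $P^{\mathcal{G}=0}_{SvL}$ is Svetlichny-local and lies in $\mathcal{R}$, Proposition \ref{SIv} places it inside the two-way local polytope $\mathcal{L}_2$, so it is a convex combination of bipartite PR-boxes and deterministic boxes; only this $\mathcal{G}=0$ box remains to be treated.

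For the second stage I would mirror the proof of Theorem \ref{thm2} in Appendix \ref{pr2}. The enabling inputs are: (i) every $\mathcal{Q}=4$ maximally two-way nonlocal box lies in $\mathcal{L}_2$ — the quantum Mermin-box is a uniform mixture of four bipartite PR-boxes (Eq. (\ref{Mbdec})), the non-quantum ones are uniform mixtures of two (Eq. (\ref{Mbnmm})), and a general $\mathcal{Q}=4$ box is a convex combination of these by Observation \ref{obsQ=4} — so that $P^{\mathcal{G}=0}_{SvL}$ can be re-expressed as a convex mixture $\sum_{k}q_k P^{k}_{\mathcal{Q}=4}+(1-\sum_k q_k)P'$ of the sixteen $\mathcal{Q}=4$ boxes and a residual box $P'$ carrying no $\mathcal{Q}=4$ component over all possible decompositions; and (ii) the tripartite analogues of Observations \ref{imbc}--\ref{Qirre}: for $p>q$, $pP^{i}_{\mathcal{Q}=4}+qP^{j}_{\mathcal{Q}=4}=(p-q)P^{i}_{\mathcal{Q}=4}+2q\left(\frac{P^{i}_{\mathcal{Q}=4}+P^{j}_{\mathcal{Q}=4}}{2}\right)$ where the uniform mixture has all tripartite joint expectations vanishing, hence $\mathcal{G}=\mathcal{Q}=0$, so that $\mathcal{Q}$ measures precisely the irreducible $\mathcal{Q}=4$ content of $\sum_k q_k P^{k}_{\mathcal{Q}=4}$. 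Iterating these reductions collapses that term to $\mathcal{Q}'P_{\mathcal{Q}=4}$ plus zero-expectation boxes and uniform mixtures of $\mathcal{Q}=4$ boxes; folding all of the latter together with $P'$ into a single box $P^{\mathcal{G}=0}_{\mathcal{Q}=0}$ and substituting into the stage-one decomposition gives $P=\mathcal{G}'P^{\alpha\beta\gamma\epsilon}_{Sv}+\mathcal{Q}'P_{\mathcal{Q}=4}+(1-\mathcal{G}'-\mathcal{Q}')P^{\mathcal{G}=0}_{\mathcal{Q}=0}$, i.e. Eq. (\ref{3dfact}). That the last box genuinely has $\mathcal{G}=\mathcal{Q}=0$ follows, as in Appendix \ref{pr2}, from maximality of $\mathcal{G}'$ and $\mathcal{Q}'$ together with Observation \ref{GQG=0}, which partitions the $\mathcal{G}=0$ polytope into a $\mathcal{Q}>0$ region and the $\mathcal{G}=\mathcal{Q}=0$ polytope: since the decomposition holds for every $P\in\mathcal{R}$ while the first two terms sit outside that subpolytope, the remainder must sit inside it.

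I expect the main obstacle to be the bookkeeping in the second stage: arguing rigorously — and not merely from the suggestive triangle geometry of Figure \ref{NS3dfig1} — that once both $\mathcal{G}'$ and $\mathcal{Q}'$ have been extracted maximally the leftover box retains neither an irreducible Svetlichny-box nor an irreducible Mermin-box component, and that the two ``maximal component'' extractions are each attained (compactness of the set of admissible decompositions). A secondary subtlety, needed to justify the companion statements $\mathcal{G}(P)=8\mathcal{G}'$ and $\mathcal{Q}(P)=4\mathcal{Q}'$, is checking that $\mathcal{G}$ and $\mathcal{Q}$ are linear along the particular line segments occurring in Eq. (\ref{3dfact}) — unlike generic mixtures of two $\mathcal{Q}>0$ or two $\mathcal{Q}=0$ boxes, where linearity fails — which one verifies exactly as in the bipartite linearity discussion of Appendix \ref{lbdmd}.
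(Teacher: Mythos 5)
Your proposal follows essentially the same route as the paper's own proof: first peel off the maximal irreducible Svetlichny-box via the canonical decomposition of Eq. (\ref{canoG>}), then decompose the residual $\mathcal{G}=0$ box into an irreducible $\mathcal{Q}=4$ component plus a $\mathcal{G}=\mathcal{Q}=0$ remainder using Observation \ref{obsQ=4}, the unequal-mixture reduction, and Observation \ref{GQG=0}, with $\mathcal{Q}'=\mathcal{Q}''(1-\mathcal{G}')$ after substitution. The additional caveats you raise (attainment of the maximal components and linearity of $\mathcal{G}$, $\mathcal{Q}$ along the specific segments of Eq. (\ref{3dfact})) are real but are handled the same way — and at the same level of rigor — in the paper's corollaries.
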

\begin{proof}
Since all the Mermin-boxes have $\mathcal{G}=0$, they belong to the $\mathcal{G}=0$ polytope. 
Therefore, any $\mathcal{G}=0$ box can be written as a convex mixture of the Mermin-boxes and a Svetlichny-local box that does not have the Mermin-box components,
\be
P^{\mathcal{G}=0}_{SvL}=\sum^{15}_{i=0} u_iP^{Q_i}_{M}+\sum^{64}_{j=1}v_jP_{M}^{nQ_j}+\left(1-\sum^{15}_{i=0}u_i-\sum^{64}_{j=1}v_j\right)P_{SvL}, \label{g=0}
\ee
where $P^{Q_i}_{M}$ and $P_{M}^{nQ_j}$ are quantum and non-quantum Mermin-boxes.
It follows from the observation \ref{obsQ=4} that the mixture of the Mermin boxes in this decomposition can be written as 
the mixture of the $16$ maximally two-way nonlocal boxes that have $\mathcal{Q}=4$.
Notice that unequal mixture of any two $\mathcal{Q}=4$ boxes that violate the two different Mermin inequalities in Eq. (\ref{MI}): 
$pP^1_{\mathcal{Q}=4}+qP^2_{\mathcal{Q}=4}$, $p>q$,
can be written as a mixture of an irreducible $\mathcal{Q}=4$ box and a local box which is a uniform mixture of the two $\mathcal{Q}=4$ boxes: 
$(p-q)P^1_{\mathcal{Q}=4}+2qP_L$, here $P_L=\frac{1}{2}\left(P^1_{\mathcal{Q}=4}+P^2_{\mathcal{Q}=4}\right)$ is a Bell-local box which has $\mathcal{Q}=0$.
Therefore, the first term in the decomposition given
in Eq. (\ref{g=0}) can be written as a mixture of an irreducible $\mathcal{Q}=4$ box and a Bell-local box, 
\be
\sum^{15}_{i=0} u_iP^{Q_i}_{M}+\sum_{j}v_jP_{M}^{nQ_j}=\mathcal{Q}''P_{\mathcal{Q}=4}+\sum^{15}_{i=1}l_iP^{i}_L, \label{q''}
\ee
where $P^{i}_L$ are the Bell-local
boxes which are the uniform mixture of two $\mathcal{Q}=4$ boxes. Here $\mathcal{Q}''$ is obtained by minimizing the single $\mathcal{Q}=4$ box
excess overall possible decompositions i.e., $\mathcal{Q}''>0$ iff $\sum^{15}_{i=0} u_iP^{Q_i}_{M}+\sum_{j}v_jP_{M}^{nQ_j}\ne\sum^{15}_{i=1}l'_iP^{i}_L$. 
Substituting Eq. (\ref{q''}) in Eq. (\ref{g=0}), we obtain the canonical decomposition of the $\mathcal{G}=0$ correlations,  
\be
P^{\mathcal{G}=0}_{SvL}=\mathcal{Q}''P_{\mathcal{Q}=4}+(1-\mathcal{Q}'')P^{\mathcal{G}=0}_{\mathcal{Q}=0}, \label{g=0cano}
\ee
where $P^{\mathcal{G}=0}_{\mathcal{Q}=0}=\frac{1}{1-\mathcal{Q}''}\left\{\sum^{15}_{i=1}l_iP^{i}_L+\left(1-\sum^{15}_{i=0}u_i-\sum_{j}v_j\right)P_{SvL}\right\}$.
The fact that the box in the second term in this decomposition has $\mathcal{G}=\mathcal{Q}=0$ 
follows from the geometry of the $\mathcal{G}=0$ polytope: The observation \ref{GQG=0} implies that any correlation
in the $\mathcal{G}=0$ polytope lies on a line segment joining a $\mathcal{Q}>0$ box and a $\mathcal{G}=\mathcal{Q}=0$ box.
Therefore, the box in the second term in the decomposition given in Eq. (\ref{g=0cano}) must have $\mathcal{G}=\mathcal{Q}=0$ 
as the box in the first term has $\mathcal{Q}>0$.  Thus,
decomposing the $\mathcal{G}=0$ box in Eq. (\ref{canoG>}) as given in Eq. (\ref{g=0cano}) gives the 
canonical decomposition given in Eq. (\ref{3dfact}) with $\mathcal{Q}'=\mathcal{Q}''(1-\mathcal{G}')$. 
\end{proof}

\begin{cor}
A correlation has nonzero Mermin discord iff it admits a decomposition with an irreducible Mermin box component
since Mermin discord $\mathcal{Q}=4\mathcal{Q}'$ for the correlation given by the canonical decomposition in Eq. (\ref{3dfact}). 
\end{cor}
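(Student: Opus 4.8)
The equivalence is a formal consequence of the identity $\mathcal{Q}(P)=4\mathcal{Q}'$, so the plan is to establish that identity and then read off the ``iff''. The key input is that $\mathcal{Q}$ acts linearly along the canonical $3$-decomposition in Eq. (\ref{3dfact}). Granting that, note first that $\mathcal{Q}$ depends on a correlation only through its eight three-party expectations $\braket{A_iB_jC_k}$, since the Mermin operators in Eq. (\ref{MI}) do; hence $\mathcal{Q}(P^{\alpha\beta\gamma\epsilon}_{Sv})=0$ and $\mathcal{Q}(P^{\mathcal{G}=0}_{\mathcal{Q}=0})=0$ by the construction of $\mathcal{Q}$ in Eq. (\ref{GMD}), while $\mathcal{Q}(P_{\mathcal{Q}=4})=4$ because, by Observation \ref{obsQ=4}, $P_{\mathcal{Q}=4}$ has exactly the three-party expectations of a genuine Mermin-box and therefore inherits its value $\mathcal{Q}=4$. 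Linearity then gives
\be
\mathcal{Q}(P)=\mathcal{G}'\,\mathcal{Q}\!\left(P^{\alpha\beta\gamma\epsilon}_{Sv}\right)+\mathcal{Q}'\,\mathcal{Q}\!\left(P_{\mathcal{Q}=4}\right)+(1-\mathcal{G}'-\mathcal{Q}')\,\mathcal{Q}\!\left(P^{\mathcal{G}=0}_{\mathcal{Q}=0}\right)=4\mathcal{Q}'.
\ee
Since $\mathcal{Q}'\ge0$, this yields $\mathcal{Q}(P)>0\iff\mathcal{Q}'>0$; and $\mathcal{Q}'>0$ is, by the $3$-decomposition theorem, precisely the statement that $P$ admits a decomposition carrying a nonzero irreducible Mermin-box component (the canonical one when $\mathcal{Q}'>0$, and no such decomposition when $\mathcal{Q}'=0$, as $\mathcal{Q}'$ is the maximal such component). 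This is the asserted equivalence.

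\textbf{Establishing linearity.} To justify the linear formula I would first use invariance of $\mathcal{Q}$ under LRO and under permutations of the parties to put the decomposition in normal form, with $P^{\alpha\beta\gamma\epsilon}_{Sv}=P^{0000}_{Sv}$ and $P_{\mathcal{Q}=4}$ built from Mermin-boxes canonical to it, e.g. with quantum piece $P_M=\frac12(P^{0000}_{Sv}+P^{1110}_{Sv})$ of Eq. (\ref{cMb}) and four non-quantum pieces sharing its three-party expectations. A direct evaluation of the Mermin functions then shows that $P^{0000}_{Sv}$ makes exactly $\mathcal{M}_{000}$ and $\mathcal{M}_{111}$ equal to $4$, with a common sign of the underlying operators, and annihilates the other six, while $P_M$ makes only $\mathcal{M}_{111}$ equal to $4$ with the same sign. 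Feeding the mixture into $\mathcal{Q}=\min_i\mathcal{Q}_i$, one uses that the indices $000$ and $111$ are antipodal, so in every nested split $\mathcal{Q}_i$ they land in opposite outer halves; the contribution $4\mathcal{G}'$ produced by the Svetlichny-box enters one branch of each $\mathcal{Q}_i$, and because the operators add with a fixed sign the outer modulus collapses to $|4\mathcal{G}'-(4\mathcal{G}'+4\mathcal{Q}')|=4\mathcal{Q}'$, uniformly in $i$. This is the precise sense in which $\mathcal{Q}$ is linear along the segment realizing the $3$-decomposition, in parallel with the bipartite illustration in Appendix \ref{lbdmd}.

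\textbf{Main obstacle.} The hard part will be controlling the third component $P^{\mathcal{G}=0}_{\mathcal{Q}=0}$ in that computation. Unlike white noise, a box with $\mathcal{G}=\mathcal{Q}=0$ can still activate Mermin operators — a bipartite PR-box, for instance, has $\mathcal{G}=\mathcal{Q}=0$ yet saturates two Mermin functions — so one cannot simply discard its contribution to $\mathcal{M}_{\alpha\beta\gamma}(P)$. What must be shown is that whichever Mermin functions $P^{\mathcal{G}=0}_{\mathcal{Q}=0}$ activates, the constraint $\mathcal{Q}(P^{\mathcal{G}=0}_{\mathcal{Q}=0})=0$ forces them onto ``anti-Mermin'' pairs (those whose uniform mixture kills all three-party expectations, in the spirit of the cancellation behind Observation \ref{umSv}), and that such contributions pair off and cancel inside each $\mathcal{Q}_i$ exactly as the Svetlichny contribution does, leaving $\mathcal{Q}_i(P)=4\mathcal{Q}'$ for all $i$. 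I expect this to reduce to a finite check over the support patterns of Mermin functions compatible with $\mathcal{G}=\mathcal{Q}=0$, together with the sign-alignment enforced by the canonical decomposition; once this is in hand, the identity $\mathcal{Q}(P)=4\mathcal{Q}'$, and hence the corollary, is immediate.
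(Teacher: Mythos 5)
Your overall strategy is the same as the paper's: reduce the corollary to the identity $\mathcal{Q}(P)=4\mathcal{Q}'$ and obtain the ``iff'' from $\mathcal{Q}'\ge 0$ being the maximal irreducible Mermin-box component. However, there is a genuine gap at exactly the point you flag as the ``main obstacle'': you never establish the linearity of $\mathcal{Q}$ along the $3$-decomposition, you only state that you \emph{expect} the contribution of $P^{\mathcal{G}=0}_{\mathcal{Q}=0}$ to cancel after ``a finite check.'' That check is the entire content of the claim --- as you yourself note, a $\mathcal{G}=\mathcal{Q}=0$ box (e.g.\ a bipartite PR-box) can saturate two Mermin functions, so without controlling which Mermin functions it activates and with what signs, the nested moduli in $\mathcal{Q}_i$ need not collapse to $4\mathcal{Q}'$, and the identity is unproven. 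Your sketch of the two-term case is also internally shaky: you assert that the indices $000$ and $111$ land in ``opposite outer halves'' of \emph{every} $\mathcal{Q}_i$, but the nine $\mathcal{Q}_i$ are generated by permuting the eight $\mathcal{M}_{\alpha\beta\gamma}$, so in some of them the two activated functions sit in the same inner pair, and the bookkeeping you describe does not apply uniformly.

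The paper avoids this computation entirely by regrouping the three terms into two: it writes $P=\mathcal{Q}'P_{\mathcal{Q}=4}+(1-\mathcal{Q}')P_{\mathcal{Q}=0}$ with $P_{\mathcal{Q}=0}=\frac{1}{1-\mathcal{Q}'}\left((1-\mathcal{G}'-\mathcal{Q}')P^{\mathcal{G}=0}_{\mathcal{Q}=0}+\mathcal{G}'P^{\alpha\beta\gamma\epsilon}_{Sv}\right)$, absorbing the Svetlichny-box into the $\mathcal{Q}=0$ part, and then invokes linearity of $\mathcal{Q}$ only for a mixture of one $\mathcal{Q}>0$ box with one $\mathcal{Q}=0$ box (the case where nonconvexity causes no failure), rather than term-by-term over all three components. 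If you want to salvage your route, you should either adopt this regrouping or actually carry out the finite case analysis you defer; as written, the proof of $\mathcal{Q}(P)=4\mathcal{Q}'$ is missing its central step.
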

\begin{proof}
Any correlation in $\mathcal{R}$ given by the $3$-decomposition in Eq. (\ref{3dfact}) can be written as a convex mixture of a maximally two-way nonlocal box with $\mathcal{Q}=4$
and a box with $\mathcal{Q}=0$,
\be
P=\mathcal{Q}'P_{\mathcal{Q}=4}+(1-\mathcal{Q}')P_{\mathcal{Q}=0},\label{Qcan}
\ee
where $P_{\mathcal{Q}=0}=\frac{1}{1-\mathcal{Q}'}\left((1-\mathcal{G}'-\mathcal{Q}')P^{\mathcal{G}=0}_{\mathcal{Q}=0}+\mathcal{G}'P^{\alpha\beta\gamma\epsilon}_{Sv}\right)$.
The nonconvexity property of the $\mathcal{Q}=0$ polytope implies that certain convex combination of the $\mathcal{Q}=0$ boxes can have $\mathcal{Q}>0$
and there are $\mathcal{Q}=0$ boxes  which 
can be written as a convex mixture of two $\mathcal{Q}>0$ boxes. Thus, $\mathcal{Q}$ is not linear for these two types of decomposition.
However, $\mathcal{Q}$ is linear for the decomposition given in Eq. (\ref{Qcan}) since the convex mixture of a $\mathcal{Q}>0$ box and a $\mathcal{Q}=0$ box 
is always a $\mathcal{Q}>0$ box. 
Therefore, Mermin discord of the correlation in Eq. (\ref{Qcan})
is given by $\mathcal{Q}(P)=\mathcal{Q}'\mathcal{Q}(P_{\mathcal{Q}=4})+(1-\mathcal{Q}')\mathcal{Q}(P^{\mathcal{Q}=0}_{SvL})=4\mathcal{Q}'>0$
if $\mathcal{Q}'>0$. As any correlation that has an 
irreducible Mermin-box component lies on a line segment joining a Mermin-box and a $\mathcal{Q}=0$ box, it has $\mathcal{Q}>0$.
\end{proof}
\subsection{Monogamy between the measures}
As the total amount of irreducible Svetlichny-box and
irreducible Mermin-box components of a correlation given by the decomposition in Eq. (\ref{3dfact}) is constrained i.e., $\mathcal{G}'+\mathcal{Q}'\le1$ 
which follows from the probability constraint in the $3$-decomposition, 
we obtain the following
trade-off relation.
\begin{cor}
Svetlichny discord and Mermin discord of any given correlation satisfy
the following monogamy relation,
\be
\mathcal{G}+2\mathcal{Q}\le8. \label{n3nl3c}
\ee 
\end{cor}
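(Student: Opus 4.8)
The plan is to read the monogamy relation straight off the $3$-decomposition established in the preceding theorem, so the argument will be essentially a one-line substitution once the decomposition is in hand. First I would recall that any correlation $P$ in the Svetlichny-box polytope $\mathcal{R}$ admits the canonical $3$-decomposition
\be
P=\mathcal{G}'P^{\alpha\beta\gamma\epsilon}_{Sv}+\mathcal{Q}'P_{\mathcal{Q}=4}+\left(1-\mathcal{G}'-\mathcal{Q}'\right)P^{\mathcal{G}=0}_{\mathcal{Q}=0},
\ee
with $\mathcal{G}'$ the maximal irreducible Svetlichny-box component and $\mathcal{Q}'$ the maximal irreducible ($\mathcal{Q}=4$, maximally two-way nonlocal) component. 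The key point is simply that this is a genuine convex combination: all three weights are nonnegative, so the probability constraint forces $1-\mathcal{G}'-\mathcal{Q}'\ge0$, that is, $\mathcal{G}'+\mathcal{Q}'\le1$.

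Next I would invoke the two corollaries already proved: that $\mathcal{G}$ is linear along this canonical decomposition, giving $\mathcal{G}(P)=8\mathcal{G}'$, and likewise $\mathcal{Q}(P)=4\mathcal{Q}'$. Substituting these identities yields
\be
\mathcal{G}(P)+2\mathcal{Q}(P)=8\mathcal{G}'+8\mathcal{Q}'=8\left(\mathcal{G}'+\mathcal{Q}'\right)\le8,
\ee
which is exactly the claimed bound. I would then remark that the inequality is tight: a Svetlichny-box has $(\mathcal{G},\mathcal{Q})=(8,0)$ and a Mermin-box has $(\mathcal{G},\mathcal{Q})=(0,4)$, each saturating it, and in fact the whole segment of $\mathcal{R}$ joining a Svetlichny-box to a Mermin-box achieves equality, so $\mathcal{G}+2\mathcal{Q}\le8$ cannot be improved.

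The only genuine subtlety — the "hard part", such as it is — lies not in the arithmetic but in making sure the hypotheses of the $3$-decomposition theorem apply: that result is proved for correlations living in $\mathcal{R}$, so one must either restrict to boxes in $\mathcal{R}$ (the setting in which Svetlichny and Mermin discord were introduced, and which contains the quantum correlations studied in the next section) or, for an arbitrary NS box, first argue that discarding the remaining extremal components does not increase $\mathcal{G}+2\mathcal{Q}$. Everything else needed — nonnegativity of the convex weights and linearity of $\mathcal{G}$ and $\mathcal{Q}$ along the canonical decomposition — is already available from the earlier corollaries, so no further computation is required.
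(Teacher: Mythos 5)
Your proof is correct and follows the paper's own route exactly: the probability constraint in the $3$-decomposition gives $\mathcal{G}'+\mathcal{Q}'\le1$, and combining this with $\mathcal{G}=8\mathcal{G}'$ and $\mathcal{Q}=4\mathcal{Q}'$ yields $\mathcal{G}+2\mathcal{Q}=8(\mathcal{G}'+\mathcal{Q}')\le8$. Your added remarks on tightness and on the restriction to the Svetlichny-box polytope $\mathcal{R}$ are sensible refinements but do not change the argument.
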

This tradeoff relation reveals monogamy between three-way contextual correlations and three-way nonlocal correlations and is 
analogous to the monogamy relations between locally contextual correlations and nonlocal correlations derived by 
Kurzy\'{n}ski \etal \cite{KCK}. The monogamy relations given by Kurzy\'{n}ski \etal 
implies that when measurements on qutrit system gives rise to contextuality in a qutrit-qubit entangled system, then these measurements
do not give rise to nonlocality for all measurements on qubit system. Similar monogamy character follows from the observations \ref{m1} and \ref{m2}:
For the measurements that gives rise to the GHZ paradox, the GHZ state gives rise to maximal Mermin discord and zero Svetlichny discord, i.e., $\mathcal{Q}=4$
and $\mathcal{G}=0$ which is consistent with Eq. (\ref{n3nl3c}). Thus, for the measurements that give rise to the GHZ paradox, the GGHZ states give rise
to only Mermin discord i.e., $\mathcal{Q}=4\sin2\theta$ and $\mathcal{G}=0$.
Notice that for the measurements that gives rise to maximal three-way nonlocality, the GGHZ states give rise to only Svetlichny discord, i.e., $\mathcal{G}=4\sqrt{2}\sin2\theta$
and $\mathcal{Q}=0$.     
Thus, we see that the measurements that gives rise to extremal three-way contextuality do not give
rise to three-way nonlocality and vice versa. 

For general incompatible measurements, quantum correlations can have three-way contextuality and three-way nonlocality
simultaneously, however, the
tradeoff exists between three-way nonlocality and three-way contextuality as given by Eq. (\ref{n3nl3c}).
For instance, the correlations arising from the GHZ state for the measurements 
$A_0=\sigma_x$, $A_1=\sigma_y$,
$B_0=\sqrt{p}\sigma_x-\sqrt{1-p}\sigma_y$, $B_1=\sqrt{1-p}\sigma_x+\sqrt{p}\sigma_y$,
$C_0=\sigma_x$ and $C_1=\sigma_y$
can be decomposed into the Svetlichny-box, the Mermin-box which is a uniform mixture of two Svetlichny-boxes, and white noise as follows,
\be
P=\mathcal{G}'P^{0000}_{Sv}+\mathcal{Q}'\left(\frac{P^{0000}_{Sv}+P^{1110}_{Sv}}{2}\right)+(1-\mathcal{G}'-\mathcal{Q}')P_N, \label{SMDghz}
\ee
where $\mathcal{G}'=\sqrt{1-p}$, $\mathcal{Q}'=\sqrt{p}-\sqrt{1-p}$ and $\frac{1}{2}\le p\le1$. These correlations have $\mathcal{G}+\mathcal{Q}=4\sqrt{p}\le4$. 
\section{Quantum correlations}\label{QC}
We will observe that 
any tripartite qubit correlation in the Svetlichny-box polytope can be decomposed into Svetlichny-box, a Mermin box with maximally mixed marginals
and a box with $\mathcal{G}=\mathcal{Q}=0$,
\be
P=\mathcal{G}'P^{\alpha\beta\gamma\epsilon}_{Sv}+\mathcal{Q}'P_M+(1-\mathcal{G}'-\mathcal{Q}')P^{\mathcal{G}=0}_{\mathcal{Q}=0}. \label{mc}
\ee
We will characterize genuine nonclassicality of quantum correlations arising from local projective measurements along the 
directions $\hat{a}_i$, $\hat{b}_j$ and $\hat{c}_k$ on the three-qubit systems using this three-way decomposition. 

We will apply Svetlichny discord and Mermin discord to quantify nonclassicality of correlations arising from two inequivalent 
classes of pure genuinely entangled states  \cite{Dur} and the Werner states. For these states, a nonzero Svetlichny discord originates from 
incompatible measurements that give rise to Svetlichny nonlocality. Similarly, a nonzero Mermin discord originates from
incompatible measurements that give rise to three-way contextuality.
For a given nonclassical quantum state, 
there are three different incompatible measurements corresponding to (i) Svetlichny discordant correlation which has $\mathcal{G}>0$ and $\mathcal{Q}=0$,
(ii) Mermin discordant correlation which has $\mathcal{G}=0$ and $\mathcal{Q}>0$ and
(iii) Svetlichny-Mermin discordant correlation which has $\mathcal{G}>0$ and $\mathcal{Q}>0$.
Three-way nonlocal quantum correlations in $\mathcal{R}$ are the subset of $\mathcal{G}>0$ correlations, 
whereas three-way contextual quantum correlations are the subset of $\mathcal{Q}>0$ correlations. 

Svetlichny (Mermin) discord for a given nonclassical state is maximized by minimizing the number of nonzero Svetlichny (Mermin) functions overall incompatible measurements
that give rise to $\mathcal{G}>0$ ($\mathcal{Q}>0$).
In the subsequent sections, we will choose the following four measurement settings:  
\be
\hat{a}_0\!=\!\hat{x}, \quad \hat{a}_1\!=\!\hat{y},\quad 
\hat{b}_j\!=\!\frac{1}{\sqrt{2}}\left(\hat{x}+(-1)^{j\oplus 1}\hat{y}\right),\quad \hat{c}_0\!=\!\hat{x},\quad \hat{c}_1\!=\!\hat{y} \label{SDxy}
\ee
\be
\hat{a}_0\!=\!\hat{z}, \quad \hat{a}_1\!=\!\hat{x},\quad 
\hat{b}_j\!=\!\frac{1}{\sqrt{2}}\left(\hat{z}+(-1)^j\hat{x}\right),\quad
\hat{c}_0\!=\!\hat{z},\quad\hat{c}_1\!=\!\hat{x} \label{SDxz}
\ee
\be
\hat{a}_0=\hat{x}, \quad \hat{a}_1=\hat{y},\quad 
\hat{b}_0=\hat{x}, \quad \hat{b}_1=\hat{y},\quad
\hat{c}_0=\hat{x},\quad\hat{c}_1=\hat{y} \label{MDxy}
\ee
\be
\hat{a}_0=\hat{z}, \quad \hat{a}_1=\hat{x},\quad 
\hat{b}_0=\hat{z}, \quad \hat{b}_1=\hat{x},\quad
\hat{c}_0=\hat{z},\quad\hat{c}_1=\hat{x} \label{MDxz}
\ee
for studying correlations arising from the genuinely nonclassical quantum states. The first two settings correspond to Svetlichny discordant correlations, whereas the last two settings correspond to Mermin discordant correlations. 
We will apply Svetlichny and Mermin discord to various states in order to illustrate the new insights that may be obtained regarding the origin of genuine nonclassicality. 
We will also apply the two bipartite measures, Bell and Mermin discord \cite{Jeba}, to the marginal correlations.
We denote the Bell and Mermin discord by $\mathcal{G}_{ij}$ and $\mathcal{Q}_{ij}$, here $ij$ indicates Bell/Mermin discord is between which two qubits.
\subsection{GHZ-class states}
The GHZ-class states which have bipartite entanglement between $A$ and $B$ are given as follows,
\be
|\psi_{gs}\rangle=\cos\theta|000\rangle+\sin\theta|11\rangle\Big\{\cos\theta_3|0\rangle+\sin\theta_3|1\rangle\Big\}\;.
\label{gs}
\ee
The genuine tripartite entanglement is quantified by the three tangle \cite{CKW}, $\tau_3=(\sin2\theta\sin\theta_3)^2$, 
and the bipartite entanglement is quantified by the concurrence \cite{WKW}, $C_{12}=\sin2\theta\cos\theta_3$.
\subsubsection{Svetlichny discordant box}
The settings in Eq. (\ref{SDxy}) maximizes Svetlichny discord for the GHZ-class states, 
since the correlations have only one of the Svetlichny functions nonzero i.e., $\mathcal{S}_{000}=4\sqrt{2\tau_3}$ and the rest of the 
Svetlichny functions are zero which implies that Svetlichny discord $\mathcal{G}=4\sqrt{2\tau_3}$. The correlations can be decomposed as follows,
\be
P=\frac{\sqrt{\tau_3}}{\sqrt{2}}P^{0000}_{Sv}+\left(1-\frac{\sqrt{\tau_3}}{\sqrt{2}}\right)P^{\mathcal{G}=0}_{SvL}, \label{optSDdec}
\ee
where the $\mathcal{G}=0$ box, $P^{\mathcal{G}=0}_{SvL}$, is given in Eq. (\ref{G=01}).  
These correlations are Svetlichny-local if $0\le\tau_3\le\frac{1}{2}$, however, they have genuine nonclassicality originating from
incompatible measurements that give rise to Svetlichny nonlocality  if $\mathcal{\tau}_3>0$. 
In addition to Svetlichny discord, the correlations have Bell discord between $A$ and $B$, $\mathcal{G}_{12}=2\sqrt{2}C_{12}$.

\begin{figure}[h!]
\centering
\includegraphics[scale=0.75]{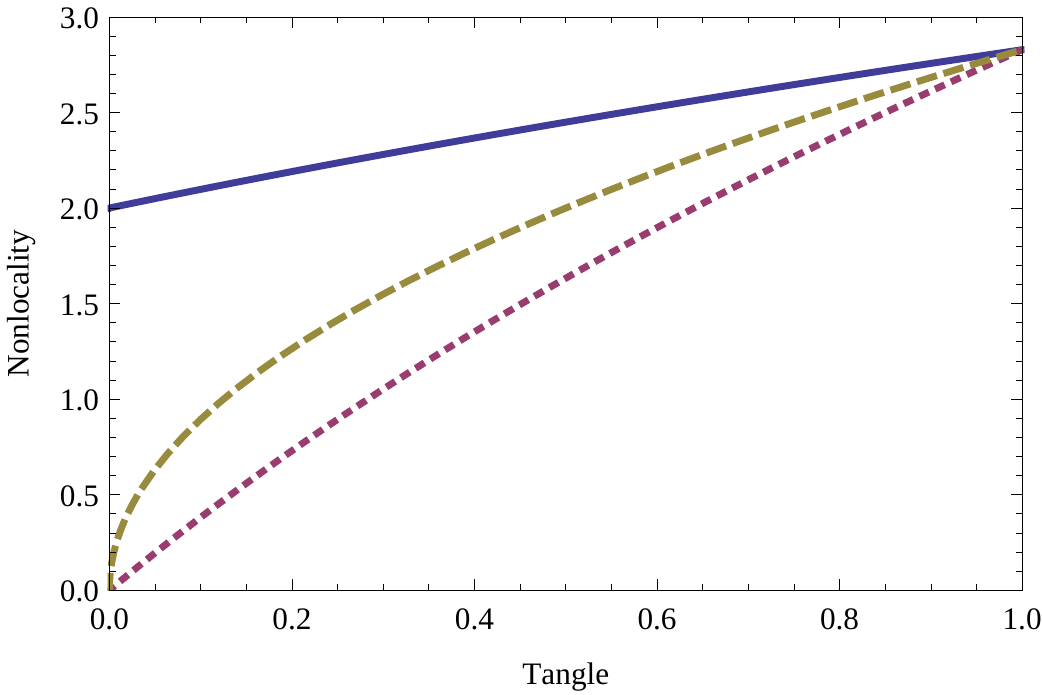}
\caption[Svetlichny nonlocality versus Svetlichny discord]{Dashed line shows the plots of the Svetlichny inequality violation
and Svetlichny discord for the JPD given in Eq. (\ref{optSDdec}) with $\theta=\frac{\pi}{4}$. Solid and dotted lines show the plots of the Svetlichny inequality violation and
Svetlichny discord respectively for the JPD given in Eq. (\ref{optSvdec}) with $\theta=\frac{\pi}{4}$.
We observe that the JPD in Eq. (\ref{optSvdec}) which gives optimal violation of the Svetlichny inequality does not give optimal Svetlichny discord for the
GHZ-class states, $\frac{1}{\sqrt{2}}\ket{000}+\frac{1}{\sqrt{2}}\ket{11}\{\cos\theta_3\ket{0}+\sin\theta_3\ket{1}\}$.}\label{plotine}
\end{figure}

Ghose \etal \cite{Ghoseetal} provided optimal measurement settings that give maximal violation of the Svetlichny inequality with respect
to the GHZ-class states; for instance,
the settings $\hat{a}_i=\frac{1}{\sqrt{2}}\left(\hat{x}+(-1)^{i}\hat{y}\right),
\hat{b}_j=\frac{1}{\sqrt{2}}\left(\hat{x}+(-1)^{j\oplus 1}\hat{y}\right),
\hat{c}_k=\frac{\sin\theta_3}{\sqrt{1+\sin^2\theta_3}}\hat{x}+(-1)^{k\oplus 1}\frac{\sin\theta_3}{\sqrt{1+\sin^2\theta_3}}\hat{y}+
\frac{\cos\theta_3}{\sqrt{1+\sin^2\theta_3}}\hat{z}$ gives rise to the violation of the Svetlichny inequality, $\mathcal{S}_{0000}=4\sqrt{C^2_{12}+2\tau_3}>4$, if $C^2_{12}+2\tau_3>1$.
For this optimal settings, the correlations admit the following decomposition,
\ba
P&=&\frac{\tau_3}{\sqrt{C^2_{12}+2\tau_3}}P^{0000}_{Sv}+\left(1-\frac{\tau_3}{\sqrt{C^2_{12}+2\tau_3}}\right)P^{\mathcal{G}=0}_{SvL}, \label{optSvdec}
\ea
where the $\mathcal{G}=0$ box, $P^{\mathcal{G}=0}_{SvL}$, is given in Eq. (\ref{G=02}). These correlations have Svetlichny discord $\mathcal{G}=\frac{8\tau_3}{\sqrt{C^2_{12}+2\tau_3}}$ which is nonzero if the state is genuinely entangled as the
correlations have the irreducible Svetlichny-box component. Thus, the Svetlichny-local correlations in Eq. (\ref{optSvdec}) have three-way nonclassicality
originating from Svetlichny nonlocality when $0<C^2_{12}+2\tau_3\le1$.

Notice that the correlations in Eq. (\ref{optSvdec}) have less irreducible Svetlichny-box component than the correlations in Eq. (\ref{optSDdec})
for a given amount of entanglement quantified by the three-tangle (see fig. \ref{plotine}).
Thus, for the pure states, the measurement settings which is optimal for Svetlichny discord does not, in general, maximize the violation of the Svetlichny inequality
and vice versa. For the GGHZ states, the correlations in Eqs. (\ref{optSDdec}) and (\ref{optSvdec}) become the isotropic Svetlichny-box,
\be
P=\frac{\sqrt{\tau_3}}{\sqrt{2}}P^{0000}_{Sv}+\left(1-\frac{\sqrt{\tau_3}}{\sqrt{2}}\right)P_N.
\ee

\subsubsection{Mermin discordant box}
The settings in Eq. (\ref{MDxy}) maximizes Mermin discord for the GHZ-class states, since only one of the Mermin functions is nonzero for this settings. 
The correlations can be written as a convex mixture of the Mermin-box and a Bell-local box:   
\be
P=\sqrt{\tau_3}\left(\frac{P^{0000}_{Sv}+P^{1110}_{Sv}}{2}\right)+\left(1-\sqrt{\tau_3}\right)P^{\mathcal{Q}=0}_{L}, \label{MDGHZop}
\ee
where the Bell-local box, $P^{\mathcal{Q}=0}_{L}$, which has $\mathcal{Q}=0$ is given in Eq. (\ref{Q=01}).
These correlations have Mermin discord $\mathcal{Q}=4\sqrt{\tau_{3}}$ and bipartite Mermin discord $\mathcal{Q}_{12}=2\sqrt{\tau_{12}}$.
Despite the correlations violate the Mermin inequality only if $\tau_3>\frac{1}{4}$,
they have genuine three-way nonclassicality originating 
from three-way contextuality if $\tau_3>0$.

Consider the following state dependent settings: $\hat{a}_0=\hat{x}$, $\hat{a}_0=\hat{y}$,
$\hat{b}_j=\frac{1}{\sqrt{2}}\left(\hat{x}+(-1)^{j\oplus 1}\hat{y}\right)$,
$\hat{c}_k=\frac{\sin\theta_3}{\sqrt{1+\sin^2\theta_3}}\hat{x}+(-1)^{k}\frac{\sin\theta_3}{\sqrt{1+\sin^2\theta_3}}\hat{y}+
\frac{\cos\theta_3}{\sqrt{1+\sin^2\theta_3}}\hat{z}$ which gives rise to optimal three-way contextuality.
For this settings, the GHZ-class states give rise to two nonzero Mermin functions $\mathcal{M}_{000}=\frac{2\sqrt{2}C^2_{12}}{\sqrt{(C^2_{12}+2\tau_3)}}$
and $\mathcal{M}_{110}=2\sqrt{2(C^2_{12}+2\tau_3)}$ which implies that there are GHZ-class states that give rise to the violation of two Mermin inequalities.
Notice that all the GHZ-class states with $\theta=\frac{\pi}{4}$ give rise to three-way contextuality since they exhibit monogamy of Mermin inequality violation.
The correlations admit the following decomposition,
\ba
P=\frac{\sqrt{2}\tau_3}{\sqrt{C^2_{12}+2\tau_3}}\left(\frac{P^{0000}_{Sv}+P^{1110}_{Sv}}{2}\right)
+\left(1-\frac{\sqrt{2}\tau_3}{\sqrt{C^2_{12}+2\tau_3}}\right)P^{\mathcal{Q}=0}_L, \label{MIGHZop}
\ea
where the Bell-local box, $P^{\mathcal{Q}=0}_L$, is given in Eq. (\ref{Q=02}). These correlations have tripartite Mermin discord
$\mathcal{Q}=\frac{4\sqrt{2}\tau_3}{\sqrt{C^2_{12}+2\tau_3}}$
and bipartite Bell discord $\mathcal{G}_{12}=2\sqrt{2}C_{12}$.
Notice that the correlations in Eq. (\ref{MIGHZop}) have less irreducible tripartite Mermin-box component than the correlations in Eq. (\ref{MDGHZop}) for a given amount of entanglement.
For the GGHZ states, both the correlations in Eqs. (\ref{MDGHZop}) and (\ref{MIGHZop}) become the isotropic Mermin-box,
\be
P=\sqrt{\tau_3}\left(\frac{P^{0000}_{Sv}+P^{1110}_{Sv}}{2}\right)+\left(1-\sqrt{\tau_3}\right)P_N.
\ee
\subsubsection{Svetlichny-Mermin discordant box}
For the following state dependent measurement settings:
$\hat{a}_0=\hat{x}$, $\hat{a}_1=\hat{y}$,
$\hat{b}_0=\sin2\theta\hat{x}-\cos2\theta\hat{y}$,
$\hat{b}_1=\cos2\theta\hat{x}+\sin2\theta\hat{y}$,
$\hat{c}_0=\hat{x}$ and $\hat{c}_1=\hat{y}$, the GGHZ state in Eq. (\ref{GGHZ}) gives rise to Svetlichny discord and Mermin discord simultaneously:
\ba
\mathcal{G}&=&\left\{\begin{array}{lr}
8\tau_3 \quad \text{when} \quad 0 \le \theta \le \frac{\pi}{8}\\ 
8\sqrt{\tau_3(1-\tau_3)}  \quad \text{when} \quad \frac{\pi}{8} \le \theta \le \frac{\pi}{4}\\ 
\end{array}
\right.\nonumber\\
&>&0 \quad \text{if} \quad \tau_3\ne0,1 \nonumber\\
\mathcal{Q}&=&4\left|\tau_3-\sqrt{\tau_3(1-\tau_3)}\right|\nonumber\\
&>&0 \quad \text{if} \quad \tau_3\ne0,\frac{1}{2}. \nonumber
\ea
The correlations have a $3$-decomposition as follows,
\be
P=\mathcal{G}'P^{0000}_{Sv}+\mathcal{Q}'\left(\frac{P^{0000}_{Sv}+P^{111\gamma}_{Sv}}{2}\right)+\left(1-\mathcal{G}'-\mathcal{Q}'\right)P_N, \label{cdQC}
\ee
where $\mathcal{G}'=\mathcal{G}/8$ and $\mathcal{Q}'=\mathcal{Q}/4$. 
Since the measurement settings corresponds to the GHZ paradox when $\theta=\pi/4$ and maximal three-way nonlocality  when $\theta=\pi/8$, 
the correlation has zero irreducible Svetlichny-box component when $\theta=\pi/4$ and zero irreducible Mermin-box component when $\theta=\pi/8$. 
\subsubsection{Svetlichny-box polytope vs three-way nonlocal quantum correlations}
Bancal \etal \cite{Banceletal} conjectured that all pure genuinely entangled states can give rise to three-way nonlocal correlations 
and it was noticed that there are three-way nonlocal quantum correlations arising from the pure states which do not violate a Svetlichny inequality.
In Ref. \cite{gnl99}, it has been shown that all
the GGHZ states can give rise to the violation of a class $99$ facet inequality whose representative is given in Eq. (\ref{NSFI}). 
For instance, the correlation arising from 
the GGHZ states in Eq. (\ref{GGHZ}) has $\mathcal{L}_2^{99}=1+2\sqrt{1+\sin^22\theta}>3$ if $\tau_3>0$ for the measurement settings $\hat{a}_0=\hat{z}$, $\hat{a}_1=\hat{x}$,
$\hat{b}_j=\cos t\hat{z}+(-1)^{j}\sin t\hat{x}$,
$\hat{c}_0=\hat{z}$ and $\hat{c}_1=\hat{x}$, where $\cos t=\frac{1}{\sqrt{1+\sin^22\theta}}$ .
For $\theta=\frac{\pi}{4}$, the correlation violates this inequality to its quantum bound of $1+2\sqrt{2}$ and 
can be decomposed in a convex mixture of the class $8$ extremal box given in the table of Ref. \cite{Pironioetal} and a local box,
\be
P=\frac{1}{\sqrt{2}} P_8+\left(1-\frac{1}{\sqrt{2}}\right)P_L. \label{GHZopGNL}
\ee
Here $P_L$ arises from the state $\rho=\rho_{AC}\otimes \frac{\openone}{2}$, where $\rho_{AC}=\frac{1}{2}\left(\ketbra{00}{00}+\ketbra{11}{11}\right)$.
As genuine nonlocality of the correlation is due to the class $8$ extremal box, the correlation does not violate a Svetlichny inequality
and hence it does not belong to the three-way nonlocal region of the Svetlichny-box polytope. Notice that the correlation in Eq. (\ref{GHZopGNL}) has $\mathcal{G}=\mathcal{Q}=0$.
\subsection{W-class states}
We now study the correlations arising from the W-class states,
\be
\ket{\psi_w}=\alpha\ket{100}+\beta\ket{010}+\gamma\ket{001}, \label{W}
\ee
We may consider the three nonvanishing bipartite concurrences 
$C_{12}=2\alpha\beta$, $C_{13}=2\alpha\gamma$ and $C_{23}=2\beta\gamma$ or the minimal concurrence of assistance \cite{Chietal}
$C^a_{min}=\min\{C_{12},C_{13},C_{23}\}$ as genuine tripartite entanglement measure for W-class states.
The optimal settings that maximizes Svetlichny/Mermin discord 
for the GHZ-class states do not maximize Svetlichny/Mermin discord for the W-class states. 
\subsubsection{Svetlichny discordant box}
Svetlichny discord for the W-class states is maximized by the settings in Eq. (\ref{SDxz}) which gives rise to,
\ba
\mathcal{G}&=&\min^3_{i=1}{\mathcal{G}_i}=4\sqrt{2}C^a_{min}>0 \quad \text{iff} \quad C_{12}C_{23}>0,  \nonumber
\ea
where
\ba
\mathcal{G}_1&=&\sqrt{2}|\Big||1+C_{12}+C_{13}+C_{23}| 
-|1+C_{12}-C_{13}-C_{23}|\Big|\nonumber\\
&&-\Big||1-C_{12}-C_{13}+C_{23}| 
-|1-C_{12}+C_{13}-C_{23}|\Big||, \nonumber
\ea
and $\mathcal{G}_2$ and $\mathcal{G}_3$ are obtained by permuting the four $\mathcal{S}_{\alpha\beta\gamma}$ in $\mathcal{G}_1$.
The correlations can be decomposed in a convex mixture of a Svetlichny-box and a Svetlichny-local box which has $\mathcal{G}=0$ as follows,
\be
P=\frac{C^a_{min}}{\sqrt{2}}P^{0100}_{Sv}+\left(1-\frac{C^a_{min}}{\sqrt{2}}\right)P^{\mathcal{G}=0}_{SvL}. \label{WclassSD}
\ee
The bipartite marginals of these correlations have $\mathcal{G}_{12}=2\sqrt{2C^2_{12}}$,
$\mathcal{Q}_{13}=2C_{13}$ and $\mathcal{G}_{23}=2\sqrt{2C^2_{23}}$.
The correlations do not violate a Svetlichny inequality when $C_{12}+C_{13}+C_{23}\le2\sqrt{2}-1$, however, 
Svetlichny discord is nonzero whenever the state is genuinely entangled. The Svetlichny-local box in Eq. (\ref{WclassSD}) must have  
a decomposition which has the class $8$ extremal box as the correlations also violate a class $99$ facet inequality of $\mathcal{L}_2$ when 
${C_{13}+\frac{1}{\sqrt{2}}\left(C_{12}+C_{23}\right)}>3-\sqrt{2}$. Therefore, the three-way nonlocal correlations arising from the W-class states 
lie outside the Svetlichny-box polytope.
\begin{observation}
When the W-class states give rise to Svetlichny discord, two bipartite marginals have Bell discord, and they satisfy monogamy of Bell discord,
\be
\mathcal{G}_{ij}+\mathcal{G}_{ik}\le4.
\ee
\end{observation}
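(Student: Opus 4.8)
The plan is to reduce the observation to the explicit data already recorded just above Eq. (\ref{WclassSD}). For the optimal Svetlichny-discord settings in Eq. (\ref{SDxz}), the three bipartite marginals of the W-class correlation were found to have $\mathcal{G}_{12}=2\sqrt{2C^2_{12}}=2\sqrt{2}\,C_{12}$, $\mathcal{Q}_{13}=2C_{13}$ and $\mathcal{G}_{23}=2\sqrt{2C^2_{23}}=2\sqrt{2}\,C_{23}$; in particular the $13$ marginal has $\mathcal{G}_{13}=0$ while the $12$ and $23$ marginals carry Bell discord. Since the Svetlichny-discordant box exists precisely when $C_{12}C_{23}>0$, this shows that whenever a W-class state gives rise to Svetlichny discord, exactly two of its bipartite marginals have nonzero Bell discord. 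Relabelling the parties merely permutes which pair plays the role of $ij,ik$ in the statement, so it is enough to establish the bound $\mathcal{G}_{12}+\mathcal{G}_{23}\le4$ for this representative labelling.

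For the bound itself I would substitute $C_{12}=2\alpha\beta$ and $C_{23}=2\beta\gamma$, so that the claim becomes $2\alpha\beta+2\beta\gamma\le\sqrt{2}$ under the normalization $\alpha^2+\beta^2+\gamma^2=1$. First I would apply Young's inequality $2xy\le\lambda x^2+\lambda^{-1}y^2$ with $\lambda=\sqrt{2}$ to each cross term,
\be
2\alpha\beta\le\sqrt{2}\,\alpha^2+\frac{1}{\sqrt{2}}\,\beta^2,\qquad 2\beta\gamma\le\frac{1}{\sqrt{2}}\,\beta^2+\sqrt{2}\,\gamma^2,
\ee
and then add, using $\tfrac{1}{\sqrt{2}}+\tfrac{1}{\sqrt{2}}=\sqrt{2}$ together with the normalization,
\be
C_{12}+C_{23}=2\alpha\beta+2\beta\gamma\le\sqrt{2}\left(\alpha^2+\beta^2+\gamma^2\right)=\sqrt{2}.
\ee
Hence $\mathcal{G}_{12}+\mathcal{G}_{23}=2\sqrt{2}\,(C_{12}+C_{23})\le2\sqrt{2}\cdot\sqrt{2}=4$, which is the stated monogamy relation. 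The bound is tight: equality in both Young estimates forces $\beta=\sqrt{2}\,\alpha=\sqrt{2}\,\gamma$, i.e. the state $\tfrac{1}{2}\ket{100}+\tfrac{1}{\sqrt{2}}\ket{010}+\tfrac{1}{2}\ket{001}$, for which $\mathcal{G}_{12}+\mathcal{G}_{23}=4$.

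The only mildly delicate point is the bookkeeping of the first paragraph: one must make sure the reduced two-qubit marginals genuinely realize the quoted values of $\mathcal{G}_{ij}$ — this follows from applying the bipartite Bell-discord formula of Chapter \ref{Ch2} to the marginal correlations of Eq. (\ref{WclassSD}), a computation already carried out in the text preceding the observation — and that the party labelling used in the observation is the one adapted to the settings in Eq. (\ref{SDxz}). Once those marginal values are in hand, the remaining content is the one-line Young-inequality estimate above, so I do not expect any serious obstacle.
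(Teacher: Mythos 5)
Your proof is correct, but it is not the argument the paper gives. The paper proves the inequality $\mathcal{G}_{ij}+\mathcal{G}_{ik}\le4$ in Appendix \ref{bellmermindisc} as a \emph{general} property of tripartite nonsignaling correlations, quite independently of the W-class family: nonzero Bell discord of a marginal requires an irreducible PR-box component, a PR-box shared by $A$ and $B$ forces the third party to be uncorrelated, and hence the extremal configuration is the convex mixture $P=pP^{AB}_{PR}P_C+qP^{AC}_{PR}P_B$ with $p+q=1$, which saturates the bound at $4p+4q=4$. The text following the observation attributes the trade-off to monogamy of Bell nonlocality in this structural sense. You instead verify the inequality \emph{for the W-class states with the settings of Eq.~(\ref{SDxz})} by direct computation: taking the quoted marginal values $\mathcal{G}_{12}=2\sqrt{2}\,C_{12}$, $\mathcal{G}_{23}=2\sqrt{2}\,C_{23}$ and bounding $C_{12}+C_{23}=2\alpha\beta+2\beta\gamma\le\sqrt{2}$ via Young's inequality under $\alpha^2+\beta^2+\gamma^2=1$. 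Your estimate is correct, and it buys something the paper's argument does not: an explicit, fully rigorous optimization with the extremal state $\tfrac{1}{2}\ket{100}+\tfrac{1}{\sqrt{2}}\ket{010}+\tfrac{1}{2}\ket{001}$ exhibiting tightness, whereas the paper's "only possible way" step is somewhat heuristic. What you lose is generality: your bound is tied to this state family and this measurement choice, while the paper's version covers any tripartite box and hence any other settings one might choose for the W-class states. The identification of \emph{which} two marginals carry Bell discord (the $12$ and $23$ pairs, with the $13$ pair carrying only Mermin discord) does rest on the marginal values asserted but not derived in the text, as you correctly flag; that dependence is shared with the paper itself.
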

This tradeoff relation originates from monogamy of Bell nonlocality \cite{Toner} (see Appendix \ref{bellmermindisc}).
\subsubsection{Mermin discordant box}
Mermin discord for the W-class states is maximized by settings in Eq. (\ref{MDxz}) which gives rises to,
\ba
\mathcal{Q}&=&\min^3_{i=1}{\mathcal{Q}_i}=4C^a_{min}>0 \quad \text{iff} \quad C_{12}C_{23}>0, \nonumber
\ea
where
\ba
\mathcal{Q}_1&=&|\Big||1+C_{12}+C_{13}+C_{23}|
-|1+C_{12}-C_{13}-C_{23}|\Big|\nonumber\\
&&-\Big||1-C_{12}-C_{13}+C_{23}|
-|1-C_{12}+C_{13}-C_{23}|\Big||, \nonumber
\ea
and $\mathcal{Q}_2$ and $\mathcal{Q}_3$ are obtained by permuting the four $\mathcal{M}_{\alpha\beta\gamma}$ in $\mathcal{Q}_1$.
The correlations can be decomposed into a convex mixture of a tripartite Mermin-box and a Bell-local box which has $\mathcal{G}=\mathcal{Q}=0$,
\be
P=C^a_{min}\left[\frac{P^{0001}_{Sv}+P_{Sv}^{1111}}{2}\right]+\left(1-C^a_{min}\right)P^{\mathcal{Q}=0}_{L}. \label{WclassMD}
\ee
The bipartite marginals of these correlations have $\mathcal{Q}_{12}=2C_{12}$, $\mathcal{Q}_{13}=2C_{13}$ and $\mathcal{Q}_{23}=2C_{23}$.
The correlations are genuinely two-way nonlocal if $C_{12}+C_{13}+C_{23}>1$, however, they have nonzero tripartite Mermin discord if the state is genuinely entangled.
Thus, nonzero Mermin discord of the local correlations in Eq. (\ref{WclassMD}) originates from three-way contextuality.
\begin{observation}
When the correlations arising from the W-class states have tripartite Mermin discord, at least two bipartite marginals have Mermin discord, and they satisfy
monogamy of Mermin discord,
\be
\mathcal{Q}_{ij}+\mathcal{Q}_{ik}\le2,
\ee
\end{observation}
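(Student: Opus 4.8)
The plan is to run the argument in exact parallel with the preceding observation for Bell discord, whose proof (given in Appendix \ref{bellmermindisc}) rests on monogamy of Bell nonlocality \cite{Toner}; here the corresponding input will be monogamy of bipartite entanglement. Everything is read off the explicit decomposition in Eq. (\ref{WclassMD}): for the settings of Eq. (\ref{MDxz}) the tripartite correlation arising from $\ket{\psi_w}=\alpha\ket{100}+\beta\ket{010}+\gamma\ket{001}$, $\alpha^2+\beta^2+\gamma^2=1$, is a convex mixture of a Mermin-box and a Bell-local box, its tripartite Mermin discord is $\mathcal{Q}=4C^a_{min}$ with $C^a_{min}=\min\{C_{12},C_{13},C_{23}\}$ and $C_{12}=2\alpha\beta$, $C_{13}=2\alpha\gamma$, $C_{23}=2\beta\gamma$, and the three bipartite marginal boxes carry Mermin discords $\mathcal{Q}_{ij}$ that are monotone in the corresponding pairwise concurrences $C_{ij}$.

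First I would dispatch the qualitative half. By the equivalence $\mathcal{Q}>0\Leftrightarrow C^a_{min}>0$ recorded with Eq. (\ref{WclassMD}), if the correlation has tripartite Mermin discord then every pairwise concurrence is strictly positive, hence every $\mathcal{Q}_{ij}>0$; in particular any two of them are nonzero, and since the index pairs $\{1,2\}$, $\{1,3\}$, $\{2,3\}$ pairwise overlap in one label, two marginals sharing an index are both Mermin-discordant --- which is exactly the first assertion.

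For the trade-off $\mathcal{Q}_{ij}+\mathcal{Q}_{ik}\le2$ I would fix the shared index, say $i=1$, and reduce the statement to a bound on $C_{12}+C_{13}$ (or the appropriate function thereof) after substituting the marginal-discord formula. This I would control in two equivalent ways. Algebraically, $C_{12}+C_{13}=2\alpha(\beta+\gamma)$, and Cauchy--Schwarz together with the normalization gives $\beta+\gamma\le\sqrt{2(1-\alpha^2)}$, whence $C_{12}+C_{13}\le2\sqrt2\,\alpha\sqrt{1-\alpha^2}\le\sqrt2$; conceptually, this is the linearization of the Coffman--Kundu--Wootters monogamy \cite{CKW}, since for a three-qubit pure state $C_{12}^2+C_{13}^2$ is bounded by the one-tangle $4\det\rho_1\le1$ (the three-tangle of a W-class state vanishing), and a final Cauchy--Schwarz passes from the sum of squares to the linear sum --- precisely as $\mathcal{G}_{ij}+\mathcal{G}_{ik}\le4$ was inherited from $\mathcal{G}_{ij}^2+\mathcal{G}_{ik}^2\le8$ in the Bell-discord case. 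The step I expect to be the main obstacle is the calibration: evaluating the Mermin functions on the two-party box obtained from Eq. (\ref{WclassMD}) by marginalizing one party, so as to pin down the precise multiple of $C_{ij}$ that $\mathcal{Q}_{ij}$ equals and to confirm that the settings of Eq. (\ref{MDxz}) are the ones realizing it --- this is bookkeeping rather than a conceptual difficulty, and once it is in place the monogamy-of-entanglement input closes the argument.
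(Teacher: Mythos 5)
Your route is genuinely different from the paper's, and the quantitative half does not close. The step you defer as ``bookkeeping'' --- pinning down the precise multiple of $C_{ij}$ that $\mathcal{Q}_{ij}$ equals --- is exactly where the argument breaks. The text accompanying Eq. (\ref{WclassMD}) records $\mathcal{Q}_{ij}=2C_{ij}$ for these marginals, and your Cauchy--Schwarz chain gives only $C_{12}+C_{13}\le 2\sqrt{2}\,\alpha\sqrt{1-\alpha^{2}}\le\sqrt{2}$, hence $\mathcal{Q}_{12}+\mathcal{Q}_{13}\le 2\sqrt{2}$, strictly weaker than the claimed bound of $2$. No sharpening of the concurrence estimate can rescue this: the CKW relation $C_{12}^{2}+C_{13}^{2}\le 1$ linearizes to $C_{12}+C_{13}\le\sqrt{2}$, not to $1$, and the bound $\sqrt{2}$ is attained (take $\alpha=1/\sqrt{2}$, $\beta=\gamma=1/2$); even the symmetric W state gives $2(C_{12}+C_{13})=8/3>2$. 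So the linear-in-concurrence calibration is arithmetically incompatible with the inequality you are trying to prove, and a proof built on it cannot succeed. (Your qualitative half --- $C_{12}C_{23}>0$ forces all three concurrences positive, hence at least two Mermin-discordant overlapping marginals --- is fine.)

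The paper's proof in Appendix \ref{bellmermindisc} avoids concurrences entirely and is structural rather than computational: a pure bipartite Mermin box can arise only from a maximally entangled two-qubit pair, and monogamy of entanglement forbids $AB$ and $AC$ from both being maximally entangled; hence the extremal configuration in which both overlapping marginals carry an irreducible Mermin-box component is the convex mixture $P=pP^{AB}_{M}P_C+qP^{AC}_{M}P_B$ with $p+q\le1$, for which $\mathcal{Q}_{12}+\mathcal{Q}_{13}=2p+2q\le2$. If you wish to keep a direct computational route you would have to evaluate $\mathcal{Q}_{ij}$ on the actual marginal boxes for the settings of Eq. (\ref{MDxz}): the reduced state of a W-class state is not maximally entangled, and its marginal box has $\braket{\sigma_z\sigma_z}=2\gamma^{2}-1$ alongside $\braket{\sigma_x\sigma_x}=C_{12}$ with vanishing cross terms, which drives the minimization in Definition \ref{defMD} to $2\min\{|2\gamma^{2}-1|,C_{12}\}$ rather than $2C_{12}$. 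That calculation is not mere bookkeeping --- it sits in tension with the formula quoted in the main text --- and it is the only way a concurrence-based argument could be made to yield the stated bound.
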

As this tradeoff originates from the monogamy of Mermin-box in three-qubit systems (see Appendix \ref{bellmermindisc}),
it includes monogamy of EPR-steering \cite{Reid}.
\subsection{Mixture of GHZ state with white noise}
Here we study the correlations arising from the following Werner states \cite{Werner},
\be
\rho_W=p \ketbra{\psi_{GHZ}}{\psi_{GHZ}}+ (1-p) \frac{\openone}{4},
\ee
where $\ket{\psi_{GHZ}}=\frac{1}{\sqrt{2}}(\ket{000}+\ket{111})$.
The Werner states are separable iff $p\le0.2$, biseparable iff $0.2<p\le0.429$ and  genuinely entangled iff $p>0.429$ \cite{Guhne}. Notice that these Werner states have
the component of the irreducible GHZ state, $p$, even if the state is separable. We show that the Werner states can give rise to Svetlichny/Mermin discord if $p>0$. 
Thus, the separable and biseparable states that have an irreducible genuinely entangled state component are genuinely nonclassical states as they can give
rise to Svetlichny/Mermin discord.  
\subsubsection{Svetlichny discordant box}
For the settings in Eq. (\ref{SDxy}), the Werner states give rise to the isotropic Svetlichny-box,
\be
P=\frac{p}{\sqrt{2}}P^{0000}_{Sv}+\left(1-\frac{p}{\sqrt{2}}\right)P_N.
\ee
These correlations admit the local deterministic model if $p\le\frac{1}{\sqrt{2}}$ and have Svetlichny discord $\mathcal{G}=4p\sqrt{2}$. 
Due to the component of the irreducible GHZ state and the incompatible measurements, the local correlations arising from the Werner states have genuine nonclassicality
originating from Svetlichny nonlocality if $p>0$.   
\subsubsection{Mermin discordant box}
For the settings in Eq. (\ref{MDxy}) which gives maximal Mermin discord for the GHZ-class states, the Werner states give rise to the isotropic Mermin-box, 
\be
P=p\left(\frac{P^{0000}_{Sv}+P^{1110}_{Sv}}{2}\right)+(1-p)P_N.
\ee
These correlations have Mermin discord $\mathcal{Q}=4p>0$ whenever the state has the irreducible GHZ state component. The correlations do not violate a 
Mermin inequality if $p\le\frac{1}{2}$, however, they have genuine nonclassicality originating from three-way contextuality if $p>0$.
\subsection{Biseparable W class state}
Consider the following biseparable state,
\be
\rho=\frac{1}{3}\ket{\psi^{AB}_{bi}}\bra{\psi^{AB}_{bi}}+\frac{1}{3}\ket{\psi^{AC}_{bi}}\bra{\psi^{AC}_{bi}}+\frac{1}{3}\ket{\psi^{BC}_{bi}}\bra{\psi^{BC}_{bi}},
\ee
$\ket{\psi^{AB}_{bi}}=\frac{1}{\sqrt{2}}(\ket{100}+\ket{010})$, $\ket{\psi^{AC}_{bi}}=\frac{1}{\sqrt{2}}(\ket{100}+\ket{001})$ and 
$\ket{\psi^{BC}_{bi}}=\frac{1}{\sqrt{2}}(\ket{010}+\ket{001})$.
Svetlichny/Mermin discord for the above biseparable state can be achieved only for the suitable settings that lie in the $xz$-plane, for instance, 
the settings given in Eq. (\ref{SDxz}) gives rise to Svetlichny discord $\mathcal{G}=\frac{4\sqrt{2}}{3}$. The correlation can be decomposed as follows,
\ba
P&=&\frac{1}{3}\left[\frac{1}{\sqrt{2}}P^{011}_{PR}+\left(1-\frac{1}{\sqrt{2}}\right) P^{AB}_N \right] P_C+\frac{1}{3}\left(\frac{P^{001}_{PR}+P^{111}_{PR}}{2}\right)P_B\nonumber \\
&&+\frac{1}{3}P_A\left[\frac{1}{\sqrt{2}}P^{101}_{PR}+\left(1-\frac{1}{\sqrt{2}}\right) P_N \right], 
\ea
where $P_A=P(a_m|A_i)$, $P_B=P(b_n|B_j)$ and $P_C=P(c_o|C_k)$ are the distributions arising from the state 
$\ket{0}$. Notice that the correlation arising from this state does not have Svetlichny/Mermin discord for all the settings that lie in the $xy$-plane 
as the state belongs to biseparable W class i.e., 
the state can be written as a convex mixture of an irreducible genuinely entangled state that belongs to the W-class and a state which cannot give rise to
Svetlichny/Mermin discord. 
\subsection{Mixture of GHZ state and W state}
Consider the correlations arising from the following states, 
\be
\rho=p\ket{\psi_{GHZ}}\bra{\psi_{GHZ}}+q\ket{\psi_{W}}\bra{\psi_{W}}.
\ee
where $\ket{\psi_W}=\frac{1}{\sqrt{3}}(\ket{100}+\ket{010}+\ket{001})$.
Since the optimal settings that gives maximal Svetlichny/Mermin discord for the GHZ state does not give nonzero Svetlichny/Mermin discord for the W-state and vice versa, 
Svetlichny/Mermin discord for these states
arise from the component of the GHZ state or the W state for the four settings given in Eqs. (\ref{SDxy})-(\ref{MDxz}). 

For the settings in Eq. (\ref{SDxy}), the correlations have Svetlichny discord $\mathcal{G}=4\sqrt{2}p$ and admit the following decomposition,
\be
P=p\left[\frac{1}{\sqrt{2}}P^{0000}_{Sv}+\left(1-\frac{1}{\sqrt{2}}\right)P_N\right]+qP^{\mathcal{G}=0}_{SvL}, \label{GNLGHZSb}
\ee
where $P^{\mathcal{G}=0}_{SvL}$ is a Svetlichny-local box arising from the W state which has zero Svetlichny discord.

For the settings in Eq. (\ref{SDxz}), the correlations have Svetlichny discord $\mathcal{G}=\frac{8\sqrt{2}q}{3}$ and admit the following decomposition,
\be
P=pP^{\mathcal{G}=0}_{NL}(\psi_{GHZ})+qP_{NL}^{\mathcal{G}>0}(\psi_{W}), \label{GNLGHZW}
\ee
where $P^{\mathcal{G}=0}_{NL}(\psi_{GHZ})$ is the three-way nonlocal box arising from the GHZ state given in Eq. (\ref{GHZopGNL}) and 
$P_{NL}^{\mathcal{G}>0}(\psi_{W})$ is the three-way nonlocal box arising 
from the W state given in Eq. (\ref{WclassSD}) with $C^a_{min}=\frac{2}{3}$. 

As the correlations in Eq. (\ref{GNLGHZW}) violates the class $99$ facet inequality, they do not belong to the Svetlichny-box polytope. 
However, the correlations in Eq. (\ref{GNLGHZSb})
belong to the three-way nonlocal region of the Svetlichny-box polytope.

\subsection{Classical-quantum, quantum-classical and genuinely quantum-correlated states}
A mixed three-qubit state can give rise to Svetlichny discord or Mermin discord iff all the three qubits are nonclassically correlated.
The states that do not have Svetlichny discord and Mermin discord can be decomposed in the form of classical-quantum or quantum-classical states defined as follows.
\begin{definition}
The classical-quantum (CQ) states can be decomposed as,
\be
\rho^{1|23}_{CQ}=\sum_ip_i \rho^A_i \otimes \rho^{BC}_i, \label{BC}
\ee
whereas the quantum-classical (QC) states can be decomposed as,
\be
\rho^{12|3}_{QC}=\sum_ip_i \rho^{AB}_i \otimes \rho^{C}_i \label{AB}
\ee
or
\be
\rho^{13|2}_{QC}=\sum_ip_i \rho^{AC}_i \otimes \rho^{B}_i, \label{AC}
\ee
where $\rho^{AB}_i$, $\rho^{AC}_i$, and $\rho^{BC}_i$ are, in general, quantum-correlated states
which are neither classical-quantum nor quantum-classical states \cite{Dakicetal} and there is no restriction on $\rho^A_i$, $\rho^B_i$, and $\rho^C_i$. 
\end{definition}
 
\begin{theorem}
All CQ and QC states given in Eqs. (\ref{BC})-(\ref{AC}) have $\mathcal{G}=\mathcal{Q}=0$ for all measurements. 
\end{theorem}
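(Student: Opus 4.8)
The plan is to follow the strategy used for the bipartite analogue in Chapter~\ref{Ch3}: first show that, for a three-qubit classical--quantum or quantum--classical state, every tripartite correlator factorizes through a single Bloch direction attached to the ``classical'' party, and then feed this factorized form into the Svetlichny and Mermin functions to force one of the $\mathcal{G}_i$ and one of the $\mathcal{Q}_i$ to vanish. For a state of the form~(\ref{BC}), $\rho^{1|23}_{CQ}=\sum_i p_i\,\rho^A_i\otimes\rho^{BC}_i$ with the $\rho^A_i$ mutually orthogonal single-qubit states, I would use the fact that orthogonal qubit projectors have antipodal Bloch vectors to write $\rho^A_0=\tfrac12(\openone+\hat r\cdot\vec\sigma)$ and $\rho^A_1=\tfrac12(\openone-\hat r\cdot\vec\sigma)$. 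Substituting into~(\ref{QCD}) and tracing out $A$ gives
\[
\braket{A_iB_jC_k}=(\hat a_i\cdot\hat r)\;h(\hat b_j,\hat c_k),\qquad h(\hat b_j,\hat c_k)=\mathrm{Tr}\!\left[(p_0\rho^{BC}_0-p_1\rho^{BC}_1)\,(\hat b_j\cdot\vec\sigma)\otimes(\hat c_k\cdot\vec\sigma)\right],
\]
i.e.\ the whole three-party statistics runs through the single direction $\hat r$ of the classical qubit; this is the tripartite analogue of the bipartite factorization $\braket{AB}=f(\hat a)f(\hat b)$. Since $\mathcal{G}$ and $\mathcal{Q}$ are invariant under permutations of the parties, it suffices to treat this case, the families~(\ref{AB}) and~(\ref{AC}) being obtained by relabelling with the classical party taken to be $C$ or $B$.

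Next I would evaluate the Svetlichny functions in~(\ref{mSf}) and the Mermin functions $\mathcal{M}_{\alpha\beta\gamma}:=|\mathcal{M}_{\alpha\beta\gamma\epsilon}|$ from~(\ref{MI}) on this factorized correlation. Because $\mathcal{G}$ and $\mathcal{Q}$ are LRO-invariant and, as recalled in Section~\ref{QC}, are maximised over measurements by settings that leave only one Svetlichny (resp.\ Mermin) function nonzero---which for the classical party amounts to taking its two measurement directions orthogonal---it is enough to work with $\hat a_0\perp\hat a_1$, and I would fix coordinates so that $\hat a_0=\hat r$, hence $\hat a_0\cdot\hat r=1$ and $\hat a_1\cdot\hat r=0$. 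Then $\braket{A_0B_jC_k}=h(\hat b_j,\hat c_k)$ while $\braket{A_1B_jC_k}=0$ for all $j,k$. Using $i\cdot j\oplus i\cdot k=i(j\oplus k)$, the phase in~(\ref{mSf}) splits off an overall factor $(-1)^{\alpha i}$ on the $i=1$ block, which is identically zero here, so $\mathcal{S}_{\alpha\beta\gamma}$ becomes independent of $\alpha$, i.e.\ $\mathcal{S}_{0\beta\gamma}=\mathcal{S}_{1\beta\gamma}$ for all $\beta,\gamma$; inserting these equalities into $\mathcal{G}_1$ from~(\ref{GBD}) makes its two bracketed halves coincide, so $\mathcal{G}_1=0$ and hence $\mathcal{G}=\min_i\mathcal{G}_i=0$. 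For the Mermin functions, with the $A_1$-correlators gone each $\mathcal{M}_{\alpha\beta\gamma}$ reduces to the modulus of a two-term expression in the four surviving correlators $\braket{A_0B_jC_k}$, and a short parity check shows the values pair up (concretely $\mathcal{M}_{000}=\mathcal{M}_{011}$, $\mathcal{M}_{001}=\mathcal{M}_{010}$, $\mathcal{M}_{100}=\mathcal{M}_{111}$, $\mathcal{M}_{101}=\mathcal{M}_{110}$); substituting into $\mathcal{Q}_1$ from~(\ref{GMD}) again cancels both halves, so $\mathcal{Q}=\min_i\mathcal{Q}_i=0$.

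The routine part is this parity bookkeeping over the eight Svetlichny and eight Mermin functions and the verification that the first of the nine $\mathcal{G}_i$ (resp.\ $\mathcal{Q}_i$) is the one that collapses; it is longer than in Chapter~\ref{Ch3} only because there are eight functions rather than four. The step I expect to carry the real weight is the reduction from ``all measurements'' to ``orthogonal measurements on the classical party with an axis along $\hat r$'': this rests on the extremal-settings property stated in Section~\ref{QC}, and the cleanest way to close the argument is to combine the factorization above with that property so that $\mathcal{G}$ and $\mathcal{Q}$, being nonnegative and bounded above by their values at the optimal settings, are forced to vanish identically. If one prefers not to lean on the extremal-settings claim, the alternative---and the genuinely delicate calculation---would be to prove directly that for \emph{every} choice of three-qubit measurements the rank-one structure $\braket{A_iB_jC_k}=(\hat a_i\cdot\hat r)\,h(\hat b_j,\hat c_k)$, together with the constraint $|\braket{A_iB_jC_k}|\le1$, already forces some $\mathcal{G}_i$ and some $\mathcal{Q}_i$ to be zero.
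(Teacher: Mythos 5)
Your parity bookkeeping is sound: once the $A_1$ correlators vanish, $\mathcal{S}_{\alpha\beta\gamma}$ is indeed independent of $\alpha$, the reduced Mermin functions pair up exactly as you list, and $\mathcal{G}_1=\mathcal{Q}_1=0$ at those settings. The genuine gap is in the factorization step. The tripartite CQ and QC states of Eqs.~(\ref{BC})--(\ref{AC}) are defined with \emph{no} orthogonality restriction on the single-party states $\rho^A_i$, $\rho^B_i$, $\rho^C_i$, and with an arbitrary number of terms in the sum. Your rank-one Bloch factorization $\braket{A_iB_jC_k}=(\hat a_i\cdot\hat r)\,h(\hat b_j,\hat c_k)$ holds only when the classical party carries exactly two antipodal projectors; for a general mixture $\sum_\lambda p_\lambda\,\rho^A_\lambda\otimes\rho^{BC}_\lambda$ with non-collinear Bloch vectors $\vec r_\lambda$ one gets only the sum-of-products form $\braket{A_iB_jC_k}=\sum_\lambda p_\lambda(\hat a_i\cdot\vec r_\lambda)\,g_\lambda(\hat b_j,\hat c_k)$, and then no single choice of $\hat a_1$ annihilates all the $i=1$ correlators simultaneously, so your cancellation argument does not go through. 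The paper's own proof works with precisely this weaker factorization (for the QC case, $\braket{A_iB_jC_k}=\sum_i p_i\braket{A_iB_j}_i\braket{C_k}_i$), rewrites each Svetlichny function as a combination of bipartite Bell functions $\mathcal{B}^i_{\alpha\beta\gamma}$ weighted by the marginals $\braket{C_k}_i$, and argues the cancellation at that level. You would need either to redo your computation for the sum-of-products form or to restrict your claim to the two-orthogonal-projector subclass, which is strictly smaller than what the theorem asserts.

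A second, smaller point: the reduction from ``for all measurements'' to ``at the optimal settings'' is, as you yourself flag, the load-bearing step, and invoking the empirically observed extremal-settings property does not close it --- to conclude $\mathcal{G}\equiv 0$ you must know that the settings you evaluate genuinely maximize $\mathcal{G}$ over all measurements for these particular states, which is exactly what is in question. In fairness, the paper's proof (here and in the bipartite analogue for CQ/QC states) leans on the same unproven maximality claim, so this is not a defect unique to your write-up; but a self-contained argument would require either proving that maximality for factorized correlations or showing directly that some $\mathcal{G}_i$ and some $\mathcal{Q}_i$ vanish for \emph{arbitrary} settings.
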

\begin{proof}
Consider the QC states as given in Eq. (\ref{AB}). For these states, the expectation value factorizes as follows,
\be
\braket{A_iB_jC_k}=\sum_i p_i \braket{A_iB_j}_i\braket{C_k}_i, \label{exfact}
\ee
which implies that the Svetlichny operators in $\mathcal{G}_1$ factorize as follows,
\begin{align}
\mathcal{G}_1&\!\!=\!\!|\Big|\!\left|\sum_ip_i \!\left\{\mathcal{B}^i_{000}\braket{C_0}_i\!+\mathcal{B}^i_{111}\braket{C_1}_i\right\}\right|\!-\!
\left|\sum_ip_i\!\left\{\mathcal{B}^i_{000}\braket{C_0}_i\!-\mathcal{B}^i_{111}\braket{C_1}_i\right\}\right|\Big|\nonumber\\
&\!\!-\Big|\left|\sum_ip_i\left\{\mathcal{B}^i_{000}\braket{C_1}_i\!+\mathcal{B}^i_{111}\braket{C_0}_i\right\}\right|\!-\!
\left|\sum_ip_i\!\left\{\mathcal{B}^i_{000}\braket{C_1}_i\!-\mathcal{B}^i_{111}\braket{C_0}_i\right\}\right|\Big|\nonumber\\
&\!\!-|\Big|\left|\sum_ip_i\! \left\{\mathcal{B}^i_{010}\braket{C_0}_i\!+\mathcal{B}^i_{100}\braket{C_1}_i\right\}\right|\!-\!
\left|\sum_ip_i\!\left\{\mathcal{B}^i_{010}\braket{C_0}_i\!-\mathcal{B}^i_{100}\braket{C_1}_i\right\}\right|\Big|\nonumber\\
&\!\!-\Big|\left|\sum_ip_i\!\left\{\mathcal{B}^i_{010}\braket{C_0}_i\!+\mathcal{B}^i_{100}\braket{C_1}_i\right\}\right|\!-\!
\left|\sum_ip_i\!\left\{\mathcal{B}^i_{010}\braket{C_1}_i\!-\mathcal{B}^i_{100}\braket{C_0}_i\right\}\right|\!\Big||. \label{ExQC}
\end{align} 
Here $\mathcal{B}^i_{\alpha\beta\gamma}$ which are the Bell functions in the CHSH inequalities in Eq. (\ref{BCHSH})
and $\braket{C_k}_i$ are evaluated for $\rho^i_{AB}$ and $\rho^i_C$ 
given in Eq. (\ref{AB}). 
Let us now try to maximize $\mathcal{G}_1$ with respect to the quantum-classical states in which $\rho^i_{AB}$ are the quantum-correlated states. For an optimal settings 
that gives nonzero for only one of $\mathcal{B}^i_{\alpha\beta\gamma}$ in Eq. (\ref{ExQC}), 
$\mathcal{G}_1=0$. Similarly, we can prove that $\mathcal{Q}=0$ by exploiting the factorization property in Eq. (\ref{exfact}). 

Since $\mathcal{G}$ and $\mathcal{Q}$ are symmetric under the permutations of the parties, they are also zero for the states in Eqs. (\ref{BC}) 
and (\ref{AC}) for all measurements.   
\end{proof}

All the genuinely entangled states are only a subset of the set of nonclassical states with respect to $\mathcal{G}$ and $\mathcal{Q}$.
The nonclassical biseparable and separable states are the genuinely quantum-correlated states. 
\begin{definition}
A genuinely quantum-correlated state cannot be written in the classical-quantum or quantum-classical form given in Eqs. (\ref{BC}) -(\ref{AC}) and  
admits the following decomposition,
\be
\rho\!=\!p_1\sum_iq_i \rho^A_i \otimes \rho^{BC}_i+p_2\sum_jq_j \rho^{AC}_j \otimes \rho^{B}_j+p_3\sum_kq_k \rho^{AB}_k \otimes \rho^{C}_k,
\ee
with atleast two of the three coefficients $p_1$, $p_2$, and 
$p_3$ are nonzero.  
\end{definition}
\subsection{Total correlations}
In Ref. \cite{Jebat}, a measure has been introduced to study the total correlations in a bipartite quantum joint probability distribution.
The tripartite generalization of this measure is defined as follows:
\begin{definition}
Total genuine correlations, $\mathcal{T}$, is defined as,
\be
\mathcal{T}:=\min \{\mathcal{T}_{12|3},\mathcal{T}_{13|2},\mathcal{T}_{1|23}\},
\ee
where
\be
\mathcal{T}_{12|3}=\max_{\alpha\beta\gamma}|\mathcal{S}_{\alpha\beta\gamma}-\mathcal{S}^{12|3}_{\alpha\beta\gamma}|, \nonumber
\ee
here,
\be
\mathcal{S}^{12|3}_{\alpha\beta\gamma}
=|\sum_{ijk}(-1)^{i\cdot j \oplus i\cdot k \oplus j\cdot k \oplus \alpha i\oplus \beta j \oplus \gamma k }\braket{A_iB_j}\braket{C_k}|,\nonumber
\ee
and where $\mathcal{T}_{13|2}$ and $\mathcal{T}_{1|23}$ are similarly defined.
\end{definition}
$\mathcal{T}$ is defined such that it satisfies the following properties:
(i) $\mathcal{T}\ge0$,
(ii) $\mathcal{T}=0$ iff the JPD can be written in the product form $P=P(a_m|A_i)P(b_n,c_o|B_j,C_k)$ and the permutations, and
(iii) $\mathcal{T}$ is invariant under LRO and symmetric under permutations of the parties. $\mathcal{T}$ is analogous to the measure for
total genuine tripartite correlations
defined in \cite{GTC} as both the measures vanish for the product states that can be written as $\rho=\rho_A \otimes \rho_{BC}$ and the permutations.
\begin{observation}
As a consequence of these three properties,
$\mathcal{T}$ gives rise to the additivity relation (see Appendix \ref{addrelation}),
\be
\mathcal{T}=\mathcal{G}+\mathcal{Q}\pm\mathcal{C}
\ee
for quantum correlations in the Svetlichny-box polytope. Here $\mathcal{C}$ quantifies genuinely classical correlations
and the negative sign is observed for pure genuinely entangled states.
\end{observation}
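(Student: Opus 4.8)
The plan is to mimic the bipartite additivity argument of Chapter~\ref{Ch4}, now driven by the tripartite $3$-decomposition of Eq.~(\ref{mc}) and the Svetlichny functions $\mathcal{S}_{\alpha\beta\gamma}$ in place of the Bell functions. First I would reduce to a canonical form. Since $\mathcal{T}$, $\mathcal{G}$, $\mathcal{Q}$ (and hence $\mathcal{C}$) are all invariant under LRO and under permutations of the parties, I may assume the quantum correlation is already written as
\[
P=\mathcal{G}'P^{0000}_{Sv}+\mathcal{Q}'\left(\frac{P^{0000}_{Sv}+P^{1110}_{Sv}}{2}\right)+(1-\mathcal{G}'-\mathcal{Q}')P^{\mathcal{G}=0}_{\mathcal{Q}=0},
\]
with the Svetlichny box and the Mermin box $P_M$ both aligned with the single coordinate $\mathcal{S}_{000}$, and with the bipartition $12|3$ realizing the minimum in $\mathcal{T}$. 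The purpose of this normalization is that the whole nonclassicality budget $\mathcal{G}=8\mathcal{G}'$, $\mathcal{Q}=4\mathcal{Q}'$ of the canonical decomposition feeds into $\mathcal{S}_{000}$, so that $\mathcal{T}=\mathcal{T}_{12|3}=|\mathcal{S}_{000}(P)-\mathcal{S}^{12|3}_{000}(P)|$ — the maximum over $\alpha\beta\gamma$ being attained at $000$ for the same reason that $\max_{\alpha\beta}\mathcal{B}_{\alpha\beta}$ captures total nonclassicality in the bipartite case.

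Next I would evaluate the two Svetlichny functions on each building block. Using $\braket{A_iB_jC_k}=(-1)^{i\cdot j\oplus i\cdot k\oplus j\cdot k}$ for $P^{0000}_{Sv}$ gives $\mathcal{S}_{000}(P^{0000}_{Sv})=8$, and the extra phase $(-1)^{i\oplus j\oplus k}$ carried by $P^{1110}_{Sv}$ gives $\mathcal{S}_{000}(P^{1110}_{Sv})=0$, hence $\mathcal{S}_{000}(P_M)=4$. By linearity of the joint expectations, $\mathcal{S}_{000}(P)=8\mathcal{G}'+4\mathcal{Q}'+(1-\mathcal{G}'-\mathcal{Q}')\,\mathcal{S}_{000}\!\left(P^{\mathcal{G}=0}_{\mathcal{Q}=0}\right)$. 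For the product term, the key observation is that both $P^{0000}_{Sv}$ and the maximally mixed-marginal Mermin box $P_M$ have vanishing one- and two-party marginals, so $\braket{A_iB_j}=\braket{C_k}=0$ on them; consequently the cross terms drop out of $\braket{A_iB_j}_P\braket{C_k}_P$, leaving $\mathcal{S}^{12|3}_{000}(P)=(1-\mathcal{G}'-\mathcal{Q}')^2\,\mathcal{S}^{12|3}_{000}\!\left(P^{\mathcal{G}=0}_{\mathcal{Q}=0}\right)$. Writing
\[
X:=(1-\mathcal{G}'-\mathcal{Q}')\,\mathcal{S}_{000}\!\left(P^{\mathcal{G}=0}_{\mathcal{Q}=0}\right)-(1-\mathcal{G}'-\mathcal{Q}')^2\,\mathcal{S}^{12|3}_{000}\!\left(P^{\mathcal{G}=0}_{\mathcal{Q}=0}\right),
\]
subtraction yields $\mathcal{T}(P)=|8\mathcal{G}'+4\mathcal{Q}'+X|$. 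Provided the bracket is nonnegative this is $\mathcal{G}+\mathcal{Q}+X$, so setting $\mathcal{C}:=|X|$ gives $\mathcal{T}=\mathcal{G}+\mathcal{Q}\pm\mathcal{C}$, with the sign of $X$ selecting $\pm$; the three defining properties of $\mathcal{T}$ ensure $\mathcal{C}\ge0$ and $\mathcal{C}=0$ exactly when the residual box is a product box. For pure genuinely entangled states the residual $P^{\mathcal{G}=0}_{\mathcal{Q}=0}$ typically has nonmaximally mixed marginals whose product reproduces more apparent correlation in $\mathcal{S}^{12|3}_{000}$ than is actually present, forcing $X<0$ and hence the negative sign (the tripartite analogue of the Bell--Mermin--Schmidt boxes of Chapter~\ref{Ch4}); for GGHZ-type states the residual box is white noise, so $X=0$ and $\mathcal{T}=\mathcal{G}+\mathcal{Q}$.

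The main obstacle I anticipate is the reduction step: one must argue that the canonical $3$-decomposition of Eq.~(\ref{mc}) is genuinely the ``aligned'' one, i.e.\ that its Svetlichny and Mermin components sit on a common coordinate $\mathcal{S}_{\alpha\beta\gamma}$ which, after LRO, simultaneously achieves the maximum over $\alpha\beta\gamma$ in some $\mathcal{T}_{i|jk}$ and the minimum over the three bipartitions. For the explicit families of Sec.~\ref{QC} (GHZ-class, W-class, and Werner-type states) this can be read off directly from the decompositions already computed there, and in general it follows from the fact that $P^{0000}_{Sv}$, $P^{1110}_{Sv}$ and $P_M$ all maximally violate Svetlichny/Mermin inequalities tied to the same index and have trivial marginals; the rest is the bookkeeping above. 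A secondary point to be careful about is the removal of the outer absolute value, which requires $8\mathcal{G}'+4\mathcal{Q}'+X\ge0$, equivalently $\mathcal{C}\le\mathcal{G}+\mathcal{Q}$ in the negative-sign case; this holds because $\mathcal{S}^{12|3}_{000}$ of the residual box is controlled by its nonclassical content, but it should be recorded explicitly rather than assumed.
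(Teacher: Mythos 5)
Your proposal follows essentially the same route as the paper's Appendix~\ref{addrelation}: reduce by LRO to the aligned canonical $3$-decomposition of Eq.~(\ref{cdQC}), evaluate $\mathcal{S}_{000}$ by linearity to extract the $8\mathcal{G}'+4\mathcal{Q}'=\mathcal{G}+\mathcal{Q}$ contribution, and absorb the residual box's contribution into $\mathcal{C}$. The one substantive difference is your factorized term: since both marginal factors in $\braket{A_iB_j}_P\braket{C_k}_P$ pick up the weight $(1-\mathcal{G}'-\mathcal{Q}')$, you correctly obtain $(1-\mathcal{G}'-\mathcal{Q}')^2\,\mathcal{S}^{12|3}_{000}\bigl(P^{\mathcal{G}=0}_{\mathcal{Q}=0}\bigr)$, whereas the paper writes this term with a single power of $(1-\mathcal{G}'-\mathcal{Q}')$ — your bookkeeping is the more careful one, and the discrepancy only alters the explicit expression for $\mathcal{C}$ (which is defined as the residue in either case), not the form $\mathcal{T}=\mathcal{G}+\mathcal{Q}\pm\mathcal{C}$.
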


\subsubsection{Total correlations in the 3-decomposition of the GHZ state}
EPR2 \cite{EPR2} showed that each pair in an ensemble of two-qubits in the singlet state exhibits nonlocality if the ensemble maximally violates a Bell-CHSH inequality.
Then, for nonmaximal violation by the nonmaximally entangled states, EPR2 showed that only certain fraction of the ensemble behaves nonlocally and the remaining fraction behaves locally. EPR2 conjecture for multi-qubit systems implies that when an ensemble of three qubits in the GHZ state gives rise to the maximal violation of a Svetlichny inequality, each trio in the ensemble behaves nonlocally. Consider the correlations arising from the GHZ state given in Eq. (\ref{SMDghz}).
The correlation violates the Svetlichny inequality if $p\ne1$ and gives maximal violation when $p=\frac{1}{2}$.
Since the violation of the Svetlichny inequality decreases if $p$ is increased from $\frac{1}{2}$ to $1$, the number of trios exhibiting nonlocality decreases and goes to zero when $p=1$. However, the correlation gives rise to the GHZ paradox when $p=1$ which  implies that each trio in the ensemble behaves contextually \cite{GHZ,UNLH,Canasetal}.
If $p$ is decreased from $1$ to $\frac{1}{2}$, the number of trios behaving contextually will decrease and the number of trios behaving nonlocally will increase as the violation of the Mermin inequality that detects the GHZ paradox decreases and the violation of the Svetlichny inequality increases.
The correlations in Eq. (\ref{SMDghz}) can be written as a mixture of the three-way nonlocal box that violates the Svetlichny inequality to its quantum bound,
the three-way contextual box which exhibits the GHZ paradox and white noise,
\ba
P\!&=&\!\sqrt{2}\mathcal{G}'\left[\frac{1}{\sqrt{2}}P^{0000}_{Sv}+\left(1-\frac{1}{\sqrt{2}}\right)P_N\right]\!+\!\mathcal{Q}'\left(\frac{P^{0000}_{Sv}+P^{1111}_{Sv}}{2}\right)\nonumber\\
\!&&+\!\left(1-\sqrt{2}\mathcal{G}'-\mathcal{Q}'\right)P_N.
\ea
Therefore, the fractions $\sqrt{2}\mathcal{G}'$ and $\mathcal{Q}'$ of the total ensemble exhibits nonlocality and contextuality (GHZ paradox) and the remaining
fraction behaves as white noise when $\frac{1}{2}< p<1$.
The total correlations in Eq. (\ref{SMDghz}) is given by,
\be
\mathcal{T}=4\left(\sqrt{p}+\sqrt{1-p}\right)=\mathcal{G}+\mathcal{Q}=\left\{\begin{array}{lr}
\mathcal{G} \quad \text{when} \quad p=\frac{1}{2}\\
\mathcal{Q} \quad \text{when} \quad p=1\\
\end{array}.
\right.
\ee
which is the sum of Svetlichny discord and Mermin discord. Thus, $\mathcal{G}$ and $\mathcal{Q}$ separates the total amount of nonclassical correlations in the JPDs
into nonlocality and contextuality.
\section{Conclusions}\label{conc}
We have introduced the measures, Svetlichny and Mermin discord, to characterize tripartite quantum correlations in the context of the Svetlichny-box polytope.
We have obtained the $3$-decomposition of any correlation in the Svetlichny-box polytope into Svetlichny-box, a maximally two-way nonlocal box that exhibits 
three-way contextuality and a box with Svetlichny and Mermin discord equal to zero.
We have defined the two types of Mermin boxes that are  three-way contextual and extremal with respect to the $3$-decomposition.  
We find that the Svetlichny-box polytope does not characterize all genuinely three-way nonlocal
quantum correlations.

Svetlichny discord and Mermin discord quantify three-way nonlocality and three-way contextuality of quantum correlations 
with respect to the $3$-decomposition even if the correlations do not violate a Svetlichny inequality or a Mermin inequality.
In the case of pure states, Svetlichny and Mermin discord can be nonzero iff the state is genuinely entangled.
Moving to the mixed states, Svetlichny/Mermin discord detects the component of the irreducible genuinely entangled state.
If a mixed state has an irreducible GHZ-class state and an irreducible W-class state components simultaneously, 
nonzero Svetlichny/Mermin discord originates from the GHZ-class state or the W-class state.   
We find that when GGHZ states and Werner states give rise optimal Svetlichny or Mermin discord,
irreducible GHZ state component in the Werner states plays a role analogous to entanglement in the GGHZ states.

\section{Appendix}

\subsection{An example to illustrate the notion of irreducible Svetlichny-box in unequal mixture of the Svetlichny-boxes}\label{irreducibleSv}
Notice that the subtraction done in $\mathcal{G}_i$ given in Eq. (\ref{GBD}) serves to calculate the amount of single Svetlichny-box excess in the unequal mixture of the Svetlichny-boxes.
Nonzero $\mathcal{G}_i$ does not necessarily imply that the correlation has an irreducible Svetlichny-box component which can be illustrated
by the following correlation,
\be
P=0.4P^{0000}_{Sv}+0.3P^{0010}_{Sv}+0.2P^{1000}_{Sv}+0.1P^{0110}_{Sv},
\ee
which has $\mathcal{G}_1=1.6$, however, other $\mathcal{G}_i$ are zero. Nonzero $\mathcal{G}_1$ for this correlation implies that it can be written as a convex mixture
of a single Svetlichny-box and a local box,
\be
P=\mathcal{G}'P^{0000}_{Sv}+(1-\mathcal{G}')P_L, \label{demosingle}
\ee
where $\mathcal{G}'=0.2$ and $P_L=\frac{1}{8}P^{0000}_{PR}+\frac{1}{2}P^{0100}_{PR}+\frac{1}{4}P^{1000}_{PR}+\frac{1}{8}P^{0110}_{PR}$.
The single Svetlichny-box component in this decomposition is not irreducible as $\mathcal{G}'$ vanishes for other possible decompositions.
Thus, minimizing the single Svetlichny-box component overall possible decompositions in Eq. (\ref{st2}) corresponds to the minimization in Eq. (\ref{GBD}) as $\mathcal{G}$
is intended to detect irreducible Svetlichny-box component.
\subsection{Svetlichny function monogamy}\label{msivio}
The fact that the violation of a Svetlichny inequality is monogamous, i.e.,
a Svetlichny nonlocal correlation cannot violate more than a Svetlichny inequality in Eq. (\ref{SI})
leads to the following Svetlichny function monogamy.
\begin{proposition}
For any given correlation $P(a_m,b_n,c_o|A_i,B_j,C_k)$, the Svetlichny functions,
\be
\mathcal{S}_{\alpha\beta\gamma}=\left|\sum_{ijk}(-1)^{i\cdot j \oplus i\cdot k \oplus j\cdot k \oplus \alpha i\oplus \beta j \oplus \gamma k }\braket{A_iB_jC_k}\right|, \label{mSf}
\ee
satisfy the monogamy relationship,
\be
\mathcal{S}_{i}+\mathcal{S}_j\le8 \quad \forall i,j, \label{SFm}
\ee
where $\mathcal{S}_{i}$ and $\mathcal{S}_j$ are any two of the Svetlichny functions defined in Eq. (\ref{mSf}).
\end{proposition}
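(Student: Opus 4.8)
The inequality is really a statement about the eight $\pm 1$ sign patterns that index the Svetlichny functions, so I would prove it directly rather than through a convex decomposition. Fix two distinct labels $\alpha\beta\gamma\neq\alpha'\beta'\gamma'$ and write the corresponding \emph{signed} Svetlichny sums
\[
S=\sum_{ijk}(-1)^{\,i\cdot j\oplus i\cdot k\oplus j\cdot k\oplus\alpha i\oplus\beta j\oplus\gamma k}\braket{A_iB_jC_k},
\qquad
S'=\sum_{ijk}(-1)^{\,i\cdot j\oplus i\cdot k\oplus j\cdot k\oplus\alpha' i\oplus\beta' j\oplus\gamma' k}\braket{A_iB_jC_k},
\]
so that $\mathcal{S}_{\alpha\beta\gamma}=|S|$ and $\mathcal{S}_{\alpha'\beta'\gamma'}=|S'|$. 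Using the elementary identity $|x|+|y|=\max\{|x+y|,|x-y|\}$ for real $x,y$, the claim $\mathcal{S}_{\alpha\beta\gamma}+\mathcal{S}_{\alpha'\beta'\gamma'}\le 8$ reduces to the two bounds $|S+S'|\le 8$ and $|S-S'|\le 8$.

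The key observation is that the two sign patterns differ exactly by the character $(i,j,k)\mapsto(-1)^{(\alpha\oplus\alpha')i\oplus(\beta\oplus\beta')j\oplus(\gamma\oplus\gamma')k}$: the quadratic part $i\cdot j\oplus i\cdot k\oplus j\cdot k$ is common to every Svetlichny label and cancels in the comparison, leaving only the linear part. Since $(\alpha\oplus\alpha',\beta\oplus\beta',\gamma\oplus\gamma')\neq(0,0,0)$, this character is a nontrivial homomorphism on $\{0,1\}^3$, hence it equals $+1$ on a four-element subspace $T_+$ (its kernel) and $-1$ on the four-element complement $T_-$. Therefore $(-1)^{\phi}+(-1)^{\phi'}$ equals $\pm 2$ on $T_+$ and $0$ on $T_-$, so $S+S'=2\sum_{(i,j,k)\in T_+}\pm\braket{A_iB_jC_k}$, and symmetrically $S-S'=2\sum_{(i,j,k)\in T_-}\pm\braket{A_iB_jC_k}$. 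Each of these is a sum of four terms, and positivity and normalization of the box force $|\braket{A_iB_jC_k}|\le 1$, so $|S\pm S'|\le 2\cdot 4=8$. Combined with the identity above this gives the monogamy relation, valid for every nonsignaling box and not only for correlations in $\mathcal{R}$. The bound is tight: a suitable mixture of a Svetlichny-box with its ``anti-partner'' (or $pP^{\alpha\beta\gamma\epsilon}_{Sv}+(1-p)$ of an appropriate deterministic box) saturates $\mathcal{S}_i+\mathcal{S}_j=8$, exhibiting the exact analogue of the Bell relation $\mathcal{B}_{00}+\mathcal{B}_j\le 4$.

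There is no serious obstacle here beyond bookkeeping; the one step that must be stated with care is the cancellation of the quadratic part, which is precisely what makes the pairwise difference of two Svetlichny sign patterns an $\{0,1\}$-linear character and hence perfectly balanced over the eight input triples. For readers who prefer the geometric argument used for the Bell function monogamy in Appendix~\ref{mBF}, I would also note the alternative route inside the Svetlichny-box polytope: $\mathcal{S}_{\alpha\beta\gamma}+\mathcal{S}_{\alpha'\beta'\gamma'}$ is a convex function (a sum of moduli of affine functionals of the box); it is at most $8$ on the two-way local polytope $\mathcal{L}_2$ because each $\mathcal{S}_{\alpha\beta\gamma}\le 4$ there, and it is at most $8$ on every Svetlichny-box because only one Svetlichny function of a Svetlichny-box is nonzero (equal to $8$) while the rest vanish; since the canonical decomposition expresses any correlation in $\mathcal{R}$ as a point on a segment joining an irreducible Svetlichny-box and a $\mathcal{G}=0$ Svetlichny-local box, the convex function cannot exceed its endpoint values, which are each $\le 8$.
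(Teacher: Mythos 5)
Your proof is correct, and it takes a genuinely different route from the paper's. The paper argues geometrically inside the Svetlichny-box polytope $\mathcal{R}$: it checks the bound on the two-way local polytope (where each $\mathcal{S}_{\alpha\beta\gamma}\le 4$) and on the Svetlichny-boxes (where one function equals $8$ and the rest vanish), and then invokes the decomposition of any box in $\mathcal{R}$ as a mixture of a Svetlichny-box and a Svetlichny-local box, together with the convexity of a sum of moduli of affine functionals, to extend the bound to the whole polytope --- essentially the same template as the Bell function monogamy in Appendix~\ref{mBF}. Your argument instead works directly with the sign patterns: the quadratic part $i\cdot j\oplus i\cdot k\oplus j\cdot k$ is common to all eight labels, so two Svetlichny sums differ by a nontrivial linear character on $\{0,1\}^3$, which is balanced ($+1$ on a four-element kernel, $-1$ on its complement); hence $S+S'$ and $S-S'$ each collapse to twice a signed sum of only four expectations, giving $|S\pm S'|\le 8$, and the identity $|S|+|S'|=\max\{|S+S'|,|S-S'|\}$ finishes the proof. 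What this buys you is both generality and rigor: the bound holds for \emph{every} normalized correlation with $|\braket{A_iB_jC_k}|\le 1$, not merely for boxes in $\mathcal{R}$ (which is what the paper's decomposition argument strictly covers, even though the proposition is stated for ``any given correlation''), and it avoids any appeal to the structure of the extremal boxes. What the paper's route buys is a geometric picture consistent with the rest of the chapter (irreducible Svetlichny-box components, facets of $\mathcal{L}_2$) and an immediate identification of the saturating correlations; you recover the tightness statement as well with the mixture $pP^{\alpha\beta\gamma\epsilon}_{Sv}+(1-p)P_D$, so nothing is lost.
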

\begin{proof}
Since the correlations in the two-way local polytope satisfy the complete set of Svetlichny inequalities, they satisfy the trade-off relations in Eq. (\ref{SFm}).
All the Svetlichny-boxes satisfy the trade-off relations in Eq. (\ref{SFm}), since only one of the Svetlichny functions attains
the algebraic maximum and the rest of them are zero for any Svetlichny box. Any nonextremal Svetlichny-nonlocal box in $\mathcal{R}$can be written as a convex mixture of a Svetlichny-box and a Svetlichny-local box that gives the local bound of $4$ for a Svetlichny inequality (see fig. \ref{NS3dfig1}),
\be
P=pP^{\alpha\beta\gamma\epsilon}_{Sv}+(1-p)P_{SvL}.
\ee
Now consider the Svetlichny-nonlocal correlations that maximize the left-hand side of Eq. (\ref{SFm});
for instance, any convex mixture of the Svetlichny-box and the deterministic box, $P=pP^{0000}_{Sv}+(1-p)P^{0000}_D$,
gives $\mathcal{S}_{000}+\mathcal{S}_j=8$ $\forall j$.
\end{proof}
\subsection{The $\mathcal{G}=0$ and $\mathcal{Q}=0$ correlations}
\be
P^{\mathcal{G}=0}_{SvL}=\frac{C_{12}}{\sqrt{2}-\sqrt{\tau_3}}P^{000}_{PR}P^C_N
+\left(1-\frac{C_{12}}{\sqrt{2}-\sqrt{\tau_3}}\right)P^{AB}_NP(\rho_C)\label{G=01}.
\ee
Here $P(\rho_C)$ arises from the state, $\rho_C=a_0\ketbra{x_+}{x_+}+a_1\ketbra{x_-}{x_-}$, where
$a_i=\frac{1}{2}+(-1)^{i}\frac{\sqrt{2}(\sin^2\theta\sin\theta_3\cos\theta_3)}{\sqrt{2}-\sqrt{\tau_3}-C_{12}}$.

\ba
P^{\mathcal{G}=0}_{SvL}&=&\frac{C_{12}}{1-\mathcal{G}'}\left(\frac{P^{010}_{PR}+P^{100}_{PR}}{2}\right)P(\rho^1_C)\nonumber\\
&&+\frac{1}{1-\mathcal{G}'}\left(1-\frac{\tau_3}{\sqrt{C^2_{12}+2\tau_3}}-C_{12}\right)P^{AB}_NP(\rho^2_C). \label{G=02}
\ea
Here $P(\rho^1_C)$ and $P(\rho^2_C)$ arise from the states,
$\rho^1_C=a_0\ketbra{0}{0}+a_1\ketbra{1}{1}$ and
$\rho^2_C=b_0\ketbra{0}{0}+
\frac{\sin^2\theta\sin\theta_3\cos\theta_3}{1-\mathcal{G'}-C_{12}}\left(\ketbra{0}{1}+\ketbra{1}{0}\right)
+b_1\ketbra{1}{1},$
where $a_i=\frac{1}{2}\left[1+(-1)^{i}\frac{\cos\theta_3}{\sqrt{1+\sin^2\theta_3}}\right]$, $b_i=\frac{1}{2}\left[1+(-1)^i\frac{\sqrt{1+\sin^2\theta_3}\left(\cos^2\theta+\sin^2\theta\cos2\theta_3\right)-C_{12}\cos\theta_3}{\sqrt{1+\sin^2\theta_3}
\left(1-C_{12}-\mathcal{G'}\right)}\right]$ and $\mathcal{G'}=\frac{\tau_3}{\sqrt{C^2_{12}+2\tau_3}}$.

\be
P^{\mathcal{Q}=0}_{L}=\frac{\sqrt{\tau_{12}}}{1-\sqrt{\tau_3}}\left(\frac{P^{000}_{PR}+P^{110}_{PR}}{2}\right)P^C_N+\left(1-\frac{\sqrt{\tau_{12}}}{1-\sqrt{\tau_3}}\right)P_NP_C.\label{Q=01}
\ee
Here $P(\rho_C)$ is a distribution arising from the state $\rho_C=a_0\ketbra{x_+}{x_+}+a_1\ketbra{x_-}{x_-}$ where
$a_i=\frac{1}{2}+(-1)^i\frac{\sin^2\theta\sin\theta_3\cos\theta_3}{1-\sqrt{\tau_3}-C_{12}}$.

\be
P^{\mathcal{Q}=0}_{L}=\frac{\mathcal{G}'_{12}}{1-\mathcal{Q}'}P^{000}_{PR}P(\rho^1_C)
+\frac{1}{1-\mathcal{Q}'}\left(1-\mathcal{G}'_{12}-\mathcal{Q}'\right)P_NP(\rho^2_C). \label{Q=02}
\ee
Here $P(\rho^1_C)$ and $P(\rho^2_C)$ arise from the states
$\rho^1_C=a_0\ketbra{0}{0}+a_1\ketbra{1}{1}$
and
$\rho^2_C=b_0\ketbra{0}{0}+
\frac{\sin^2\theta\sin\theta_3\cos\theta_3}{1-\mathcal{G}'_{12}-\mathcal{Q}'}\left(\ketbra{0}{1}+\ketbra{1}{0}\right)
+b_1\ketbra{1}{1}$, where $a_i=\frac{1}{2}\left[1+(-1)^i\frac{\cos\theta_3}{\sqrt{1+\sin^2\theta_3}}\right]$, $b_i=\frac{1}{2}\left[1+(-1)^i\frac{\sqrt{1+\sin^2\theta_3}\left(\cos^2\theta+\sin^2\theta\cos2\theta_3\right)-\mathcal{G}'_{12}\cos\theta_3}{\sqrt{1+\sin^2\theta_3}
\left(1-\mathcal{G}'_{12}-\mathcal{Q}'\right)}\right]$, $\mathcal{Q'}=\mathcal{Q}/4$ and $\mathcal{G}'_{12}=\mathcal{G}/4$.
\subsection{Proof for Monogamy of Bell discord and monogamy of Mermin discord}\label{bellmermindisc}
In the tripartite correlation scenario, Bell discord of subsystems $AB$ and $AC$ are constrained by the monogamy,
\be
\mathcal{G}_{12}+\mathcal{G}_{13}\le4. \label{mBD}
\ee
\begin{proof}
As nonzero Bell discord requires an irreducible PR-box component, $\mathcal{G}_{12}$ and $\mathcal{G}_{13}$ are simultaneously
nonzero if both the bipartite marginals have an irreducible PR-box component. Suppose parties $A$ and $B$ share a PR-box, then the third party is uncorrelated \cite{MAG06}.
The only possible way
for the joint parties, $AB$ and $AC$ share a PR-box simultaneously and maximize the left-hand side in Eq. (\ref{mBD}) is that they share the correlation
given by the convex mixture,
\be
P=pP^{AB}_{PR}P_C+qP^{AC}_{PR}P_B.
\ee
For this correlation, $\mathcal{G}_{12}+\mathcal{G}_{13}=4$.
\end{proof}

In a three-qubit system, Mermin discord arising from the bipartite systems $AB$ and $AC$ are constrained by the monogamy,
\be
\mathcal{Q}_{12}+\mathcal{Q}_{13}\le2. \label{mMD}
\ee
\begin{proof}
In a two-qubit system, a pure Mermin-box arises iff the parties share a maximally entangled state \cite{Jeba}. Suppose subsystem $AB$ of a three-qubit system gives rise to
a Mermin-box, a third party cannot share a Mermin-box due to the monogamy of entanglement \cite{CKW}. Thus, the only possible way
for the joint parties, $AB$ and $AC$ share a Mermin-box simultaneously and maximizes the left-hand side in Eq. (\ref{mMD}) is that the parties share
the correlation given by the convex mixture,
\be
P=pP^{AB}_{M}P_C+qP^{AC}_{M}P_B.
\ee
For this correlation, $\mathcal{Q}_{12}+\mathcal{Q}_{13}=2$.
\end{proof}
\subsection{Proof for the additivity relation}\label{addrelation}
The decomposition given in Eq. (\ref{cdQC}) implies that up to local unitary operations any quantum correlation arising from a three-qubit state
has the following $3$-decomposition,
\be
P=\mathcal{G}'P^{0000}_{Sv}+\mathcal{Q}'\left(\frac{P^{0000}_{Sv}+P^{111\gamma}_{Sv}}{2}\right)+(1-\mathcal{G}'-\mathcal{Q}')P^{\mathcal{G}=0}_{\mathcal{Q}=0},
\ee
where $\frac{1}{2}\left(P^{0000}_{Sv}+P^{111\gamma}_{Sv}\right)$ are the two Mermin-boxes canonical to the Svetlichny-box $P^{0000}_{Sv}$. Since this correlation maximizes
$\mathcal{S}_{000}$,
\ba
\mathcal{T}(P)&=&|\mathcal{S}_{000}(P)-\max\{\mathcal{S}^{12|3}_{000}(P),\mathcal{S}^{13|2}_{000}(P),\mathcal{S}^{1|23}_{000}(P)\}|\nonumber\\
&=&|8\mathcal{G}'+4\mathcal{Q}'+(1-\mathcal{G}'-\mathcal{Q}')[\mathcal{S}_{000}(P^{\mathcal{G}=0}_{\mathcal{Q}=0})\nonumber\\
&-&\max\{\mathcal{S}^{12|3}_{000}(P^{\mathcal{G}=0}_{\mathcal{Q}=0}),\mathcal{S}^{13|2}_{000}(P^{\mathcal{G}=0}_{\mathcal{Q}=0}),\mathcal{S}^{1|23}_{000}(P^{\mathcal{G}=0}_{\mathcal{Q}=0})\}]|\nonumber\\
&=&\mathcal{G}+\mathcal{Q}\pm\mathcal{C},
\ea
where
\ba
\mathcal{C}&=&(1-\mathcal{G}'-\mathcal{Q}')|\mathcal{S}_{000}(P^{\mathcal{G}=0}_{\mathcal{Q}=0})\nonumber\\
&-&\max\{\mathcal{S}^{12|3}_{000}(P^{\mathcal{G}=0}_{\mathcal{Q}=0}),\mathcal{S}^{13|2}_{000}(P^{\mathcal{G}=0}_{\mathcal{Q}=0}),\mathcal{S}^{1|23}_{000}(P^{\mathcal{G}=0}_{\mathcal{Q}=0})\}|.
\ea 


\chapter{Discussion} \label{chap:conclusion}
We have defined the two measures, Bell discord and Mermin discord, which are also nonzero for
boxes admitting local hidden variable model.
By using these measures, we have characterized nonclassicality of bipartite qubit correlations within the framework of generalized nonsignaling theories. For the bipartite nonsignaling boxes, we have obtained a canonical decomposition which is expressed as a convex combination of three boxes. In this decomposition, the presence of nonclassicality is manifested in three different ways: when only the fraction of PR box (which exhibits
nonlocality) is nonzero,
or only the fraction of Mermin box (which exhibits EPR steering) is nonzero, or both the PR box and Mermin box
fractions are nonzero. Bell and Mermin discords serve us to quantify the PR box fraction and Mermin box fraction, respectively, in the canonical decomposition. We have shown that in the case of boxes arising from two-qubit
states, both nonzero left and right quantum discords are necessary for nonzero Bell/Mermin discord.
In this case, nonzero Bell and Mermin discords originate from noncommuting measurements that give rise to Bell nonlocality and EPR steering (without Bell nonlocality), respectively.

We have generalized Bell and Mermin discords to the tripartite case to characterize
genuine nonclassicality of tripartite qubit correlations. We have obtained
a three-way decomposition for the tripartite nonsignaling boxes, which generalizes
the bipartite canonical decomposition. In this decomposition,
the presence of genuine nonclassicality is manifested in three different ways: when only the fraction of Svetlichny box (which exhibits genuine nonlocality) is nonzero,
or only the fraction of Mermin box (which exhibits three-way contextuality) is nonzero, or both the Svetlichny box and Mermin box fractions are nonzero.
The measures, Svetlichny and Mermin discords, serve us to quantify the Svetlichny-box and Mermin-box components,
respectively, in the three-way decomposition.
In the multipartite case, genuine quantum discord quantifies quantum correlation that is shared among
all the subsystems of the multipartite system.
We have demonstrated that if a box, having any of the tripartite Svetlichny/Mermin discord nonzero,
arises from a three-qubit state then presence of genuine tripartite quantum
discord is guaranteed, even when the box has a local hidden variable description.

In this thesis, we have restricted ourselves to the nonsignaling boxes with two binary
inputs and two binary outputs.
It would be interesting to generalize Bell and Mermin discords to the scenario in which
the black boxes have more than two outputs for a given input. This would be useful
to characterize nonclassicality of quantum correlations arising from two-qudit
states.

The canonical decomposition of bipartite nonsignaling boxes suggests that any bipartite
quantum state can be decomposed in a convex mixture of a pure entangled state
and a separable state which is neither a classical-quantum state nor a
quantum-classical state. This decomposition would be relevant to quantifying
quantum correlation that goes beyond entanglement.

The tripartite Svetlichny and Mermin discords can be defined for
n-partite nonsignaling boxes with more than three parties by using
n-partite Svetlichny and Mermin inequalities. These quantities
may be useful for characterizing multipartite quantum
states.

Bell and Mermin discords may have implications for
characterizing intrinsic randomness of quantum correlations.
It may be interesting to relate Bell/Mermin discord to various measures of intrinsic
randomness such as observed randomness, device-independent randomness
and semi-device-independent randomness. 


\cleardoublepage 
\phantomsection 
\addcontentsline{toc}{chapter}{Bibliography}
\bibliographystyle{alphaarxiv.bst}
\bibliography{th}

\newcommand{\etalchar}[1]{$^{#1}$}
\begin{thebibliography}{KCBbuS08}

\bibitem[AGM06]{DQKD}
A.~Ac\'in, N.~Gisin, and L.~Masanes.
\newblock From bell's theorem to secure quantum key distribution.
\newblock {\em Phys. Rev. Lett.}, 97: 120405, 2006.
\newblock \\
  \texttt{DOI:\,\href{http://dx.doi.org/10.1103/PhysRevLett.97.120405}{10.1103%
/PhysRevLett.97.120405}}.

\bibitem[AMP12]{NLRan}
A.~Ac\'in, S.~Massar, and S.~Pironio.
\newblock Randomness versus nonlocality and entanglement.
\newblock {\em Phys. Rev. Lett.}, 108: 100402, 2012.
\newblock \\
  \texttt{DOI:\,\href{http://dx.doi.org/10.1103/PhysRevLett.108.100402}{10.110%
3/PhysRevLett.108.100402}}.

\bibitem[Bar07]{JB}
J.~Barrett.
\newblock Information processing in generalized probabilistic theories.
\newblock {\em Phys. Rev. A}, 75: 032304, 2007.
\newblock \\
  \texttt{DOI:\,\href{http://dx.doi.org/10.1103/PhysRevA.75.032304}{10.1103/Ph%
ysRevA.75.032304}}.

\bibitem[BBGP13]{Banceletal}
J.-D. Bancal, J.~Barrett, N.~Gisin, and S.~Pironio.
\newblock Definitions of multipartite nonlocality.
\newblock {\em Phys. Rev. A}, 88: 014102, 2013.
\newblock \\
  \texttt{DOI:\,\href{http://dx.doi.org/10.1103/PhysRevA.88.014102}{10.1103/Ph%
ysRevA.88.014102}}.

\bibitem[BC08]{HTvsCT}
D.~Braun and M.-S. Choi.
\newblock Hardy's test versus the clauser-horne-shimony-holt test of quantum
  nonlocality: Fundamental and practical aspects.
\newblock {\em Phys. Rev. A}, 78: 032114, 2008.
\newblock \\
  \texttt{DOI:\,\href{http://dx.doi.org/10.1103/PhysRevA.78.032114}{10.1103/Ph%
ysRevA.78.032114}}.

\bibitem[BCP{\etalchar{+}}14]{BNL}
N.~Brunner, D.~Cavalcanti, S.~Pironio, V.~Scarani, and S.~Wehner.
\newblock Bell nonlocality.
\newblock {\em Rev. Mod. Phys.}, 86: 419--478, 2014.
\newblock \\
  \texttt{DOI:\,\href{http://dx.doi.org/10.1103/RevModPhys.86.419}{10.1103/Rev%
ModPhys.86.419}}.

\bibitem[BCSS11]{EPR2B}
N.~Brunner, D.~Cavalcanti, A.~Salles, and P.~Skrzypczyk.
\newblock Bound nonlocality and activation.
\newblock {\em Phys. Rev. Lett.}, 106: 020402, 2011.
\newblock \\
  \texttt{DOI:\,\href{http://dx.doi.org/10.1103/PhysRevLett.106.020402}{10.110%
3/PhysRevLett.106.020402}}.

\bibitem[BCW{\etalchar{+}}12]{SDIQKD}
C.~Branciard, E.~G. Cavalcanti, S.~P. Walborn, V.~Scarani, and H.~M. Wiseman.
\newblock One-sided device-independent quantum key distribution: Security,
  feasibility, and the connection with steering.
\newblock {\em Phys. Rev. A}, 85: 010301, 2012.
\newblock \\
  \texttt{DOI:\,\href{http://dx.doi.org/10.1103/PhysRevA.85.010301}{10.1103/Ph%
ysRevA.85.010301}}.

\bibitem[Bel64]{bell64}
J.~S. Bell.
\newblock On the einstein-podolsky-rosen paradox.
\newblock {\em Physics}, 1: 195, 1964.

\bibitem[BEL66]{BellRev}
J.~S. BELL.
\newblock On the problem of hidden variables in quantum mechanics.
\newblock {\em Rev. Mod. Phys.}, 38: 447--452, 1966.
\newblock \\
  \texttt{DOI:\,\href{http://dx.doi.org/10.1103/RevModPhys.38.447}{10.1103/Rev%
ModPhys.38.447}}.

\bibitem[BLM{\etalchar{+}}05]{Barrett}
J.~Barrett, N.~Linden, S.~Massar, S.~Pironio, S.~Popescu, and D.~Roberts.
\newblock Nonlocal correlations as an information-theoretic resource.
\newblock {\em Phys. Rev. A}, 71: 022101, 2005.
\newblock \\
  \texttt{DOI:\,\href{http://dx.doi.org/10.1103/PhysRevA.71.022101}{10.1103/Ph%
ysRevA.71.022101}}.

\bibitem[Boh95]{Bohm}
D.~Bohm.
\newblock {\em Quantum Theory}.
\newblock Englewood Cliffs, N.J. : Prentice-Hall, 1995.

\bibitem[CDP10]{CAP}
G.~Chiribella, G.~M. D'Ariano, and P.~Perinotti.
\newblock Probabilistic theories with purification.
\newblock {\em Phys. Rev. A}, 81: 062348, 2010.
\newblock \\
  \texttt{DOI:\,\href{http://dx.doi.org/10.1103/PhysRevA.81.062348}{10.1103/Ph%
ysRevA.81.062348}}.

\bibitem[CFFW15]{CFFW}
E.~G. Cavalcanti, C.~J. Foster, M.~Fuwa, and H.~M. Wiseman.
\newblock Analog of the clauser-horne-shimony-holt inequality for steering.
\newblock {\em Journal of the Optical Society of America B}, 32: A74, 2015.
\newblock \\
  \texttt{DOI:\,\href{http://dx.doi.org/10.1364/JOSAB.32.000A74}{10.1364/JOSAB%
.32.000A74}}.

\bibitem[CHRW11]{UFNL}
E.~G. Cavalcanti, Q.~Y. He, M.~D. Reid, and H.~M. Wiseman.
\newblock Unified criteria for multipartite quantum nonlocality.
\newblock {\em Phys. Rev. A}, 84: 032115, 2011.
\newblock \\
  \texttt{DOI:\,\href{http://dx.doi.org/10.1103/PhysRevA.84.032115}{10.1103/Ph%
ysRevA.84.032115}}.

\bibitem[CHSH69]{chsh}
J.~F. Clauser, M.~A. Horne, A.~Shimony, and R.~A. Holt.
\newblock Proposed experiment to test local hidden-variable theories.
\newblock {\em Phys. Rev. Lett.}, 23: 880--884, 1969.
\newblock \\
  \texttt{DOI:\,\href{http://dx.doi.org/10.1103/PhysRevLett.23.880}{10.1103/Ph%
ysRevLett.23.880}}.

\bibitem[CJK{\etalchar{+}}10]{Chietal}
D.~Chi, K.~Jeong, T.~Kim, K.~Lee, and S.~Lee.
\newblock Concurrence of assistance and mermin inequality on three-qubit pure
  states.
\newblock {\em Phys. Rev. A}, 81: 044302, 2010.
\newblock \\
  \texttt{DOI:\,\href{http://dx.doi.org/10.1103/PhysRevA.81.044302}{10.1103/Ph%
ysRevA.81.044302}}.

\bibitem[CJWR09]{CJWR}
E.~G. Cavalcanti, S.~J. Jones, H.~M. Wiseman, and M.~D. Reid.
\newblock Experimental criteria for steering and the einstein-podolsky-rosen
  paradox.
\newblock {\em Phys. Rev. A}, 80: 032112, 2009.
\newblock \\
  \texttt{DOI:\,\href{http://dx.doi.org/10.1103/PhysRevA.80.032112}{10.1103/Ph%
ysRevA.80.032112}}.

\bibitem[CKW00]{CKW}
V.~Coffman, J.~Kundu, and W.~Wootters.
\newblock Distributed entanglement.
\newblock {\em Phys. Rev. A}, 61: 052306, 2000.
\newblock \\
  \texttt{DOI:\,\href{http://dx.doi.org/10.1103/PhysRevA.61.052306}{10.1103/Ph%
ysRevA.61.052306}}.

\bibitem[CnEG{\etalchar{+}}14]{Canasetal}
G.~Ca\~nas, S.~Etcheverry, E.~S. G\'omez, C.~Saavedra, G.~B. Xavier, G.~Lima,
  and A.~Cabello.
\newblock Experimental implementation of an eight-dimensional kochen-specker
  set and observation of its connection with the greenberger-horne-zeilinger
  theorem.
\newblock {\em Phys. Rev. A}, 90: 012119, 2014.
\newblock \\
  \texttt{DOI:\,\href{http://dx.doi.org/10.1103/PhysRevA.90.012119}{10.1103/Ph%
ysRevA.90.012119}}.

\bibitem[Dat08]{Datta}
A.~Datta.
\newblock {\em Studies on the role of entanglement in mixed-state quantum
  computation}.
\newblock PhD thesis, The University of New Mexico, 2008.

\bibitem[De12]{QDRS}
B.~Dakic and etal.
\newblock Quantum discord as resource for remote state preparation.
\newblock {\em Nature Physics}, 8: 666, 2012.
\newblock \\
  \texttt{DOI:\,\href{http://dx.doi.org/doi:10.1038/nphys2377}{doi:10.1038/nph%
ys2377}}.

\bibitem[Dha13]{Dhara}
C.~Dhara.
\newblock {\em Intrinsic randomness in non-local theories: quantification and
  amplification}.
\newblock PhD thesis, ICFO Spain, 2013.

\bibitem[DVB10]{Dakicetal}
B.~Daki\'{c}, V.~Vedral, and C.~Brukner.
\newblock Necessary and sufficient condition for nonzero quantum discord.
\newblock {\em Phys. Rev. Lett.}, 105: 190502, 2010.
\newblock \\
  \texttt{DOI:\,\href{http://dx.doi.org/10.1103/PhysRevLett.105.190502}{10.110%
3/PhysRevLett.105.190502}}.

\bibitem[DVC00]{Dur}
W.~D\"ur, G.~Vidal, and J.~Cirac.
\newblock Three qubits can be entangled in two inequivalent ways.
\newblock {\em Phys. Rev. A}, 62: 062314, 2000.
\newblock \\
  \texttt{DOI:\,\href{http://dx.doi.org/10.1103/PhysRevA.62.062314}{10.1103/Ph%
ysRevA.62.062314}}.

\bibitem[EPR35]{EPR}
A.~Einstein, B.~Podolsky, and N.~Rosen.
\newblock Can quantum-mechanical description of physical reality be considered
  complete?
\newblock {\em Phys. Rev.}, 47: 777--780, 1935.
\newblock \\
  \texttt{DOI:\,\href{http://dx.doi.org/10.1103/PhysRev.47.777}{10.1103/PhysRe%
v.47.777}}.

\bibitem[EPR92]{EPR2}
A.~Elitzur, S.~Popescu, and D.~Rohrlich.
\newblock Quantum nonlocality for each pair in an ensemble.
\newblock {\em Phys. Lett. A}, 162: 25, 1992.
\newblock \\
  \texttt{DOI:\,\href{http://dx.doi.org/doi:10.1016/0375-9601(92)90952-I}{doi:%
10.1016/0375-9601(92)90952-I}}.

\bibitem[FAC{\etalchar{+}}10]{QCall}
A.~Ferraro, L.~Aolita, D.~Cavalcanti, F.~M. Cucchietti, and A.~Ac\'in.
\newblock Almost all quantum states have nonclassical correlations.
\newblock {\em Phys. Rev. A}, 81: 052318, 2010.
\newblock \\
  \texttt{DOI:\,\href{http://dx.doi.org/10.1103/PhysRevA.81.052318}{10.1103/Ph%
ysRevA.81.052318}}.

\bibitem[Fin82a]{Fine}
A.~Fine.
\newblock Hidden variables, joint probability, and the bell inequalities.
\newblock {\em Phys. Rev. Lett.}, 48: 291--295, 1982.
\newblock \\
  \texttt{DOI:\,\href{http://dx.doi.org/10.1103/PhysRevLett.48.291}{10.1103/Ph%
ysRevLett.48.291}}.

\bibitem[Fin82b]{Fine1}
A.~Fine.
\newblock Joint distributions, quantum correlations, and commuting observables.
\newblock {\em J. Math. Phys.}, 7: 23, 1982.
\newblock \\
  \texttt{DOI:\,\href{http://dx.doi.org/http://dx.doi.org/10.1063/1.525514}{ht%
tp://dx.doi.org/10.1063/1.525514}}.

\bibitem[FSA{\etalchar{+}}13]{LOmulti}
T.~Fritz, A.~Sainz, R.~Augusiak, J.~B. Brask, R.~Chaves, A.~Leverrier, and
  A.~Acin.
\newblock Local orthogonality as a multipartite principle for quantum
  correlations.
\newblock {\em Nat Comms}, 4: 2263, 2013.
\newblock \\
  \texttt{DOI:\,\href{http://dx.doi.org/10.1038/ncomms3263}{10.1038/ncomms3263%
}}.

\bibitem[FWW09]{Forsteretal}
M.~Forster, S.~Winkler, and S.~Wolf.
\newblock Distilling nonlocality.
\newblock {\em Phys. Rev. Lett.}, 102: 120401, 2009.
\newblock \\
  \texttt{DOI:\,\href{http://dx.doi.org/10.1103/PhysRevLett.102.120401}{10.110%
3/PhysRevLett.102.120401}}.

\bibitem[GBGZ11]{GTC}
G.~Giorgi, B.~Bellomo, F.~Galve, and R.~Zambrini.
\newblock Genuine quantum and classical correlations in multipartite systems.
\newblock {\em Phys. Rev. Lett.}, 107: 190501, 2011.
\newblock \\
  \texttt{DOI:\,\href{http://dx.doi.org/10.1103/PhysRevLett.107.190501}{10.110%
3/PhysRevLett.107.190501}}.

\bibitem[GBS15]{Koon}
K.~Goh, J.~Bancal, and V.~Scarani.
\newblock Certification of entanglement based only on the hilbert space
  dimension.
\newblock {\em eprint arXiv:1509.08682}, 2015.

\bibitem[GHH{\etalchar{+}}14]{Grudkaetal}
A.~Grudka, K.~Horodecki, M.~Horodecki, P.~Horodecki, R.~Horodecki, P.~Joshi,
  W.~K\l{}obus, and A.~W\'ojcik.
\newblock Quantifying contextuality.
\newblock {\em Phys. Rev. Lett.}, 112: 120401, 2014.
\newblock \\
  \texttt{DOI:\,\href{http://dx.doi.org/10.1103/PhysRevLett.112.120401}{10.110%
3/PhysRevLett.112.120401}}.

\bibitem[GHZ07]{GHZ}
D.~M. Greenberger, M.~Horne, and A.~Zeilinger.
\newblock Going beyond bell's theorem.
\newblock {\em eprint quant-ph/0712.0921}, 2007.

\bibitem[Gis91]{GT}
N.~Gisin.
\newblock Bell's inequality holds for all non-product states.
\newblock {\em Phys. Lett. A}, 154: 201, 1991.
\newblock \\
  \texttt{DOI:\,\href{http://dx.doi.org/doi:10.1016/0375-9601(91)90805-I}{doi:%
10.1016/0375-9601(91)90805-I}}.

\bibitem[Gol94]{Goldstein}
S.~Goldstein.
\newblock Nonlocality without inequalities for almost all entangled states for
  two particles.
\newblock {\em Phys. Rev. Lett.}, 72: 1951--1951, 1994.
\newblock \\
  \texttt{DOI:\,\href{http://dx.doi.org/10.1103/PhysRevLett.72.1951}{10.1103/P%
hysRevLett.72.1951}}.

\bibitem[GPW05]{GPW}
B.~Groisman, S.~Popescu, and A.~Winter.
\newblock Quantum, classical, and total amount of correlations in a quantum
  state.
\newblock {\em Phys. Rev. A}, 72: 032317, 2005.
\newblock \\
  \texttt{DOI:\,\href{http://dx.doi.org/10.1103/PhysRevA.72.032317}{10.1103/Ph%
ysRevA.72.032317}}.

\bibitem[GS10]{Guhne}
O.~G\"{u}hne and M.~Seevinck.
\newblock Separability criteria for genuine multiparticle entanglement.
\newblock {\em New Journal of Physics}, 12(5): 053002, 2010.
\newblock \\ Online: \url{http://stacks.iop.org/1367-2630/12/i=5/a=053002}.

\bibitem[GSD{\etalchar{+}}09]{Ghoseetal}
S.~Ghose, N.~Sinclair, S.~Debnath, P.~Rungta, and R.~Stock.
\newblock Tripartite entanglement versus tripartite nonlocality in three-qubit
  greenberger-horne-zeilinger-class states.
\newblock {\em Phys. Rev. Lett.}, 102: 250404, 2009.
\newblock \\
  \texttt{DOI:\,\href{http://dx.doi.org/10.1103/PhysRevLett.102.250404}{10.110%
3/PhysRevLett.102.250404}}.

\bibitem[GSR{\etalchar{+}}13]{GILU}
S.~M. Giampaolo, A.~Streltsov, W.~Roga, D.~Bru\ss{}, and F.~Illuminati.
\newblock Quantifying nonclassicality: Global impact of local unitary
  evolutions.
\newblock {\em Phys. Rev. A}, 87: 012313, 2013.
\newblock \\
  \texttt{DOI:\,\href{http://dx.doi.org/10.1103/PhysRevA.87.012313}{10.1103/Ph%
ysRevA.87.012313}}.

\bibitem[Har92]{Hardy1}
L.~Hardy.
\newblock Quantum mechanics, local realistic theories, and lorentz-invariant
  realistic theories.
\newblock {\em Phys. Rev. Lett.}, 68: 2981--2984, 1992.
\newblock \\
  \texttt{DOI:\,\href{http://dx.doi.org/10.1103/PhysRevLett.68.2981}{10.1103/P%
hysRevLett.68.2981}}.

\bibitem[Har93]{Hardy2}
L.~Hardy.
\newblock Nonlocality for two particles without inequalities for almost all
  entangled states.
\newblock {\em Phys. Rev. Lett.}, 71: 1665--1668, 1993.
\newblock \\
  \texttt{DOI:\,\href{http://dx.doi.org/10.1103/PhysRevLett.71.1665}{10.1103/P%
hysRevLett.71.1665}}.

\bibitem[HHHH09]{Hetal}
R.~Horodecki, P.~Horodecki, M.~Horodecki, and K.~Horodecki.
\newblock Quantum entanglement.
\newblock {\em Rev. Mod. Phys.}, 81: 865--942, 2009.
\newblock \\
  \texttt{DOI:\,\href{http://dx.doi.org/10.1103/RevModPhys.81.865}{10.1103/Rev%
ModPhys.81.865}}.

\bibitem[HV01]{HV}
L.~Henderson and V.~Vedral.
\newblock Classical, quantum and total correlations.
\newblock {\em J. Phys. A: Math. Gen}, 34: 6899, 2001.
\newblock \\
  \texttt{DOI:\,\href{http://dx.doi.org/10.1088/0305-4470/34/35/315}{10.1088/0%
305-4470/34/35/315}}.

\bibitem[Jeb14a]{Jeba}
C.~Jebarathinam.
\newblock Canonical decomposition of quantum correlations in the framework of
  generalized nonsignaling theories.
\newblock {\em eprint arXiv:1407.3170}, 2014.

\bibitem[Jeb14b]{JebaTri}
C.~Jebarathinam.
\newblock Isolating genuine nonclassicality in tripartite quantum correlations.
\newblock {\em eprint arXiv:1407.5588}, 2014.

\bibitem[Jeb14c]{Jebat}
C.~Jebarathinam.
\newblock On total correlations in a bipartite quantum joint probability
  distribution.
\newblock {\em eprint arXiv:1410.1472}, 2014.

\bibitem[JM05]{minput}
N.~S. Jones and L.~Masanes.
\newblock Interconversion of nonlocal correlations.
\newblock {\em Phys. Rev. A}, 72: 052312, 2005.
\newblock \\
  \texttt{DOI:\,\href{http://dx.doi.org/10.1103/PhysRevA.72.052312}{10.1103/Ph%
ysRevA.72.052312}}.

\bibitem[KanCK14]{KCK}
P.~Kurzy\ifmmode~\acute{n}\else \'{n}\fi{}ski, A.~Cabello, and D.~Kaszlikowski.
\newblock Fundamental monogamy relation between contextuality and nonlocality.
\newblock {\em Phys. Rev. Lett.}, 112: 100401, 2014.
\newblock \\
  \texttt{DOI:\,\href{http://dx.doi.org/10.1103/PhysRevLett.112.100401}{10.110%
3/PhysRevLett.112.100401}}.

\bibitem[KCBbuS08]{KCBS}
A.~A. Klyachko, M.~A. Can, S.~Binicio\ifmmode~\breve{g}\else \u{g}\fi{}lu, and
  A.~S. Shumovsky.
\newblock Simple test for hidden variables in spin-1 systems.
\newblock {\em Phys. Rev. Lett.}, 101: 020403, 2008.
\newblock \\
  \texttt{DOI:\,\href{http://dx.doi.org/10.1103/PhysRevLett.101.020403}{10.110%
3/PhysRevLett.101.020403}}.

\bibitem[KS67]{KS}
S.~Kochen and E.~P. Specker.
\newblock {\em J. Math. Mech}, 17: 59, 1967.

\bibitem[LC10]{Caves}
M.~D. Lang and C.~M. Caves.
\newblock Quantum discord and the geometry of bell-diagonal states.
\newblock {\em Phys. Rev. Lett.}, 105: 150501, 2010.
\newblock \\
  \texttt{DOI:\,\href{http://dx.doi.org/10.1103/PhysRevLett.105.150501}{10.110%
3/PhysRevLett.105.150501}}.

\bibitem[MAG06]{MAG06}
L.~Masanes, A.~Acin, and N.~Gisin.
\newblock General properties of nonsignaling theories.
\newblock {\em Phys. Rev. A}, 73: 012112, 2006.
\newblock \\
  \texttt{DOI:\,\href{http://dx.doi.org/10.1103/PhysRevA.73.012112}{10.1103/Ph%
ysRevA.73.012112}}.

\bibitem[MBC{\etalchar{+}}12]{Modietal}
K.~Modi, A.~Brodutch, H.~Cable, T.~Paterek, and V.~Vedral.
\newblock The classical-quantum boundary for correlations: Discord and related
  measures.
\newblock {\em Rev. Mod. Phys.}, 84: 1655--1707, 2012.
\newblock \\
  \texttt{DOI:\,\href{http://dx.doi.org/10.1103/RevModPhys.84.1655}{10.1103/Re%
vModPhys.84.1655}}.

\bibitem[Mer90a]{mermin}
N.~D. Mermin.
\newblock Extreme quantum entanglement in a superposition of macroscopically
  distinct states.
\newblock {\em Phys. Rev. Lett.}, 65: 1838--1840, 1990.
\newblock \\
  \texttt{DOI:\,\href{http://dx.doi.org/10.1103/PhysRevLett.65.1838}{10.1103/P%
hysRevLett.65.1838}}.

\bibitem[Mer90b]{GHZM}
N.~D. Mermin.
\newblock What's wrong with these elements of reality?
\newblock {\em Phys. Today}, 43(6): 9, 1990.
\newblock \\
  \texttt{DOI:\,\href{http://dx.doi.org/http://dx.doi.org/10.1063/1.2810588}{h%
ttp://dx.doi.org/10.1063/1.2810588}}.

\bibitem[Mer90c]{UNLH}
N.~D. Mermin.
\newblock Simple unified form for the major no-hidden-variables theorems.
\newblock {\em Phys. Rev. Lett.}, 65: 3373--3376, 1990.
\newblock \\
  \texttt{DOI:\,\href{http://dx.doi.org/10.1103/PhysRevLett.65.3373}{10.1103/P%
hysRevLett.65.3373}}.

\bibitem[Mer93]{MerminRev}
N.~D. Mermin.
\newblock Hidden variables and the two theorems of john bell.
\newblock {\em Rev. Mod. Phys.}, 65: 803--815, 1993.
\newblock \\
  \texttt{DOI:\,\href{http://dx.doi.org/10.1103/RevModPhys.65.803}{10.1103/Rev%
ModPhys.65.803}}.

\bibitem[MPS{\etalchar{+}}10]{ModietalUVQC}
K.~Modi, T.~Paterek, W.~Son, V.~Vedral, and M.~Williamson.
\newblock Unified view of quantum and classical correlations.
\newblock {\em Phys. Rev. Lett.}, 104: 080501, 2010.
\newblock \\
  \texttt{DOI:\,\href{http://dx.doi.org/10.1103/PhysRevLett.104.080501}{10.110%
3/PhysRevLett.104.080501}}.

\bibitem[MPS14]{gnl99}
K.~Mukherjee, B.~Paul, and D.~Sarkar.
\newblock Efficient test to demonstrate genuine nonlocality in multipartite
  scenario.
\newblock {\em eprint arXiv:1410.6471}, 2014.

\bibitem[OZ01]{OZ}
H.~Ollivier and W.~H. Zurek.
\newblock Quantum discord: A measure of the quantumness of correlations.
\newblock {\em Phys. Rev. Lett.}, 88: 017901, 2001.
\newblock \\
  \texttt{DOI:\,\href{http://dx.doi.org/10.1103/PhysRevLett.88.017901}{10.1103%
/PhysRevLett.88.017901}}.

\bibitem[PAB{\etalchar{+}}09]{DIQIP1}
S.~Pironio, A.~Acin, N.~Brunner, N.~Gisin, S.~Massar, and V.~Scarani.
\newblock Device-independent quantum key distribution secure against collective
  attacks.
\newblock {\em New Journal of Physics}, 11(4): 045021, 2009.
\newblock \\ Online: \url{http://stacks.iop.org/1367-2630/11/i=4/a=045021}.

\bibitem[PB09]{Bellmono}
M.~Paw\l{}owski and C.~Brukner.
\newblock Monogamy of bell's inequality violations in nonsignaling theories.
\newblock {\em Phys. Rev. Lett.}, 102: 030403, 2009.
\newblock \\
  \texttt{DOI:\,\href{http://dx.doi.org/10.1103/PhysRevLett.102.030403}{10.110%
3/PhysRevLett.102.030403}}.

\bibitem[PBS11]{Pironioetal}
S.~Pironio, J.-D. Bancal, and V.~Scarani.
\newblock Extremal correlations of the tripartite no-signaling polytope.
\newblock {\em J. Phys. A: Math. Theor}, 44: 065303, 2011.
\newblock \\
  \texttt{DOI:\,\href{http://dx.doi.org/10.1088/1751-8113/44/6/065303}{10.1088%
/1751-8113/44/6/065303}}.

\bibitem[Pe10]{DIQIP2}
S.~Pironio and etal.
\newblock Random numbers certified by bell's theorem.
\newblock {\em Nature}, 464: 1021, 2010.
\newblock \\
  \texttt{DOI:\,\href{http://dx.doi.org/doi:10.1038/nature09008}{doi:10.1038/n%
ature09008}}.

\bibitem[Per90]{Peres}
A.~Peres.
\newblock Incompatible results of quantum measurements.
\newblock {\em Phys. Lett. A}, 151: 107, 1990.
\newblock \\
  \texttt{DOI:\,\href{http://dx.doi.org/doi:10.1016/0375-9601(90)90172-K}{doi:%
10.1016/0375-9601(90)90172-K}}.

\bibitem[Per95]{Sch2}
A.~Peres.
\newblock {\em Quantum theory: concepts and methods}.
\newblock Kluwer Academic Publishers, Dordrecht, 1995.

\bibitem[Per12]{Perinotti}
P.~Perinotti.
\newblock Discord and nonclassicality in probabilistic theories.
\newblock {\em Phys. Rev. Lett.}, 108: 120502, 2012.
\newblock \\
  \texttt{DOI:\,\href{http://dx.doi.org/10.1103/PhysRevLett.108.120502}{10.110%
3/PhysRevLett.108.120502}}.

\bibitem[PHH08]{CQ}
M.~Piani, P.~Horodecki, and R.~Horodecki.
\newblock No-local-broadcasting theorem for multipartite quantum correlations.
\newblock {\em Phys. Rev. Lett.}, 100: 090502, 2008.
\newblock \\
  \texttt{DOI:\,\href{http://dx.doi.org/10.1103/PhysRevLett.100.090502}{10.110%
3/PhysRevLett.100.090502}}.

\bibitem[Pit01]{Pitowski}
I.~Pitwosky.
\newblock Range theorems for quantum probability and entanglement.
\newblock {\em eprint arXiv:quant-ph/0112068}, 2001.

\bibitem[Pop94]{TNL}
S.~Popescu.
\newblock Bell's inequalities versus teleportation: What is nonlocality?
\newblock {\em Phys. Rev. Lett.}, 72: 797--799, 1994.
\newblock \\
  \texttt{DOI:\,\href{http://dx.doi.org/10.1103/PhysRevLett.72.797}{10.1103/Ph%
ysRevLett.72.797}}.

\bibitem[PR92]{PRQB}
S.~Popescu and D.~Rohrlich.
\newblock Generic quantum nonlocality.
\newblock {\em Phys. Lett. A}, 166: 293, 1992.
\newblock \\
  \texttt{DOI:\,\href{http://dx.doi.org/doi:10.1016/0375-9601(92)90711-T}{doi:%
10.1016/0375-9601(92)90711-T}}.

\bibitem[PR94]{PR}
S.~Popescu and D.~Rohrlich.
\newblock Causality and non-locality as axioms for quantum mechanics.
\newblock {\em Found. Phys}, 24: 379, 1994.
\newblock \\
  \texttt{DOI:\,\href{http://dx.doi.org/10.1007/BF02058098}{10.1007/BF02058098%
}}.

\bibitem[PR97]{NLNS}
S.~Popescu and D.~Rohrlich.
\newblock Causality and nonlocality as axioms for quantum mechanics.
\newblock {\em eprint quant-ph/9709026}, 1997.

\bibitem[QVB14]{IncomN}
M.~T. Quintino, T.~V\'ertesi, and N.~Brunner.
\newblock Joint measurability, einstein-podolsky-rosen steering, and bell
  nonlocality.
\newblock {\em Phys. Rev. Lett.}, 113: 160402, 2014.
\newblock \\
  \texttt{DOI:\,\href{http://dx.doi.org/10.1103/PhysRevLett.113.160402}{10.110%
3/PhysRevLett.113.160402}}.

\bibitem[Rei13]{Reid}
M.~Reid.
\newblock Monogamy inequalities for the einstein-podolsky-rosen paradox and
  quantum steering.
\newblock {\em Phys. Rev. A}, 88: 062108, 2013.
\newblock \\
  \texttt{DOI:\,\href{http://dx.doi.org/10.1103/PhysRevA.88.062108}{10.1103/Ph%
ysRevA.88.062108}}.

\bibitem[SBP09]{Geb}
P.~Skrzypczyk, N.~Brunner, and S.~Popescu.
\newblock Emergence of quantum correlations from nonlocality swapping.
\newblock {\em Phys. Rev. Lett.}, 102: 110402, 2009.
\newblock \\
  \texttt{DOI:\,\href{http://dx.doi.org/10.1103/PhysRevLett.102.110402}{10.110%
3/PhysRevLett.102.110402}}.

\bibitem[SJWP10]{EPRsi}
D.~J. Saunders, S.~J. Jones, H.~M. Wiseman, and G.~Pryde.
\newblock Experimental epr-steering using bell-local states.
\newblock {\em Nature Physics}, 6: 845, 2010.
\newblock \\
  \texttt{DOI:\,\href{http://dx.doi.org/doi:10.1038/nphys1766}{doi:10.1038/nph%
ys1766}}.

\bibitem[Spe60]{Specker}
E.~Specker.
\newblock The logic of non-simultaneously decidable propositions.
\newblock {\em Dialectica}, 14: 239, 1960.

\bibitem[SS02]{multi}
M.~Seevinck and G.~Svetlichny.
\newblock Bell-type inequalities for partial separability in $n$-particle
  systems and quantum mechanical violations.
\newblock {\em Phys. Rev. Lett.}, 89: 060401, 2002.
\newblock \\
  \texttt{DOI:\,\href{http://dx.doi.org/10.1103/PhysRevLett.89.060401}{10.1103%
/PhysRevLett.89.060401}}.

\bibitem[Sve87]{SI}
G.~Svetlichny.
\newblock Distinguishing three-body from two-body nonseparability by a
  bell-type inequality.
\newblock {\em Phys. Rev. D}, 35: 3066--3069, 1987.
\newblock \\
  \texttt{DOI:\,\href{http://dx.doi.org/10.1103/PhysRevD.35.3066}{10.1103/Phys%
RevD.35.3066}}.

\bibitem[Ton09]{Toner}
B.~Toner.
\newblock Zur theorie der linearen und nichtlinearen integralgleichungen.
\newblock {\em Proc. R. Soc. A}, 59: 465, 2009.

\bibitem[Tsi80]{tsi1}
B.~S. Tsirel'son.
\newblock Quantum generalizations of bell's inequality.
\newblock {\em Lett. Math. Phys}, 4: 93, 1980.
\newblock \\
  \texttt{DOI:\,\href{http://dx.doi.org/10.1007/BF00417500}{10.1007/BF00417500%
}}.

\bibitem[WDAP08]{ppolytopes}
J.~Wilms, Y.~Disser, G.~Alber, and I.~C. Percival.
\newblock Local realism, detection efficiencies, and probability polytopes.
\newblock {\em Phys. Rev. A}, 78: 032116, 2008.
\newblock \\
  \texttt{DOI:\,\href{http://dx.doi.org/10.1103/PhysRevA.78.032116}{10.1103/Ph%
ysRevA.78.032116}}.

\bibitem[Wer89]{Werner}
R.~F. Werner.
\newblock Quantum states with einstein-podolsky-rosen correlations admitting a
  hidden-variable model.
\newblock {\em Phys. Rev. A}, 40: 4277--4281, 1989.
\newblock \\
  \texttt{DOI:\,\href{http://dx.doi.org/10.1103/PhysRevA.40.4277}{10.1103/Phys%
RevA.40.4277}}.

\bibitem[WJD07]{WJD}
H.~M. Wiseman, S.~J. Jones, and A.~C. Doherty.
\newblock Steering, entanglement, nonlocality, and the epr paradox.
\newblock {\em Phys. Rev. Lett.}, 98: 140402, 2007.
\newblock \\
  \texttt{DOI:\,\href{http://dx.doi.org/10.1103/PhysRevLett.98.140402}{10.1103%
/PhysRevLett.98.140402}}.

\bibitem[Woo98]{WKW}
W.~Wootters.
\newblock Entanglement of formation of an arbitrary state of two qubits.
\newblock {\em Phys. Rev. Lett.}, 80: 2245--2248, 1998.
\newblock \\
  \texttt{DOI:\,\href{http://dx.doi.org/10.1103/PhysRevLett.80.2245}{10.1103/P%
hysRevLett.80.2245}}.

\bibitem[WW01a]{WernerWolfmulti}
R.~F. Werner and M.~M. Wolf.
\newblock All-multipartite bell-correlation inequalities for two dichotomic
  observables per site.
\newblock {\em Phys. Rev. A}, 64: 032112, 2001.
\newblock \\
  \texttt{DOI:\,\href{http://dx.doi.org/10.1103/PhysRevA.64.032112}{10.1103/Ph%
ysRevA.64.032112}}.

\bibitem[WW01b]{WernerWolf}
R.~F. Werner and M.~M. Wolf.
\newblock Bell inequalities and entanglement.
\newblock {\em Quantum Inf. Comput}, 1: 1, 2001.

\bibitem[ZBLW02]{vMI}
M.~\.{Z}ukowski, C.~Brukner, W.~Laskowski, and M.~Wie\'{s}niak.
\newblock Do all pure entangled states violate bell's inequalities for
  correlation functions?
\newblock {\em Phys. Rev. Lett.}, 88: 210402, 2002.
\newblock \\
  \texttt{DOI:\,\href{http://dx.doi.org/10.1103/PhysRevLett.88.210402}{10.1103%
/PhysRevLett.88.210402}}.

\end{thebibliography}

\end{document}